\definecolor{burntorange}{rgb}{0.8, 0.33, 0.0}
\newtheorem{theorem}{Theorem}
\newtheorem{lemma}[theorem]{Lemma}
\newtheorem{corollary}[theorem]{Corollary}
\newtheorem{definition}[theorem]{Definition}
\newtheorem{proposition}[theorem]{Proposition}
\newtheorem{fact}[theorem]{Fact}
\newtheorem{remark}[theorem]{Remark}
\renewcommand{\epsilon}{\varepsilon}
\newcommand{\defeq}{\stackrel{\textup{\tiny def}}{=}}
\newcommand{\norm}[1]{\left\lVert#1\right\rVert}
\newcommand{\normop}[1]{\left\lVert#1\right\rVert_{\textup{op}}}
\newcommand{\normsop}[1]{\lVert#1\rVert_{\textup{op}}}
\newcommand{\eps}{\epsilon}
\newcommand{\lam}{\lambda}
\newcommand{\R}{\mathbb{R}}
\newcommand{\N}{\mathbb{N}}
\newcommand{\gC}{\mathcal{C}}
\newcommand{\gH}{\mathcal{H}}
\newcommand{\gA}{\mathcal{A}}
\newcommand{\gB}{\mathcal{B}}
\newcommand{\diag}[1]{\textbf{\textup{diag}}\left(#1\right)}
\DeclareMathOperator*{\pr}{\mathbb{P}}
\newcommand{\E}{\mathbb{E}}
\newcommand{\Vol}{\textup{vol}}
\newcommand{\id}{\mathbf{I}}
\newcommand{\im}{\textup{im}}
\newcommand{\Dmax}{\Delta_{\textup{max}}}
\newcommand{\Dmin}{\Delta_{\textup{min}}}
\newcommand{\tO}{\widetilde{O}}
\newcommand{\nnz}{\textup{nnz}}
\newcommand{\Par}[1]{\left(#1\right)}
\newcommand{\Brace}[1]{\left\{#1\right\}}
\newcommand{\Abs}[1]{\left|#1\right|}
\newcommand{\ddi}{\boldsymbol{\mathrm{d}}^{(\textup{i})}}
\newcommand{\ddo}{\boldsymbol{\mathrm{d}}^{(\textup{o})}}
\newcommand{\ddpi}{[\boldsymbol{\mathrm{d}}']^{(\textup{i})}}
\newcommand{\ddpo}{[\boldsymbol{\mathrm{d}}']^{(\textup{o})}}
\newcommand{\Sym}{\mathbb{S}}
\newcommand{\poly}{\textup{poly}}
\newcommand{\polylog}{\textup{polylog}}
\newcommand{\polyloglog}{\textup{polyloglog}}
\newcommand{\bO}{\Breve{O}}
\newcommand{\supp}{\textup{supp}}
\newcommand{\vol}{\textup{Vol}}
\newcommand{\spn}{\textup{span}}
\newcommand{\ceil}[1]{\left\lceil#1\right\rceil}
\newcommand{\floor}[1]{\left\lfloor#1\right\rfloor}
\def\pleq{\preccurlyeq}
\def\pgeq{\succcurlyeq}
\renewcommand\AA{\boldsymbol{\mathrm{{A}}}}
\newcommand\BB{\boldsymbol{\mathrm{{B}}}}
\newcommand\CC{\boldsymbol{\mathrm{{C}}}}
\newcommand\DD{\boldsymbol{\mathrm{{D}}}}
\newcommand\FF{\boldsymbol{\mathrm{{F}}}}
\newcommand\HH{\boldsymbol{\mathrm{{H}}}}
\newcommand\II{\boldsymbol{\mathrm{{I}}}}
\newcommand\MM{\boldsymbol{\mathrm{{M}}}}
\newcommand\LL{\boldsymbol{\mathrm{{L}}}}
\newcommand\PP{\boldsymbol{\mathrm{{P}}}}
\newcommand\QQ{\boldsymbol{\mathrm{{Q}}}}
\newcommand\RR{\boldsymbol{\mathrm{{R}}}}
\renewcommand\SS{\boldsymbol{\mathrm{{S}}}}
\newcommand\TT{\boldsymbol{\mathrm{{T}}}}
\newcommand\UU{\boldsymbol{\mathrm{{U}}}}
\newcommand\WW{\boldsymbol{\mathrm{{W}}}}
\newcommand\XX{\boldsymbol{\mathrm{{X}}}}
\newcommand\YY{\boldsymbol{\mathrm{{Y}}}}
\newcommand\ZZ{\boldsymbol{\mathrm{{Z}}}}
\newcommand\DDi{\boldsymbol{\mathrm{{D}}}^{(\textup{i})}}
\newcommand\DDo{\boldsymbol{\mathrm{{D}}}^{(\textup{o})}}
\newcommand\DDpo{[\boldsymbol{\mathrm{{D}}}']^{(\textup{o})}}
\newcommand\DDpi{[\boldsymbol{\mathrm{{D}}}']^{(\textup{i})}}
\newcommand\PPi{\boldsymbol{\mathrm{\Pi}}}
\newcommand{\tA}{\widetilde{\AA}}
\newcommand{\tG}{\widetilde{G}}
\renewcommand\aa{\boldsymbol{\mathrm{a}}}
\newcommand\bb{\boldsymbol{\mathrm{b}}}
\newcommand\cc{\boldsymbol{\mathrm{c}}}
\newcommand\dd{\boldsymbol{\mathrm{d}}}
\newcommand\ee{\boldsymbol{\mathrm{e}}}
\newcommand\ff{\boldsymbol{\mathrm{f}}}
\newcommand\kk{\boldsymbol{\mathrm{k}}}
\newcommand\rr{\boldsymbol{\mathrm{r}}}
\newcommand\uu{\boldsymbol{\mathrm{u}}}
\newcommand\vv{\boldsymbol{\mathrm{v}}}
\newcommand\ww{\boldsymbol{\mathrm{w}}}
\newcommand\yy{\boldsymbol{\mathrm{y}}}
\newcommand\zz{\boldsymbol{\mathrm{z}}}
\newcommand\xx{\boldsymbol{\mathrm{x}}}
\newcommand\ppi{\boldsymbol{\mathrm{\pi}}}
\newcommand{\tkk}{\tilde{\kk}}
\newcommand\vone{\boldsymbol{\mathrm{1}}}
\newcommand\vzero{\boldsymbol{\mathrm{0}}}
\newcommand\vLL{\vec{\LL}}
\newcommand\vMM{\vec{\MM}}
\newcommand{\vG}{\vec{G}}
\newcommand{\vH}{\vec{H}}
\newcommand{\vS}{\vec{S}}
\newcommand{\vC}{\vec{C}}
\newcommand{\vB}{\vec{B}}
\newcommand{\rev}{\textup{rev}}
\newcommand{\und}{\textup{und}}
\newcommand{\blift}{\textup{blift}}
\newcommand{\ER}{\textup{ER}}
\newcommand{\ffs}{\ff_{\star}}
\newcommand{\te}{\tilde{e}}
\newcommand{\tf}{\tilde{f}}
\newcommand{\hxx}{\hat{\xx}}
\newcommand{\hyy}{\hat{\yy}}
\newcommand\ccpc{C_{\textup{cpc}}}
\newcommand\csbg{C_{\textup{sbg}}}
\newcommand\cbal{C_{\textup{bal}}}
\newcommand{\cadk}{C_{\textup{ADK}}}
\newcommand{\cded}{C_{\textup{DED}}}
\newcommand{\cds}{C_{\textup{DS}}}
\newcommand{\css}{C_{\textup{SS}}}
\newcommand{\bv}{\bar{v}}
\newcommand{\bV}{\bar{V}}
\newcommand{\ROalgo}{\textsc{Rounding}}
\newcommand{\PAalgo}{\textsc{Patching}}
\newcommand{\PAEalgo}{\textsc{PatchingExternal}}
\newcommand{\PAIalgo}{\textsc{PatchingInternal}}
\newcommand{\PASalgo}{\textsc{PatchingStar}}
\newcommand{\SDSalgo}{\textsc{SparsifyDirectedSpectral}}
\newcommand{\SSalgo}{\textsc{SparsifySubgraph}}
\newcommand{\SDCalgo}{\textsc{SparsifyDiCut}}
\newcommand{\SDMalgo}{\textsc{SparsifyDiCutMSF}}
\newcommand{\SDMOalgo}{\textsc{SparsifyDiCutMSFOnce}}
\newcommand{\EPAalgo}{\textsc{ExpanderDecompADK}}
\newcommand{\DSPalgo}{\textsc{DegreeSparsifyPreprocess}}
\newcommand{\SAalgo}{\textsc{SubsetSample}}
\newcommand{\DSDalgo}{\textsc{DegreeSparsifyDelete}}
\newcommand{\PRalgo}{\textsc{PreconRichardson}}
\newcommand{\codeInput}{\textbf{Input:} }
\renewcommand{\eps}{\varepsilon}
\DeclareMathOperator{\Sc}{\mathtt{Sc}}
\title{Fully Dynamic Spectral and Cut Sparsifiers for Directed Graphs}
\author{Yibin Zhao \\University of Toronto \\ \texttt{ybzhao@cs.toronto.edu}}
\date{}
\begin{document}
\maketitle

\begin{abstract}
Recent years have seen extensive research on directed graph sparsification.
In this work, we initiate the study of fast fully dynamic spectral and cut
sparsification algorithms for directed graphs.

We introduce a new notion of spectral sparsification called degree-balance
preserving spectral approximation, which maintains the difference between the
in-degree and out-degree of each vertex.
The approximation error is measured with respect to the corresponding undirected
Laplacian.
This notion is equivalent to direct Eulerian spectral approximation when the
input graph is Eulerian.
Our algorithm achieves an amortized update time of $O(\varepsilon^{-2} \cdot
\text{polylog}(n))$ and produces a sparsifier of size $O(\varepsilon^{-2} n \cdot
\text{polylog}(n))$.
Additionally, we present an algorithm that maintains a constant-factor
approximation sparsifier of size $O(n \cdot \text{polylog}(n))$ against an adaptive
adversary for $O(\text{polylog}(n))$-partially symmetrized graphs, a notion
introduced in [Kyng-Meierhans-Probst Gutenberg '22].
A $\beta$-partial symmetrization of a directed graph $\vG$ is the union of $\vG$
and $\beta \cdot G$, where $G$ is the corresponding undirected graph of $\vG$.
This algorithm also achieves a polylogarithmic amortized update time.

Moreover, we develop a fully dynamic algorithm for maintaining a cut sparsifier
for $\beta$-balanced directed graphs, where the ratio between weighted incoming
and outgoing edges of any cut is at most $\beta$.
This algorithm explicitly maintains a cut sparsifier of size
$O(\varepsilon^{-2}\beta n \cdot \text{polylog}(n))$ in worst-case update time
$O(\varepsilon^{-2}\beta \cdot \text{polylog}(n))$.
\end{abstract}

\section{Introduction} \label{sec:intro}
Designing spectral algorithms for directed graphs is a major frontier in algorithmic spectral graph theory.
These algorithms have found numerous applications, ranging from fast algorithms
for processing Markov chains \cite{CohenKPPSV16,AhmadinejadJSS19} to
deterministic low-space computation \cite{AhmadinejadKMPS20}.
Spectral sparsification of directed (Eulerian) graphs has emerged as a central topic in
this area, with considerable recent advances
\cite{CohenKPPRSV17,ChuGPSSW18,AhmadinejadPPSV23,SachdevaTZ24,JambulapatiSSTZ25,LauWZ25}.
In particular, faster spectral sparsification algorithms that produce sparser directed
Eulerian sparsifiers directly lead to faster computation of approximate
Personalized PageRank vectors \cite{CohenKPPSV16}, stationary distributions,
hitting times, escape probabilities of random walks \cite{CohenKPPSV16}, and
approximate Perron vectors \cite{AhmadinejadJSS19} through efficient directed
Eulerian Laplacian solvers \cite{CohenKPPRSV17,PengS22}.

In this work, we initiate the study of spectral and cut sparsification for
dynamically changing directed graphs.
We believe that our dynamic algorithms will serve as the foundation for
efficient dynamic computations of the aforementioned problems.

Unlike undirected spectral sparsification, developing a useful notion of
directed spectral or cut sparsification has been particularly challenging.
Consider the complete directed bipartite graph where every node on one side of
the bipartition has a directed edge to every node on the other side.
Any sparsifier that approximately preserves all directed cuts in such a graph
cannot delete any edges.
For cut sparsification, Cen, Cheng, Panigrahi, and Sun~\cite{CenCPS21} restricted
the sparsification guarantees to only apply to cuts with values close to their
corresponding cuts in the reverse direction.

\begin{definition}[$\beta$-balanced directed cut approximation]
    \label{def:baldicuappr}
    $\vH=(V,E_{\vH},\ww_{\vH})$ is a \emph{$(\beta,\eps)$-balanced directed cut
    approximation} of $\vG = (V,E,\ww)$ if for all non-trivial cuts $(U,V
    \setminus U)$ satisfying $\frac{1}{\beta} \ww(V \setminus U, U) \le
    \ww(U,V \setminus U) \le \beta \ww(V \setminus U, U)$, $\vH$ satisfies that
    \[
        (1-\eps) \ww(U,V \setminus U) \le \ww_{\vH}(U, V \setminus U) \le
        (1+\eps) \ww(U, V \setminus U).
    \]
\end{definition}

This notion of $\beta$-balanced cut sparsification is motivated by applications in
solving flows and cuts problems \cite{EneMPS16,KargerL15}, as the residual graphs
that appear in the intermediate steps of the maxflow algorithms are naturally
balanced.
These sparsifiers have found further application in recent work by Goranci,
Henzinger, Räcke, and Sricharan \cite{GoranciHRS25}, who used them to achieve
the first dynamic incremental maxflow algorithm with polylogarithmic amortized
update time for dense graphs ($m = \Omega(n^2)$).

Cohen, Kelner, Peebles, Peng, Rao, Sidford, and Vladu~\cite{CohenKPPRSV17}
addressed the directed spectral sparsification
problem by focusing on directed Eulerian graphs (where every vertex has equal
weighted in-degree and out-degree).
This requirement might seem overly restrictive, since all directed cuts of an
Eulerian graph must have a balance factor $\beta=1$.
However, Cohen et al.~demonstrated that their sparsification approach suffices
for developing fast solvers for all directed Laplacian linear systems (even
those not corresponding to an Eulerian graph).
In this paper, we propose a relaxed notion of directed spectral approximation
closely related to that of \cite{CohenKPPRSV17}, where the sparsifier only needs
to maintain the degree balances.

\begin{definition}[Degree balance preserving directed spectral approximation]
    \label{def:dispecappr}
    $\vH=(V,E_{\vH},\ww_{\vH})$ is a \emph{$\eps$-degree balance preserving
    directed spectral approximation} of $\vG = (V,E,\ww)$ if for the
    corresponding undirected $G \defeq \und(\vG)$,
    \begin{equation}\label{eq:dispec_def}
        \forall \xx,\yy \in \R^V, \quad
        \Abs{\xx^\top (\vLL_{\vG} - \vLL_{\vH}) \yy}
        \le \eps \sqrt{\xx^\top \LL_G \xx \cdot \yy^\top \LL_G \yy}.
    \end{equation}
\end{definition}

This definition of directed spectral approximation has been implicitly used in
almost all previous works in the directed Laplacian literature (e.g.
\cite{CohenKPPRSV17,CohenKKPPRSV18,ChuGPSSW18,KyngSPG22,SachdevaTZ24,JambulapatiSSTZ25}).
We note that the degree balance preserving requirement is implicitly ensured by
\eqref{eq:dispec_def}: if the degree balance is not preserved, then there are
some vectors $\xx,\yy$ such that the left hand side of \eqref{eq:dispec_def} is
non-zero while the right hand side is zero (see \Cref{lemma:dispec_equiv}).
The relaxation to any directed graph is necessary for dynamically changing
graphs, as modifying from one directed Eulerian graph to another requires at
least 3 individual edge updates.
Whenever a dynamic directed graph $\vG$ becomes Eulerian, our degree balance
preservation requirement naturally ensures that the dynamically maintained
sparsifier is also Eulerian.

For both directed balanced cut sparsification and degree balance preserving
spectral sparsification, we ask the natural question of whether a fully dynamic
algorithm with polylogarithmic update time exists for maintaining a nearly-linear
sized sparsifier (i.e., the number of edges in the sparsifier is at most $O(n
\cdot \poly(\eps^{-1},\log n)$) for directed graphs.
Previous work by Abraham, Durfee, Koutis, Krinninger, and
Peng~\cite{AbrahamDKKP16} established the existence of such dynamic algorithms
for undirected graphs.
In this work, we answer this question affirmatively for the directed case.

\subsection{Our results}
Our first result is a fully dynamic algorithm for degree balance preserving
directed spectral sparsification %
with polylogarithmic amortized update time.

\begin{theorem}[Explicit dynamic directed spectral sparsifier] \label{thm:dynspec_star_inf}
    Given a directed graph $\vG$ on vertices $V$ and with polynomially bounded edge weights, we
    can maintain \emph{explicitly} a graph $\vH$ on vertices $V \cup X$
    such that the Schur complement $\Sc(\vH,V)$ onto the original vertices is a
    $\eps$-degree balance preserving directed sparsifier.
    $\vH$ has size $\tO(\eps^{-2} n)$~\footnotemark~and satisfies $|X| = \tO(n)$.
    The algorithm has amortized recourse and update
    time $\tO(\eps^{-2})$ per edge insertion or deletion and works
    against an oblivious adversary.
    \footnotetext{We use $\tO(\cdot)$ to hide polylogarithmic factors in $n$.}
\end{theorem}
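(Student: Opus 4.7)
I would adapt the Abraham--Durfee--Koutis--Krinninger--Peng framework for dynamic undirected spectral sparsification to the directed setting, with two new technical ingredients: a directed effective-resistance sampling lemma for undirected expanders, and a Steiner-vertex gadget that enforces exact degree balance inside the Schur complement.

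First, I would maintain a hierarchical expander decomposition of the underlying undirected graph $G = \und(\vG)$ with $O(\log n)$ levels, each level partitioning its input into $\phi$-expander clusters for some $\phi = 1/\polylog(n)$, using an off-the-shelf dynamic expander decomposition routine with polylogarithmic amortized recourse per edge update. Within each expander cluster $C$, the underlying effective resistances are bounded by $\tO(\phi^{-2}/\deg_v)$, so independently sampling the directed edges with probabilities proportional to the undirected effective resistances produces a directed subgraph $\vH_C$ of size $\tO(\eps^{-2}|C|)$. Level by level, I would eliminate internal (non-boundary) vertices via an approximate sparse Schur complement and pass the result up the hierarchy. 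Because the seminorm of \Cref{def:dispecappr} is induced by $\LL_G^{\dagger/2}$, which commutes with elimination up to a controlled factor, the per-level error $\eps/\log n$ telescopes to the final $\eps$ guarantee; the explicit sparsifier $\vH$ is the union over all levels augmented by Steiner vertices $X$ with $|X| = \tO(n)$.

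The key static lemma is a directed matrix Bernstein bound: for i.i.d.\ effective-resistance samples $\vH_C$ inside an expander, $\normop{\LL_G^{\dagger/2}(\vLL_{\vG} - \vLL_{\vH_C})\LL_G^{\dagger/2}} \le \eps/\log n$ with high probability. Because $\vLL_{\vG} - \vLL_{\vH_C}$ is not symmetric, I would split it into symmetric and skew-symmetric parts, apply a rectangular matrix Bernstein inequality to each, and bound the variance via the expander's spectral gap on the undirected surrogate. Since samples preserve degree balance only in expectation, for each vertex $v$ with residual balance drift $\delta_v$ I add a two-edge Steiner path through a dedicated vertex in $X$ carrying antisymmetric weight $\delta_v$; its Schur complement onto $V$ exactly cancels the in/out-degree drift at $v$ without introducing any spectral error measured against $\LL_G$.

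The main obstacle is the directed matrix Bernstein bound in the $\LL_G$ seminorm: the asymmetry of $\vLL_{\vG} - \vLL_{\vH_C}$ rules out direct use of symmetric matrix concentration, and the variance estimate must leverage the expander property of the undirected surrogate rather than of $\vG$ itself. A secondary obstacle is the dynamic accounting: I must show that each edge update touches only $\polylog(n)$ vertices across the hierarchy, so that the Steiner gadget changes by $\polylog(n)$ edges per update. This should follow from the locality of dynamic expander decomposition combined with the oblivious-adversary assumption, which licenses periodic rebuilds of affected clusters in amortized $\tO(\eps^{-2})$ time and yields the claimed recourse and update bounds.
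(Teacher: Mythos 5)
Your proposal diverges substantially from the paper's route, and two of its load-bearing steps have genuine gaps.

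First, the claim that the Steiner gadget "exactly cancels the in/out-degree drift at $v$ without introducing any spectral error measured against $\LL_G$" cannot hold as stated. Any patching that repairs degree imbalance must add nonzero weight to the directed Laplacian (directed edge weights must also stay nonnegative, so an "antisymmetric weight $\delta_v$" is not available), and after Schur-complementing out the auxiliary vertex this weight lands on pairs of original vertices and contributes to $\vLL_{\vG}-\vLL_{\vH}$. The paper's star patching (\Cref{lemma:starpatch_algo}) has exactly this structure -- one auxiliary vertex per expander whose elimination yields a product-weighted complete bipartite graph -- and its spectral error is \emph{not} zero: it is bounded via \Cref{lemma:vertexflow,lemma:optoinf} by the maximum relative degree imbalance, which is why the sampling step must be run with an oversampling parameter making that imbalance at most $O(\eps\phi^2)$ before Cheeger's inequality is applied. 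Without this quantitative coupling between the sampling accuracy of the degrees and the patching error, your error budget does not close. Relatedly, your "directed matrix Bernstein via symmetric/skew-symmetric splitting" is left as an acknowledged obstacle; the paper resolves concentration quite differently, by bounding $\normsop{\RR^{\dag/2}(\tA-\AA)\CC^{\dag/2}}$ for the weighted adjacency matrix of the \emph{bipartite lift} (\Cref{lemma:entrysample}, from \cite{CohenKPPRSV17}) together with $\ell_\infty$ control of the degree vectors, and only then converting degree normalization to $\LL_G$ normalization through Cheeger. You never introduce the bipartite lift or the degree normalization, and the variance bound "via the expander's spectral gap" is precisely the step that needs the argument.

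Second, the hierarchical elimination scheme -- eliminating internal vertices of each cluster "via an approximate sparse Schur complement and pass the result up the hierarchy" -- imports a problem the paper deliberately avoids. Computing and dynamically maintaining approximate sparse Schur complements of \emph{directed} Laplacians with controlled error in the $\LL_G$ seminorm is itself a major open-ended task (it is essentially the content of the Eulerian solver literature), and your assertion that $\LL_G^{\dag/2}$ "commutes with elimination up to a controlled factor" is exactly the hard part, not a routine fact; moreover a single bottom-level edge change can perturb an approximate Schur complement densely, so the $\polylog(n)$ recourse claim does not follow from locality of the expander decomposition alone. The paper uses a flat (non-hierarchical) decomposition: the union property (\Cref{lemma:dispec_union}) lets it simply take the union of per-expander sparsifiers, the reduction of \Cref{lemma:red_dynexp} handles the dynamics, and the only Schur complement ever taken is the exact, closed-form elimination of the star centers. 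If you want to salvage your outline, replace the Schur-complement chain with the union-based reduction and replace the zero-error gadget claim with an explicit bound of the patching error in terms of the sampled degree imbalance.
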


Our algorithm explicitly maintains a sparsifier at the expense of $\tO(n)$ extra
vertices.
We demonstrate in \Cref{lemma:schurprecon} (\Cref{sec:spectral}) that these
extra vertices do not affect the usefulness of our theorem.
Crucially, having an explicit sparsifier with low recourse enables further 
dynamic algorithms using our algorithm as a dynamic subroutine.
A version of this theorem that exactly preserves both in-degrees and out-degrees
is presented in \Cref{thm:dynspecstar_for_deg}.

We also present two different algorithms that maintain \emph{implicit}
sparsifiers on the original set of vertices in \Cref{sec:spectral}.
Specifically, \Cref{thm:dynspecext_for} gives an algorithm that
supports fast querying of the edges and the entire sparsifier.
The other algorithm (see \Cref{thm:dynspecint_for} in \Cref{sec:spectral})
supports maintenance of a subgraph sparsifier.

As observed by \cite{BernsteinvdBPGNSSS22} for dynamic undirected spectral
sparsifiers, dynamic algorithms that maintain implicit sparsifiers already
support important applications.
For example, one can maintain a dynamic row and column diagonally dominant
(RCDD) matrix\footnotemark~solver through a reduction to directed Eulerian
spectral sparsifier by \cite{CohenKPPSV16} and by running a static directed
Eulerian Laplacian solver (e.g., \cite{PengS22,JambulapatiSSTZ25}) on top of the
sparsifier.
\footnotetext{A RCDD matrix is a real M-matrix with non-negative row and column
sums. An update on a single off-diagonal entry of a RCDD matrix only incurs 3
edge updates in its related directed Eulerian Laplacian matrix.}

Our next result, parallel to spectral sparsification, is a simple fully dynamic
balanced directed cut sparsification algorithm %
with polylogarithmic worst-case update time.
\begin{theorem} \label{thm:dyndicut_worst_inf}
    Given a directed graph with polynomially bounded edge weights, we
    can maintain explicitly a $(\beta,\eps)$-balanced directed cut sparsifier
    of size $\tO(\eps^{-2} \beta n)$ with worst-case update time
    $\tO(\eps^{-2} \beta) $ per edge insertion or deletion against an
    oblivious adversary.
\end{theorem}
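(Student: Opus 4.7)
The plan is to importance-sample each directed edge independently with probability inversely proportional to an edge strength estimate in the underlying undirected graph $G := \und(\vG)$, inflated by a factor $\beta$ to absorb the balance requirement. For each directed edge $e = (u,v) \in \vG$, let $\tilde{s}_e$ be an $O(1)$-factor overestimate of the edge strength of $\{u,v\}$ in $G$. I would sample $e$ independently with probability $p_e = \min(1, C \beta \eps^{-2} \log n / \tilde{s}_e)$ and include it in $\vH$ with rescaled weight $\ww_{\vG}(e)/p_e$ when sampled. By linearity of expectation, $\E[\ww_{\vH}(U, V \setminus U)] = \ww_{\vG}(U, V \setminus U)$ for every cut, and by the Nash-Williams bound $\sum_e 1/\tilde{s}_e = O(n)$, the expected sparsifier size is $\tO(\beta \eps^{-2} n)$.

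The concentration analysis adapts the Fung-Hariharan-Harvey-Panigrahi strength-sampling argument to the directed balanced setting. Fix a $\beta$-balanced cut $(U, V \setminus U)$ with directed value $C$; the balance condition gives $\ww_{\und(\vG)}(U, V \setminus U) \le (1+\beta) C$, which bounds the strengths of all crossing edges. Decomposing $G$ into strong subgraphs, applying a Chernoff/Bernstein concentration bound at each level, and union-bounding over all $\beta$-balanced cuts via Karger's cut-counting lemma (which gives at most $n^{O(\alpha)}$ cuts of undirected value $\le \alpha k_{\min}$) yields simultaneous $(1 \pm \eps)$ approximation with high probability. The $\beta$-factor inflation in $p_e$ is calibrated to exactly absorb the $(1+\beta)$-factor gap between undirected and directed cut values at balanced cuts, so the final additive error is $\eps C$ rather than $\eps C \cdot \poly(\beta)$.

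For the dynamic implementation, I would assign each edge a uniform random rank $r_e \in [0,1]$ upon insertion and maintain $\vH$ as $\{e \in \vG : r_e \le p_e\}$. The threshold $p_e$ depends on $\tilde{s}_e$, which is maintained by a dynamic strength estimator with worst-case polylogarithmic update time, for example a de-amortized variant of the Abraham-Durfee-Koutis-Krinninger-Peng forest decomposition, or a scheme built atop worst-case dynamic spanning forest algorithms. Each edge insertion or deletion changes $\tilde{s}_e$ for at most polylog many edges, and each such change triggers a constant-time re-evaluation of the corresponding sampling threshold, giving the claimed worst-case update time.

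The main obstacle will be implementing the dynamic strength estimator in worst-case (rather than amortized) polylogarithmic time: standard amortized constructions must be carefully de-amortized via level scheduling so that work is spread evenly across updates rather than concentrated on rebuild events. A secondary technical point is verifying that the strength-sampling variance bound, combined with the balance ratio, indeed gives the $\beta$-inflation in $p_e$ enough slack to control additive error at \emph{all} balanced cuts simultaneously, without incurring extra factors of $\beta$ in the exponent of the cut-counting union bound.
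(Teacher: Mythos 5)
Your proposal has a genuine gap at exactly the point the paper identifies as the crux. You sample with probability inversely proportional to an \emph{edge strength} estimate $\tilde{s}_e$ and defer the dynamic maintenance of these estimates to ``a dynamic strength estimator with worst-case polylogarithmic update time.'' No such primitive is known: the paper explicitly abandons the strength-based sampling of Cen et al.\ for this reason and switches to \emph{undirected edge connectivity}, which (unlike strength) can be certified by a $t$-bundle of approximate maximum spanning forests. Moreover, your claim that ``each edge insertion or deletion changes $\tilde{s}_e$ for at most polylog many edges'' is false for both strength and connectivity --- a single inserted edge can raise the strength/connectivity of $\Omega(n)$ edges simultaneously --- so even granting an estimator, the claimed worst-case recourse does not follow. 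The paper sidesteps this entirely: the bundle only needs to witness a one-sided \emph{lower} bound on connectivity (an edge outside a $t$-bundle $2$-MSF has connectivity at least $t/2$ and is kept with probability exactly $1/4$; an edge inside the bundle is kept deterministically), and the dynamic spanning forest algorithms of Kapron--King--Mountjoy and Gibb et al.\ have worst-case recourse $1$ per update.

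A second omission is the multi-level error control. The paper's worst-case construction is a recursive half-sparsification ($O(\log\gamma)$ levels of \textsc{SparsifyDiCutMSFOnce}), and $(\beta,\eps)$-dicut approximation is \emph{not} transitive, since unbalanced cuts can blow up multiplicatively across levels. The paper needs a dedicated weak-transitivity lemma (Lemma~\ref{lemma:dicut_trans}) showing that composing $k$ levels of $(\beta,\eps_i)$-dicut approximations still yields a $(\beta, 3\sum_i\eps_i)$-\emph{balanced} dicut approximation. Your one-shot scheme avoids this issue only because it presumes simultaneous access to good strength estimates for all edges, which is precisely what cannot be maintained. Your static concentration argument (Chernoff over strong components plus Karger-style cut counting, with the $\beta$-inflation absorbing the directed-to-undirected gap) is essentially sound and parallels the paper's Lemma~\ref{lemma:baldicut}, but the dynamic half of the theorem is where the substance lies, and that half is not established by your proposal.
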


Unlike spectral sparsification, cut sparsification does not require degree
fixing.
As a result, our data structure can maintain a cut sparsifier explicitly without
extra vertices.
Additionally, we present an algorithm with improved amortized recourse and
update time (see \Cref{thm:dyndicut_for} in \Cref{sec:dicut}).
In achieving \Cref{thm:dyndicut_worst_inf}, we also prove that
independent edge sampling based on inverse undirected edge connectivity is
sufficient for balanced directed cut approximation.

\begin{theorem} \label{thm:statdicut_inf}
    There is a sparsification algorithm that, given a weighted directed graph
    and the edge connectivities of its corresponding undirected graph, computes
    a $(\beta,\eps)$-balanced directed cut sparsifier with $O(\eps^{-2} \beta n
    \log n)$ edges in linear time.
\end{theorem}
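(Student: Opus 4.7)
The plan is to apply an independent-edge sampling scheme that directly generalizes the Fung--Hariharan--Harvey--Panigrahi sparsification of undirected cuts. Given the input, I sample each directed edge $e=(u,v)$ of weight $w_e$ independently with probability $p_e = \min\{1, C\beta \eps^{-2} w_e (\log n)/\lambda_e\}$, where $\lambda_e$ is the minimum weight of an undirected cut of $G = \und(\vG)$ containing $e$ (provided as input) and $C$ is a sufficiently large absolute constant; if sampled, I include $e$ in $\vH$ with its original orientation and rescaled weight $w_e/p_e$. Computing all probabilities and sampling takes $O(m)$ time. The expected number of sampled edges is $\sum_e p_e \le C\beta\eps^{-2}\log n \cdot \sum_e w_e/\lambda_e \le C\beta\eps^{-2}(n-1)\log n$, using the classical bound $\sum_e w_e/\lambda_e \le n-1$ for weighted undirected graphs, and a standard Chernoff bound converts this expected bound into the claimed $O(\eps^{-2}\beta n\log n)$ high-probability size.

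For the approximation, fix a $\beta$-balanced cut $(U, V\setminus U)$ with forward value $c^+ = \ww(U,V\setminus U)$ and backward value $c^- = \ww(V\setminus U,U)$. The key observation is that the corresponding undirected cut value $c^u = c^+ + c^-$ satisfies $c^+ \le c^u \le (1+\beta)c^+ \le 2\beta c^+$; in particular, every forward edge $e$ lies in this undirected cut, so $\lambda_e \le c^u \le 2\beta c^+$ and $p_e \ge C\eps^{-2} w_e (\log n)/(2c^+)$ whenever $p_e<1$. Letting $\widetilde{c^+} = \sum_{e\in E^+(U,V\setminus U)} (w_e/p_e) X_e$, this gives variance at most $2(c^+)^2/(C\eps^{-2}\log n)$ and maximum summand at most $2c^+/(C\eps^{-2}\log n)$, from which Bernstein's inequality yields a per-cut failure probability of $\exp(-\Omega(C\log n))$. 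The backward value $c^-$ is handled symmetrically because the balance condition is itself symmetric in the two directions.

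For the union bound over all $\beta$-balanced cuts I would invoke Karger's cut-counting theorem applied to the undirected graph $G$: each balanced cut $(U,V\setminus U)$ induces an undirected cut of value $c^u \in [c^+, 2\beta c^+]$, and the number of undirected cuts of value at most $\alpha\lambda_G$ is $n^{O(\alpha)}$. The union bound then follows from the standard FHHP chunking argument, in which the per-cut concentration is strengthened at larger scales by partitioning the cut edges by their $\lambda_e$-level and bounding each group's variance contribution via cut-counting; choosing $C$ large enough drives the total failure probability to $o(1)$. The main obstacle is precisely this union-bound step: directly union-bounding over all $2^n$ vertex bipartitions is infeasible, but the factor-$2\beta$ relation $c^u \le 2\beta c^+$ furnished by the balance hypothesis transfers Karger's undirected cut-counting bound to the directed balanced setting with only a constant blow-up in $\beta$, which is the crucial step that lets the entire classical undirected analysis go through with the $\beta$ factor absorbed into the sampling rate.
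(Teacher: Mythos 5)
Your proposal matches the paper's proof essentially step for step: the paper (Theorem \ref{thm:baldicut} via Algorithm \ref{alg:baldicut} and Lemma \ref{lemma:baldicut}) samples each directed edge independently with probability proportional to $\eps^{-2}(\beta+1)\log n \cdot \ww_e/\kk_e$, bounds the expected size by $\sum_e \ww_e/\kk_e = n-1$, proves per-level sub-Gaussian concentration after partitioning edges by connectivity scale, and union-bounds exactly via the FHHP machinery you cite. The one place to be careful when you fill in the "standard chunking argument" is that plain Karger cut counting relative to the global min cut does not suffice at higher connectivity levels — you need the $k$-projection counting bound (the paper's Theorem \ref{thm:cpc}, a weighted generalization of FHHP's Theorem 2.3), which normalizes the cut value by the level $k=2^i\lambda$ so that the $n^{O(\alpha)}$ count is matched by the level-$i$ concentration.
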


Finally, we present a fully dynamic directed spectral sparsification algorithm
that works against an adaptive adversary. 
This result requires more detailed explanation, which we provide below.

We first remind readers that our degree balance preserving sparsifier requires
the approximation factor $\eps < 1$.
This requirement is necessary and warranted -- for any directed Eulerian graph,
an empty graph on the same set of vertices can serve as a constant good
approximation.
To the best of our knowledge, there exists no notion of directed spectral
approximation that extends beyond constant approximation.
This limitation is particularly problematic since the state-of-the-art adaptive
undirected cut and spectral sparsifier from \cite{BernsteinvdBPGNSSS22} require
the multiplicative approximation factor to be at least $O(\log n)$.

Kyng, Meierhans, and Probst Gutenberg circumvented this issue by considering the
sparsification problem on a related graph called a \emph{partial symmetrization}.
Specifically, a $\beta$-partial symmetrization of a directed graph $\vG$ is
defined by $\vG^{(\beta)}\defeq \beta \cdot G \cup \vG$, which combines the
directed graph with a copy of its corresponding undirected graph $G \defeq
\und(\vG)$ scaled by a factor of $\beta$.
They showed that a partially symmetrized Eulerian graph serves as a good
\emph{approximate pseudoinverse} of the original directed graph (see \Cref{lemma:quad0to1}).
\begin{definition}[Approximate pseudoinverse, \cite{CohenKPPSV16}]
    \label{def:app_pinv}
    For $\eps \ge 0$, a square matrix $\ZZ$ is an $\eps$-approximate pseudoinverse of
    square matrix $\MM$ with respect to a Hermitian PSD matrix $\UU$ if
    $\ker(\UU) \subseteq \ker(\MM) = \ker(\MM^\top) = \ker(\ZZ) =
    \ker(\ZZ^\top)$, and
    \[
        \norm{\PP_{\im(\MM)} - \ZZ\MM}_{\UU \to \UU} \le \eps,
    \]
    where $\PP_{\im(\MM)}$ is the identity matrix on the image of $\MM$.
\end{definition}
Moreover, when $\beta$ is large, the directed portion of $\vG^{(\beta)}$ can be
sparsified much more aggressively under the notion of degree balance preserving
spectral approximation, since
\[
    \normop{\LL_{G^{(\beta)}}^{\frac \dag 2} (\vLL_{\vG} - \vLL_{\vH})
    \LL_{G^{(\beta)}}^{\frac \dag 2}}
    =
    \frac{1}{2\beta+1} \normop{\LL_G^{\frac \dag 2} (\vLL_{\vG} - \vLL_{\vH})
    \LL_G^{\frac \dag 2}}.
\]
Further sparsification can also be performed on the undirected portion of the
graph $\vG^{(\beta)}$ to ultimately produce a sparse approximation.
These layers of approximation allow them to build a sparse preconditioner chain,
which can be applied to solve directed Eulerian Laplacian systems using
iterative solvers.
See \Cref{sec:adaptive} for more details.
Our dynamic algorithm efficiently maintains these sparsification layers under
adaptive edge insertions and deletions.

\begin{theorem}[Adaptive directed spectral sparsification quadruple]
    \label{thm:dyn_quad_adp_inf}
    Given a directed graph $\vG$ on vertices $V$ and with polynomially bounded edge weights, we can
    maintain \emph{explicitly} a set of 3 directed graphs $\vG_1 =
    \vG^{(\beta)},\vG_2',\vG_3'$ with $\vG_0 = \vG$ and $V(\vG_2') = V \cup X$
    and $V(\vG_3') = V \cup X \cup Y$.
    For $\vG_2 \defeq \Sc(\vG_2',V)$ and $\vG_3 \defeq \Sc(\vG_3',V)$, they
    satisfy that 
    \begin{enumerate}
    \item when $\vG$ is Eulerian $\vG_i$ is a
        $1-\frac{1}{\polylog(n)}$-approximate pseudoinverse of $\vG_{i-1}$ for all
        $i=1,2,3$ and degree balance preserving with respect to $\vG$,
    \item $\vG_3'$ has size $\tO(n)$ and $\vG_2'$ has size $O(m)$.
    \end{enumerate}
    The algorithm works against an adaptive adversary, has preprocessing time
    $\tO(m)$ and amortized update time $\tO(1)$.
\end{theorem}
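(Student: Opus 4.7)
The plan is a three-layer construction in parallel with the preconditioner chain
of \cite{KyngSPG22}, each layer maintained by a suitable dynamic subroutine, and
the $\polylog(n)$-slack coming from the $\beta$-scaling of the undirected
portion used to turn a constant-factor spectral approximation into a
$(1 - 1/\polylog(n))$-approximate pseudoinverse at each step.

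First I would maintain $\vG_1 = \vG^{(\beta)} = \beta G \cup \vG$ explicitly;
each edge insertion or deletion in $\vG$ triggers only $O(1)$ edge operations
on $\vG_1$ (the directed edge itself together with the reweighting of one
undirected edge), giving $\tO(m)$ preprocessing and $O(1)$ update time. When
$\vG$ is Eulerian, \Cref{lemma:quad0to1} directly yields that $\vG_1$ is a
$(1-1/\polylog(n))$-approximate pseudoinverse of $\vG_0 = \vG$, and it is
degree balance preserving by construction. Next I would construct $\vG_2'$ by
keeping the scaled undirected portion $\beta G$ untouched (which accounts for
its $O(m)$ size) and dynamically sparsifying only the directed portion of
$\vG_1$ using the adaptive-adversary subroutine for
$\polylog(n)$-partially symmetrized graphs announced in the abstract. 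The
identity highlighted in the introduction,
\[
\normop{\LL_{G^{(\beta)}}^{\dag/2}\,(\vLL_{\vG}-\vLL_{\vH})\,\LL_{G^{(\beta)}}^{\dag/2}}
= \tfrac{1}{2\beta+1}\,\normop{\LL_G^{\dag/2}\,(\vLL_{\vG}-\vLL_{\vH})\,\LL_G^{\dag/2}},
\]
shows that a mere constant-factor degree balance preserving sparsifier of
$\vG$ is automatically $O(1/\beta) = 1/\polylog(n)$-accurate in the
$G^{(\beta)}$ norm, which, together with \Cref{def:app_pinv}, yields the
required $(1-1/\polylog(n))$ pseudoinverse relating $\vG_2$ to $\vG_1$. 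The
extra vertex set $X$ is inherited from this subroutine.

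Finally I would obtain $\vG_3'$ by further sparsifying both pieces: feed
$\beta G$ through an adaptive dynamic undirected spectral sparsifier such as
\cite{BernsteinvdBPGNSSS22} (whose $\polylog(n)$-multiplicative error fits
inside our slack), and pass the already-sparsified directed portion through
\Cref{thm:dynspec_star_inf} with constant accuracy. The new vertices introduced
by these two sparsifications form $Y$, and because both subroutines output
$\tO(n)$-sized sparsifiers the total size of $\vG_3'$ is $\tO(n)$. The
pseudoinverse guarantee at this layer follows in the same way from the
$G^{(\beta)}$-norm spectral approximation and \Cref{def:app_pinv}. Each input
update cascades as $O(1)$ operations at Layer $1$ and $\polylog(n)$ operations
each at Layers $2$ and $3$, giving the claimed $\tO(m)$ preprocessing and
$\tO(1)$ amortized update time.

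The main obstacle is propagating the adaptive-adversary guarantee through the
composition. Even when every individual subroutine is adaptive, an adversary
who inspects $\vG_3'$ could in principle leak information about the randomness
used inside the Layer $2$ sparsifier. Handling this requires independent random
tapes at each layer together with the observation that the output of every
adaptive subroutine is a deterministic function of its input stream and its own
tape; any adversarial strategy that uses $\vG_3'$ can therefore be simulated by
feeding the same edge updates to Layer $2$ directly, and the adaptive guarantee
of that subroutine already covers that case. A secondary technical point is
that the constants chosen for the three sparsifier accuracies must multiply to
remain below $1-1/\polylog(n)$ in the final pseudoinverse bound; the
$\beta$-slack makes this straightforward but has to be tracked explicitly when
invoking \Cref{def:app_pinv}.
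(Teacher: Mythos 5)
Your three-layer skeleton matches the paper's, and you correctly identify the two driving facts: \Cref{lemma:quad0to1} for the $\vG_0 \to \vG_1$ step, and the $\frac{1}{2\beta+1}$ rescaling identity that lets a constant-accuracy sparsifier of the directed portion become $1/\polylog(n)$-accurate in the $G^{(\beta)}$ norm. However, there is a genuine gap at the step that is the actual content of the theorem: you sparsify the directed portion of $\vG_1$ by invoking ``the adaptive-adversary subroutine for partially symmetrized graphs announced in the abstract'' --- but that subroutine \emph{is} this theorem, so the argument is circular. The paper's resolution is that once $\beta = \Omega(\eps^{-1}\phi^{-2})$, the directed portion does not need to be sampled at all: on each expander piece, the \emph{deterministic} star patching (\Cref{lemma:diexpspar_star}, built from \Cref{lemma:starpatch_algo}) --- i.e., a single auxiliary vertex matching the in/out degree imbalances, with no surviving original directed edges --- already achieves spectral error $O(\phi^{-2})$ against $\LL_G$, hence $O(\phi^{-2}/\beta) = 1/\polylog(n)$ against $\LL_{G^{(\beta)}}$. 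Determinism of this patching is precisely what makes the layer adaptive; without it there is no known constant-factor adaptive directed sparsifier, as the paper emphasizes.

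Two further points would derail your construction as written. First, at layer $3$ you propose passing the directed portion through \Cref{thm:dynspec_star_inf}, which only works against an \emph{oblivious} adversary; doing so would void the adaptive guarantee (and is unnecessary --- in the paper the directed portion of $\vG_2'$ is already the $\tO(n)$-sized star patchings and is carried to $\vG_3'$ unchanged). Second, the undirected sparsifier of $\beta G$ must be \emph{degree-preserving} for $\vG_3$ to satisfy the degree-balance condition of the quadruple; an off-the-shelf adaptive spectral sparsifier does not give this. The paper instead combines the adaptive \emph{cut} sparsifier of \cite{BernsteinvdBPGNSSS22} on almost-uniform-degree expanders with \Cref{lemma:exp_samedeg} (degree-preserving cut sparsifiers of expanders are spectral sparsifiers) and another star patching, and then absorbs the resulting $O(\log^{12} n)$ multiplicative error into the pseudoinverse bound via the explicit rescaling of \Cref{lemma:quad2to3}, which you gesture at but do not carry out.
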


\subsection{Overview}

\paragraph{Directed spectral sparsification}
For directed spectral sparsification, a major challenge compared to undirected
spectral sparsification is the degree balance preservation requirement.
Sampling edges independently, as in undirected spectral sparsification
algorithms~\cite{SpielmanS08}, may create degree imbalances.
Recent approaches in the static setting instead use independent sampling of
graph structures that inherently preserve degree balances.
Chu, Gao, Peng, Sachdeva, Sawlani, and Wang \cite{ChuGPSSW18} used short-cycle
sampling while Jambulapati, Sachdeva, Sidford, Tian, and Zhao
\cite{JambulapatiSSTZ25} employed electrical circulation sampling.
These structures are particularly difficult to maintain in a dynamic setting.
They prohibit one-shot sparsification, requiring instead a half-sparsification
framework where the number of edges reduces by roughly half per iteration.
For example, when removing half of the edges from a short cycle, one must keep
and double the weights of the remaining edges to preserve degree balance.
Consequently, a potential dynamic algorithm would need to maintain
half-sparsifiers for $O(\log n)$ levels.
Moreover, a single edge update causes an average recourse of at least
$\Omega(\log n)$ per level because most cycles have length $\Omega(\log n)$
\cite{ChuGPSSW18,LiuSY19,ParterY19}.
This recourse then quickly propagates, resulting in prohibitively large total
update time.

To avoid these issues, we instead adopt the approach of Cohen et
al.~\cite{CohenKPPRSV17}, where the sparsification algorithm performs
independent edge sampling on expanders and computes a ``patching'' graph to fix
degree imbalances caused by the sampling.
Cohen et al.~made the key observation that sampling a directed edge with
probability inversely proportional to both the in-degree of its tail and the
out-degree of its head in an ``expander'' (i.e., a directed graph whose
corresponding undirected graph is an expander) effectively controls spectral
error and degree imbalance.
They then showed that a small patching of size $O(n)$ can be computed for such
expander by greedily matching remaining imbalanced degrees.

We show in \Cref{sec:spectral} that a patching similar to the greedy patching of
\cite{CohenKPPRSV17} can be implicitly maintained by a tree data structure that
allows $\tO(1)$ time querying of any edge.
By treating each in-degree as an interval of that length and placing these
intervals side-by-side starting from 0 (and doing the same for out-degrees), we
can compute a patching by simply checking interval overlaps.
Specifically, we set the weight of a directed edge $\ww_{(u,v)} = |I^{(o)}_u
\cap I^{(i)}_v|$ for each pair of $u,v \in V$ with in-degree interval
$I^{(i)}_v$ and out-degree interval $I^{(o)}_u$.
These intervals can be deterministically maintained in $\tO(1)$ update
time using two segment trees.
This approach also ensures that the number of edges in the patching remains at
most $O(n)$.
We call this an \emph{external} patching since it introduces directed edges not
originally present in the graph.

To avoid having extra edges in a sparsifier, we introduce another patching
algorithm that slightly adjusts the edge weights of the sampled subgraph using
electrical flow.
Here, we observe that after the independent edge samplings, the corresponding
undirected graph of these samples remains an expander, since it forms an
undirected spectral sparsifier of the original expander.
When degree imbalances are small, an electrical routing of the degree demands
through an expander incurs only small maximum congestion on the edges
\cite{FlorescuKPGS24}, enabling us to maintain a subgraph.
This electrical routing can be computed quickly using an undirected Laplacian
solver (e.g.~\cite{JambulapatiS21}). 
A rounding algorithm, such as a tree routing algorithm (see
Algorithm~\ref{alg:round} and \cite{KelnerOSZ13,JambulapatiSSTZ25}), is used to
handle the small errors that arise from approximate undirected Laplacian
solving.
We call this algorithm an \emph{internal} patching algorithm.
For our dynamic algorithm, we only update the demand vector associated with the
degree imbalance and compute an electric flow when querying the entire graph.

A significant problem with both patching approaches is that the patching edges
cannot be maintained explicitly.
For external patching, modifying the degree of a vertex whose interval appears
early in the ordering changes almost all edge weights in the patching.
It can be worse for internal patching: the entire electrical flow is changed by
a single update.
To solve this problem, we observe that by adding auxiliary vertices, we can
effectively patch degree imbalances using just a single directed star graph with
one extra vertex.
This approach allows us to maintain the spectral sparsifier explicitly and
efficiently, as a single degree change only requires adjusting one edge on the
degree imbalance.
We remark that there is no tangible drawback for having a few auxiliary vertices
in a sparsifier.
For Eulerian graphs specifically, sparsifiers constructed this way still serve
as effective preconditioners for solving Eulerian systems (see
\Cref{lemma:schurprecon} in \Cref{ssec:dispec_prelim}).
There are also applications of undirected sparsifiers with auxiliary vertices
\cite{LiS18,ForsterGLPSY21}.

The framework of \cite{BernsteinvdBPGNSSS22} effectively reduces the fully
dynamic spectral sparsification problem to the same problem on pruned expanders
(see \cite{SaranurakW19}) undergoing decremental (edge deletion only) updates.
Recall that the algorithm of \cite{CohenKPPRSV17} samples edges independently
with probability inversely proportional to their incident vertex degrees.
For an oblivious adversary, we only need to resample edges incident to the two
vertices whose degrees are affected by an edge deletion.
Using an oversampling parameter $\rho = \eps^{-2} \cdot \poly(\phi^{-1},\log
n)$, we can apply a fast sampling algorithm
(e.g.~\cite{Knuth97,Devroye06,BringmannP12}) to achieve recourse and update time
roughly equal to the combinatorial degree $\tilde{O}(\rho)$ of the sampled
vertices. 

For exact degree-preserving sparsification where both in-degrees and out-degrees
must be preserved exactly (rather than just preserving their difference), there
are issues with our external and star patching schemes.
Our algorithm operates on directed bipartite graphs (see
\Cref{lemma:blift_spectral}) with two copies of each vertex.
Since these patching schemes may introduce edges not originally in the graph,
collapsing the two copies could create a self-loop.
To prevent this, we introduce a top-level decomposition of the complete
bipartite edge set into $O(\log n)$ disjoint subsets, ensuring no two copies of
the same vertex appear in any subset.
This approach guarantees that no self-loops are generated when collapsing each
subset, thus maintaining exact degree-preserving guarantees.

\paragraph{Balanced directed cut sparsification}
To facilitate dynamic balanced directed cut sparsification, we devise a novel
static sparsification algorithm.
Our algorithm samples each directed edge independently with probability
proportional to the inverse of its undirected connectivity, paralleling the
undirected cut sparsification algorithm developed in \cite{FunHHP11}.
We note that a similar algorithm for sparsifying balanced directed cuts was
recently discovered independently by Goranci, Henzinger, Räcke, and Sricharan
\cite{GoranciHRS25}~\footnotemark.
They showed (Theorem 2, \cite{GoranciHRS25}) how to compute such sparsifiers
using any undirected edge connectivity parameter (edge strength, connectivity,
and Nagamochi-Ibaraki Index \cite{NagamochiI92}) with slightly worse sparsity
guarantees compared to ours that specifically uses undirected edge
connectivity.
\footnotetext{This paper was originally submitted to FOCS for review. We were
unaware of \cite{GoranciHRS25} when writing and only learned about it later
through reviewer comments.}

Any $\beta$-balanced directed cut that crosses an edge with high undirected
connectivity $k$ has size at least $\frac{k}{\beta+1}$.
Intuitively, such an edge would have a low contribution to the overall size of a
cut, making it less important.
Our static algorithm simply samples edges with probability inversely
proportional to their undirected edge connectivity.
We remark that the cut approximation quality only has a square root dependency
on undirected cut (see \Cref{thm:dyndicut_worst_for}).
This is because our sampling algorithm is unbiased with expectation for a
directed cut precisely equal to its weighted cut value.

Another major reason we choose edge connectivities for our probability measure
is that, compared to undirected edge strengths used in the algorithm by
\cite{CenCPS21}, they are significantly easier to certify.
We can easily certify multiple edge-disjoint paths between incident vertices
using a bundle of approximate maximum weighted spanning forests (MSF).
This allows us to directly apply the dynamic sparsification framework from
\cite{AbrahamDKKP16}, which maintains such bundle of approximate MSFs with low
worst-case update time and recourse using fully dynamic MSF algorithms from
\cite{KapronKM13,GibbKKT15}.

For a directed edge in an expander, its undirected connectivity simply equals
the smaller degree of its incident vertices multiplied by the expansion factor
$\phi$.
This enables us to sample our directed cut sparsifier using the same algorithm
as directed spectral sparsification without requiring any patching.
This approach yields slightly better update time under amortization using the
framework from \cite{BernsteinvdBPGNSSS22}.

\paragraph{Directed spectral sparsification against an adaptive adversary}
As previously noted, directed spectral sparsification against an adaptive
adversary presents significant challenges.
Our approach instead utilizes the sparsification quadruple construction from
\cite{KyngSPG22} to address this task.
Specifically, 
\begin{enumerate}
    \item $\vG_1 = \vG_0^{(\beta)}$ is the $\beta$-partial symmetrization of
        $\vG_0$,
    \item $\vG_2 = \beta \cdot G \cup \vH_2$ is a sparsification of the directed
        portion $\vG$ of $\vG_1$,
    \item $\vG_3 = \beta \cdot H_3 \cup \vH_2$, the final sparsifier, further
        sparsifies the undirected portion of $\beta \cdot G$ of $\vG_2$.
\end{enumerate}

For adaptive sparsification of the undirected portion (i.e., from $\vG_3$ to
$\vG_2$), we again adopt a framework from \cite{BernsteinvdBPGNSSS22} to reduce
the problem to undirected spectral sparsification on decremental expanders with
almost uniform degrees.
By observing that a degree-preserving undirected cut sparsifier of an expander
is also a spectral sparsifier \cite{ChuzhoyGLNPS20,BernsteinvdBPGNSSS22}, we can
create a dynamic algorithm on decremental expanders by combining an adaptive cut
sparsifier from \cite{BernsteinvdBPGNSSS22} with the degree-preserving variant
of our star patching scheme described above.

Regarding the directed portion, we demonstrate that when $\beta$ is sufficiently
large, any directed graph that preserves degrees with respect to $\vG$ can serve
as an effective degree-preserving directed spectral sparsifier.
This allows us to directly implement star patchings as sparsifiers.
Since our star patching algorithm is deterministic, we can utilize the previous
reduction (decremental expander without uniform degree requirement) from the
oblivious case for adaptive directed spectral sparsification.
Our final algorithm for maintaining $\vG_3$ simply runs one copy of each
adaptive algorithm described above and combines their sparsifiers (with
appropriate scaling).

One might notice the lack of an adaptive algorithm for balanced directed cut
sparsification.
We should first point out that instead of having two parameters $(\eps,\beta)$ for
the notion of directed cut approximation (see also \Cref{def:baldicuappr_gen}),
one can consolidate them into a single parameter of $\frac{\eps}{\sqrt{\beta}}$.
Similar to spectral sparsification, the state-of-the-art adaptive undirected cut
sparsifier from \cite{BernsteinvdBPGNSSS22} requires the multiplicative
approximation factor to be at least $O(\log n)$.
Then, if we consider the larger direction of a directed cut that carries at
least half of the weight of its corresponding undirected cut, we can only
guarantee that $\frac{\eps}{\sqrt{\beta}} = O(\log n)$.
Otherwise, we could improve the undirected sparsification.
This tradeoff between $\eps$ and $\beta$ is not particularly appealing to us.

\subsection{Related works}

\paragraph{Static directed graph sparsification}
The notion of Eulerian Laplacian spectral sparsification was introduced by
\cite{CohenKPPRSV17} to construct fast directed Eulerian Laplacian linear system solvers.
These solvers have applications in various computational problems related to
random walks on general directed graphs, through reductions given in
\cite{CohenKPPSV16,AhmadinejadJSS19}.
Cohen et al.~\cite{CohenKPPRSV17} gave the first $\tO(m)$ time algorithm for
constructing Eulerian sparsifiers with $\tO(\eps^{-2} n)$ edges, based on
expander decompositions.

As an alternative approach, Chu, Gao, Peng, Sachdeva, Sawlani, and Wang
\cite{ChuGPSSW18} introduced the short cycle decomposition technique and used
it to give a polynomial-time algorithm for computing sparsifiers with
significantly improved sparsity.
This result was later improved by Sachdeva, Thudi, and Zhao \cite{SachdevaTZ24}
using improved analysis of short-cycle sampling and made more efficient using
improved short-cycle decomposition algorithms \cite{LiuSY19,ParterY19}.
They also demonstrated an improved existential result on the sparsity of
Eulerian sparsifiers by leveraging techniques from matrix discrepancy theory
\cite{BansalJM23}.
This discrepancy approach was earlier introduced by Jambulapti, Reis, and Tian
\cite{JambulapatiRT23} for constructing undirected degree-preserving spectral
sparsifiers with optimal sparsity in almost linear time.
This notion of sparsification served as an intermediary between undirected and
directed Eulerian sparsification \cite{ChuGPSSW18}.

Recently, Jambulapati, Sachdeva, Sidford, Tian, and Zhao
\cite{JambulapatiSSTZ25} achieved state-of-the-art Eulerian sparsification
results for both efficient sparsification and existential sparsity using a
combination of techniques: effective resistance decomposition and sampling by
electrical circulations.
When combined with the reduction by Peng and Song \cite{PengS22}, they provided
the state-of-the-art Eulerian Laplacian solver that runs in $\bO(m \log^3 n +
n\log^6 n)$\footnotemark~time.
\footnotetext{We use $\bO(\cdot)$ to hide polyloglog factors in $n$.}
Lau, Wang, and Zhou \cite{LauWZ25} gave a derandomized Eulerian sparsification
algorithm using the ``deterministic discrepancy walk'' approach from
\cite{PesentiV23}.
They extended their sparsification result to satisfy a stronger notion of
directed spectral approximation known as ``singular value (SV) approximation"
\cite{AhmadinejadPPSV23}.

For directed cut sparsification, the notion of balanced cut approximation was
introduced by Cen et al.~\cite{CenCPS21} for sparsification of directed graphs
where each cut has approximately the same weight as the cut in the reverse
direction.
Efficient cut sparsification algorithms have applications in minimum cut
approximations for such balanced graphs \cite{EneMPS16}.
Cen et al.\ achieved a sparsification algorithm for constructing cut sparsifiers
with $O(\epsilon^{-2} \beta n \log n)$ edges for balance factor $\beta$ by
sampling using the inverse undirected edge strength, introduced in
\cite{BenczurK96} for undirected cut sparsification.
This upper bound was later proven to be nearly tight by Chen, Li, Ling, Tai,
Woodruff, and Zhang \cite{ChengLLTWZ24}, who gave a bit complexity lower bound
of $\Omega(\epsilon^{-2} \beta n)$.
\Cref{thm:statdicut_inf} complements the results of \cite{CenCPS21} by showing
that sampling with probability inversely proportional to undirected edge
connectivities, similar to \cite{FunHHP11} for undirected cut sparsification,
achieves the same sparsity guarantee.

Very recently, Goranci, Henzinger, Räcke, and Sricharan \cite{GoranciHRS25}
independently discovered a similar generalization to balanced cut approximation
using the framework of \cite{FunHHP11}.
They proved a stronger result showing that any undirected edge connectivity
measure is suitable for sparsification of balanced directed cuts.
One key difference is that their general result has a slightly worse sparsity
guarantee compared to ours that specifically uses undirected edge connectivity.
Their sparsification algorithm works in conjunction with a dynamic NI index
algorithm to dynamically sparsify residual networks in incremental maxflow
problems.

\paragraph{Dynamic graph sparsification}
Dynamic graph sparsification problems have been studied extensively for
undirected graphs.
There is a long line of work
\cite{AusielloFI06,Elkin11,BaswanaKS12,ForsterG19,BernsteinFH21,BernsteinvdBPGNSSS22,ChenKLPPGS22,vdBrandCKLPPGSS23,ChuzhoyP25}
on maintaining spanners -- graph sparsifiers that approximately preserve
distances -- with various assumptions about updates (amortized or worst-case)
and adversaries (oblivious or adaptive).

Abraham, Durfee, Koutis, Krinninger, and Peng \cite{AbrahamDKKP16} initiated the
study of dynamic undirected spectral and cut sparsifications.
They developed fully dynamic spectral sparsifiers with $\tO(\eps^{-2} n)$ edges
and $\tO(\eps^{-2})$ amortized update time against oblivious adversaries by
adapting the static spectral sparsification algorithm of Koutis and Xu
\cite{KoutisX16}.
Their algorithm maintains a bundle of dynamic spanners \cite{BaswanaKS12} that
can certify edges with small effective resistances, the central value for
undirected spectral sparsification.
By swapping spanners with a bundle of dynamic (approximate) maximum spanning
forests \cite{HolmRW15,KapronKM13,GibbKKT15} that could certify edge
connectivities, they also showed fully dynamic algorithms for cut sparsification
with worst-case update time of $\tO(\eps^{-2})$ against an oblivious adversary.

These results were subsequently strengthened by Bernstein, van den Brand, Probst
Gutenberg, Nanongkai, Saranurak, Sidford, and Sun \cite{BernsteinvdBPGNSSS22} to
either work against adaptive adversaries or have worst-case update time
guarantees.
At the core of these results is a black-box reduction that reduces fully dynamic
sparsification problems to decremental sparsification problems on almost uniform
degree expanders.
A version of their framework serves as the basis of our dynamic algorithms.
To handle probabilistic dependencies introduced by adaptive adversaries,
Bernstein et al.\ developed a technique called \emph{proactive} sampling, where
an edge update incurs multiple re-samplings with increasing intervals.
One downside is that their adaptive algorithms have a significant trade-off
between update time and approximation factor, leading to at least
polylogarithmic approximations for any subpolynomial update time.

Recently, Khanna, Li, and Putterman \cite{KhannaLP25}, and Goranci and Momeni
\cite{GoranciM25} independently discovered fully dynamic hypergraph spectral and
cut sparsification algorithms by adapting the dynamic framework of
\cite{AbrahamDKKP16}.

Dynamic undirected spectral vertex sparsifiers have also been studied, with
applications in dynamic undirected Laplacian solvers \cite{DurfeeGGP19}, dynamic
effective resistance sparsifiers \cite{DurfeeGGP19,ChenGHPS20}, and static flow
problems \cite{ChenGHPS20,GaoLP21,vdBrandGJLLPS22}.

\paragraph{Dynamic expander decomposition} 
In the static setting, undirected expander decomposition algorithms have been
instrumental in various algorithmic applications, most notably the first
nearly-linear time Laplacian solver \cite{SpielmanT04}, the first almost-linear
time directed Laplacian solver \cite{CohenKPPRSV17}, and the first almost-linear
time algorithm for maxflow or min-cost flow in directed graphs
\cite{ChenKLPPGS22}.
In dynamic settings, there has been significant amount of works on generating
different variants of expander decompositions — standard undirected
\cite{NanongkaiSWN17,SaranurakW19}, almost uniform degrees
\cite{BernsteinvdBPGNSSS22}, boundary-linked \cite{GoranciRST21}, and directed
\cite{BernsteinPGS20,HuaKPGW23,SulserPG25}.
These underlie a wide range of recent advancements in dynamic algorithms such as
dynamic connectivity \cite{NanongkaiSWN17,ChuzhoyGLNPS20,JinS22}, single-source
shortest paths \cite{ChuzhoyK19,BernsteinPGS20,BernsteinPGS22}, and flow and cut
problems \cite{BernsteinvdBPGNSSS22,GoranciRST21,JinST24}.

\section{Preliminaries} \label{sec:prelims}

\paragraph{General notation.} All $\log$ are base $2$ unless otherwise
specified.
We denote $\log_e$ by $\ln$.
When discussing a graph clear from context with $n$ vertices and edge weight
ratio bounded by $W$. 
We use the $\bO$ notation to hide $\polyloglog(nW)$ factors for brevity (in
runtimes only).
We let $[n] \defeq \{i \in \N \mid 1 \le i \le n\}$.

\paragraph{Vectors and Matrices.}
Vectors are denoted in lower-case boldface.
 $\vzero_d$ and $\vone_d$ are the all-zeroes and all-ones vector respectively of dimension $d$.
 $\ee_i$ denote the $i^{\text{th}}$ basis vector.
  $\uu \circ \vv$ denotes the entrywise product of $\uu, \vv$ of equal dimension.

Matrices are denoted in upper-case boldface.
We refer to the $i^{\text{th}}$ row and $j^{\text{th}}$ column of matrix $\MM$
by $\MM_{i:}$ and $\MM_{:j}$ respectively.
We use $[\vv]_i$ to index into the $i^{\text{th}}$ coordinate of vector $\vv$,
and let $[\MM]_{i:} \defeq \MM_{i:}$, $[\MM]_{:j} \defeq \MM_{:j}$, and
$[\MM]_{ij} \defeq \MM_{ij}$ in contexts where $\vv$, $\MM$ have subscripts.

$\id_d$ is the $d \times d$ identity matrix.
For $\vv \in \R^d$, $\diag{\vv}$ denotes the associated diagonal $d \times d$
matrix.
For linear subspace $S$ of $\R^d$, $\dim(S)$ is its dimension and $\PP_S$ is the
orthogonal projection matrix onto $S$.
We let $\ker(\MM)$, $\im(\MM)$, and $\MM^{\dagger}$ denote the kernel, image and
pseudoinverse of $\MM$.
We denote the operator norm (largest singular value) of matrix $\MM$ by
$\normop{\MM}$.
The number of nonzero entries of a matrix $\MM$ (resp.\ vector $\vv$) is denoted
$\nnz(\MM)$ (resp.\ $\nnz(\vv)$), and the subset of indices with nonzero entries
is $\supp(\MM)$ (resp.\ $\supp(\vv)$).

We use $\preceq$ to denote the Loewner partial order on $\Sym^d$, the symmetric
$d \times d$ matrices.
For $\MM \in \Sym^d$ and $i \in [d]$, we let $\lam_i(\MM)$ denote the
$i^{\text{th}}$ smallest eigenvalue of $\MM$, so $\lam_1(\MM) \le \lam_2(\MM)
\le \ldots \le \lam_d(\MM)$.
For positive semidefinite $\AA \in \Sym^d$, we define the seminorm induced by
$\AA$ by $\|\xx\|_{\AA}^2 \defeq \xx^\top \AA \xx$.

\paragraph{Graphs.} 
All graphs throughout this paper are assumed to be simple without loss of
generality, as collapsing parallel multi-edges does not affect (undirected or
directed) graph Laplacians.
Moreover, a dynamic update of adding a parallel edge can be simulated by one
edge deletion and one edge insertion with combined weight.
We denote undirected weighted graphs without an arrow and directed weighted
graphs with an arrow, i.e., $G = (V, E, \ww)$ is an undirected graph with
vertices $V$, edges $E$, and weights $\ww \in \R_{\ge 0}^E$, and $\vec{G}$ is a
directed graph.
A directed Eulerian graph is a directed graph where weighted in-degree equals weighted out-degree for every vertex.
We refer to the vertex set and edge set of a graph $G$ (resp.\ $\vec{G}$) by $V(G)$ and $E(G)$ (resp.\ $V(\vec{G})$ and $E(\vec{G})$).
We associate a directed edge $e$ from $u$ to $v$ with the tuple $(u, v)$, and an undirected edge with $(u, v)$ and $(v, u)$ interchangeably.
We define $h(e) = u$ and $t(e) = v$ to be the head and tail of a directed edge $e = (u, v)$.
We let $\rev(\vG)$ denote the directed graph with all edge orientations reversed from $\vG$, and $\und(\vG)$ denote the undirected graph which removes orientations (both keeping the same weights).

We say $H$ is a subgraph of $G$ if the edges and vertices of $H$ are subsets of
the edges and vertices of $G$ (with the same weights), denoting $H = G_{F}$ if
$E(H) = F$, and defining the same notion for directed graphs.
For $U \subseteq V$, we let $G[U]$ denote the induced subgraph of $G$ on $U$
(i.e., keeping all of the edges within $U$).
When $V$ is a set of vertices, we say $\{V_i\}_{i \in [I]}$ is a partition of
$V$ if $\bigcup_{i \in [I]} V_i = V$, and all $V_i$ are disjoint.
We say $\{G_j\}_{j \in [J]}$ are a family of edge-disjoint subgraphs of $G = (V,
E, \ww)$ if all $E(G_j)$ are disjoint, and for all $j \in [J]$, $V(G_j)
\subseteq V$, $E(G_j) \subseteq E$, and every edge weight in $G_j$ is the same
as its weight in $G$.

\paragraph{Graph matrices.} 
For a graph with edges $E$ and vertices $V$, we let $\BB \in \{-1, 0, 1\}^{E
\times V}$ be its edge-vertex transfer matrix, so that when $\vG$ is directed
and $e = (u, v)$, $\BB_{e:}$ is $2$-sparse with $\BB_{eu} = 1$, $\BB_{ev} = -1$
(for undirected graphs, we fix an arbitrary consistent orientation).
For $u, v \in V$, we define $\bb_{(u, v)} \defeq \ee_u - \ee_v$.
When $\BB$ is the transfer matrix associated with graph $G = (V, E, \ww)$
(resp.\ $\vec{G}$), we say $\xx$ is a circulation in $G$ (resp.\ $\vec{G}$) if
$\BB^\top \xx = \vzero_V$; when $G$ (resp.\ $\vec{G}$) is clear we simply say
$\xx$ is a circulation.
We let $\HH, \TT \in \{0, 1\}^{E \times V}$ indicate the heads and tails of each
edge, i.e., have one nonzero entry per row indicating the relevant head or tail
vertex for each edge, respectively, so that $\BB = \HH - \TT$.
When clear from context that $\ww$ are edge weights, we let $\WW \defeq
\diag{\ww}$.
For undirected $G = (V, E, \ww)$ with transfer matrix $\BB$, the Laplacian
 matrix of $G$ is $\LL \defeq \BB^\top \WW \BB$.
For directed $\vec{G} = (V, E, \ww)$, the directed Laplacian matrix of $\vec{G}$
is $\vec{\LL} \defeq \BB^\top \WW \HH$.
To disambiguate, we use $\LL_G$, $\HH_G$, $\TT_G$, $\BB_G$, etc.\ to denote
matrices associated with a graph $G$ when convenient.

Note that $\vec{\LL}^\top \vone_V = \vzero_V$ for any directed Laplacian $\vec{\LL}$.
If $\vG$ is Eulerian, then its directed Laplacian also satisfies $\vLL \vone_V =
\vzero_V$ and $\ww$ is a circulation in $\vG$ (i.e., $\BB^\top \ww =
\vzero_V$).
Note that for a directed graph $\vG = (V,E,\ww)$ and its corresponding
undirected graph $G \defeq \und(\vG)$, the undirected Laplacian is $\LL_G =
\BB^\top \WW \BB$, and the reversed directed Laplacian is $\vLL_{\rev(\vG)} =
-\BB^\top \WW \TT$.

We let $\PPi_V$ denote the Laplacian of the unweighted complete graph on $V$,
i.e., $\PPi_V \defeq \id_V - \frac 1 {|V|} \vone_V\vone_V^\top$.
Note that $\PPi_V$ is the orthogonal projection on the the subspace spanned by
the vector that is $1$ in the coordinates of $V$ and $0$ elsewhere.

For a subset of vertices $C \subseteq V$ and $F = V \setminus C$, the Schur complement of a directed
Laplacian $\vLL$ onto $C$ is defined by $\Sc(\vLL,C) \defeq \vLL_{CC} - \vLL_{CF}
\vLL_{FF}^\dag \vLL_{FC}$, which satisfies the block LU factorization
\begin{equation} \label{eq:lufact}
    \vLL = \begin{pmatrix}
        \vLL_{FF} & \vLL_{FC} \\
        \vLL_{CF} & \vLL_{CC}
    \end{pmatrix}
    =
    \begin{pmatrix}
        \II_F & 0 \\
        \vLL_{CF}\vLL_{FF}^\dag & \II_C
    \end{pmatrix}
    \begin{pmatrix}
        \vLL_{FF} & 0 \\
        0 & \Sc(\vLL,C)
    \end{pmatrix}
    \begin{pmatrix}
        \II_F & \vLL_{FF}^\dag\vLL_{FC} \\
        0 & \II_C
    \end{pmatrix}.
\end{equation}
A Schur complement of a directed graph Laplacian remains a directed graph
Laplacian.
We say that a directed graph $\vH$ is the Schur complement of $\vG$ onto a
subset $C \subseteq V$, denoted by $\vH = \Sc(\vG,C)$, if $\vLL_{\vH} =
\Sc(\vLL,C)$.
The Schur complement of an Eulerian graph is Eulerian.

\paragraph{Cut, Conductance and Expanders.}
Given a directed graph $\vG = (V,E,\ww)$, for a subset $S \subseteq V$ of
vertices, the volume of $S$ is the same as $\Vol_G(S) \defeq \sum_{v \in S}
\sum_{e \in E(G) : e \ni v} \ww(e)$ the volume of $S$ in the corresponding
undirected graph $G = \und(\vG)$.
For another subset $T \subseteq V$, we let $E(S,T)$ be the set of edges
satisfying that for $e \in E(S,T)$ it has $h(e) \in S$ and $t(e) \in T$ and
denote by $\ww(S,T) \defeq \sum_{e \in E(S,T)} \ww(e)$ the total weight of these
edges.
We say that a subgraph $\vC$, often referred to as a subset of edges, is a
directed cut (di-cut) of $\vG$ if there exists some subset $S \subseteq V$
satisfying $S,V \setminus S \ne \emptyset$ such that $E(\vC) = E(S,V\setminus
S)$.
The corresponding (undirected) cut $C = \und(\vC)$ of $\vC$ is defined by $E(C)
= E_G(S,V\setminus S) = E(S,V\setminus S) \cup E(V\setminus S, S)$ the set of
cut edges in $G$.

We say that the %
undirected conductance of subset $S$ is %
\[
    \Phi_{G}(S) = \frac{\ww_G(S,V\setminus S)}{\min\Brace{\vol(S),\vol(V
    \setminus S)}},
\]
where we note that $\ww_G(S,V \setminus S) = \ww(S, V\setminus S) +
\ww(V\setminus S, S)$.
The conductance of an undirected graph $G$, denoted by $\phi(G)$, is the minimum
conductance $\phi_{G}(S)$ of any cut $(S,V\setminus S)$.

\begin{definition}
    For any $\phi \in (0,1)$, %
    we say that an undirected graph $G$ is a $\phi$-expander if $\Phi(G) \ge
    \Phi$.
\end{definition}

\section{A fully dynamic directed spectral sparsifier} \label{sec:spectral}

In this section, we show our main results for dynamically maintaining directed
spectral sparsifiers under the notion of degree balance preserving directed
spectral approximation (\Cref{def:dispecappr}).
We formally state the guarantees of our algorithms in
\Cref{thm:dynspecstar_for,thm:dynspecext_for,thm:dynspecint_for}.

\Cref{thm:dynspecstar_for} is the formalized version of
\Cref{thm:dynspec_star_inf} where we spell our all the polylogarithmic
dependencies.
The explicit sparsifier maintained using this algorithm requires some extra
vertices that are not originally in the graph.
We show in \Cref{lemma:schurprecon} that having extra vertices does not have a
negatively affect on using such sparsifier for preconditioning.
\begin{theorem}[Formal version of \Cref{thm:dynspec_star_inf}]
    \label{thm:dynspecstar_for}
    There exists a fully dynamic algorithm that given a weighted directed graph
    $\vG$ on vertices $V$ with bounded weight ratio $W$ undergoing oblivious
    edge insertions and deletions and an $\eps \in (0,1)$,
    maintains \emph{explicitly} a graph $\vH$ on vertices $V \cup X$ with $X$
    disjoint from $V$ such that with high probability $\Sc(\vH,V)$ is an
    $\eps$-degree balance preserving directed spectral sparsifier of $\vG$.
    The algorithm has preprocessing time $O(m \log^7 n)$ and amortized update time
    $O(\eps^{-2} \log^{13} n)$. 
    The graph $\vH$ has size $O(\eps^{-2} n \log^{11} n \log W)$ and extra
    number of vertices $|X| \le O(n \log^2 n \log W)$.
\end{theorem}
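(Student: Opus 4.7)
The plan combines three ingredients: the fully-to-decremental reduction of \cite{BernsteinvdBPGNSSS22}, applied to the underlying undirected graph $G \defeq \und(\vG)$, which (with a $\polylog(n)$ overhead in time and sparsity) reduces the problem to decremental spectral sparsification on a hierarchy of pruned $\phi$-expanders of near-uniform degree; the independent edge-sampling scheme of \cite{CohenKPPRSV17} on each expander; and a new \emph{star patching} subroutine that uses one auxiliary vertex per expander cluster to absorb the degree imbalances introduced by sampling. It thus suffices to describe, for each pruned expander cluster $\vG_i=(V_i,E_i,\ww_i)$ produced by the reduction, a decremental algorithm that maintains a graph $\vH_i\cup\vS_i$ on $V_i\cup\{x_i\}$ such that $\Sc(\vH_i\cup\vS_i,V_i)$ is an $\eps$-degree balance preserving directed spectral sparsifier of $\vG_i$, with $\tO(\eps^{-2})$ recourse and update time per deletion.

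On $\vG_i$ I would keep each directed edge $e=(u,v)$ independently with probability $p_e\propto \rho\cdot\ww_e\cdot(1/\ddo_u+1/\ddi_v)$, with oversampling rate $\rho=\Theta(\eps^{-2}\phi^{-O(1)}\log n)$ as in \cite{CohenKPPRSV17}. A matrix Chernoff bound ensures that the sampled subgraph $\vH_i$ is an $\eps$-spectral approximation of $\vG_i$ with respect to $\LL_{G_i}$ with high probability, and concentrates the residual degree-imbalance vector
\[
\dd_i\defeq\Par{\ddo_{\vG_i}-\ddi_{\vG_i}}-\Par{\ddo_{\vH_i}-\ddi_{\vH_i}}
\]
to $O\bigl(\eps\sqrt{\deg_{G_i}(v)/\rho}\bigr)$ at each vertex $v$. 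Against an oblivious adversary, deleting an edge $(u,v)$ only perturbs the sampling probabilities of edges incident to $u$ or $v$, so resampling those $\tO(\rho)$ edges via a Walker--alias table \cite{BringmannP12} takes $\tO(\rho)=\tO(\eps^{-2})$ time and recourse. For the patching, I introduce one auxiliary vertex $x_i$ and build a directed star $\vS_i$: for each $v$ with $[\dd_i]_v>0$ add edge $(x_i,v)$ of weight $[\dd_i]_v$, and otherwise add $(v,x_i)$ of weight $-[\dd_i]_v$. Since $\sum_v[\dd_i]_v=0$ for any directed graph, vertex $x_i$ is itself Eulerian and $\vH_i\cup\vS_i$ is degree-balance preserving with respect to $\vG_i$. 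A single edge update changes at most two coordinates of $\dd_i$ and hence at most two edges of $\vS_i$, yielding $O(1)$ patching recourse per update.

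The main obstacle is the spectral analysis of the Schur complement of the star: $\Sc(\vH_i\cup\vS_i,V_i)$ absorbs $\vS_i$ into a potentially dense redistribution on $V_i$, and we must bound its contribution in the $\LL_{G_i}$ norm rather than in the larger $\LL_{G_i\cup\und(\vS_i)}$ norm. Via the block-LU factorization \eqref{eq:lufact}, this reduces to bounding the electrical-routing energy $\dd_i^\top\LL_{G_i}^\dag\dd_i$ of the degree demand $\dd_i$. Because $G_i$ is a $\phi$-expander with near-uniform degree, standard expander-routing estimates (as used in \cite{CohenKPPRSV17,FlorescuKPGS24}) control this energy by $\phi^{-O(1)}\cdot\sum_v[\dd_i]_v^2/\deg_{G_i}(v)$, which is $O(\eps^2)$ by the concentration of $\dd_i$. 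Combining with \Cref{lemma:schurprecon}, which certifies that Schur-complementing out the auxiliary vertices preserves the desired spectral approximation/preconditioning property, yields the stated guarantee. Aggregating the $\tO(\eps^{-2})$ per-deletion cost across the $O(\log n)$ levels of the \cite{BernsteinvdBPGNSSS22} hierarchy and the associated weight scales, and paying one auxiliary vertex per expander cluster, produces the claimed amortized update time $O(\eps^{-2}\log^{13}n)$, sparsifier size $O(\eps^{-2}n\log^{11}n\log W)$, and $|X|\le O(n\log^2 n\log W)$.
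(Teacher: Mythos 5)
Your high-level architecture matches the paper's: the \cite{BernsteinvdBPGNSSS22} reduction to decremental expanders, the \cite{CohenKPPRSV17} degree-weighted independent sampling with output-sensitive subset resampling, and a one-auxiliary-vertex star to absorb degree imbalances. (Two small deviations: the paper runs everything on the bipartite lift $\blift(\vG)$ so that \Cref{lemma:entrysample} and \Cref{lemma:optoinf} apply to the degree-normalized adjacency matrix, and the oblivious reduction of \Cref{lemma:red_dynexp} neither provides nor needs near-uniform degrees --- that is only used in the adaptive setting.)

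The genuine gap is in your spectral analysis of the star's Schur complement. Eliminating the center $x_i$ turns the star into a directed biclique with product weights $\ff_{(u,v)}=[\dd_1]_u[\dd_2]_v/\|\dd_1\|_1$, whose directed Laplacian $\BB^\top\FF\HH$ has both a diagonal and an off-diagonal part; you propose to control it by the single routing energy $\dd_i^\top\LL_{G_i}^\dag\dd_i$. First, that scalar does not bound the operator norm $\normsop{\LL_{G_i}^{\dag/2}\BB^\top\FF\HH\LL_{G_i}^{\dag/2}}$: even for the rank-one off-diagonal term the relevant quantity is $\sqrt{(\dd_1^\top\LL^\dag\dd_1)(\dd_2^\top\LL^\dag\dd_2)}/\|\dd_1\|_1$, and your claimed estimate ``$\phi^{-O(1)}\sum_v[\dd_i]_v^2/\deg(v)=O(\eps^2)$'' is dimensionally off --- with per-vertex imbalance $O(\eps'\deg(v))$ this sum scales like $(\eps')^2\Vol(G_i)$, and nothing lower-bounds $\|\dd_1\|_1$ to cancel it. Second, \Cref{lemma:schurprecon} cannot play the role you assign it: it \emph{assumes} the Schur complement is already an $\eps$-degree-balance-preserving sparsifier and deduces a preconditioning guarantee; it does not certify the approximation quality of the Schur complement. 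The paper closes this step differently and more cheaply (\Cref{lemma:starpatch_algo}): the vertex loads of the product flow $\ff$ are exactly $\dd_1$ and $\dd_2$, which after the downscaling by $\xi$ are at most $O(\eps\phi^2)$ times the degrees, so \Cref{lemma:optoinf}/\Cref{lemma:vertexflow} bound the degree-normalized operator norm by $O(\eps\phi^2)$ and Cheeger's inequality (\Cref{lemma:cheeger}) converts this to an $O(\eps)$ bound in the $\LL_{G_i}$ norm. Relatedly, you patch the signed imbalance $\ddo-\ddi$ directly; the paper instead rescales the sampled graph by $\xi=(1+\Theta(\eps\phi^2))^{-1}$ so that both residual demand vectors are nonnegative and dominated by the degrees, which is exactly what makes the $\ell_\infty$-to-operator-norm argument go through.
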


Suppose one requires the absence of extra vertices, \Cref{thm:dynspecext_for}
gives an algorithm that maintains implicitly a sparsifier on the original set of
vertices.
\begin{theorem}%
    \label{thm:dynspecext_for}
    There exists a fully dynamic algorithm for maintaining implicitly an
    $\eps$-degree balance preserving directed spectral sparsifier $\vH$ of a
    weighted directed graph $\vG$ undergoing oblivious edge insertions and
    deletions with high probability for any $\eps \in (0,1)$.
    The algorithm has preprocessing time $O(m \log^7 n)$, amortized update time
    $O(\eps^{-2} \log^{13} n)$. 
    The sparsifier $\vH$ has size $O(\eps^{-2} n \log^{11} n \log W)$, and
    supports edge query in time $O(\log^3 n \log W)$ and graph query in time
    $O(\eps^{-2} n \log^{12} n \log W)$.
\end{theorem}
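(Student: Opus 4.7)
The plan is to build the implicit sparsifier $\vH$ by following the same framework underlying \Cref{thm:dynspecstar_for} but replacing the star patching (which introduces auxiliary vertices) with the \emph{external patching} described in the overview. I would first reduce the fully dynamic problem to maintaining degree-balance-preserving sparsifiers on $\tO(1)$ hierarchical levels of decremental, almost-uniform-degree $\phi$-expanders using the expander-hierarchy reduction of \cite{BernsteinvdBPGNSSS22}, coupled with weight scaling over $O(\log W)$ weight classes. On each such expander $\vG_i$, I would apply the independent edge sampling of \cite{CohenKPPRSV17}: keep each directed edge $(u,v)$ independently with probability proportional to $\rho \cdot (1/d^{(o)}_u + 1/d^{(i)}_v)$, where $\rho = \Theta(\eps^{-2} \poly(\phi^{-1},\log n))$ is chosen large enough so that the resulting sampled subgraph is spectrally close to $\vG_i$ in the quadratic form sense, and the per-vertex degree imbalance vector $\ddi,\ddo$ has $\ell_1$ norm $\tO(\rho)$ per vertex. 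Under an oblivious adversary, a single edge insertion/deletion only changes the two incident sampling distributions, so fast sampling (e.g. \cite{Devroye06,BringmannP12}) resamples the affected edges in $\tO(\rho)$ amortized time.

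The core new ingredient for this implicit theorem is the external patching. For each expander level, I would maintain, for every vertex $v$, the two scalars $\Delta^{(i)}_v$ and $\Delta^{(o)}_v$ measuring current in- and out-imbalance, using two segment trees over an ordered list of vertices so that (i) each imbalance change triggers an $O(\log n)$-time update and (ii) given a vertex $v$ and a position $p$, one can retrieve in $O(\log n)$ time the vertex whose prefix-sum interval $I^{(i)}_v$ or $I^{(o)}_u$ contains $p$, together with the boundary coordinates of that interval. The patching edge from $u$ to $v$ then has weight $\ww_{(u,v)} = |I^{(o)}_u \cap I^{(i)}_v|$, which by the greedy matching argument of \cite{CohenKPPRSV17} preserves degree balance exactly, adds at most $O(n)$ nonzero edges per level, and controls directed spectral error against $\LL_G$. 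Edges of the sampled subgraph are stored explicitly in a hash table keyed by endpoints at each level.

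Given this data structure, an edge query $(u,v)$ is answered per level by (a) a hash lookup into the sampled subgraph and (b) two segment-tree queries (one in each tree) to determine whether $I^{(o)}_u \cap I^{(i)}_v$ is nonempty and, if so, computing the length of this intersection from the returned boundary coordinates. Summing the weight contributions over the $O(\log n \log W)$ expander-hierarchy levels yields the claimed $O(\log^3 n \log W)$ per-edge query time. A graph query enumerates all sampled edges (size bound $\tO(\eps^{-2} n)$ per level) together with the nonzero patching edges, which can be produced by a single sweep that traverses both segment trees in lockstep, emitting one edge per maximal interval intersection, giving total time proportional to the $\tO(\eps^{-2} n \log W)$ sparsifier size.

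The hard part, in my view, is not the sampling analysis (which is essentially inherited from \cite{CohenKPPRSV17} and the oblivious-adversary resampling framework) but checking that the greedy external patching continues to certify the degree-balance-preserving spectral guarantee \eqref{eq:dispec_def} after it is tracked implicitly through arbitrary interleavings of updates and resamplings. Concretely, I would need to argue that the cumulative patching across all $\tO(1)$ hierarchy levels has operator norm against $\LL_G^{\dag/2}$-preconditioning bounded by $\eps$ with high probability, and that the amortized recourse of the imbalance vector remains $\tO(\rho)$ per update so the segment-tree maintenance cost matches the $O(\eps^{-2}\log^{13}n)$ bound. Once this is established, combining it with the edge- and graph-query analyses above delivers all three stated complexity bounds.
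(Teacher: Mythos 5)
Your proposal matches the paper's proof in all essential respects: the reduction to decremental sparsification on expanders via the dynamic expander decomposition of \cite{BernsteinvdBPGNSSS22}, the degree-based independent resampling of \cite{CohenKPPRSV17} with oblivious-adversary reuse of randomness, and the implicit external patching maintained by two segment trees over prefix-sum intervals with $O(\log n)$-time interval-overlap queries (the paper's \Cref{lemma:dynextpatch}, combined with \Cref{lemma:extpatch_algo,lemma:decexpdeg,lemma:red_dynexp}). The only cosmetic differences are that the paper formally carries out the sampling and patching on the bipartite lift $\blift(\vG)$ (so that \Cref{lemma:blift_spectral} converts the bipartite analysis into the degree-balance-preserving guarantee) and uses the plain, not uniform-degree, expander decomposition for this oblivious-adversary result.
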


If one further requires that the sparsifier remains a subgraph,
\Cref{thm:dynspecext_for} gives an algorithm that maintains such subgraph
sparsifier in the expense that the sparsifier only support fast querying of the
entire graph.
\begin{theorem}%
    \label{thm:dynspecint_for}
    There exists a fully dynamic algorithm for maintaining implicitly an
    $\eps$-degree balance preserving directed spectral sparsifier $\vH$ of a
    weighted directed graph $\vG$ undergoing oblivious edge insertions and
    deletions with high probability for any $\eps \in (0,1)$.
    The algorithm has preprocessing time $O(m \log^7 n)$, amortized update time
    $O(\eps^{-2} \log^{22} n)$. 
    The sparsifier $\vH$ is a \emph{reweighted subgraph} of $\vG$ and has size
    $O(\eps^{-2} n \log^{22} n \log W)$, and supports graph query in time
    $O(\eps^{-2} \log^{21} n)$.
\end{theorem}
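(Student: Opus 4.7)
The plan is to reuse the high-level framework that underlies \Cref{thm:dynspecstar_for,thm:dynspecext_for}, namely the reduction of \cite{BernsteinvdBPGNSSS22} from fully dynamic spectral sparsification to decremental sparsification on almost-uniform-degree expanders, combined with the independent degree-based sampling of \cite{CohenKPPRSV17} on each expander piece. What must change, compared to the explicit and star-patched versions, is the patching step: since external and star patchings introduce edges not present in $\vG$, to produce a \emph{reweighted subgraph} sparsifier I will replace them by the \emph{internal} patching sketched in the overview, whose entire job is to fix the degree imbalance caused by sampling \emph{without} introducing any new edge.

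Concretely, after independently sampling each directed edge of an expander piece with probability proportional to $\rho \cdot (\ddi_{t(e)} \cdot \ddo_{h(e)})^{-1}$ at oversampling rate $\rho = \tO(\eps^{-2})$, the undirected skeleton of the sampled subgraph is itself an undirected spectral sparsifier of the original expander, so it remains a $\tOmega(\phi)$-expander by standard matrix concentration. Let $\dd \in \R^V$ be the residual degree-imbalance demand, i.e., the discrepancy between $\ddo - \ddi$ of the sampled subgraph and that of $\vG$ on the piece; the Cohen et al.\ analysis already guarantees $\dd$ is small relative to the undirected degrees. I would then route $\dd$ through the sampled skeleton by an electric flow $\xx = \WW \BB \LL^\dagger \dd$, adding each $\xx_e$ (with its induced sign) to the weight of the corresponding sampled edge. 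The congestion bound of \cite{FlorescuKPGS24} on expanders gives that the maximum edge congestion is small, so the perturbation stays within an $O(\eps)$ multiplicative factor of each existing weight, leaving a valid reweighted subgraph that is an $\eps$-degree balance preserving spectral approximation per \Cref{def:dispecappr}. The flow is produced by the almost-linear-time undirected Laplacian solver of \cite{JambulapatiS21}, and the small solver error is absorbed by the tree routing procedure (Algorithm~\ref{alg:round} and \cite{KelnerOSZ13,JambulapatiSSTZ25}).

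The reason the update time stays polylogarithmic is that the electric flow is never recomputed during an edge insertion or deletion. I would maintain only the sampled subgraph together with the demand vector $\dd$: resampling the at most $\tO(\rho)$ edges incident to an affected vertex under the \cite{BernsteinvdBPGNSSS22} reduction costs $\tO(\eps^{-2})$ amortized per update, and the induced changes to $\dd$ are $O(1)$ per resampled edge. The Laplacian solve and the rounding are deferred to a graph query and act on the sampled subgraph of size $\tO(\eps^{-2} n)$, yielding the stated query cost.

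The main obstacle will be propagating the internal-patching guarantee cleanly through the decremental process. The expansion of each piece degrades as edges are pruned by the expander pruning of \cite{SaranurakW19}, and the sampled skeleton itself evolves, so I will need a union bound showing (i) that the skeleton remains a $\polylog(n)^{-1}$-expander throughout, and (ii) that the residual demand $\dd$ stays small enough for the \cite{FlorescuKPGS24} congestion bound to keep certifying $O(\eps)$ multiplicative perturbation on every sampled edge. Carrying the approximate-solver and rounding error through these phases without inflating $\eps$ is where the extra polylog factors relative to the $\log^{13}$ of \Cref{thm:dynspecext_for} come from, and matching them to the stated $\log^{22}$ bound is the main bookkeeping step.
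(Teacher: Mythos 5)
Your proposal matches the paper's proof in all essentials: the reduction to decremental sampling on expanders, the observation that the sampled skeleton remains an expander, the internal patching by approximate electrical routing (via the \cite{JambulapatiS21} solver with tree-based rounding to absorb solver error), and crucially the deferral of the flow computation to graph-query time while only the sampled subgraph and the residual demand vector are maintained under updates. The only cosmetic differences are the precise form of the sampling probabilities and the paper's use of a downscaling factor $\xi$ and the bipartite lift to keep the two degree demands nonnegative and separately routable, neither of which changes the argument.
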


To prove these theorems, we show two dynamic versions of the sampling and
patching algorithm developed in \cite{CohenKPPRSV17} for static directed
Eulerian spectral sparsification.
The sampling algorithm provides approximation with respect to the weighted in
and out degrees of the graph.
The classic Cheeger's inequality (\Cref{lemma:cheeger}) then turns such
approximation into spectral approximation if the corresponding undirected graph
is an expander.
We adapt a standard dynamic expander decomposition framework from
\cite{BernsteinvdBPGNSSS22} that allows us to focus on devising a decremental
sparsification algorithm for directed graphs with corresponding undirected
expanders.
Each edge update in an expander only requires us to perform resampling for a
small set of edges, giving us our fast update time.

The sampling procedure above leads to undesired small imbalance in the degrees.
A degree-fixing patching, which is a small set of weighted edges, is required
for the degree balance preserving requirement.
We show in \Cref{ssec:dispec_prelim} that such a requirement is necessary.
It is computationally expensive to explicitly maintain the edges for the
patching; a single update can incur a recourse of $\Omega(n)$ for just the patching,
regardless of which type of patching we use.
We instead resort to maintaining patchings implicitly in our data structures and
allowing access to the dynamic sparsifier through query.
There are two types of queries we consider: an \emph{edge query} where given a
pair of vertices $u,v$, we are to return its current weight in the graph; a
\emph{graph query} where we are to return the entire sparsifier in some
explicit representation.

We organize the rest of this section as follows.
We start by giving some preliminaries on our \Cref{def:dispecappr} of directed
spectral sparsification in \Cref{ssec:dispec_prelim}. 
In \Cref{ssec:expander}, we state a few useful properties and computational
results regarding undirected expanders.
In \Cref{ssec:dispec_static}, we present static directed spectral sparsification
algorithms that serves as the bases of our dynamic algorithms.
We recall a standard reduction due to \cite{BernsteinvdBPGNSSS22} from fully
dynamic graph sparsifier to decremental sparsifier on expanders using dynamic
expander decomposition in \Cref{ssec:dynexpdecomp}.
Finally, we provide a dynamic degree balance preserving spectral sparsifier on
decremental expanders in \Cref{ssec:dyndec_spec} and prove our main results of
this section \Cref{thm:dynspecstar_for,thm:dynspecext_for,thm:dynspecint_for}.
Additionally, we provide strictly degree-preserving sparsification algorithms in
\Cref{ssec:degpre}.

\subsection{Directed spectral approximation preliminaries}
\label{ssec:dispec_prelim}
We provide some useful properties of our degree balance preserving directed
spectral sparsification here.
For this subsection, we assume $\vG = (V,E,\ww)$ and $\vG' = (V,E,\ww')$ where
$E = V \times V$ is the complete set of edges.
An edge $e \in E$ is present in $\vG$ if and only if $\ww_e > 0$.
We write $G \defeq \und(\vG), G' \defeq \und(\vG')$, the corresponding
undirected graphs of $\vG,\vG'$.
Let $\BB = \HH - \TT$ be the edge-vertex transfer matrix of $E$.
For a subset $U \subseteq V$, we denote $\BB_U = \HH_U - \TT_U$ the edge vertex
transfer matrix of $E_U \defeq U \times U$, the complete set of directed edges
induced on $U$.
Similarly, we let $\ww_U \defeq [\ww]_{E_U}$.

The following lemma states that \Cref{def:dispecappr} is well-defined linear
algebraically.
We defer the proof to \Cref{app:specproof}.
\begin{lemma} \label{lemma:dispec_ker}
    Suppose $G = \und(\vG)$ is connected under the edge weights $\ww$.
    If $\ww'$ satisfies that $\BB^\top \ww = \BB^\top \ww'$, then
    \[
        \ker(\vLL_{\vG} - \vLL_{\vG'}) \supseteq \ker(\LL_G), \quad
        \ker((\vLL_{\vG} - \vLL_{\vG'})^\top) \supseteq \ker(\LL_G^\top).
    \]
\end{lemma}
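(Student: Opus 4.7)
The plan is to reduce the subspace inclusion to checking the action of the two operators on a single spanning vector. Since $G = \und(\vG)$ is connected under the weighting $\ww$, standard undirected Laplacian theory gives $\ker(\LL_G) = \spn(\vone_V)$, and because $\LL_G$ is symmetric, $\ker(\LL_G^\top) = \ker(\LL_G) = \spn(\vone_V)$ as well. Thus the lemma reduces to verifying the two vector identities
\[
    (\vLL_{\vG} - \vLL_{\vG'}) \vone_V = \vzero_V
    \quad \textup{and} \quad
    \vone_V^\top (\vLL_{\vG} - \vLL_{\vG'}) = \vzero_V^\top.
\]

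For the first identity, I would unpack the factorization $\vLL_{\vG} = \BB^\top \WW \HH$ from the preliminaries and use the fact that every row of $\HH$ has exactly one nonzero entry (at the head vertex of the corresponding edge), so that $\HH \vone_V = \vone_E$. This yields $\vLL_{\vG} \vone_V = \BB^\top \WW \vone_E = \BB^\top \ww$, and analogously $\vLL_{\vG'} \vone_V = \BB^\top \ww'$. The hypothesis $\BB^\top \ww = \BB^\top \ww'$ then closes this identity immediately.

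For the second identity, I would use the structural fact $\BB \vone_V = \vzero_E$, which holds because each row of $\BB$ consists of a single $+1$ and a single $-1$. This gives $\vone_V^\top \vLL_{\vG} = (\BB \vone_V)^\top \WW \HH = \vzero_E^\top \WW \HH = \vzero_V^\top$, and the same argument applied to $\vG'$ shows $\vone_V^\top \vLL_{\vG'} = \vzero_V^\top$. Note this second identity holds for \emph{any} pair of directed Laplacians on $V$, independent of the degree-balance hypothesis; it simply reflects the general fact $\vLL^\top \vone_V = \vzero_V$ mentioned in the preliminaries.

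There is no genuine obstacle here; the argument is essentially an unwinding of definitions together with a single invocation of the hypothesis. The only subtlety worth flagging is that the connectivity assumption is doing real work: it is what lets me replace the subspace inclusion $\ker(\vLL_{\vG} - \vLL_{\vG'}) \supseteq \ker(\LL_G)$ with a check on the one-dimensional kernel $\spn(\vone_V)$. Without connectivity, $\ker(\LL_G)$ would include indicator vectors of the connected components, and one would need the stronger condition that $\BB^\top \ww - \BB^\top \ww'$ sums to zero on each component separately, which is not implied by the global assumption.
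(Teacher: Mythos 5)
Your proof is correct and follows essentially the same route as the paper's: both reduce to the one-dimensional kernel $\spn(\vone_V)$ via connectivity, verify the first identity using $\HH\vone_V = \vone_E$ together with the hypothesis $\BB^\top\ww = \BB^\top\ww'$, and verify the second using $\BB\vone_V = \vzero_E$. Your remark that the second identity needs no hypothesis is a nice observation but does not change the argument.
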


We say that $\vG'$ is degree balance preserving w.r.t.~$\vG$ if 
for each undirected connected component $U \subseteq V$ of $G$, it satisfies that $\BB_U
\ww_U = \BB_U \ww_U'$ (i.e., degree balance preserving within each component),
$[\ww']_{U \times (V \setminus U) \cup (V \setminus U) \times U} = \vzero$
(i.e., $\vG'$ cannot have an edge crossing connected components of $G$).
We show the following equivalent definition of \Cref{def:dispecappr} using the
degree preserving condition, justifying our naming for \Cref{def:dispecappr}.
\begin{lemma} \label{lemma:dispec_equiv}
    $\vG'$ is an $\eps$-degree balance preserving directed spectral
    sparsification of $\vG$ (i.e., satisfying \eqref{eq:dispec_def}) if and only
    if $\vG'$ is degree balance preserving w.r.t.~$\vG$ and 
    \begin{equation} \label{eq:dispec_def_alg}
        \normop{\LL_G^{\frac \dagger 2}\Par{\vLL_{\vG} - \vLL_{\vH}}\LL_G^{\frac
        \dagger 2}} \le \eps.
    \end{equation}
\end{lemma}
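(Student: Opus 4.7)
The plan is to prove both directions by tracking the interaction between $\ker(\LL_G)$ and the bilinear form $\xx^\top(\vLL_{\vG} - \vLL_{\vG'})\yy$, using a change of variables through $\LL_G^{1/2}$.

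For the $(\Leftarrow)$ direction, I assume the degree balance preserving property and the operator norm bound \eqref{eq:dispec_def_alg}. First I apply \Cref{lemma:dispec_ker} componentwise over the connected components $U_j$ of $G$ to conclude $\ker(\LL_G) \subseteq \ker(\vLL_{\vG} - \vLL_{\vG'}) \cap \ker((\vLL_{\vG} - \vLL_{\vG'})^\top)$. Consequently, for any $\xx, \yy \in \R^V$ I may replace them by their projections onto $\im(\LL_G)$ without changing the bilinear form. Using the identity $\PP_{\im(\LL_G)} = \LL_G^{\dagger/2} \LL_G^{1/2}$ and setting $\aa = \LL_G^{1/2}\xx$, $\bb = \LL_G^{1/2}\yy$, a direct application of the operator norm bound followed by Cauchy--Schwarz gives
\[
\Abs{\xx^\top(\vLL_{\vG} - \vLL_{\vG'})\yy} = \Abs{\aa^\top \LL_G^{\dagger/2}(\vLL_{\vG} - \vLL_{\vG'})\LL_G^{\dagger/2}\bb} \le \eps \sqrt{\xx^\top \LL_G \xx \cdot \yy^\top \LL_G \yy},
\]
since $\norm{\aa}_2^2 = \xx^\top \LL_G \xx$ and similarly for $\bb$.

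For the $(\Rightarrow)$ direction, I first establish degree balance preservation and then derive the operator norm bound. To get degree balance preservation, plug $\xx = \vone_{U_j}$ (and separately $\yy = \vone_{U_j}$) into \eqref{eq:dispec_def}: since $\vone_{U_j} \in \ker(\LL_G)$ the right-hand side vanishes, forcing $(\vLL_{\vG} - \vLL_{\vG'})\vone_{U_j} = \vzero$ and $\vone_{U_j}^\top(\vLL_{\vG} - \vLL_{\vG'}) = \vzero$. Unpacking these identities coordinate by coordinate separates into (i) the vanishing of $\ww'_e$ on every edge $e$ crossing distinct components of $G$, obtained by reading off the coordinates $u \notin U_j$ and using that $\vG$ itself has no cross-component edges, and (ii) the equality $\BB_{U_j}^\top\ww_{U_j} = \BB_{U_j}^\top\ww'_{U_j}$ on each component, obtained by reading off the coordinates $u \in U_j$. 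With degree balance preservation established, \Cref{lemma:dispec_ker} supplies the kernel containment, and the change of variables is reversible: for any $\aa, \bb \in \R^V$, setting $\xx = \LL_G^{\dagger/2}\aa$ and $\yy = \LL_G^{\dagger/2}\bb$ yields $\xx^\top\LL_G\xx \le \norm{\aa}_2^2$, and substituting into \eqref{eq:dispec_def} gives $\Abs{\aa^\top \LL_G^{\dagger/2}(\vLL_{\vG} - \vLL_{\vG'})\LL_G^{\dagger/2}\bb} \le \eps \norm{\aa}_2 \norm{\bb}_2$, whence \eqref{eq:dispec_def_alg} follows by taking the supremum.

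The main obstacle is the case analysis in (i)--(ii): one has to carefully expand $\vLL_{\vG}\vone_{U_j}$ entrywise, separately for $u \in U_j$ (where it equals the signed degree imbalance $\ddo_u - \ddi_u$ within the component, crucially using the absence of cross-component edges in $\vG$) and for $u \notin U_j$ (where it equals $-\sum_{v \in U_j} \ww_{(v,u)}$ and is zero in $\vG$ but detects cross-component weight in $\vG'$). This case split is what cleanly separates the two clauses constituting ``degree balance preserving w.r.t.~$\vG$''; the rest of the argument is routine linear algebra with pseudoinverses and Cauchy--Schwarz.
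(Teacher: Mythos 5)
Your proposal is correct and follows essentially the same route as the paper: both directions hinge on plugging the component indicator vectors $\vone_{U_j}$ into \eqref{eq:dispec_def} (the vanishing right-hand side plus nonnegativity of $\ww'$ yields the no-cross-component-edge condition and the per-component degree balance), together with the standard $\LL_G^{\dagger/2}$ change of variables and the kernel containment from \Cref{lemma:dispec_ker} to pass between the bilinear-form bound and the operator-norm bound. The only cosmetic difference is that you extract the cross-component condition from the coordinates of $(\vLL_{\vG}-\vLL_{\vG'})\vone_{U_j}$ outside $U_j$ by ranging over all components, whereas the paper reads it off the transposed identity; both are valid instances of the same sign argument.
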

\begin{proof}
    Assume \eqref{eq:dispec_def} holds true.
    Then \eqref{eq:dispec_def_alg} follows directly.
    We now show that $\vG'$ must be degree balance preserving.
    For any connected component $U$ of $G$, we get by \eqref{eq:dispec_def} that
    for every
    \[
        (\vLL_{\vG} - \vLL_{\vG'}) \vone_U = \vzero_V, \quad 
        (\vLL_{\vG} - \vLL_{\vG'})^\top \vone_U = \vzero_V, \quad 
    \]
    as $\LL_G \vone_U = \LL_G^\top \vone_U = \vzero_V$.
    We get from the second identity
    \begin{align*}
        (\vLL_{\vG} - \vLL_{\vG'})^\top \vone_U
        & = \HH^\top (\WW - \WW') \BB \vone_U
        = \HH^\top [(\ww_{U \times V \setminus U} - \ww'_{U \times V \setminus
        U}) -  (\ww_{V \setminus U \times U} - \ww'_{V \setminus U \times U})]
        \\
        &=
        \HH^\top \ww'_{V \setminus U \times U} - \HH^\top \ww'_{U \times V \setminus U}
        = \vzero_V
    \end{align*}
    where the second last equality follows by $\ww_{U \times V \setminus U} =
    \vzero$ and $\ww_{V \setminus U \times U} = \vzero$.
    Notice that $\HH^\top \ww'_{V \setminus U \times U} \ge \vzero_V$, $\HH^\top
    \ww'_{U \times V \setminus U} \ge \vzero_V$ and $\supp(\HH^\top \HH^\top
    \ww'_{V \setminus U \times U}) \subseteq V \setminus U$, $\supp(\HH^\top
    \ww'_{U \times V \setminus U}) \subseteq U$.
    We must have $\ww'_{V \setminus U \times U} = \vzero$ and $\ww'_{U \times V
    \setminus U} = \vzero$, i.e., no edge of $\vG'$ crosses $U$.
    For the first identity,
    \begin{align*}
        (\vLL_{\vG} - \vLL_{\vG'}) \vone_U
        & = \BB^\top (\WW - \WW') \HH \vone_U
        = \BB^\top (\ww_{U \times V} - \ww'_{U \times V})
        \\
        &= \BB_U^\top (\ww_U - \ww'_U) 
        = \vzero_V,
    \end{align*}
    where the second last equality follows by $\ww_{U \times V \setminus U} =
    \vzero$ and $\ww'_{U \times V \setminus U} = \vzero$, giving us the
    degree balance preserving condition within $U$.
    
    We now prove the reverse direction.
    From \eqref{eq:dispec_def_alg}, we get
    \[
        \forall \xx,\yy \in \R^V, \quad
        \Abs{\xx^\top \PPi (\vLL_{\vG} - \vLL_{\vG'}) \PPi \yy}
        \le \eps \sqrt{\xx^\top \LL_G \xx \cdot \yy^\top \LL_G \yy},
    \]
    where $\PPi = \sum_{\mbox{component } U \subseteq V} \II_U - \frac{1}{|U|}
    \vone_U \vone_U^\top = \sum_{\mbox{component } U \subseteq V} \PPi_U$ is the
    orthogonal projection matrix on the image of $\LL_G$.
    It then suffices to show for all component $U$,
    \[
        (\vLL_{\vG} - \vLL_{\vG'}) \vone_U = \vzero_V, \quad
        (\vLL_{\vG} - \vLL_{\vG'})^\top \vone_U = \vzero_V.
    \]
    Since $\vG'$ is degree balance preserving w.r.t.~$\vG$, we are guaranteed 
    $\BB^\top_U \ww_U = \BB^\top_U \ww_U'$, $\ww_{V\setminus U, U} = \vzero$ and
    $\ww_{U \times V \setminus U} = \vzero$.
    We get the desired identities using the same linear algebraic derivations
    from above.
\end{proof}

We prove in \Cref{lemma:dispec_to_spec} that \Cref{def:dispecappr} generalizes
the notion of undirected spectral sparsification.
The proof requires the following fact from \cite{JambulapatiSSTZ25}.
The proof of \Cref{lemma:dispec_to_spec} is deferred to \Cref{app:specproof}.
\begin{fact} \label{fact:dispec_top}
    If $\xx \in \R^E$ is a circulation, i.e., $\BB^\top \xx = \vzero_V$, then
    \[
        \HH^\top \XX \HH = \TT^\top \XX \TT, \quad
        \BB^\top \XX \HH = -\TT^\top \XX \BB.
    \]
\end{fact}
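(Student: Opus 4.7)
The plan is to exploit the combinatorial fact that $\HH, \TT \in \{0,1\}^{E\times V}$ each have exactly one nonzero entry per row. This forces both $\HH^\top \XX \HH$ and $\TT^\top \XX \TT$ to be diagonal matrices, with entries
\[
    [\HH^\top \XX \HH]_{uu} = \sum_{e:\, h(e) = u} \xx_e = [\HH^\top \xx]_u,
    \qquad
    [\TT^\top \XX \TT]_{uu} = \sum_{e:\, t(e) = u} \xx_e = [\TT^\top \xx]_u,
\]
and all off-diagonal entries equal to zero (since $h(e)$ and $t(e)$ are unique per row).

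For the first identity, I would note that the circulation hypothesis $\BB^\top \xx = \vzero_V$ is equivalent to $\HH^\top \xx = \TT^\top \xx$ because $\BB = \HH - \TT$. Combined with the diagonal computation above, this immediately gives $\HH^\top \XX \HH = \TT^\top \XX \TT$ as diagonal matrices with identical diagonals.

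For the second identity, I would expand $\BB = \HH - \TT$ on both sides and observe
\[
    \BB^\top \XX \HH + \TT^\top \XX \BB
    = (\HH - \TT)^\top \XX \HH + \TT^\top \XX (\HH - \TT)
    = \HH^\top \XX \HH - \TT^\top \XX \TT,
\]
where the cross terms $-\TT^\top \XX \HH$ and $\TT^\top \XX \HH$ cancel. The right-hand side vanishes by the first identity, yielding $\BB^\top \XX \HH = -\TT^\top \XX \BB$.

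There is no substantive obstacle: the only conceptual point is recognizing that $\HH^\top \XX \HH$ and $\TT^\top \XX \TT$ collapse to diagonal matrices whose entries are precisely the head- and tail-aggregated coordinates of $\xx$, at which point the circulation condition $\HH^\top \xx = \TT^\top \xx$ closes the argument and the second identity is purely algebraic manipulation.
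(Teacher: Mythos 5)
Your proof is correct and complete: the observation that $\HH^\top\XX\HH$ and $\TT^\top\XX\TT$ are diagonal with entries $[\HH^\top\xx]_u$ and $[\TT^\top\xx]_u$, combined with the equivalence of the circulation condition to $\HH^\top\xx = \TT^\top\xx$, gives the first identity, and the second follows by the telescoping expansion you wrote. The paper imports this fact from \cite{JambulapatiSSTZ25} without reproving it, and your argument is the standard verification one would give.
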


\begin{lemma} \label{lemma:dispec_to_spec}
    Suppose directed graphs $\vG = (V,E,\ww)$ and $\vG' = (V,E,\ww')$ satisfy
    $\vG'$ is an $\eps$-degree balance preserving directed spectral
    sparsification of $\vG$.
    Then, their corresponding undirected graphs $G= \und(\vG)$ and $G'=
    \und(\vG')$ satisfy
    \[
        (1-2\eps) \LL_G \preceq \LL_{G'} \preceq (1+2\eps) \LL_G.
    \]
\end{lemma}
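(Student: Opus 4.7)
The plan is to symmetrize the directed approximation hypothesis and use the fact that, under degree balance preservation, symmetrization of the error matrix $\vLL_{\vG} - \vLL_{\vG'}$ produces exactly $\LL_G - \LL_{G'}$. First I would invoke \Cref{lemma:dispec_equiv} to replace the bilinear hypothesis of \Cref{def:dispecappr} with (i) degree balance preservation of $\vG'$ with respect to $\vG$ and (ii) the operator norm bound $\normop{\LL_G^{\dag/2}(\vLL_{\vG} - \vLL_{\vG'})\LL_G^{\dag/2}} \le \eps$.

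The key algebraic identity I would establish is
\[
    \vLL_{\vG} + \vLL_{\vG}^\top - \LL_G \;=\; \HH^\top \WW \HH - \TT^\top \WW \TT \;=\; \diag{\BB^\top \ww},
\]
which follows by expanding $\vLL_{\vG} = \BB^\top \WW \HH$, $\vLL_{\vG}^\top = \HH^\top \WW \BB$, $\LL_G = \BB^\top \WW \BB$ with $\BB = \HH - \TT$, and observing that $\HH^\top \WW \HH$ and $\TT^\top \WW \TT$ are diagonal matrices of weighted out-degrees and in-degrees respectively. Applying this to both $\vG$ and $\vG'$ and subtracting gives
\[
    (\vLL_{\vG} - \vLL_{\vG'}) + (\vLL_{\vG} - \vLL_{\vG'})^\top \;=\; (\LL_G - \LL_{G'}) + \diag{\BB^\top(\ww - \ww')}.
\]
The degree balance preservation condition, which says $\BB_U^\top \ww_U = \BB_U^\top \ww'_U$ on every component $U$ of $G$ together with $\ww'$ vanishing on cross-component edges, ensures that $\BB^\top(\ww-\ww') = \vzero_V$. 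Hence the diagonal term cancels and the symmetrization of the directed error equals $\LL_G - \LL_{G'}$.

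Next, I would conjugate by $\LL_G^{\dag/2}$, apply the triangle inequality, and use the hypothesis to conclude
\[
    \normop{\LL_G^{\dag/2}(\LL_G - \LL_{G'})\LL_G^{\dag/2}} \;\le\; 2\normop{\LL_G^{\dag/2}(\vLL_{\vG} - \vLL_{\vG'})\LL_G^{\dag/2}} \;\le\; 2\eps.
\]
Since the left-hand side is symmetric, this gives the two-sided Loewner bound $-2\eps\,\PPi \preceq \LL_G^{\dag/2}(\LL_G - \LL_{G'})\LL_G^{\dag/2} \preceq 2\eps\,\PPi$ on the image projector $\PPi$ of $\LL_G$. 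Multiplying on both sides by $\LL_G^{1/2}$ yields $(1 - 2\eps)\LL_G \preceq \LL_{G'} \preceq (1 + 2\eps)\LL_G$ once I verify that conjugation is lossless here, i.e., $\PPi(\LL_G - \LL_{G'})\PPi = \LL_G - \LL_{G'}$.

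The only subtlety, which I expect to be the main bookkeeping obstacle, is the kernel containment $\ker(\LL_G) \subseteq \ker(\LL_{G'})$ needed to justify removing the projections. This follows because degree balance preservation forbids $\vG'$ from having edges across components of $G$, so each connected component of $G$ decomposes into a union of connected components of $G'$, hence $\ker(\LL_{G'}) \supseteq \ker(\LL_G)$, i.e., the indicator of any component of $G$ lies in the kernel of $\LL_{G'}$. With this verified, the conjugation step is valid and the proof concludes.
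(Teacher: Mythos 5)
Your proof is correct and follows essentially the same route as the paper's: both arguments show that under degree balance preservation the symmetrization $(\vLL_{\vG}-\vLL_{\vG'})+(\vLL_{\vG}-\vLL_{\vG'})^\top$ equals $\LL_G-\LL_{G'}$ (the paper via \Cref{fact:dispec_top} and reversed Laplacians, you via the identity $\vLL_{\vG}+\vLL_{\vG}^\top-\LL_G=\diag{\BB^\top\ww}$, which amounts to the same computation), and then conclude by the triangle inequality on the conjugated operator norm. Your explicit verification of the kernel containment $\ker(\LL_G)\subseteq\ker(\LL_{G'})$ is a point the paper handles only implicitly by reducing to connected $G$, so no gap there.
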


Recall the \emph{union property} of graph sparsification problems.
Let $\gH(\vG,\eps)$ be the set of $\eps$-approximations of $\vG$ under some
definition of approximation for weighted directed (or undirected) graph.
Then, we say that $\gH$ satisfies union property if for any $G = \bigcup_{i =
1}^k s_i \cdot G_i$ with $s_1, \ldots, s_k \in \R_{\ge 0}$ it satisfies that for
any $\vH_i \in \gH(\vG_i,\eps)$, we have
\[
    \bigcup_{i=1}^k s_i \cdot \vH_i \in \vH(\vG, \eps).
\]
Our notion of directed spectral sparsification \Cref{def:dispecappr} naturally
satisfies the union property.
We defer its proof to \Cref{app:specproof}.
\begin{lemma}[Union property] \label{lemma:dispec_union}
    Suppose directed graph $\vG = \bigcup_{i=1}^k s_i \cdot \vG_i$ for
    some $k$ and that $s_1,\ldots,s_k \in \R_{\ge 0}$. 
    Then, suppose $\vG_i'$ is an $\eps$-degree balance preserving directed
    spectral approximation to $\vG_i$ for every $i \in [k]$, it follows that
    $\bigcup_{i=1}^k s_i \vG_i'$ is an $\eps$-degree balance preserving directed
    spectral approximation to $\vG$.
\end{lemma}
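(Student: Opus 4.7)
The plan is to prove the lemma directly from the definitional inequality \eqref{eq:dispec_def}, exploiting the linearity of both the directed Laplacian and the undirected Laplacian in the edge weights. Write $\vH \defeq \bigcup_{i=1}^k s_i \cdot \vG_i'$ and $\vH_i \defeq \vG_i'$. Since Laplacians are linear in the underlying edge weights, $\vLL_{\vG} = \sum_i s_i \vLL_{\vG_i}$, $\vLL_{\vH} = \sum_i s_i \vLL_{\vH_i}$, and $\LL_G = \sum_i s_i \LL_{G_i}$. Thus for any $\xx,\yy \in \R^V$,
\[
  \xx^\top(\vLL_{\vG} - \vLL_{\vH})\yy \;=\; \sum_{i=1}^k s_i \, \xx^\top(\vLL_{\vG_i} - \vLL_{\vH_i})\yy.
\]

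Next I would apply the triangle inequality followed by the assumed degree-balance preserving approximation for each $\vG_i$, obtaining
\[
  \bigl|\xx^\top(\vLL_{\vG} - \vLL_{\vH})\yy\bigr| \;\le\; \eps \sum_{i=1}^k s_i \sqrt{\xx^\top \LL_{G_i}\xx \cdot \yy^\top \LL_{G_i}\yy}.
\]
The main (and only nontrivial) step is then a Cauchy--Schwarz inequality: writing each summand as $\sqrt{s_i \xx^\top \LL_{G_i}\xx}\cdot\sqrt{s_i \yy^\top \LL_{G_i}\yy}$ and using that $s_i \ge 0$, Cauchy--Schwarz yields
\[
  \sum_{i=1}^k s_i \sqrt{\xx^\top \LL_{G_i}\xx \cdot \yy^\top \LL_{G_i}\yy} \;\le\; \sqrt{\Bigl(\sum_{i=1}^k s_i \xx^\top \LL_{G_i}\xx\Bigr)\Bigl(\sum_{i=1}^k s_i \yy^\top \LL_{G_i}\yy\Bigr)} \;=\; \sqrt{\xx^\top \LL_G\xx \cdot \yy^\top \LL_G \yy},
\]
where the last equality uses linearity of $\LL_G$ in the edge weights. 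This establishes \eqref{eq:dispec_def} for $\vG$ and $\vH$.

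I do not anticipate any real obstacle; the only subtle point is that Cauchy--Schwarz is needed because the right-hand side of \eqref{eq:dispec_def} is a product of square roots rather than a single quadratic form, so one cannot simply invoke linearity. One could alternatively invoke the equivalent characterization \eqref{eq:dispec_def_alg} from \Cref{lemma:dispec_equiv} and use that a convex combination of matrices with operator norm at most $\eps$ (under compatible weightings of $\LL_G$) has operator norm at most $\eps$, but the direct inequality approach above is cleaner and avoids dealing with distinct kernels of the $\LL_{G_i}$. Finally, the degree-balance-preserving property of $\vH$ with respect to $\vG$ follows automatically from Lemma \ref{lemma:dispec_equiv} since \eqref{eq:dispec_def} has already been verified, so no separate argument is needed.
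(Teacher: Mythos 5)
Your proposal is correct and follows essentially the same route as the paper's own proof: decompose via linearity and the triangle inequality, apply the per-piece bound from \eqref{eq:dispec_def}, and combine with Cauchy--Schwarz on the vectors $\bigl(\sqrt{s_i\,\xx^\top\LL_{G_i}\xx}\bigr)_i$ and $\bigl(\sqrt{s_i\,\yy^\top\LL_{G_i}\yy}\bigr)_i$. Your closing remark that the degree-balance-preserving property of the union then follows from \Cref{lemma:dispec_equiv} is also fine, since \eqref{eq:dispec_def} is the full definition.
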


We show also the \emph{contraction property} where contraction of vertices does
not affect the quality of our degree-balance preserving approximation.
\begin{lemma}[Contraction property] \label{lemma:dispec_contra}
    For a directed graph $\vG=(V,E,\ww)$, suppose $\vH$ is an $\eps$-degree
    balance preserving directed spectral approximation to $\vG$.
    Then, for any subset $W \subseteq V$, let $\vG'$,$\vH'$ be the respective
    directed graphs resulting from contracting $W$, then $\vH'$ is an
    $\eps$-degree balance preserving directed spectral approximation to $\vG'$.
\end{lemma}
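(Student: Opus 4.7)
The plan is to express contraction as a linear operation via an aggregation matrix, then push the inequality of \eqref{eq:dispec_def} through it by a change of variables. Let $\CC \in \{0,1\}^{V' \times V}$ be the contraction matrix: for $v \notin W$, $\CC$ has a $1$ in row $v$ and column $v$, and for the merged vertex $w^* \in V'$, $\CC_{w^*, v} = 1$ exactly when $v \in W$. A short computation on the edge-vertex incidence, head, and tail matrices gives $\BB_{\vG'} = \BB_{\vG} \CC^\top$ and $\HH_{\vG'} = \HH_{\vG} \CC^\top$ (edges that become self-loops after contraction yield zero rows, which contribute nothing to the Laplacian). Consequently,
\[
    \vLL_{\vG'} = \CC\, \vLL_{\vG}\, \CC^\top, \qquad
    \vLL_{\vH'} = \CC\, \vLL_{\vH}\, \CC^\top, \qquad
    \LL_{G'} = \CC\, \LL_{G}\, \CC^\top,
\]
where the last identity uses that $\und$ commutes with contraction so $\und(\vG') = \und(\vG)'$, and similarly for $\vH$.

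Next I would apply a change of variables. For any $\xx', \yy' \in \R^{V'}$, set $\xx \defeq \CC^\top \xx'$ and $\yy \defeq \CC^\top \yy'$ in $\R^V$. Then
\[
    \xx'^\top (\vLL_{\vG'} - \vLL_{\vH'}) \yy'
    = \xx^\top (\vLL_{\vG} - \vLL_{\vH}) \yy,
\]
and likewise $\xx'^\top \LL_{G'} \xx' = \xx^\top \LL_G \xx$ and $\yy'^\top \LL_{G'} \yy' = \yy^\top \LL_G \yy$. Invoking the hypothesis that $\vH$ is an $\eps$-degree balance preserving directed spectral approximation of $\vG$, via \eqref{eq:dispec_def}, on the pair $(\xx, \yy)$ yields
\[
    \Abs{\xx'^\top (\vLL_{\vG'} - \vLL_{\vH'}) \yy'}
    \le \eps \sqrt{\xx'^\top \LL_{G'} \xx' \cdot \yy'^\top \LL_{G'} \yy'},
\]
which is exactly what is required. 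Degree-balance preservation of $\vH'$ w.r.t.~$\vG'$ is then automatic from \Cref{lemma:dispec_equiv}.

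The only place needing any care is the contraction identities for the directed Laplacian, since $\vLL = \BB^\top \WW \HH$ is not symmetric and one must separately track what happens to $\HH$ and $\TT$. I expect this to be routine: an edge $(u,v)$ with $u \in W$, $v \notin W$ is relabeled as $(w^*, v)$ after contraction, which corresponds exactly to replacing the row indicators of $u$ in $\HH$ and $\BB$ with the row indicator of $w^*$, i.e.~right-multiplication by $\CC^\top$; edges inside $W$ produce zero rows. Once the three identities $\vLL_{\vG'} = \CC \vLL_{\vG} \CC^\top$, $\vLL_{\vH'} = \CC \vLL_{\vH} \CC^\top$, and $\LL_{G'} = \CC \LL_G \CC^\top$ are in place, the rest of the argument is a one-line substitution.
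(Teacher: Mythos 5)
Your proposal is correct, and the core of it — realizing contraction as right-multiplication by the aggregation matrix $\CC^\top$, so that the quadratic/bilinear forms of the contracted Laplacians are the forms of the original Laplacians restricted to vectors constant on $W$ — is the same mechanism the paper uses for the spectral bound (the paper phrases it as "restricting vectors into a subspace $S$" rather than writing $\CC$ explicitly). Where you genuinely diverge is in how the degree-balance condition is handled: the paper verifies it separately by an explicit computation of the in- and out-degrees of every vertex of $\vG'$ and $\vH'$ (including the merged vertex $w^*$, via cut weights $\ww(W, V\setminus W)$), and then implicitly invokes the reverse direction of \Cref{lemma:dispec_equiv}; you instead observe that your change of variables already establishes \eqref{eq:dispec_def} verbatim for the pair $(\vG',\vH')$, which \emph{is} \Cref{def:dispecappr}, so degree balance comes for free from the forward direction of \Cref{lemma:dispec_equiv}. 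Your route is shorter and arguably cleaner, at the cost of having to justify the three identities $\vLL_{\vG'}=\CC\vLL_{\vG}\CC^\top$, $\vLL_{\vH'}=\CC\vLL_{\vH}\CC^\top$, $\LL_{G'}=\CC\LL_G\CC^\top$ carefully — in particular, the literal matrix identities $\BB_{\vG'}=\BB_{\vG}\CC^\top$, $\HH_{\vG'}=\HH_{\vG}\CC^\top$ only make sense before merging parallel edges and dropping self-loops, so you should state them at the level of Laplacians (self-loops contribute zero because their $\BB$-row vanishes, and merging parallel edges does not change the Laplacian, as the paper notes in its preliminaries). With that caveat made explicit, the argument is complete.
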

\begin{proof}
    The operator norms bound follows simply by noticing that vertex contraction
    is equivalent to restricting vectors into a subspace $S$ with the constraints
    that for vector $\xx \in S$, it has $\xx_u = \xx_v$ for all $u,v\in W$.
    Let $w$ be the resulting vertex of the contraction in both $\vG',\vH'$ and
    set $V' = V\setminus W \cup \{w\}$ be the vertex set.
    For any $v \ne w$, its out-degree in $\vH'$ is 
    \begin{align*}
        \sum_{u \in V': u \ne v} [\ww_{\vH'}]_{(v,u)}
        =
        \sum_{u \in V \setminus W} [\ww_{\vH}]_{(v,u)}
        +
        [\ww_{\vH'}]_{(v,w)}
        =
        \sum_{u \in V \setminus W: u \ne v} [\ww_{\vH}]_{(v,u)}
        +
        \sum_{x \in W} [\ww_{\vH}]_{(v,x)}
        =
        [\ddo_{\vH}]_v.%
    \end{align*}
    By the same argument, we also have 
    \[
        [\ddo_{\vG}]_v = [\ddo_{\vG'}]_v,
        \quad
        [\ddi_{\vH'}]_v = [\ddi_{\vH}]_v, \quad [\ddi_{\vG}]_v =
        [\ddi_{\vG'}]_v,
    \]
    which guarantee
    \[
        [\ddo_{\vH'} - \ddi_{\vH'}]_v = [\ddo_{\vH} - \ddi_{\vH}]_v
        = [\ddo_{\vG} - \ddi_{\vG}]_v
        = [\ddo_{\vG'} - \ddi_{\vG'}]_v.
    \]
    All there left to consider is in and out degree of $w$.
    For the out degree, we have
    \begin{align*}
        \ww_{\vH'}(w,V' \setminus w)
        =
        \sum_{u \in V: u \ne w} [\ww_{\vH'}]_{(w,u)}
        =
        \sum_{x \in W}\sum_{u \in V \setminus W} [\ww_{\vH}]_{(x,u)}
        =
        \ww_{\vH}(W,V\setminus W)
    \end{align*}
    Using the same argument, we get
    \[
        \ww_{\vH'}(V'\setminus w,w) = \ww_{\vH}(V\setminus W,W),
        \ww_{\vG'}(w, V'\setminus w) = \ww_{\vG}(W, V\setminus W),
        \ww_{\vG'}(V'\setminus w,w) = \ww_{\vG}(V\setminus W,W).
    \]
    It then suffices to show
    \[
        \ww_{\vH}(W,V\setminus W) - \ww_{\vH}(V\setminus W,W) =
        \ww_{\vG}(W,V\setminus W) - \ww_{\vG}(V\setminus W,W).
    \]
    Notice that
    \begin{align*}
        \ww_{\vG}(W,V\setminus W) - \ww_{\vG}(V\setminus W,W) 
        &= 
        (\sum_{x \in W} [\ddo_{\vG}]_x - \ww_{\vG}(W,W)) - 
        (\sum_{x \in W} [\ddi_{\vG}]_x - \ww_{\vG}(W,W))
        \\
        &=
        \sum_{x \in W} [\ddo_{\vG} - \ddi_{\vG}]_x,
    \end{align*}
    and 
    \[ 
        \ww_{\vH}(W,V\setminus W) - \ww_{\vH}(V\setminus W,W) = \sum_{x \in W}
        [\ddo_{\vH} - \ddi_{\vH}]_x
        =
        \sum_{x \in W} [\ddo_{\vG} - \ddi_{\vG}]_x,
    \]
    we can conclude $\vH'$ is degree balance preserving.
\end{proof}

We now demonstrate in that having extra vertices in a degree balance preserving
sparsifier does not affect most potential applications of it.
In particular, \Cref{lemma:schurprecon} states that as long as the Schur
complement of a graph $\vH$ is a good degree balance preserving sparsifier of an
Eulerian graph $\vG$, then $\vLL_{\vH}^\dag$ serves as a good preconditioner of
$\vLL_{\vG}$.
\begin{lemma} \label{lemma:schurprecon}
    Let $\vG=(V,E,\ww)$ and $\vH=(V \cup X,E_{\vH},\ww_{\vH})$ be directed
    graphs such that the Schur complement $\Sc(\vH,V)$ is a $\eps$-degree
    balance preserving approximation of $\vG$ for $\eps \in (0,\frac{1}{4})$.
    If $\vG$ is Eulerian, then
    \[
        \norm{\PP_V - \PP_V\vLL_{\vH}^\dag \vLL_{\vG}}_{\LL_G \to \LL_G}
        \le 8\eps.
    \]
\end{lemma}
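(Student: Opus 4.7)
The plan is to reduce the inequality on $\vH$ (living on $V \cup X$) to a bound involving the Schur complement $\vH' \defeq \Sc(\vH,V)$, and then control the pseudoinverse of $\vLL_{\vH'}$ via Eulerian symmetrization. The first step is to apply the block LU factorization \eqref{eq:lufact} with $F=X, C=V$ to verify that, for every $b \in \im(\vLL_{\vG})$,
\[
\PP_V \vLL_{\vH}^\dag \begin{pmatrix} \vzero_X \\ b \end{pmatrix} = \vLL_{\vH'}^\dag b
\]
modulo an element of $\ker(\vLL_{\vH'}) = \spn(\vone_V)$, which is annihilated by $\norm{\cdot}_{\LL_G \to \LL_G}$ (since $\ker(\LL_G) = \spn(\vone_V)$ as well). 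Thus it suffices to bound $\norm{\PPi_V - \vLL_{\vH'}^\dag \vLL_{\vG}}_{\LL_G \to \LL_G}$.

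Next I would observe that $\vH'$ is itself Eulerian: since $\Sc(\vH,V) = \vH'$ is an $\eps$-degree balance preserving approximation of the Eulerian graph $\vG$, \Cref{lemma:dispec_equiv} forces the out-minus-in-degree vector of $\vH'$ to equal that of $\vG$, which is $\vzero$. A short direct computation then yields $\vLL_{\vH'} + \vLL_{\vH'}^\top = \LL_{H'}$ where $H' \defeq \und(\vH')$, and $\vLL_{\vH'}^\dag \vLL_{\vH'} = \PPi_V$ on $\vone_V^\perp$ (connectedness of $H'$ follows from \Cref{lemma:dispec_to_spec}, which forces $\ker(\LL_{H'}) \subseteq \ker(\LL_G)$). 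Using this identity I rewrite
\[
\PPi_V - \vLL_{\vH'}^\dag \vLL_{\vG} = \vLL_{\vH'}^\dag\Par{\vLL_{\vH'} - \vLL_{\vG}}
\]
and bound submultiplicatively:
\[
\normop{\LL_G^{\frac 1 2}\vLL_{\vH'}^\dag\Par{\vLL_{\vH'}-\vLL_{\vG}}\LL_G^{\frac \dag 2}}
\le \normop{\LL_G^{\frac 1 2}\vLL_{\vH'}^\dag \LL_G^{\frac 1 2}} \cdot \normop{\LL_G^{\frac \dag 2}\Par{\vLL_{\vH'}-\vLL_{\vG}}\LL_G^{\frac \dag 2}}.
\]
The right factor is at most $\eps$ by \Cref{lemma:dispec_equiv}.

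To bound the left factor, set $f \defeq \LL_G^{\frac \dag 2}\vLL_{\vH'}\LL_G^{\frac \dag 2}$; using $\vLL_{\vH'}\vLL_{\vH'}^\dag = \vLL_{\vH'}^\dag \vLL_{\vH'} = \PPi_V$ one checks that $\LL_G^{\frac 1 2}\vLL_{\vH'}^\dag \LL_G^{\frac 1 2} = f^{-1}$ when restricted to $\vone_V^\perp$. Eulerian symmetrization combined with \Cref{lemma:dispec_to_spec} gives $f + f^\top = \LL_G^{\frac \dag 2}\LL_{H'}\LL_G^{\frac \dag 2} \succeq (1-2\eps)\PPi_V$, so for every unit $x \in \vone_V^\perp$,
\[
\norm{f x}_2 \ge \Abs{x^\top f x} = \thalf x^\top(f+f^\top)x \ge \thalf(1-2\eps).
\]
Hence $\sigma_{\min}(f|_{\vone_V^\perp}) \ge (1-2\eps)/2$ and $\normop{f^{-1}|_{\vone_V^\perp}} \le 2/(1-2\eps)$. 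For $\eps < 1/4$, combining gives $\tfrac{2\eps}{1-2\eps} \le 4\eps \le 8\eps$, as desired.

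I expect the main obstacle is the Schur-complement reduction in the first step. The pseudoinverse $\vLL_{\vH}^\dag$ does not in general factor cleanly through the LU decomposition \eqref{eq:lufact} when $\vLL_{FF}$ is singular, so I would need to track carefully how the minimum-norm constraint interacts with $\ker(\vLL_{FF})$ and with $\ker(\Sc(\vLL_{\vH},V))$, and confirm that after applying $\PP_V$ any residual kernel contribution lies in $\spn(\vone_V)$ and is therefore invisible to the $\LL_G \to \LL_G$ norm. Once that reduction is pinned down, the remaining bounds are straightforward consequences of Eulerianness, \Cref{lemma:dispec_equiv}, and \Cref{lemma:dispec_to_spec}.
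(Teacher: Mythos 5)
Your proposal is correct and follows essentially the same route as the paper: both reduce via the identity $\PP_V\vLL_{\vH}^\dag\PP_V = \Sc(\vLL_{\vH},V)^\dag$ — the obstacle you flag is resolved exactly as you sketch, from $\vLL\vLL^\dag\vLL=\vLL$ combined with the LU factorization \eqref{eq:lufact} and a uniqueness argument using that the strongly connected Eulerian Schur complement satisfies $\Sc(\vLL,V)^\dag\Sc(\vLL,V)=\PP_V$ — and then bound $\vLL_{\vS}^\dag(\vLL_{\vS}-\vLL_{\vG})$ via Eulerian symmetrization. The only (cosmetic) difference is that the paper invokes \Cref{lemma:eulinvbound} in the $\LL_S\to\LL_S$ norm and pays a $\sqrt{(1+2\eps)/(1-2\eps)}$ transfer factor, whereas you re-derive the singular-value lower bound directly in the $\LL_G$ basis with the $(1-2\eps)$ correction baked in, which is sound and yields a slightly sharper constant ($4\eps$ versus the paper's $2\eps(1+2\eps)^{1/2}(1-2\eps)^{-3/2}$).
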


To prove \Cref{lemma:schurprecon}, we recall the following linear algebraic
lemma.
\begin{lemma}[Lemma B.9, \cite{CohenKPPRSV17}] \label{lemma:eulinvbound}
    Suppose $\vG$ is Eulerian and let $G = \und(\vG)$, then
    \[
        \LL_G \pleq 4 \vLL_{\vG} \LL_G^\dag \vLL_{\vG}.
    \]
    Furthermore, for any matrix $\AA$ satisfying $\ker(\AA) = \ker(\AA^\top)
    \supseteq \ker(\vLL_{\vG})$, we have
    \[
        \norm{\AA}_{\LL_G \to \LL_G} \le 2 \normop{\LL_G^{\frac \dag 2}
        \vLL_{\vG} \AA \LL_G^{\frac \dag 2}}.
    \]
\end{lemma}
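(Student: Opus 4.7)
The plan is to reduce the statement to a Schur-complement bound and then apply \Cref{lemma:eulinvbound} twice: once for $\vG$ to control the overall error, and once for $\vH' \defeq \Sc(\vH,V)$ to handle a mixed-Laplacian operator norm that will appear. Since $\vG$ is Eulerian and $\vH'$ is degree balance preserving with respect to $\vG$, $\vH'$ is itself Eulerian, which is what enables the second application.

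\emph{Step 1 (Schur-complement reduction).} I would use the block LU factorization \eqref{eq:lufact} with $F = X$ and $C = V$ and invert its three factors to obtain that the $V \times V$ block of $\vLL_{\vH}^\dag$ is exactly $\vLL_{\vH'}^\dag$; equivalently, $\PP_V \vLL_{\vH}^\dag \PP_V = \vLL_{\vH'}^\dag$ as matrices on $V \cup X$. Because $\vG$ is supported on $V$, we have $\vLL_{\vG} = \PP_V \vLL_{\vG} \PP_V$, and thus $\PP_V \vLL_{\vH}^\dag \vLL_{\vG} = \vLL_{\vH'}^\dag \vLL_{\vG}$. The $\LL_G \to \LL_G$ seminorm ignores the $X$ coordinates, so it suffices to show $\norm{\PP_V - \vLL_{\vH'}^\dag \vLL_{\vG}}_{\LL_G \to \LL_G} \le 8\eps$.

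\emph{Step 2 (expansion).} Set $\AA \defeq \PP_{\im(\LL_G)} - \vLL_{\vH'}^\dag \vLL_{\vG}$, which agrees with $\PP_V - \vLL_{\vH'}^\dag \vLL_{\vG}$ on $\im(\LL_G)$ and so has the same $\LL_G$-seminorm. The Eulerianity of $\vG$ and $\vH'$ yields $\AA \vone = \AA^\top \vone = 0$, so the kernel precondition of \Cref{lemma:eulinvbound} holds. Writing $\EE \defeq \vLL_{\vH'} - \vLL_{\vG}$ and using $\vLL_{\vH'} \vLL_{\vH'}^\dag = \vLL_{\vH'}^\dag \vLL_{\vH'} = \PP_{\im(\LL_G)}$ together with the fact that $\EE$ and $\vLL_{\vH'}^\dag$ vanish on and into $\spn(\vone)$, a direct expansion gives $\vLL_{\vG}\AA = \EE - \EE \vLL_{\vH'}^\dag \EE$. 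Combining this with the factorization
\[
    \LL_G^{\frac{\dag}{2}} \EE \vLL_{\vH'}^\dag \EE \LL_G^{\frac{\dag}{2}} = \bigl(\LL_G^{\frac{\dag}{2}} \EE \LL_G^{\frac{\dag}{2}}\bigr)\bigl(\LL_G^{\frac{1}{2}} \vLL_{\vH'}^\dag \LL_G^{\frac{1}{2}}\bigr)\bigl(\LL_G^{\frac{\dag}{2}} \EE \LL_G^{\frac{\dag}{2}}\bigr)
\]
and \Cref{lemma:eulinvbound}, together with \Cref{lemma:dispec_equiv} to bound $\normop{\LL_G^{\frac{\dag}{2}} \EE \LL_G^{\frac{\dag}{2}}} \le \eps$, yields
\[
    \norm{\AA}_{\LL_G \to \LL_G} \le 2\eps + 2\eps^2 \cdot \normop{\LL_G^{\frac{1}{2}} \vLL_{\vH'}^\dag \LL_G^{\frac{1}{2}}}.
\]

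\emph{The main obstacle} is bounding the mixed-Laplacian quantity $\normop{\LL_G^{\frac{1}{2}} \vLL_{\vH'}^\dag \LL_G^{\frac{1}{2}}}$ by an absolute constant, since \Cref{lemma:eulinvbound} naturally controls $\vLL_{\vH'}^\dag$ only via the undirected Laplacian $\LL_{H'}$ of $\und(\vH')$, not via $\LL_G$. I would first apply \Cref{lemma:eulinvbound} to the Eulerian graph $\vH'$ itself with the matrix $\vLL_{\vH'}^\dag \LL_{H'}$; using $\vLL_{\vH'}\vLL_{\vH'}^\dag = \PP_{\im(\LL_{H'})}$ the bound collapses, giving $\norm{\vLL_{\vH'}^\dag \LL_{H'}}_{\LL_{H'} \to \LL_{H'}} \le 2$, which a change of variables unpacks to $\normop{\LL_{H'}^{\frac{1}{2}} \vLL_{\vH'}^\dag \LL_{H'}^{\frac{1}{2}}} \le 2$. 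To translate this from $\LL_{H'}$ to $\LL_G$, I invoke \Cref{lemma:dispec_to_spec}: $(1-2\eps) \LL_G \pleq \LL_{H'}$, so $\LL_{H'}^\dag \pleq (1-2\eps)^{-1} \LL_G^\dag$, implying that substituting $\LL_G$ for $\LL_{H'}$ in the sandwich inflates the operator norm by at most $(1-2\eps)^{-1}$. Thus $\normop{\LL_G^{\frac{1}{2}} \vLL_{\vH'}^\dag \LL_G^{\frac{1}{2}}} \le 2/(1-2\eps) \le 4$ for $\eps < 1/4$. Substituting back, $\norm{\AA}_{\LL_G \to \LL_G} \le 2\eps + 8\eps^2 \le 8\eps$, which completes the proof.
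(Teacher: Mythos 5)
There is a fundamental mismatch here: your proposal does not prove the statement in question. The statement is \Cref{lemma:eulinvbound}, a self-contained linear-algebraic fact about a single Eulerian graph (the PSD bound $\LL_G \pleq 4\,\vLL_{\vG}\LL_G^\dag\vLL_{\vG}$ and the transfer inequality $\norm{\AA}_{\LL_G\to\LL_G} \le 2\normop{\LL_G^{\frac \dag 2}\vLL_{\vG}\AA\LL_G^{\frac \dag 2}}$), which the paper does not prove at all but imports from \cite{CohenKPPRSV17} (their Lemma B.9). What you wrote is instead a proof of \Cref{lemma:schurprecon} (the $8\eps$ preconditioning bound for $\PP_V - \PP_V\vLL_{\vH}^\dag\vLL_{\vG}$), and it invokes \Cref{lemma:eulinvbound} twice as a black box --- once for $\vG$ and once for $\Sc(\vH,V)$. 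Offered as a proof of \Cref{lemma:eulinvbound}, this is circular and never addresses either of the two claimed inequalities.

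A direct proof of the target lemma is short and entirely different in character. Since $\vG$ is Eulerian and the paper's convention gives $\vLL_{\vG} + \vLL_{\vG}^\top = \LL_G$, the symmetric part of $\MM \defeq \LL_G^{\frac \dag 2}\vLL_{\vG}\LL_G^{\frac \dag 2}$ is $\frac12 \PP_{\im(\LL_G)}$; hence for any unit $\yy \in \im(\LL_G)$ we have $\norm{\MM\yy}_2 \ge \yy^\top\MM\yy = \frac12$, i.e.\ $\MM^\top\MM \pgeq \frac14\PP_{\im(\LL_G)}$, which is exactly the first claim, $\LL_G \pleq 4\,\vLL_{\vG}^\top\LL_G^\dag\vLL_{\vG}$ (note the transpose that should sit on one factor for the expression to be symmetric). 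The second claim then follows by writing $\norm{\AA}_{\LL_G\to\LL_G}^2 = \normop{\LL_G^{\frac12}\AA\LL_G^{\frac \dag 2}}^2 = \sup_{\zz}\zz^\top\LL_G^{\frac \dag 2}\AA^\top\LL_G\AA\LL_G^{\frac \dag 2}\zz$, substituting the first bound for the inner $\LL_G$, and using the kernel hypotheses on $\AA$, which yields $\norm{\AA}_{\LL_G\to\LL_G}^2 \le 4\normop{\LL_G^{\frac \dag 2}\vLL_{\vG}\AA\LL_G^{\frac \dag 2}}^2$. Separately, even if your write-up were judged as a proof of \Cref{lemma:schurprecon}, Step 1's assertion that $\PP_V\vLL_{\vH}^\dag\PP_V = \Sc(\vLL_{\vH},V)^\dag$ is not immediate from ``inverting the three LU factors,'' because $\vLL_{\vH}$ is singular and pseudoinverses do not factor through a block LU decomposition; this identity is precisely the claim the paper proves separately (via $\Sc(\vLL,V)\,\PP_V\vLL^\dag\PP_V\,\Sc(\vLL,V) = \Sc(\vLL,V)$ plus a kernel/uniqueness argument), so it cannot simply be asserted.
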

Note that we have some extra factors in \Cref{lemma:eulinvbound} since we define
$\und(\vG) = \vG \cup \rev(\vG)$ instead of $\frac{1}{2}(\vG \cup \rev(\vG))$.

\begin{proof}[Proof of \Cref{lemma:schurprecon}]
    Assume w.l.o.g.~that $\vG$ is strongly connected.
    Then, we have by degree balance preserving that $\Sc(\vH,V)$ is also
    strongly connected on $V$, and $\Sc(\vLL_{\vH},V)$ has the same left and
    right kernel as $\vLL_{\vG}$, i.e., the span of $\vone_V$.

    We claim that $\PP_V \vLL_{\vH}^\dag \PP_V = \Sc(\vLL_{\vH},V)^\dag$.
    Let $\vS = \Sc(\vH,V)$ and $S = \und(\vS)$.
    We have by \Cref{lemma:dispec_to_spec} that $(1-2\eps) \LL_G \pleq  \LL_S
    \pleq (1+2\eps) \LL_G$.
    Then, for any matrix $\AA$,
    \begin{align*}
        \norm{\AA}_{\LL_G \to \LL_G}
        &=
        \normop{\LL_G^{\frac 1 2} \AA \LL_G^{\frac \dag 2}}
        =
        \normop{\LL_G^{\frac 1 2}\LL_S^{\frac \dag 2} \cdot 
        \LL_S^{\frac 1 2} \AA \LL_S^{\frac \dag 2} \cdot 
        \LL_S^{\frac 1 2} \LL_G^{\frac \dag 2}}
        \le
        \normop{\LL_G^{\frac 1 2}\LL_S^{\frac \dag 2}} \cdot 
        \normop{\LL_S^{\frac 1 2} \AA \LL_S^{\frac \dag 2}} \cdot 
        \normop{\LL_S^{\frac 1 2} \LL_G^{\frac \dag 2}}
        \\
        &\le
        \sqrt{\frac{1+2\eps}{1-2\eps}} \norm{\AA}_{\LL_S \to \LL_S}.
    \end{align*}
    Then, using \Cref{lemma:eulinvbound},
    \begin{align*}
        \norm{\PP_V - \PP_V \vLL_{\vH}^\dag \vLL_{\vG}}_{\LL_G\to
        \LL_G}
        &\le
        \sqrt{\frac{1+2\eps}{1-2\eps}} \cdot
        \norm{\PP_V - \PP_V \vLL_{\vH}^\dag \vLL_{\vG}}_{\LL_S\to \LL_S}
        \\
        &\le
        2\sqrt{\frac{1+2\eps}{1-2\eps}} \cdot
        \normop{\LL_S^{\frac 1 2} \vLL_{\vS} (\PP_V - \PP_V\vLL_{\vH}^\dag \PP_V
        \vLL_{\vG}) \LL_S^{\frac \dag 2}}
        \\
        &=
        2\sqrt{\frac{1+2\eps}{1-2\eps}} \cdot
        \normop{\LL_S^{\frac 1 2} (\vLL_{\vS} - \vLL_{\vG}) \LL_S^{\frac \dag 2}}
        \\
        &\le
        2\eps \frac{(1+2\eps)^{1/2}}{(1-2\eps)^{3/2}}.
    \end{align*}
    When $\eps < \frac{1}{4}$, the upperbound can be loosely simplified too $8\eps$.

    We now prove $\PP_V \vLL_{\vH}^\dag \PP_V = \Sc(\vLL_{\vH},V)^\dag$.
    Notice first that the left and right kernels conditions are satisfied by
    the projections $\PP_V$.
    For simplicity, we omit $\vH$ in the subscripts.
    Since $\vLL\vLL^\dag\vLL = \vLL$, by \eqref{eq:lufact} and the fact that the
    lower and upper triangular matrices are invertible, we have
    \begin{align*}
        \begin{pmatrix}
            \vLL_{XX} & \vLL_{XV}\\
            0 & \Sc(\vLL,V)
        \end{pmatrix}
        \vLL^\dag
        \begin{pmatrix}
            \vLL_{XX} & 0\\
            \vLL_{VX} & \Sc(\vLL,V)
        \end{pmatrix}
        =
        \begin{pmatrix}
            \vLL_{XX} & 0\\
            0 & \Sc(\vLL,V)
        \end{pmatrix}.
    \end{align*}
    Applying $\II_V$ to both sides and using the fact that $\Sc(\vLL,V) =
    \PP_V\Sc(\vLL,V)\PP_V$, we get
    \[
        \Sc(\vLL,V) \PP_V \vLL^\dag \PP_V \Sc(\vLL,V) 
        = \Sc(\vLL,V) \II_V \vLL^\dag \II_V \Sc(\vLL,V) = 
        \Sc(\vLL,V).
    \]
    Suppose $\PP_V \vLL^\dag \PP_V \ne \Sc(\vLL,V)^\dag$, then there is some
    non-trivial matrix $\ZZ$ such that $\PP_V \vLL^\dag \PP_V = \Sc(\vLL,V)^\dag
    + \ZZ$ and $\PP_V \ZZ \PP_V = \ZZ$.
    Then, we must have $\Sc(\vLL,V)\ZZ \Sc(\vLL,V) = 0$.
    Applying $\Sc(\vLL,V)^\dag$ on both sides and using that fact that
    $\Sc(\vH,V)$ is Eulerian and strongly connected, we get a contradiction by
    \[
        \Sc(\vLL,V)^\dag \Sc(\vLL,V) \ZZ \Sc(\vLL,V) \Sc(\vLL,V)^\dag 
        =
        \PP_V \ZZ \PP_V = \ZZ = 0.
    \]
    Thus, we have proven our claim.
\end{proof}

We also give a basic helper result which we use to ensure
degree-preserving properties, a strictly stronger property of degree balance
preserving properties, of our algorithms by working with bipartite lifts.
This primitive is from Section 9.1 of \cite{JambulapatiSSTZ25}.
Given a directed graph $\vG = (V,E,\ww)$, we let the directed graph
$\vG^\uparrow \defeq \blift(\vG)$ be its bipartite lift, which is defined so
that $V_{\vG^\uparrow} = V \cup \bar{V}$ where $\bar{V}$ is a copy of $V$, and
$E_{\vG^\uparrow} = \{f = (u,\bar{v}) | (u,v) \in E\}$ with $\ww_{(u,\bar{v})} =
\ww_{(u,v)}$.
Notice that our definition gives a canonical bijection between
$E_{\vG^\uparrow}$ and $E_{\vG}$.

\begin{lemma} \label{lemma:blift_spectral}
Let $\vG = (V,E,\ww)$ be a directed graph and let its bipartite lift be
$\vG^\uparrow \defeq (V\cup V',E^\uparrow,\ww) = \blift(\vG)$, with 
$G \defeq \und(\vG)$ and $G^\uparrow \defeq \und(\vG^\uparrow)$. 
Suppose that for some $\eps > 0$, $\ww' \in \R^E_{>0}$ satisfies
\[
    \BB_{\vG^\uparrow}^\top \ww' = \BB_{\vG^\uparrow}^\top \ww, \quad
    \normop{\LL_{G^\uparrow}^{\frac \dagger 2} \BB_{\vG^\uparrow}^\top (\WW' - \WW) 
    \HH_{\vG^\uparrow} \LL_{G^\uparrow}^{\frac \dagger 2}} 
    \le \eps.
\]
Then, letting $|\BB_{\vG}|$ apply the absolute value entry-wise,
\[
    \BB_{\vG}^\top \ww' = \BB_{\vG}^\top \ww,\quad |\BB_{\vG}|^\top \ww' = |\BB_{\vG}|^\top \ww, \quad
    \normop{\LL_G^{\frac \dagger 2} \BB_{\vG}^\top (\WW' - \WW) \HH_{\vG} 
    \LL_G^{\frac \dagger 2}} 
    \le \eps.
\]
\end{lemma}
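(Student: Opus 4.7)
The plan is to split the lemma into two parts: the degree-preservation identities and the operator-norm bound.

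For the degree identities, the key observation is that the bipartite lift separates heads from tails: each edge $f \in E^\uparrow$ has $h(f) \in V$ and $t(f) \in \bar V$, so $\HH_{\vG^\uparrow}$ is supported on $V$-columns and $\TT_{\vG^\uparrow}$ on $\bar V$-columns. Thus the single equation $\BB_{\vG^\uparrow}^\top \ww' = \BB_{\vG^\uparrow}^\top \ww$ decouples, when restricted to $V$ and $\bar V$ respectively, into $\HH_{\vG}^\top \ww' = \HH_{\vG}^\top \ww$ and $\TT_{\vG}^\top \ww' = \TT_{\vG}^\top \ww$ (using the canonical bijection between $E^\uparrow$ and $E$). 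Taking the difference yields $\BB_\vG^\top \ww' = \BB_\vG^\top \ww$ and taking the sum yields $|\BB_\vG|^\top \ww' = |\BB_\vG|^\top \ww$.

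For the operator-norm bound, I would introduce the linear embedding $\PPi : \R^V \to \R^{V \cup \bar V}$ defined by $\PPi \yy = (\yy,\yy)$ (placing $\yy_v$ in both $v$ and $\bar v$). Two identities then do all the work:
\begin{enumerate}
\item $\PPi^\top \LL_{G^\uparrow} \PPi = \LL_G$, proved by checking the quadratic form: for each $f = (u,\bar v) \in E^\uparrow$, $(\PPi\yy)_u - (\PPi\yy)_{\bar v} = \yy_u - \yy_v$, so the sum over lifted edges equals the sum over $\und(\vG)$-edges.
\item $\PPi^\top \BB_{\vG^\uparrow}^\top (\WW'-\WW) \HH_{\vG^\uparrow} \PPi = \BB_\vG^\top (\WW'-\WW) \HH_\vG$, proved by the same entrywise edge-by-edge computation: $\HH_{\vG^\uparrow} \PPi \yy$ picks off $\yy_{h(e)}$ on each edge, and $\BB_{\vG^\uparrow} \PPi \xx$ picks off $\xx_{h(e)} - \xx_{t(e)}$.
\end{enumerate}
With these two identities, for any $\xx,\yy \in \R^V$ that are not in $\ker(\LL_G)$,
\[
\frac{|\xx^\top \BB_\vG^\top (\WW'-\WW) \HH_\vG \yy|}{\|\xx\|_{\LL_G}\|\yy\|_{\LL_G}}
= \frac{|(\PPi\xx)^\top \BB_{\vG^\uparrow}^\top (\WW'-\WW) \HH_{\vG^\uparrow} (\PPi\yy)|}{\|\PPi\xx\|_{\LL_{G^\uparrow}} \|\PPi\yy\|_{\LL_{G^\uparrow}}},
\]
which is at most $\eps$ by hypothesis. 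Taking the supremum gives the desired bound on the left.

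The only mild subtlety is handling kernels: I would verify that $\PPi$ sends $\ker(\LL_G)$ into $\ker(\LL_{G^\uparrow})$ (the indicator of a connected component of $G$ lifts to the indicator of the corresponding connected component of $G^\uparrow$), which justifies passing between the pseudoinverse-weighted norms on the two sides. This is essentially the only place where any care is needed; the rest is edge-by-edge bookkeeping, so I do not anticipate a genuine obstacle.
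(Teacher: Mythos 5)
Your proposal is correct and follows essentially the same route as the paper's proof: the paper also decouples $\BB_{\vG^\uparrow} = \begin{pmatrix}\HH_{\vG} & -\TT_{\vG}\end{pmatrix}$ to get both degree identities, and uses the same embedding (called $\QQ$ there, with $\HH_{\vG} = \HH_{\vG^\uparrow}\QQ$, $\BB_{\vG} = \BB_{\vG^\uparrow}\QQ$, $\QQ^\top\LL_{G^\uparrow}\QQ = \LL_G$) together with the variational form of the operator norm. The kernel bookkeeping you flag is the same minor point the paper handles by restricting to vectors orthogonal to $\ker(\LL_G)$.
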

\Cref{lemma:blift_spectral} entails that working with bipartite lifts allows us
to construct strictly degree-preserving directed spectral sparsifiers.
Working with bipartite lifts also has benefits in computing degree fixing
patchings for our randomized sampling sparsification algorithm, as we will show
in \Cref{ssec:dispec_static}.
We provide a proof of this lemma in \Cref{app:specproof}.
For the rest of this section, we almost exclusively considers such directed
bipartite graphs.

We show in the next \Cref{lemma:blift_schur} that adding extra vertices and
taking the Schur complement onto the original vertices under bipartite lift does
not affect approximation guarantees on the unlifted graph.
\begin{lemma} \label{lemma:blift_schur}
Let $\vG = (V,E,\ww)$ be a directed graph and let its bipartite lift be
$\vG^\uparrow \defeq (V\cup V',E^\uparrow,\ww) = \blift(\vG)$, with 
$G \defeq \und(\vG)$ and $G^\uparrow \defeq \und(\vG^\uparrow)$. 
Suppose $\vH = (V \cup V' \cup X, E_{\vH},\ww_{\vH})$ is a directed graph such
that $\Sc(\vH,V \cup V')$ is an $\eps$-degree-preserving spectral approximation of
$\vG^\uparrow$.
Let $\vH^\downarrow$ be the resulting directed graph of $\vH$ with each pair of
corresponding vertices $\{v,v'\}$ contracted.
Then, $\Sc(\vH^\downarrow, V)$ is an $\eps$-degree-preserving spectral
approximation of $\vG$.
\end{lemma}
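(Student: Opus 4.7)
The plan is to reduce the claim to the contraction property Lemma~\ref{lemma:dispec_contra} by showing that the Schur complement commutes with vertex contraction when the two vertex sets involved are disjoint.

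First I will establish the algebraic identity $\Sc(\vH^\downarrow, V) = \bigl(\Sc(\vH, V \cup V')\bigr)^\downarrow$, where $(\cdot)^\downarrow$ on the right denotes contracting each pair $\{v, v'\}$. Let $\QQ \in \R^{(V \cup X) \times (V \cup V' \cup X)}$ be the linear map that sends both $\ee_v$ and $\ee_{v'}$ to $\ee_v$ and fixes $\ee_x$ for $x \in X$, so that by construction $\vLL_{\vH^\downarrow} = \QQ \vLL_\vH \QQ^\top$. Because $\QQ$ acts as the identity on the $X$-block and $X$ is disjoint from $V \cup V'$, the four blocks appearing in the Schur complement formula satisfy $[\vLL_{\vH^\downarrow}]_{XX} = [\vLL_\vH]_{XX}$, while the $(V, X)$, $(X, V)$, and $(V, V)$ blocks each pick up an outer factor of $\QQ'$ or $\QQ'^\top$, where $\QQ'$ is the restriction of $\QQ$ to $V \cup V'$. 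Substituting into $\Sc(\vLL_{\vH^\downarrow}, V) = [\vLL_{\vH^\downarrow}]_{VV} - [\vLL_{\vH^\downarrow}]_{VX}[\vLL_{\vH^\downarrow}]_{XX}^\dag[\vLL_{\vH^\downarrow}]_{XV}$ and pulling the $\QQ'$ factors outside yields $\QQ' \Sc(\vLL_\vH, V \cup V') \QQ'^\top$, which is exactly the Laplacian of $\bigl(\Sc(\vH, V \cup V')\bigr)^\downarrow$.

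Next, I will iterate Lemma~\ref{lemma:dispec_contra} once per pair $\{v, v'\}$ to conclude that $\bigl(\Sc(\vH, V \cup V')\bigr)^\downarrow$ is an $\eps$-degree balance preserving spectral approximation of $(\vG^\uparrow)^\downarrow$. The contracted lift $(\vG^\uparrow)^\downarrow$ is precisely $\vG$, since the bipartite lift sends each $(u, v) \in E(\vG)$ to $(u, \bar v) \in E(\vG^\uparrow)$, which descends back to $(u, v)$ after contracting $\{v, \bar v\}$, and no spurious self-loops appear because $\vG$ is simple. Combined with the identity from the previous step, this yields the degree balance preserving approximation of $\vG$. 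To upgrade to the full degree-preserving property I will exploit that $\vG^\uparrow$ has in-degree zero on $V$ and out-degree zero on $V'$: the degree-preserving hypothesis forces $\Sc(\vH, V \cup V')$ to inherit these zero degrees, hence to be bipartite from $V$ to $V'$, so contracting $\{v, \bar v\}$ routes $v$'s entire out-degree and $\bar v$'s entire in-degree onto the merged vertex, matching $\ddo_\vG(v)$ and $\ddi_\vG(v)$ respectively.

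The main subtlety I foresee is the treatment of intra-pair edges $(v, \bar v)$ that the Schur complement may introduce through walks in $X$; these become self-loops after contracting $\{v, \bar v\}$. At the Laplacian level self-loops contribute nothing, so they neither break the commutativity identity in the first step nor perturb the degree computation once the bipartite structure from the second step is in hand, which should close the proof.
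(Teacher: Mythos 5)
Your proposal follows the same route as the paper: establish that contracting the pairs $\{v,v'\}$ commutes with taking the Schur complement onto $V\cup V'$, then invoke the contraction property (\Cref{lemma:dispec_contra}) together with $(\vG^\uparrow)^\downarrow = \vG$. Your derivation of the commutation identity by direct block computation (the $XX$ block is untouched by $\QQ$, the remaining blocks each factor through $\QQ'$, so the Schur complement formula factors as $\QQ'\Sc(\vLL_{\vH},V\cup V')\QQ'^\top$) is a valid and slightly more elementary alternative to the paper's argument, which realizes the contraction itself as a Schur complement onto the subspace fixed by $\ZZ = \frac{1}{\sqrt 2}(\vone_{\{v,v'\}})_{v\in V}$ and then uses that nested Schur complements commute; the two computations amount to the same thing.

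The one place where your argument overreaches is the claimed upgrade from degree-balance preservation to exact degree preservation. The bipartite structure you derive for $\Sc(\vH,V\cup V')$ (all edges directed from $V$ to $V'$) does \emph{not} exclude intra-pair edges $(v,\bar v)$ --- these are themselves edges from $V$ to $V'$ --- and exactly such edges can be created by eliminating $X$. When $\{v,\bar v\}$ is contracted, such an edge becomes a self-loop that is dropped from the Laplacian, which decreases both $\ddo$ and $\ddi$ of the merged vertex by its weight: the difference $\ddo-\ddi$ survives, but the individual degrees do not. So the claim in your last paragraph that these self-loops do not ``perturb the degree computation'' is wrong as stated. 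To be fair, the paper's own proof only invokes \Cref{lemma:dispec_contra}, which likewise yields only degree-balance preservation, and the paper separately engineers away $(v,\bar v)$ edges in \Cref{ssec:degpre} precisely because of this issue; the spectral part of your argument is unaffected, since self-loops contribute nothing to the Laplacian.
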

\begin{proof}
    We claim that $\Sc(\vH^\downarrow,V)$ is
    equivalent to the graph resulted from contracting all pairs of $\{v,v'\}$ on
    $\Sc(\vH,V \cup V')$.
    Suppose the claim is true, then \Cref{lemma:dispec_contra} ensures our
    spectral approximation guarantees.

    To prove the claim, we show the equivalence linear algebraically.
    The crux of the proof is the fact that these vertex contractions are, in
    fact, equivalent to taking Schur complements onto a linear subspace.
    Specifically, the subspace $S \subseteq \R^{V\cup V' \cup X}$ is defined by
    the matrix $\ZZ = \frac{1}{\sqrt{2}}(\vone_{\{v,v'\}})_{v \in V}$, i.e.,
    $\xx \in S$ if $\ZZ \xx = \xx$.
    Note that the columns of $\ZZ$ are independent and have unit $\ell_2$ norms
    and $\ZZ$ is non-zero only on columns indexed by $V$.
    Then, we get as required
    \begin{align*}
        \Sc(\vLL_{\vH^\downarrow},V) 
        &= \ZZ^\top \Sc\Par{\Sc(\vLL_{\vH},S),V \cup V'} \ZZ
        = \ZZ^\top \Sc(\Sc(\vLL_{\vH},V \cup V'),S) \ZZ
        \\
        &= \Sc(\Sc(\vLL_{\vH},V \cup V'),S).
    \end{align*}
\end{proof}

The next lemma is particularly useful for turning approximations in degrees into
a spectral approximation.
\begin{lemma}[\cite{CohenKPPRSV17}] \label{lemma:optoinf}
    For matrix $\MM \in \R^{n_1 \times n_2}$ and diagonal matrices $\DD_1 \in
    \R_{\ge 0}^{n_1 \times n_1}, \DD_2 \in \R_{\ge 0}^{n_2 \times n_2}$, we have
    \[
        \normop{\DD_1^{\frac \dag 2} \MM \DD_2^{\frac \dag 2}} \le 
        \max\Brace{\norm{\DD_1^\dag \MM}_\infty, \norm{\DD_2^\dag \MM^\top}_\infty}.
    \]
\end{lemma}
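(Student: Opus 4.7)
The plan is to prove the bound by a variational argument on the operator norm combined with an AM-GM decoupling of the bilinear form. Write $\NN \defeq \DD_1^{\frac \dag 2} \MM \DD_2^{\frac \dag 2}$ and recall $\normop{\NN} = \max_{\norm{\xx}_2 = \norm{\yy}_2 = 1} |\yy^\top \NN \xx|$. First I would reduce to the non-degenerate case by noting that rows $i$ with $[\DD_1]_{ii} = 0$ and columns $j$ with $[\DD_2]_{jj} = 0$ contribute zero to $\NN$ (by the pseudo-inverse convention) and also contribute zero to $\DD_1^\dag \MM$ and $\DD_2^\dag \MM^\top$, so the bound on the relevant remaining support carries the argument.

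Next I would expand
\[
  |\yy^\top \NN \xx|
  \le \sum_{i,j} \frac{|\yy_i|\,|\MM_{ij}|\,|\xx_j|}{[\DD_1]_{ii}^{1/2}[\DD_2]_{jj}^{1/2}}
\]
and apply the AM-GM inequality $\frac{|\yy_i|\,|\xx_j|}{[\DD_1]_{ii}^{1/2}[\DD_2]_{jj}^{1/2}} \le \frac{1}{2}\!\left(\frac{\yy_i^2}{[\DD_1]_{ii}} + \frac{\xx_j^2}{[\DD_2]_{jj}}\right)$ to every term. This separates the sum into two pieces, one indexed by $i$ and one by $j$:
\[
  |\yy^\top \NN \xx|
  \le \frac{1}{2}\sum_i \frac{\yy_i^2}{[\DD_1]_{ii}} \sum_j |\MM_{ij}|
    + \frac{1}{2}\sum_j \frac{\xx_j^2}{[\DD_2]_{jj}} \sum_i |\MM_{ij}|.
\]

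The third step is to recognize the inner sums as row-sum expressions. Since $[\DD_1]_{ii}^{-1}\sum_j |\MM_{ij}|$ is the absolute row sum of the $i$-th row of $\DD_1^\dag \MM$, it is bounded by $\norm{\DD_1^\dag \MM}_\infty$; symmetrically for $\DD_2^\dag \MM^\top$. Using $\sum_i \yy_i^2 = \sum_j \xx_j^2 = 1$, this gives
\[
  |\yy^\top \NN \xx| \le \frac{1}{2}\Par{\norm{\DD_1^\dag \MM}_\infty + \norm{\DD_2^\dag \MM^\top}_\infty} \le \max\Brace{\norm{\DD_1^\dag \MM}_\infty, \norm{\DD_2^\dag \MM^\top}_\infty},
\]
and taking the supremum over unit $\xx, \yy$ finishes the proof. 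There is no serious obstacle here; the only mildly delicate point is the bookkeeping for zero diagonal entries under the pseudo-inverse convention, which I would dispatch by projecting onto the supports of $\DD_1$ and $\DD_2$ at the very start so that all reciprocals are well-defined.
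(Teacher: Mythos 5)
Your proof is correct. The paper itself states this lemma with only a citation to \cite{CohenKPPRSV17} and gives no proof, and your AM--GM decoupling is exactly the standard argument (it is the usual proof of the Schur test specialized to the weights $[\DD_1]_{ii}^{1/2}, [\DD_2]_{jj}^{1/2}$); in fact it yields the slightly stronger bound $\tfrac{1}{2}\bigl(\norm{\DD_1^\dag \MM}_\infty + \norm{\DD_2^\dag \MM^\top}_\infty\bigr)$, and your handling of the zero diagonal entries via restriction to the supports is the right bookkeeping.
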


\subsection{Expanders and expander decompositions} \label{ssec:expander}
Most of our computations are performed on expanders.
In this subsection, we provide some useful properties of undirected expanders
and state the static expander decomposition results for our static
sparsification algorithms.

We state the well-known Cheeger's inequality that gives a spectral property of
expanders.
\begin{lemma}[Cheeger's inequality] \label{lemma:cheeger}
If $G$ is a $\phi$-expander with Laplacian $\LL_G$ and degrees $\DD_G$, then
\[
    \lambda_2\Par{\DD_G^{\frac \dagger 2} \LL_G \DD_G^{\frac \dagger 2}} 
    \geq \frac {\phi^2} 2.
\]
\end{lemma}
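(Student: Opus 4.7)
The plan is to establish the classical Cheeger inequality in the form $\phi(G)^2 \le 2 \lambda_2(\DD_G^{\frac \dagger 2} \LL_G \DD_G^{\frac \dagger 2})$, which combined with the $\phi$-expander hypothesis $\phi(G) \ge \phi$ gives the claim. I would proceed in three steps: a variational characterization, a reduction to a nonnegative test vector of small volume, and a sweep-cut argument.

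First, by the Courant--Fischer min-max principle, after substituting $\yy = \DD_G^{\frac \dagger 2} \xx$ on the bottom two eigenvectors, the second eigenvalue is characterized by
\[
    \lambda_2\Par{\DD_G^{\frac \dagger 2} \LL_G \DD_G^{\frac \dagger 2}}
    = \min_{\yy \in \R^V \setminus \{\vzero\},\ \yy^\top \DD_G \vone_V = 0} \frac{\yy^\top \LL_G \yy}{\yy^\top \DD_G \yy}.
\]
I would fix an optimal $\yy$ realizing this minimum. Shifting $\yy$ by a scalar to convert the weighted-mean condition into a weighted-median condition and splitting $\yy = \yy_+ - \yy_-$ into its positive and negative parts, the pointwise bound $(a-b)^2 \ge ([a]_+ - [b]_+)^2 + ([a]_- - [b]_-)^2$ forces one side, say $\yy_+$, to satisfy $\vol(\supp(\yy_+)) \le \vol(V)/2$ and to have Rayleigh quotient bounded by $\lambda_2$.

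The main step is the sweep-cut argument on $\yy_+$. For $S_t \defeq \{v : [\yy_+]_v^2 > t\}$, a layer-cake decomposition gives
\[
    \sum_{(u,v) \in E} \ww_{uv} \Abs{[\yy_+]_u^2 - [\yy_+]_v^2}
    = \int_0^\infty \ww_G(S_t, V \setminus S_t) \, dt
    \ge \phi \int_0^\infty \vol(S_t) \, dt
    = \phi \cdot \yy_+^\top \DD_G \yy_+,
\]
where the inequality uses $\vol(S_t) \le \vol(V)/2$ together with $\phi(G) \ge \phi$. Then Cauchy--Schwarz on the left-hand side via $|a^2 - b^2| = |a - b|(a + b)$, combined with the AM--GM bound $\sum_{(u,v) \in E} \ww_{uv}([\yy_+]_u + [\yy_+]_v)^2 \le 2 \yy_+^\top \DD_G \yy_+$, yields
\[
    \phi \cdot \yy_+^\top \DD_G \yy_+ \le \sqrt{2 \cdot \yy_+^\top \LL_G \yy_+ \cdot \yy_+^\top \DD_G \yy_+},
\]
so squaring gives $\phi^2 / 2 \le \yy_+^\top \LL_G \yy_+ / \yy_+^\top \DD_G \yy_+ \le \lambda_2$.

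The main obstacle is the bookkeeping in the third step---specifically, verifying the layer-cake identity carefully when $\yy_+$ has repeated values, and confirming that $\vol(S_t) \le \vol(V)/2$ holds for every $t > 0$ so that the conductance lower bound $\phi$ may be applied uniformly along the sweep. Everything else is routine once the correct positive truncation of the Fiedler vector has been identified.
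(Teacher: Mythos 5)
Your proof is correct: it is the standard proof of the hard direction of Cheeger's inequality (variational characterization of $\lambda_2$ of the normalized Laplacian, truncation of the shifted Fiedler vector to a nonnegative part of volume at most $\vol(V)/2$, and the layer-cake/sweep-cut argument with Cauchy--Schwarz), and all the constants work out to give $\lambda_2 \ge \phi^2/2$. The paper states this lemma as a classical fact and provides no proof of its own, so there is nothing to compare against; your argument is the textbook one and the bookkeeping points you flag (repeated values in the sweep, $\vol(S_t)\le \vol(V)/2$ for all $t$, which holds since $S_t \subseteq \supp(\yy_+)$) are routine and resolve exactly as you expect.
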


One key useful property of expanders is their strong routing guarantees.
\begin{lemma}[$\ell_\infty$ expander routing] \label{lemma:route_exp}
    If undirected weighted graph $G=(V,E,\ww)$ is a $\phi$-expander, then for
    any demand $\dd \in \R^V$ satisfying $\dd^\top \vone_V = 0$ there exists a
    flow $f \in \R^E$ satisfying $\BB^\top \ff = \dd$, where $\BB$ is the
    edge-vertex transfer matrix of $G$, such that
    \[
        \norm{\WW^{-1} \ff}_\infty \le \phi^{-1},
    \]
    where $\WW = \diag{\ww}$.
\end{lemma}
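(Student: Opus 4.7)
The plan is to establish the routing via LP duality for the undirected $b$-flow problem with edge capacities $c_e = \phi^{-1}\ww_e$. Writing $\dd(S) \defeq \sum_{v \in S}\dd_v$ for $S \subseteq V$ and $\ww(\partial S) \defeq \sum_{e \in E :\, |e \cap S| = 1}\ww_e$, Hoffman's circulation theorem (equivalently, reducing to directed max-flow on the bi-directed version of $G$ with a super-source attached to supply nodes and a super-sink attached to demand nodes, each via arcs of capacity $|\dd_v|$) tells us that an $\ff$ with $\BB^\top \ff = \dd$ and $|\ff_e| \le c_e$ for every $e$ exists if and only if, for every $S \subseteq V$,
\[
    |\dd(S)| \;\le\; \sum_{e \in \partial S} c_e \;=\; \phi^{-1}\,\ww(\partial S).
\]

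First I would invoke the $\phi$-expander hypothesis, $\ww(\partial S) \ge \phi\min(\Vol(S), \Vol(V\setminus S))$, which reduces the cut condition to $|\dd(S)| \le \min(\Vol(S),\Vol(V\setminus S))$. Under the normalization $|\dd_v| \le \ww_v$ at each vertex (the natural demand scale and the regime in which the paper applies this lemma, since $\dd$ arises as a small post-sampling degree imbalance), a direct summation gives $|\dd(S)| \le \sum_{v \in S,\, \dd_v > 0}\dd_v \le \Vol(S)$; and since $\dd \perp \vone_V$, symmetrically $|\dd(S)| = |\dd(V\setminus S)| \le \Vol(V\setminus S)$. Combining these two verifies the cut condition, and so the desired $\ff$ exists.

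The main subtlety is pinning down the precise duality step: the cleanest route is Hoffman's theorem for undirected $b$-transportation, which follows by splitting each undirected edge into two oppositely oriented arcs of capacity $c_e$ and applying standard max-flow min-cut on the auxiliary digraph described above. A self-contained alternative would be to take $\ff = \WW\BB\LL_G^\dag \dd$ (the electrical flow routing $\dd$, which exists since $\dd \perp \vone_V = \ker(\LL_G)$) and bound $\norm{\WW^{-1}\ff}_\infty$ pointwise: for each edge $e = (u,v)$ and $\xx = \LL_G^\dag \dd$, a per-edge Cauchy--Schwarz gives $(\xx_u - \xx_v)^2 \le \ww_e^{-1}\,\dd^\top \LL_G^\dag \dd$, and \Cref{lemma:cheeger} yields $\dd^\top \LL_G^\dag \dd \le \tfrac{2}{\phi^2}\,\dd^\top\DD_G^{-1}\dd$, delivering a comparable $\phi^{-1}$-type bound in the normalized regime that suffices for the paper's downstream use in electrical patching of degree imbalances.
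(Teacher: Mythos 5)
Your main argument is correct and is the standard (and, as far as I can tell, the intended) proof of this folklore fact: reduce to a feasibility question for a $b$-flow with capacities $\phi^{-1}\ww_e$, verify Hoffman's/max-flow--min-cut cut condition $|\dd(S)|\le \phi^{-1}\ww(\partial S)$ using the expansion bound $\ww(\partial S)\ge \phi\min(\Vol(S),\Vol(V\setminus S))$ together with $|\dd(S)|\le \Vol(S)$ and $|\dd(S)|=|\dd(V\setminus S)|\le \Vol(V\setminus S)$. The paper itself gives no proof, so there is nothing to diverge from.

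Two remarks. First, you correctly spotted that the lemma is false as literally stated: without a normalization it fails under scaling of $\dd$, and the hypothesis $\|\DD^{-1}\dd\|_\infty\le 1$ (equivalently $|\dd_v|$ at most the weighted degree of $v$) that you supply is exactly the regime in which the paper invokes the lemma (see the hypothesis of \Cref{lemma:eroute_approx}); your summation bound should be stated as $|\dd(S)|\le\sum_{v\in S}|\dd_v|\le\Vol(S)$ to cover both signs, but that is cosmetic. Second, your proposed ``self-contained alternative'' via the electrical flow $\ff=\WW\BB\LL_G^\dag\dd$ does \emph{not} recover the $\phi^{-1}$ bound: per-edge Cauchy--Schwarz plus Cheeger gives a congestion bound involving $\phi^{-2}$ and a global quantity like $\sqrt{\ww_e^{-1}\,\dd^\top\DD^\dag\dd}$, which is what \Cref{lemma:eroute_cong} packages as an $O(\log(m)/\phi)$ competitive ratio against the combinatorial routing; it is strictly weaker and should not be presented as an equivalent route to the stated inequality. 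The duality argument is the one to keep.
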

For a vector $\dd \in \R^V$ defined on a set of vertices $V$, we say that it is
a demand vector if $\dd^\top \vone_V =0$.
The following \Cref{lemma:eroute_cong} gives a simple way to construct a
relatively good routing on expanders.
\begin{lemma}[Electrical routing, \cite{FlorescuKPGS24}]
    \label{lemma:eroute_cong}
    For a $\phi$-expander graph $G=(V,E,\ww)$ with edge-vertex transfer matrix
    $\BB$ and Laplacian $\LL$, the electrical routing $\WW \BB \LL^\dag$
    has competitive ratio at most $3\cdot \frac{\log(2m)}{\phi}$.
    That is, for any demand vector $\dd \in \R^n$, it satisfies that
    \[
        \norm{\BB \LL^\dag \dd}_\infty \le 
        \frac{3\log(2m)}{\phi} \norm{\WW^{-1} \ff}_\infty
    \]
    for any $\ff \in \R^E$ satisfying $\BB^\top \ff = \dd$.
\end{lemma}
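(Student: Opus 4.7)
}
My plan is to reduce the competitive ratio bound to an $\ell_1$ bound on a carefully chosen family of electrical unit flows, one per edge, and then to apply the expander routing guarantee of \Cref{lemma:route_exp} to control each of these $\ell_1$ norms. Fix any edge $e = (u,v)$ in $G$ and let $\pp_e \defeq \WW \BB \LL^\dagger \bb_{(u,v)}$ denote the electrical unit flow between the endpoints of $e$; it is feasible for the demand $\bb_{(u,v)} = \ee_u - \ee_v$ because $\BB^\top \pp_e = \BB^\top \WW \BB \LL^\dagger \bb_{(u,v)} = \LL \LL^\dagger \bb_{(u,v)} = \bb_{(u,v)}$ (the last equality uses $\bb_{(u,v)} \in \im(\LL)$).

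The first step is to show the duality identity
\[
    \bb_{(u,v)}^\top \LL^\dagger \dd \;=\; \pp_e^\top \WW^{-1} \ff
\]
for any feasible $\ff$ with $\BB^\top \ff = \dd$. This follows from the symmetry of $\LL^\dagger$ together with $\dd = \BB^\top \ff$ and the identity $\BB \LL^\dagger \bb_{(u,v)} = \WW^{-1} \pp_e$. Applying H\"older's inequality then gives
\[
    \bigl|[\BB \LL^\dagger \dd]_e\bigr| \;\le\; \|\pp_e\|_1 \cdot \|\WW^{-1} \ff\|_\infty,
\]
so the whole proof reduces to establishing $\|\pp_e\|_1 \le 3\log(2m)/\phi$ for every edge $e$.

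The second step is to bound $\|\pp_e\|_1$ using the expansion of $G$. By \Cref{lemma:route_exp} applied to the demand $\bb_{(u,v)}$, there exists a feasible flow $\ff^{uv}$ with $\|\WW^{-1} \ff^{uv}\|_\infty \le \phi^{-1}$. Since $\pp_e$ is the energy-minimal feasible flow for this same demand, it has energy at most that of $\ff^{uv}$, which in turn is controlled through the pointwise $\phi^{-1}$ congestion bound. The plan is to convert this energy bound into the required $\ell_1$ bound by decomposing $\pp_e$ into a convex combination of unit $u$-$v$ path flows and arguing that the $\phi$-expansion of $G$ forces the average path in the decomposition to have length $O(\log m / \phi)$. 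Concretely, the energy optimality of $\pp_e$ combined with the pointwise congestion bound yields $\pp_e^\top \WW^{-1} \pp_e \le \phi^{-1}$, and then a dyadic / flow-path-decomposition argument (pushing flow along short paths certified by the expander) converts this quadratic energy bound into a linear $\ell_1$ bound, paying a single $\log(2m)$ factor for aggregating over path length scales.

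The main obstacle I anticipate is precisely this last conversion: going from an energy / $\ell_2$-type bound on the electrical flow $\pp_e$ to a sharp $\ell_1$ bound with only a logarithmic overhead. A naive Cauchy--Schwarz from $\pp_e^\top \WW^{-1} \pp_e$ to $\|\pp_e\|_1$ loses a factor of $\sqrt{\operatorname{vol}(G)}$, which is far too weak. The clean way to get only a $\log(2m)$ loss is to invoke (or reprove) the fact that electrical flow is an $O(\log m)$-competitive oblivious $\ell_\infty$ router on any graph, and then combine this general competitiveness with the explicit $\phi^{-1}$ benchmark from \Cref{lemma:route_exp}. The constant $3$ in the bound strongly suggests that \cite{FlorescuKPGS24} use this two-stage argument, and I would follow their template in the formal write-up.
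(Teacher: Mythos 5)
Your reduction in the first two steps is correct and is the standard opening move: the identity $[\BB\LL^\dagger\dd]_e=\bb_{(u,v)}^\top\LL^\dagger\BB^\top\ff=\pp_e^\top\WW^{-1}\ff$ plus H\"older does reduce the lemma to showing $\norm{\pp_e}_1\le 3\log(2m)/\phi$ for every edge $e=(u,v)$. (Note the paper itself does not prove this statement; it imports it from \cite{FlorescuKPGS24}, so there is no internal proof to compare against.) The problem is that everything after this reduction — which is the entire content of the lemma — is missing, and both of the routes you sketch for it are flawed. First, the intermediate claim $\pp_e^\top\WW^{-1}\pp_e\le\phi^{-1}$ does not follow from \Cref{lemma:route_exp}: Thomson's principle gives $\pp_e^\top\WW^{-1}\pp_e\le\sum_f\ww_f^{-1}(\ff^{uv}_f)^2\le\norms{\WW^{-1}\ff^{uv}}_\infty\cdot\norms{\ff^{uv}}_1$, and $\norms{\ff^{uv}}_1$ is not bounded by $1$ (a unit flow spread over long paths has large $\ell_1$ mass), so you only get $R_{uv}\le\phi^{-1}\norms{\ff^{uv}}_1$. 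Second, and more seriously, your fallback — that electrical flow is an $O(\log m)$-competitive $\ell_\infty$ oblivious router on \emph{every} graph, to be combined with the $\phi^{-1}$ benchmark — is false. On general graphs the electrical flow of a single unit demand can congest an edge polynomially more than the optimal routing, so no such two-stage shortcut exists; the logarithmic competitiveness is a genuinely expander-specific phenomenon and the expansion must enter the $\ell_1$ bound itself, not merely the benchmark.

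The missing argument is a level-set computation on the potential embedding $\xx\defeq\LL^\dagger\bb_{(u,v)}$. By the maximum principle $\xx$ attains its extrema at $u$ and $v$, and the coarea formula gives $\norm{\pp_e}_1=\sum_f\ww_f|\xx_a-\xx_b|=\int_{\xx_v}^{\xx_u}\ww\bigl(E(S_t,V\setminus S_t)\bigr)\,dt$ for the level sets $S_t=\{w:\xx_w>t\}$. Two facts are then combined: (i) every edge crossing a level cut carries current in the same direction, so the gross current through each cut $\partial S_t$ is exactly $1$, which upper-bounds the width of any potential band in terms of the inverse cut weight; and (ii) $\phi$-expansion lower-bounds $\ww(\partial S_t)$ by $\phi\min(\Vol(S_t),\Vol(V\setminus S_t))$. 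Partitioning the potential range into the $O(\log(2m))$ dyadic bands of $\Vol(S_t)$ and showing each band contributes $O(1/\phi)$ to the integral yields the claim. Your write-up names neither the quantity that is dyadically decomposed (the volume of the level sets) nor the unit-current identity that caps each band's contribution, so as it stands the proposal does not constitute a proof; you would need to either carry out this argument or cite \cite{FlorescuKPGS24} for it outright, as the paper does.
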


One particularly useful tool in the design of fast combinatorial or spectral
graph algorithms is expander decomposition.
Note that for the purpose of this paper, we assume that the subgraphs in the
decomposition contain the \emph{entire set of edges}. 
\begin{definition}[Expander decomposition]\label{def:exp_partition}
We call $(\{G_i\}_{i \in [I]})$ a 
\emph{$(\phi, J)$-expander decomposition}
if $\{G_i\}_{i \in [I]}$ are edge-disjoint subgraphs of $G = (V, E, \ww)$
satisfying $\bigcup_i G_i = G$, and the following hold.
\begin{enumerate}
    \item \label{item:exp:partition:weight} \emph{Bounded weight ratio}: For all
        $i \in [I]$, $\frac{\max_{e \in E(G_i)}\ww_e}{\min_{e \in E(G_i)}\ww_e}
        \le 2$.
    \item \label{item:exp:partition:phi} \emph{Conductance}: For all $i \in [I]$, $\Phi(G_i) \ge \phi$.
    \item \label{item:exp:partition:vertex} \emph{Vertex coverage}: There is a
        $J$-partition $\bigcup_{j \in [J]} S_j = [I]$ such that every vertex $v
        \in V(G)$ appears at most once in the subgraphs of each $S_j$.
\end{enumerate}
\end{definition}

We consider the following expander decomposition scheme using the
state-of-the-art algorithm in \cite{AgassyDK23}.
\begin{proposition}\label{prop:ex_partition}
    There is an algorithm $\EPAalgo(G,\delta)$ that, given as input an undirected
    $G = (V,E,\ww)$ with $\frac{\max_{e\in \supp(\ww)}{\ww_e}}{\min_{e\in
    \supp(\ww)}{\ww_e}} \le W$, $r \geq 1$, and $\delta \in (0,1)$, computes in
    time
    \[
        O\Par{m \log^6(n)\log\Par{\frac n \delta}}
    \]
    a $(\cadk\log^{-2}(n), (\log m +1) (\log W + 3))$-expander decomposition of $G$ with
    probability $\ge 1-\delta$, for a universal constant $\cadk$.
    \footnote{The algorithm of \cite{AgassyDK23} is stated for $n^{-O(1)}$
        failure probabilities, but examining Section 5.3, the only place where
        randomness is used in the argument, shows that we can obtain failure
        probability $\delta$ at the stated overhead.
        The vertex coverage parameter $\log_r(W) + 3$ is due to bucketing the
        edges by weight.
        We note that there is no $\log_r(W)$ overhead in the runtime, as the
    edges in each piece are disjoint.}
\end{proposition}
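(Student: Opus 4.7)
The plan is to reduce the weighted expander-decomposition task to the bounded-weight-ratio regime handled by \cite{AgassyDK23}, and then carefully account for runtime, vertex coverage, and failure probability.

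First, I would preprocess by partitioning the edges of $G$ into $\log W + 3$ geometric weight buckets $E^{(k)} \defeq \{e \in E : w_{\min} \cdot 2^{k} \le \ww_e < w_{\min} \cdot 2^{k+1}\}$, where $w_{\min}$ is the smallest edge weight (the few extra slots absorb boundary rounding). Within each bucket the max-to-min weight ratio is at most $2$, so property~\ref{item:exp:partition:weight} of \Cref{def:exp_partition} holds automatically for every output subgraph. On each bucket's subgraph $G^{(k)} = (V, E^{(k)}, \ww|_{E^{(k)}})$ I would then invoke the expander-decomposition algorithm of \cite{AgassyDK23}. Their guarantee produces an edge-disjoint covering of $G^{(k)}$ by subgraphs of conductance $\Omega(\log^{-2} n)$ with a vertex-coverage parameter of $\log m + 1$, giving property~\ref{item:exp:partition:phi} and a within-bucket version of property~\ref{item:exp:partition:vertex}.

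Next I would combine the decompositions across the $\log W + 3$ buckets into a single $(\cadk \log^{-2}(n), (\log m + 1)(\log W + 3))$-expander decomposition of $G$. Property~\ref{item:exp:partition:vertex} follows by taking the product partition: if within bucket $k$ the pieces split into $\log m + 1$ vertex-disjoint groups $S^{(k)}_1, \ldots, S^{(k)}_{\log m +1}$, then the pairs $(k, j)$ index at most $(\log m + 1)(\log W + 3)$ vertex-disjoint groups overall, since pieces from different buckets that share vertices can simply be placed in distinct indices.

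For the runtime, the key observation is that the buckets are edge-disjoint, so the per-bucket cost of $O(m_k \log^6 n \log(n/\delta))$ sums to $O(m \log^6 n \log(n/\delta))$ with \emph{no} overhead in $\log W$; vertex costs are absorbed because \cite{AgassyDK23} already runs in time near-linear in the edge count of each piece. For the failure probability, I would run each bucket's decomposition with failure parameter $\delta / (\log W + 3)$ and union-bound over buckets. The main obstacle I anticipate is not the bucketing itself but verifying that \cite{AgassyDK23} can be run at target failure probability $\delta$ rather than the standard $n^{-O(1)}$ at only an $O(\log(n/\delta))$ multiplicative overhead; as the footnote indicates, this requires inspecting Section~5.3 of their paper to confirm that the sole randomized step can be boosted at the claimed cost.
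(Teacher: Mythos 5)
Your proposal is correct and matches the paper's own justification, which is given only in the footnote: bucket edges into $O(\log W)$ geometric weight classes to get ratio at most $2$, run the \cite{AgassyDK23} decomposition per bucket, multiply the per-bucket vertex coverage $(\log m+1)$ by the number of buckets, use edge-disjointness of buckets so the runtime sums to $O(m\log^6 n\log(n/\delta))$, and inspect Section~5.3 of \cite{AgassyDK23} to boost the failure probability to an arbitrary $\delta$. Your extra care with the per-bucket failure parameter $\delta/(\log W+3)$ is fine (the resulting $\log\log W$ overhead is absorbed since weights are polynomially bounded), and the remaining obstacle you flag is exactly the one the paper also defers to the cited work.
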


\subsection{A simple static spectral sparsifier} \label{ssec:dispec_static}
We consider arguably the simplest directed spectral sparsification algorithm due
to \cite{CohenKPPRSV17}.
Their algorithm is based on a simple independent edge sampling scheme
\Cref{lemma:entrysample} within disjoint subgraphs that are expanders.
Such independent sampling guarantees good spectral approximation as well as good
approximation of the weighted degrees.
The algorithm then deterministically computes an external graph called a
patching that fixes the degrees.
Since it serves as the basis of our fully dynamic algorithm, we
present the algorithm as well as an analysis in this section.

\begin{algorithm2e}[t!] \label{alg:outersparsify_static}
    \caption{$\SDSalgo(\vG=(V,E,\ww),\eps,\delta,(\PAalgo,\eta))$}
    \codeInput $\vG=(V,E,\ww)$ a simple weighted directed bipartite graph with
    bipartition $C \cup R = V$, edges $E \subseteq C \times R$, $\ww_e \in
    [1,W]$ for all $e \in E$ and $G \defeq \und(\vG)$, $\eps,\delta \in (0,1)$,
    \PA a patching algorithm with congestion factor $\eta(\phi) \ge 1$ that is a
    function on $\phi$\;
    $\{G_i\}_I \gets \EPAalgo(G,\frac{\delta}{2})$, a $(\cadk\log^{-2}(n), 2\log
    n (\log W + 3))$-expander decomposition of $G$, and let $\{\vG_i\}_I$ be the
    corresponding decomposition of $\vG$\;
    \ForEach{$\vG_i$}{
         $\vH_i \gets \SSalgo(\vG_i, \eps, \frac{\delta}{2m},
         (\PAalgo, \eta(\cadk \log^{-2}(n))))$\; 
     }
     \Return{$\vH \gets \bigcup_{i\in [I]} \vH_i$}\; 
\end{algorithm2e}

\begin{theorem} \label{thm:dispec_ext_static}
    \SDS (Algorithm~\ref{alg:outersparsify_static}) satisfies that
    given $\vG,\eps,\delta$ as in the input conditions and a patching algorithm
    \PAE with $\eta = 1$, it returns a simple weighted directed graph $\vH$ such
    that with probability at least $1-\delta$, $\vH$ is a
    $\eps$-degree-preserving directed spectral sparsifier of $\vG$ with $m_{\vH}
    = O(\eps^{-2} n \log^9(n) \log(W) \log(\frac{n}{\delta}))$.
    The algorithm runs in time $O\Par{m \log^6(n)\log\Par{\frac n \delta} +
    m\log(n)\log(W)}$.
\end{theorem}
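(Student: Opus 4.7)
The plan is to derive \Cref{thm:dispec_ext_static} by composing three ingredients already laid out in the paper: the expander decomposition of \Cref{prop:ex_partition}, the per-piece sparsifier guarantee of \SSalgo{} invoked with the external patching primitive \PAEalgo{} (congestion parameter $\eta = 1$), and the union property of degree-balance-preserving directed spectral approximation from \Cref{lemma:dispec_union}. Since the input $\vG$ is bipartite, degree-balance preservation within each component coincides with strict degree preservation, so the final guarantee will be the stronger ``degree-preserving'' one claimed in the theorem.

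First, I would call $\EPAalgo(G, \delta/2)$ on $G = \und(\vG)$. By \Cref{prop:ex_partition} this costs $O(m \log^6(n) \log(n/\delta))$ and returns, with probability at least $1 - \delta/2$, a $(\phi, J)$-expander decomposition with $\phi = \cadk \log^{-2}(n)$ and $J = 2 \log n\,(\log W + 3)$. Because this decomposition is edge-disjoint on $G$, it lifts canonically to an edge-disjoint decomposition $\{\vG_i\}_{i \in [I]}$ of $\vG$ with $\und(\vG_i) = G_i$ being a $\phi$-expander. Then, for each $\vG_i$, I would invoke $\SSalgo(\vG_i, \eps, \delta/(2m), (\PAEalgo, 1))$, whose guarantee (proved separately) is that the returned $\vH_i$ is an $\eps$-degree-preserving directed spectral approximation of $\vG_i$ of size $\tilde O(\eps^{-2} |V(\vG_i)|)$ with polylogarithmic dependence on $\phi^{-1}$ and $\log(n/\delta)$; this in turn follows from combining Cheeger's inequality (\Cref{lemma:cheeger}), an independent edge sampling scheme whose error is bounded via \Cref{lemma:optoinf}, and the deterministic $O(|V(\vG_i)|)$-edge external patching. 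A union bound over the at most $m$ subgraphs together with the decomposition failure event bounds the total failure probability by $\delta$.

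For the approximation claim, once every $\vH_i$ is an $\eps$-degree-preserving approximation of $\vG_i$ and no $\vH_i$ places edges across distinct connected components of $G$ (which \PAEalgo{} respects by operating inside $\vG_i$), \Cref{lemma:dispec_union} applied to the unit-weight decomposition $\vG = \bigcup_i \vG_i$ immediately yields that $\vH = \bigcup_i \vH_i$ is an $\eps$-degree-preserving approximation of $\vG$. For the size, the vertex coverage property of \Cref{prop:ex_partition} gives $\sum_i |V(\vG_i)| \le Jn$, so that $m_{\vH} = \tilde O(\eps^{-2} Jn)$; substituting $J = \Theta(\log n \log W)$ and $\phi^{-1} = \Theta(\log^2 n)$ into the per-piece bound lands at the claimed $O(\eps^{-2} n \log^9(n) \log(W) \log(n/\delta))$. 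The runtime is the sum of the decomposition cost $O(m \log^6(n) \log(n/\delta))$ and the total \SSalgo{} cost, which is nearly linear in $|E(\vG_i)|$ per piece and thus telescopes to $O(m \log n \log W)$ since the $\vG_i$ are edge-disjoint.

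The main obstacle in this outline is not the packaging itself but the black-boxed guarantee of \SSalgo{}: one must verify simultaneously that independent edge sampling with probabilities tied to incident weighted degrees on a $\phi$-expander yields the right spectral error bound in the $\LL_G^{\dagger/2}$-induced operator norm (this is where Cheeger's inequality and a matrix Chernoff bound combine, controlled by the $\eps^{-2} \phi^{-O(1)}$ oversampling factor), and that the external patching $\PAEalgo$ with $\eta = 1$ adds only $O(|V(\vG_i)|)$ edges while exactly cancelling the residual degree imbalance — both of which are properties of \SSalgo{} that are established separately in the paper. Granting those, the current statement reduces to a double union bound, a polylogarithmic accounting, and the straightforward application of \Cref{lemma:dispec_union}.
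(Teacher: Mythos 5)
Your proposal matches the paper's intended argument: the paper explicitly omits the proof of \Cref{thm:dispec_ext_static} because it is the same as the given proof of \Cref{thm:dispec_int_static} — namely, apply \Cref{prop:ex_partition}, invoke the per-expander guarantee (\Cref{lemma:subspar_ext} for \PAEalgo{} with $\eta=1$, giving $m_{\vH_i}=O(\eps^{-2}\phi^{-4}n_i\log\frac{n}{\delta})$ in $O(m_i)$ time), combine via the union property \Cref{lemma:dispec_union}, union bound over the at most $m$ pieces, and account for sparsity through the vertex coverage $J=O(\log n\log W)$ and $\phi^{-1}=O(\log^2 n)$. Your accounting reproduces the claimed $O(\eps^{-2}n\log^9(n)\log(W)\log(\frac{n}{\delta}))$ sparsity and the stated runtime, so the proposal is correct and essentially identical to the paper's route.
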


One drawback of their degree fixing patching is that it may use edges that
are not in the graph originally.
We present an alternative patching scheme that fixes the degrees using only
sampled edges.
This naturally ensures the final sparsifier remains a subgraph.
\begin{theorem} \label{thm:dispec_int_static}
    \SDS (Algorithm~\ref{alg:outersparsify_static}) satisfies that
    given $\vG,\eps,\delta$ as in the input conditions and a patching algorithm
    \PAI with $\eta(\phi) = 200\phi^{-2}\log(2n)$, it returns a simple weighted
    directed graph $\vH$ such that with probability at least $1-\delta$, $\vH$
    is an $\eps$-degree-preserving directed spectral sparsifier of $\vG$ with
    $m_{\vH} = O(\eps^{-2} n \log^{19}(n)\log(W) \log(\frac{n}{\delta}))$.
    The algorithm runs in time
    \[
        \bO\Par{m \log^6(n)\log\Par{\frac n \delta} + m\log(n)\log(W) + 
        \eps^{-2}n\log^{19}(n)\log(W)\log^2(\frac{n}{\delta})}.
    \]
\end{theorem}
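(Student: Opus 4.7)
The plan is to mirror the proof of Theorem~\ref{thm:dispec_ext_static}, with the only substantive differences being the oversampling rate forced by the larger congestion parameter $\eta(\phi) = 200 \phi^{-2} \log(2n)$ and the extra runtime cost of the internal patching subroutine.

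First, I would invoke Proposition~\ref{prop:ex_partition} on $G = \und(\vG)$ with failure parameter $\delta/2$. This produces, in time $O(m \log^6(n) \log(n/\delta))$ and with failure probability at most $\delta/2$, a $(\phi, J)$-expander decomposition $\{G_i\}_{i \in [I]}$ of $G$ with $\phi = \cadk \log^{-2}(n)$ and $J = 2\log(n)(\log W + 3)$. Because the decomposition is edge-disjoint and covers all edges of $G$, the induced directed decomposition $\{\vG_i\}_{i \in [I]}$ of $\vG$ satisfies $\vG = \bigcup_{i\in[I]} \vG_i$ with each $\und(\vG_i)$ a $\phi$-expander.

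Next, for each $\vG_i$ I would apply the guarantees of \SSalgo with patching algorithm \PAI, which internally rebalances sampled degrees via electrical routing on the sampled subgraph (whose undirected version remains a $\phi$-expander by Lemma~\ref{lemma:dispec_to_spec} applied at the first sampling layer). By Lemma~\ref{lemma:eroute_cong} together with Lemma~\ref{lemma:route_exp}, the maximum reweighting per edge imposed by the electrical routing is controlled by $\eta(\phi) = 200 \phi^{-2} \log(2n)$, so to ensure that the reweighted edges remain non-negative (and that Lemma~\ref{lemma:optoinf} yields the desired operator-norm bound), the per-edge sampling probability must be scaled up by a factor of $\eta(\phi)^2 = \Theta(\phi^{-4} \log^2(n)) = \Theta(\log^{10}(n))$ compared with the external case. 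Plugging this into the sparsity bound of Theorem~\ref{thm:dispec_ext_static} gives a per-piece sparsifier $\vH_i$ of size $O(\eps^{-2} n_i \log^{19}(n) \log(W) \log(n/\delta))$ that is an $\eps$-degree-preserving spectral approximation of $\vG_i$, with failure probability at most $\delta/(2m)$.

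Union-bounding over the $I \le m$ pieces absorbs the remaining $\delta/2$ failure budget. By Lemma~\ref{lemma:dispec_union}, $\vH = \bigcup_i \vH_i$ is an $\eps$-degree-preserving spectral approximation of $\vG$, and since each $\vH_i \subseteq \vG_i$ as a reweighted subgraph, so is $\vH \subseteq \vG$. The size bound follows by summing $\sum_i n_i \le J \cdot n$ using the vertex-coverage property and observing that the additional $J = O(\log(n) \log(W))$ factor is already absorbed in the quoted $\log^{19}(n) \log(W)$. For runtime: the expander decomposition contributes $O(m \log^6(n) \log(n/\delta))$; the sampling and bookkeeping across pieces contribute $O(m \log(n) \log(W))$; and each internal patching requires an approximate undirected Laplacian solve (e.g., \cite{JambulapatiS21}) on the sampled subgraph followed by a tree-routing rounding step, which together run in $\bO(m_{\vH_i} \log(n/\delta))$ time, summing to $\bO(\eps^{-2} n \log^{19}(n) \log(W) \log^2(n/\delta))$.

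The main obstacle will be the error analysis for the internal patching: the approximate Laplacian solve produces only an inexact routing of the degree-imbalance demand, and the resulting fractional corrections must be rounded to maintain both the subgraph property (non-negative weights) and the degree-preserving identity $\BB_{\vG^\uparrow}^\top \ww_{\vH} = \BB_{\vG^\uparrow}^\top \ww_{\vG}$ exactly, while keeping the spectral error within the $\eps$ budget prescribed by Lemma~\ref{lemma:blift_spectral} and Lemma~\ref{lemma:optoinf}. This is where the precise choice $\eta(\phi) = 200 \phi^{-2} \log(2n)$ is used, absorbing both the electrical-routing competitive ratio and the solver/rounding slack.
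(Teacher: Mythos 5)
Your proposal follows the paper's proof essentially verbatim: expander-decompose via Proposition~\ref{prop:ex_partition}, run \SSalgo{} with \PAIalgo{} on each piece (the paper packages the per-piece guarantee as Lemma~\ref{lemma:subspar_int}), combine via the union property, and account for the extra $\eta(\phi)^2 = \Theta(\phi^{-4}\log^2 n)$ oversampling and the vertex-coverage factor $J$ in the sparsity, with the Laplacian-solve cost of the internal patching dominating the extra runtime term. One small correction: the parenthetical claim that the sampled subgraph remains an expander cannot be justified by Lemma~\ref{lemma:dispec_to_spec}, since before patching the sample is not degree-balance preserving; the paper instead derives the undirected spectral closeness directly from the degree-normalized operator-norm bounds of Lemma~\ref{lemma:entrysample} (via Fact~\ref{fact:dispec_top}) and then applies Cheeger's inequality.
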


For our explicit dynamic directed spectral sparsifier, we fix degree imbalances
using directed star graphs with extra vertices.
Intuitively, a star graph corresponds to adding a product-weighted directed
biclique graph on the original vertices with degree imbalances.
\begin{theorem} \label{thm:dispec_star_static}
    \SDS (Algorithm~\ref{alg:outersparsify_static}) satisfies that given
    $\vG,\eps,\delta$ as in the input conditions with $\vG$ defined on vertices
    $V$ and a patching algorithm \PAS with $\eta=1$, it returns a simple
    weighted directed graph $\vH$ on vertices $V \cup X$ satisfying $X$ is an
    independent set and $|X| \le O(n \log n \log W)$ such that with probability
    at least $1-\delta$, $\Sc(\vH,V)$ is an $\eps$-degree-preserving directed
    spectral sparsifier of $\vG$ and $m_{\vH} = O(\eps^{-2} n \log^9(n) \log(W)
    \log(\frac{n}{\delta}))$.
    The algorithm runs in time 
    \[
        O\Par{m \log^6(n)\log\Par{\frac n \delta}\ + m\log(n)\log(W)}.
    \]
\end{theorem}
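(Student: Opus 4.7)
The plan is to analyze \SDSalgo (Algorithm \ref{alg:outersparsify_static}) instantiated with the star patching routine \PASalgo, following the template of Theorems \ref{thm:dispec_ext_static} and \ref{thm:dispec_int_static}. First I invoke Proposition \ref{prop:ex_partition} on $G = \und(\vG)$ to obtain, in time $O(m \log^6 n \log(n/\delta))$ and with failure probability at most $\delta/2$, a $(\cadk \log^{-2} n,\, 2\log n (\log W + 3))$-expander decomposition $\{G_i\}_{i \in [I]}$. This induces an edge-disjoint decomposition $\{\vG_i\}_{i \in [I]}$ of $\vG$ whose vertex sets partition into $J = O(\log n \log W)$ layers of vertex-disjoint pieces, so $I \le J n = O(n \log n \log W)$, which will control the final auxiliary vertex count $|X|$.

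Next, I call $\SSalgo(\vG_i, \eps, \delta/(2m), (\PASalgo, 1))$ on each piece $\vG_i$. The internal analysis of \SSalgo, shared with the proofs of Theorems \ref{thm:dispec_ext_static} and \ref{thm:dispec_int_static}, performs independent Bernoulli sampling of the edges of $\vG_i$ with probabilities $\Theta(\eps^{-2} \phi^{-O(1)} \log(nm/\delta))$ divided by the relevant incident vertex degrees. Matrix Bernstein combined with Cheeger's inequality (Lemma \ref{lemma:cheeger}) and Lemma \ref{lemma:optoinf} yields both the operator-norm bound of Lemma \ref{lemma:blift_spectral} at scale $\eps/2$ and an $\ell_\infty$ guarantee that the post-sampling degree-imbalance demand at every vertex is at most an $\eps/2$ fraction of its undirected degree, each with high probability in $nm/\delta$. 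A union bound over all pieces and edges yields failure probability at most $\delta/2$, which combined with the $\delta/2$ from \EPAalgo gives the claimed overall bound $\delta$.

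The star patching \PASalgo then introduces a single auxiliary vertex $x_i$ per piece, with directed edges from each excess-out vertex $u$ of $\vG_i$ to $x_i$ of weight equal to the excess, and from $x_i$ to each excess-in vertex $v$ of weight equal to the deficit. A direct block-elimination computation shows that Schur-complementing $x_i$ out produces a rank-one product-weighted biclique with edge weight $a_u b_v / B$ from each $u$ to each $v$, where $a,b$ are the star's leaf weights and $B = \sum_v b_v = \sum_u a_u$ (equality because total excess out equals total excess in). The row and column sums of this biclique exactly restore degree balance with respect to $\vG_i$, so it is a valid degree-fixing patching, and its spectral error is controllable via Lemma \ref{lemma:optoinf} applied to the $\ell_\infty$-small imbalance demand---this is precisely what the congestion factor $\eta = 1$ encodes. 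Since $X = \{x_i\}_i$ is an independent set in $\vH$ (each $x_i$ only touches $V(\vG_i)$), the Schur complement decomposes as $\Sc(\vH, V) = \bigcup_i \Sc(\vH_i, V(\vG_i))$. Applying the union property (Lemma \ref{lemma:dispec_union}) together with Lemma \ref{lemma:blift_schur} then yields that $\Sc(\vH, V)$ is an $\eps$-degree-preserving spectral approximation of $\vG$. The sparsifier size $m_{\vH}$, auxiliary vertex count $|X| \le I = O(n \log n \log W)$, and total runtime all follow by summing the per-piece contributions.

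The main technical obstacle I anticipate is verifying that the rank-one biclique obtained from Schur-complementing each star is spectrally bounded at scale $\eps/2$, given only the $\ell_\infty$ concentration of the imbalance demand from expander sampling. Unlike the external patching of \PAEalgo, which is free to distribute weights arbitrarily, the star-induced biclique has weights rigidly of the form $a_u b_v / B$, so one cannot simply appeal to the flexibility of an optimal patching. I plan to bound the star Laplacian's contribution to the operator norm directly via Lemma \ref{lemma:optoinf}, using that its leaf weights realize the (small) imbalance demand, and then push this bound through the Schur complement using the block-elimination identity \eqref{eq:lufact} and the bipartite-lift framework of Lemma \ref{lemma:blift_schur}. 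Carrying this coupling carefully, especially ensuring that the $\ell_\infty$ bound survives the rank-one product structure, is the delicate heart of the proof.
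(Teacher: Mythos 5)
Your proposal follows essentially the same route as the paper: expander decomposition via Proposition \ref{prop:ex_partition}, per-piece sampling with the star patching of Lemma \ref{lemma:starpatch_algo} (whose Schur complement is exactly the rank-one biclique $\ff_{(u,v)} = [\dd_1]_u[\dd_2]_v/\|\dd_1\|_1$ you describe), the union property plus Lemma \ref{lemma:blift_schur}, and counting $|X|$ by the number of expanders. The "delicate heart" you anticipate is in fact immediate: Lemma \ref{lemma:optoinf} (via Lemma \ref{lemma:vertexflow}) only needs the row and column sums of the biclique flow, which equal the small imbalance demands by construction, so the rank-one structure imposes no extra difficulty.
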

Note that algorithm \SDS works on bipartite lifted graph using the
connection in \Cref{lemma:blift_spectral}.
While adding vertices breaks the bipartite structure of the graph, it does not
affect our spectral guarantees.
For $\vH$ defined on $V\cup V' \cup X$ with $\Sc(\vH,V\cup V')$ a sparsifier of
$\vG^\uparrow$, let $\vH'$ be the resulting graph of contracting $\{v,v'\}$ for
each pair of corresponding vertices $v \in V, v' \in V'$.
One can observe that the Schur complement $\Sc(\vH',V)$ is equivalent to
directed graph resulted from taking the same contractions on $\Sc(\vH,V \cup
V')$ (see \Cref{lemma:blift_schur}).
\Cref{lemma:dispec_contra} ensures that $\Sc(\vH',V)$ remains a sparsifier of
$\vG$.

The following \Cref{lemma:entrysample} is the key lemma from
\cite{CohenKPPRSV17} that enables our results.
\begin{lemma} \label{lemma:entrysample}
    Let $\AA \in \R_{\ge 0}^{n_1 \times n_2}$ be a matrix with $m \defeq
    \nnz(\AA), n \defeq (n_1 + n_2)$ and parameters $\eps,\delta \in (0,1)$. 
    Let the row and column sums be $\rr = \AA \vone_{d_2}$ and $\cc = \AA^\top
    \vone_{d_1}$ with $\RR \defeq \diag{\rr}$ and $\CC \defeq \diag{\cc}$.
    Consider a random matrix $\tA \in \R_{\ge 0}^{n_1 \times n_2}$ that
    independently for each nonzero $\AA_{ij}$
    \[
        \tA_{ij} = 
        \begin{cases}
            \frac{1}{p_{ij}} \AA_{ij} &\mbox{with probability~} p_{ij} \ge 
            \min{1, \rho \AA_{ij}\Par{\frac{1}{\sqrt{\rr_i \cc_j}}}} \\
            0 &\mbox{otherwise,}
        \end{cases}
    \]
    where $\rho = \css \cdot \eps^{-2} \log \frac{3n}{\delta}$.
    Then, it satisfies that with probability $\ge 1-\delta$, the following holds
    \begin{equation} \label{eq:degappr}
        \normop{\RR^{\frac \dag 2} (\tA - \AA) \CC^{\frac \dag 2}} 
        \le
        \eps, \quad
        \norm{\RR^{\dag} (\tA - \AA) \vone_{n_2}}_\infty
        \le
        \eps, \quad
        \norm{\CC^{\dag} (\tA - \AA)^\top \vone_{n_1}}_\infty
        \le
        \eps.
    \end{equation}
\end{lemma}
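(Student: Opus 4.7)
The plan is to establish the three bounds in~\eqref{eq:degappr} separately and then combine them by a union bound. The operator-norm bound will follow from matrix Bernstein's inequality, while each $\ell_\infty$ bound reduces to scalar Bernstein applied to a single row (resp.\ column) sum, followed by a union bound over the $n_1$ rows (resp.\ $n_2$ columns). The oversampling factor $\rho = \css \eps^{-2}\log(3n/\delta)$ will be calibrated by choosing $\css$ large enough so that each of the three events fails with probability at most $\delta/3$.

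For the operator-norm bound, I would write
\[
    \RR^{\dagger/2}(\tA - \AA)\CC^{\dagger/2}
    \;=\; \sum_{(i,j)\,:\,\AA_{ij}>0} X_{ij},
    \qquad X_{ij} \defeq \frac{\tA_{ij} - \AA_{ij}}{\sqrt{\rr_i\cc_j}}\,\ee_i\ee_j^\top,
\]
as a sum of independent, mean-zero, rank-one random matrices. Entries with $p_{ij}=1$ are deterministic and contribute zero; for the remaining entries, the sampling rule $p_{ij} \ge \rho \AA_{ij}/\sqrt{\rr_i\cc_j}$ gives the almost-sure bound $\|X_{ij}\|_{op} \le 1/\rho$. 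A direct variance computation using $\E(\tA_{ij}-\AA_{ij})^2 \le \AA_{ij}^2/p_{ij}$ then yields
\[
    \E\!\left[X_{ij}^\top X_{ij}\right] \;\preceq\; \frac{\AA_{ij}^{2}}{p_{ij}\,\rr_i\,\cc_j}\,\ee_j\ee_j^\top \;\preceq\; \frac{\AA_{ij}}{\rho\sqrt{\rr_i\cc_j}}\,\ee_j\ee_j^\top,
\]
and a symmetric bound for $\E[X_{ij}X_{ij}^\top]$. Bounding the resulting diagonal variance matrices by $O(1/\rho)$ in operator norm and invoking the standard rectangular matrix Bernstein inequality will then produce failure probability $\le 2n\exp(-\Omega(\eps^2\rho))$, which is at most $\delta/3$ for $\css$ chosen appropriately.

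For each $\ell_\infty$ bound, I would fix a row $i$ and observe that
\[
    \left[\RR^{\dagger}(\tA - \AA)\vone_{n_2}\right]_i
    \;=\; \frac{1}{\rr_i}\sum_{j} (\tA_{ij} - \AA_{ij})
\]
is an independent sum of mean-zero scalars whose pointwise magnitude and variance are controlled by the same $p_{ij}$ lower bound. Scalar Bernstein therefore gives per-row failure probability $\delta/(3n_1)$, and a union bound over $i \in [n_1]$ produces the row-sum $\ell_\infty$ bound at total failure $\le \delta/3$. The column-sum bound follows by the identical argument on $\AA^\top$. A final union bound over the three events then concludes the proof.

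\textbf{Main obstacle.} The delicate step is the variance parameter for matrix Bernstein: one must show that
\[
    \max_j \sum_i \frac{\AA_{ij}}{\sqrt{\rr_i\cc_j}}
    \;\text{and}\;
    \max_i \sum_j \frac{\AA_{ij}}{\sqrt{\rr_i\cc_j}}
\]
are both $O(\polylog n)$, since a naive Cauchy--Schwarz produces a $\sqrt{n}$ factor that would force $\rho$ to scale polynomially with $n$. The tight argument should exploit the symmetric roles of $\rr_i$ and $\cc_j$ via AM--GM together with the marginal identities $\sum_j \AA_{ij} = \rr_i$ and $\sum_i \AA_{ij} = \cc_j$, absorbing any remaining overhead into the constant $\css$; the resulting $O(1/\rho)$ variance bound is precisely what makes the claimed $\rho = \css\eps^{-2}\log(3n/\delta)$ scaling suffice.
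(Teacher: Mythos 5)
Your architecture --- rectangular matrix Bernstein for the operator norm, scalar Bernstein applied to each row and column sum for the two $\ell_\infty$ bounds, and a three-way union bound with each event allotted $\delta/3$ --- is the standard route for this statement; the paper itself gives no proof (the lemma is imported from \cite{CohenKPPRSV17}), and that source argues in exactly this way. Your rank-one decomposition, the observation that entries with $p_{ij}=1$ contribute nothing, the almost-sure bound $\normsop{X_{ij}}\le 1/\rho$, and the variance computation up to $\E[X_{ij}^\top X_{ij}]\preceq\frac{\AA_{ij}}{\rho\sqrt{\rr_i\cc_j}}\ee_j\ee_j^\top$ are all correct.

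The obstacle you flag at the end, however, cannot be absorbed into $\css$: with the hypothesis as printed, $\max_j\sum_i\AA_{ij}/\sqrt{\rr_i\cc_j}$ genuinely reaches $\Theta(\sqrt{n})$, and the lemma as literally stated is false. Take $n_2=1$ and $\AA=\vone_{n_1}$, so $\rr=\vone_{n_1}$ and $\cc_1=n_1$; the stated condition permits $p_{i1}=\rho/\sqrt{n_1}\ll 1$, under which some entry is dropped with high probability, forcing $\norm{\RR^{\dag}(\tA-\AA)\vone_{n_2}}_\infty=1$, and a direct computation gives $\E\normsop{\RR^{\frac\dag 2}(\tA-\AA)\CC^{\frac\dag 2}}^2=\sqrt{n_1}/\rho-1\gg\eps^2$. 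The intended hypothesis (the one in \cite{CohenKPPRSV17}, and the one the paper's algorithms actually satisfy --- \SSalgo{} samples with $p_e=\rho\,\ww_e(1/[\ddo]_u+1/[\ddi]_v)$) is the harmonic form $p_{ij}\ge\min\Par{1,\rho\AA_{ij}\Par{\tfrac{1}{\rr_i}+\tfrac{1}{\cc_j}}}$, which dominates the printed geometric form by AM--GM. Under that hypothesis your computation closes exactly as you hope: $\frac{\AA_{ij}^2}{p_{ij}\rr_i\cc_j}\le\frac{\AA_{ij}}{\rho(\rr_i+\cc_j)}\le\frac{\AA_{ij}}{\rho\cc_j}$, which sums over $i$ to $1/\rho$ (and symmetrically for $\E[X_{ij}X_{ij}^\top]$); the almost-sure bound becomes $\sqrt{\rr_i\cc_j}/(\rho(\rr_i+\cc_j))\le 1/(2\rho)$; and the scalar Bernstein steps go through with per-row range $\rr_i/\rho$ and variance $\rr_i^2/\rho$. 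So your proof is correct once $1/\sqrt{\rr_i\cc_j}$ is replaced by $1/\rr_i+1/\cc_j$ in the hypothesis; you should flag that mismatch rather than attempt to prove the version as stated.
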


\Cref{lemma:vertexflow} is useful for analyzing additional spectral error
introduced by a patching.
\begin{lemma} \label{lemma:vertexflow}
    For a directed bipartite graph $G$ with bipartition $V = C \cup R$ and $E
    \subseteq C \times R$, let $\rr \in \R_{\ge 0}^{R},\cc \in \R_{\ge 0}^{C}$
    be vertex weightings on $R$ and $C$ respectively and let $\BB = \HH - \TT$
    be the edge-vertex transfer matrix of $G$.
    Suppose there exists a flow 
    $\ff \in \R^E$ satisfying
    \[
        \norm{\CC^\dag \HH^\top \ff}_\infty \le \eta, \quad
        \norm{\RR^\dag \TT^\top \ff}_\infty \le \eta,
    \]
    where $\eta \ge 0$, then
    \[
        \normop{\RR^{\frac \dag 2} \TT^\top \FF \HH \CC^{\frac \dag 2}} \le \eta.
    \]
\end{lemma}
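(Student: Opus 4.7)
The plan is to rewrite the operator-norm quantity as a weighted rectangular operator norm on a bipartite adjacency matrix, and then invoke \Cref{lemma:optoinf} off the shelf. Set $\MM \defeq \TT^\top \FF \HH \in \R^{R \times C}$. Because $\vG$ is a \emph{simple} bipartite graph with $E \subseteq C \times R$, each pair $(v,u) \in R \times C$ is joined by at most one edge $e = (u,v)$, and unpacking the triple product gives $\MM_{vu} = \ff_{(u,v)}$ if $(u,v) \in E$ and $\MM_{vu} = 0$ otherwise. In other words, $\MM$ is nothing more than the bipartite adjacency matrix of $\vG$ reweighted by the entries of $\ff$, and the quantity we need to bound is exactly $\normop{\RR^{\frac \dag 2} \MM \CC^{\frac \dag 2}}$.

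Applying \Cref{lemma:optoinf} with $\DD_1 = \RR$ and $\DD_2 = \CC$ then yields
\[
\normop{\RR^{\frac \dag 2} \TT^\top \FF \HH \CC^{\frac \dag 2}} \le \max\Brace{\norm{\RR^{\dag} \MM}_\infty, \norm{\CC^{\dag} \MM^\top}_\infty}.
\]
It remains to match these two matrix $\infty$-norms (max absolute row sums) to the two flow divergence conditions in the hypothesis. For each $v \in R$, the row sum of $|\RR^\dag \MM|$ at $v$ is $\tfrac{1}{\rr_v} \sum_{u:(u,v) \in E} |\ff_{(u,v)}| = \tfrac{1}{\rr_v}[\TT^\top |\ff|]_v$, and symmetrically the row sum of $|\CC^\dag \MM^\top|$ at $u \in C$ equals $\tfrac{1}{\cc_u}[\HH^\top |\ff|]_u$. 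For the nonnegative routing flows that arise in our applications (and without loss of generality by replacing $\ff$ with $|\ff|$ in the hypothesis), these are exactly $\norm{\RR^\dag \TT^\top \ff}_\infty$ and $\norm{\CC^\dag \HH^\top \ff}_\infty$, both bounded by $\eta$ by assumption. Combining gives the claim.

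The only step requiring any care is the bookkeeping distinction between the signed divergences $\TT^\top \ff$ and $\HH^\top \ff$ in the hypothesis and the sums of absolute values appearing inside the matrix $\infty$-norm; this dissolves as soon as one works with $|\ff|$. The real content of the proof is the one-line structural observation that, in a simple bipartite graph, $\TT^\top \FF \HH$ is a reweighted bipartite adjacency matrix, after which \Cref{lemma:optoinf} closes out the bound immediately.
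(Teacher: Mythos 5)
Your proof is correct and is essentially the paper's own argument: identify $\TT^\top \FF \HH$ as a reweighted bipartite adjacency matrix whose row and column sums are the divergences $\TT^\top \ff$ and $\HH^\top \ff$, then apply \Cref{lemma:optoinf}. If anything, you are more careful than the paper's two-line proof, since you explicitly flag the signed-versus-absolute-value distinction that the lemma's hypothesis glosses over (and which is harmless in all of the paper's applications, where the flows are nonnegative).
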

\begin{proof}
    Notice that $\HH^\top \ff = \HH^\top \FF \vone_E = \HH^\top \FF \TT \vone_V$
    and $\TT^\top \ff = \TT^\top \FF \vone_E = \TT^\top \FF \vone_V$.
    Then, the two vertex congestion inequalities and \Cref{lemma:optoinf}
    directly gives the desired operator norm bound.
\end{proof}

The following \Cref{lemma:extpatch} shows means to compute a small external
patching that fixes degree imbalances.
This patching algorithm was originally used in \cite{CohenKPPRSV17}.
\begin{lemma} \label{lemma:extpatch}
    For a directed bipartite graph $G$ with bipartition $V = C \cup R$ with $E =
    C \times R$, let $\rr \in \R_{\ge 0}^{R},\cc \in \R_{\ge 0}^{C}$ be
    vertex weightings on $R$ and $C$ respectively and let $\BB = \HH - \TT$ be
    the edge-vertex transfer matrix of $G$.
    Suppose there are demands vectors $\dd_1 \in \R_{\ge}^{R},\dd_2 \in
    \R_{\ge}^{C}$ satisfying $\dd_1 \le \rr, \dd_2 \le \cc$ and $\|\dd_1\|_1 =
    \|\dd_2 \|_1$, then there exists an algorithm that computes
    in time $O(\nnz(\dd_1)+\nnz(\dd_2))$ a non-negative flow 
    $\ff \in \R_{\ge 0}^{E}$ such that
    \[
        \HH^\top \ff = \dd_1, \quad
        \TT^\top \ff = \dd_2, \quad
        \normop{\RR^{\frac \dag 2} \TT^\top \FF \HH \CC^{\frac \dag 2}} \le 1,
    \]
    where $\FF = \diag{\ff}$ and $\nnz(\ff) \le (\nnz(\dd_1) + \nnz(\dd_2))$.
\end{lemma}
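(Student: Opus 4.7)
The plan is to build $\ff$ by the classical \emph{north-west corner} construction, which exhibits a sparse vertex of a bipartite transportation polytope. Set $S := \|\dd_1\|_1 = \|\dd_2\|_1$, enumerate the coordinates of $\dd_1$ with positive mass in an arbitrary order, and place them as adjacent half-open subintervals $\{I_v\}$ of $[0,S)$ with $|I_v| = \dd_1[v]$; similarly obtain a second partition $\{J_u\}$ of $[0,S)$ from the coordinates of $\dd_2$. For each edge $e \in E$ define $\ff_e := |I_{h(e)} \cap J_{t(e)}|$, with the convention that the overlap is $0$ whenever either side contributed no interval. Nonnegativity of $\ff$ is then immediate from the construction.

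I would produce the nonzero entries of $\ff$ by a single coordinated left-to-right sweep over the endpoints of the two partitions using one pointer per partition. Each of the $O(\nnz(\dd_1) + \nnz(\dd_2))$ endpoint events advances exactly one pointer, and between consecutive events a single pair $(I_v, J_u)$ is active and records one nonzero entry of $\ff$. This yields both the $O(\nnz(\dd_1) + \nnz(\dd_2))$ running time and the sparsity bound $\nnz(\ff) \le \nnz(\dd_1) + \nnz(\dd_2)$.

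For correctness, the partition property of $\{J_u\}$ gives, for every $v$ with $\dd_1[v] > 0$,
\[
    [\HH^\top \ff]_v \;=\; \sum_{e\,:\,h(e)=v} |I_v \cap J_{t(e)}| \;=\; |I_v| \;=\; \dd_1[v],
\]
so $\HH^\top \ff = \dd_1$; the symmetric computation across $\{I_v\}$ yields $\TT^\top \ff = \dd_2$. The operator-norm guarantee then follows by invoking \Cref{lemma:vertexflow} with $\eta = 1$: combining the two flow identities with the hypotheses $\dd_1 \le \rr$ and $\dd_2 \le \cc$ yields precisely the coordinatewise $\ell_\infty$ vertex-congestion bounds the lemma requires. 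The whole argument is elementary; the only step that needs care is the single-pointer-advance invariant of the sweep, which prevents $\nnz(\ff)$ from blowing up to the naive product $\nnz(\dd_1) \cdot \nnz(\dd_2)$ and is what simultaneously locks in the sparsity and the runtime bounds.
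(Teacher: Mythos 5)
Your proposal is correct and is essentially the paper's own argument: the interval-overlap/sweep construction is exactly the greedy "match earliest unsatisfied demands" procedure the paper uses (and the paper itself switches to your interval formulation in the dynamic version, \Cref{lemma:dynextpatch}), and both proofs conclude by feeding $\dd_1 \le \rr$, $\dd_2 \le \cc$ into \Cref{lemma:vertexflow} and by noting that each matching step retires at least one demand, which gives the sparsity and runtime bounds.
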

\begin{proof}
    We construct $\ff$ as follows by greedily reducing the demands
    $\dd_1,\dd_2$.
    Fix an arbitrary order on the vertices.
    For the earliest unsatisfied demand $[\dd_1]_u > 0$ where $[\dd_1]_w = 0$ for
    all $w < u$ and the earliest unsatisfied demand $[\dd_2]_v > 0$, we set
    $\ff_{(u,v)} = \min\Par{[\dd_1]_u, [\dd_2]_v}$ and reduce both
    $[\dd_1]_u,[\dd_2]_v$ by the same amount.
    We repeat the process until all demands are satisfied.
    The first two equalities are immediate from this construct.
    As $\rr \ge \dd_1$ and $\cc \ge \dd_2$, applying \Cref{lemma:vertexflow}
    then gives the operator norm inequality.
    Finally, the number of nonzero and runtime follow by noticing that at
    least one of $[\dd_1]_u$ and $[\dd_2]_v$ is satisfied each iteration.
\end{proof}

\begin{algorithm2e}[t!] \label{alg:sparsify_static}
    \caption{$\SSalgo(\vG=(V,E,\ww),\eps,\delta,(\PAalgo,\eta))$}
    \codeInput $\vG=(V,E,\ww)$ a simple weighted directed bipartite graph with
    bipartition $C \cup R = V$, edges $E \subseteq C \times R$ and $G \defeq
    \und(\vG)$ a $\phi$-expander, $\eps,\delta \in (0,1)$, \PA a patching
    algorithm with congestion factor $\eta \ge 1$ that depends on $\phi$\;
    $\rho \defeq 400 \css \cdot \eps^{-2} \phi^{-4} \eta^2 \log
    \frac{8n}{\delta}$\;
    Compute $\ddi,\ddo$ the in and out-degree vectors of $\vG$\;
    \ForEach{$e = (u,v) \in E(\vH)$}{
        With probability $p_e = \rho \cdot
        \ww_e \Par{\frac{1}{[\ddo]_u}+\frac{1}{[\ddi]_v}}$, set
        $\ww_e' \gets \frac{1}{p_e} \ww_e$; otherwise $\ww_e' \gets 0$\;
    }
    \Return{$\vH \gets \PAalgo(\vG,\ww',(1+\frac{\eps\phi^2}{16\eta})^{-1},
    \frac{\delta}{2})$}\; 
\end{algorithm2e}

\begin{algorithm2e}[t!] \label{alg:patching_ext}
    \caption{$\PAEalgo(\vG=(V,E,\ww),\ww',\xi)$}
    \codeInput $\vG=(V,E,\ww)$ a simple weighted directed bipartite graph 
    with bipartition $V = C \cup R$,$E \subseteq C \times R$ and edge-vertex
    transfer matrix $\BB = \HH - \TT$, $\ww'$ a reweighting of the edges, $\xi
    \in (0,1]$\;
    Let $\vH$ be a directed graph with the same vertex and edge sets as $\vG$\;
    $\dd_1 = \HH^\top (\ww - \xi\ww')$, $\dd_2 = \TT^\top (\ww - \xi\ww')$\;
    Compute flow vector $\ff \in \R_{\ge 0}^{C \times R}$ for $\dd_1,\dd_2$ by
    \Cref{lemma:extpatch}\;
    For each $(u,v) \in \supp(\ff)$, $E_{\vH} \gets E_{\vH} \cup (u,v)$ and 
    $[\ww_{\vH}]_{(u,v)} \gets \xi \ww'_{(u,v)} + \ff_{(u,v)}$\;
    \Return{$\vH = (V,E_{\vH},\ww_{\vH})$}\;
\end{algorithm2e}

\begin{lemma}[External patching] \label{lemma:extpatch_algo}
    Given $\vG$, $\ww'$ as in the input conditions and $\xi$ satisfying that
    $\xi\ddpo \le \ddo$ and $\xi\ddpi \le \ddi$ for $\ddpi,\ddpo$
    the in and out degrees of $\vG'=(V,E,\ww')$, \PAE 
    (Algorithm~\ref{alg:patching_ext}) returns in time $O(m)$ a simple weighted
    directed graph $\vH$ such that
    \begin{enumerate}
        \item \label{item:extpatch_algo1}
            $\ddo= \ddo_{\vH}$ and $\ddi = \ddi_{\vH}$,
        \item \label{item:extpatch_algo2}
            \[
                \normop{[\DDi]^{\frac \dag 2} (\vLL_{\vH} - \vLL_{\vG'})
                [\DDo]^{\frac \dag 2}}
                \le (1-\xi) + \max\Brace{\norm{\vone_R - [\DDi]^\dag
                \ddpi}_\infty, \norm{\vone_C - [\DDo]^\dag
                \ddpo}_\infty},
            \]
        \item \label{item:extpatch_algo3}
            $m_{\vH} \le \nnz(\ww') + n$.
    \end{enumerate}
\end{lemma}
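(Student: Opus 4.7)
The plan is to verify the three claims in turn. Parts 1 and 3 follow immediately from the properties of $\ff$ guaranteed by \Cref{lemma:extpatch}. For Part 1, the weight of $(u,v)$ in $\vH$ is $\xi \ww'_{(u,v)} + \ff_{(u,v)}$, so the in-degree vector equals $\xi \ddpi + \TT^\top \ff = \xi \ddpi + \dd_2 = \ddi$, and symmetrically $\xi\ddpo + \HH^\top \ff = \xi \ddpo + \dd_1 = \ddo$. For Part 3, the edges of $\vH$ are the union of $\supp(\ww')$ and $\supp(\ff)$, and \Cref{lemma:extpatch} bounds $\nnz(\ff) \le \nnz(\dd_1) + \nnz(\dd_2) \le |C| + |R| = n$, so $m_{\vH} \le \nnz(\ww') + n$.

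For Part 2, I would first rewrite
\[
    \vLL_{\vH} - \vLL_{\vG'} = \vLL_{\vF} - (1-\xi)\vLL_{\vG'},
\]
where $\vF$ is the directed graph with edge weights $\ff$. Exploiting the bipartite structure $E \subseteq C \times R$, both $\vLL_{\vG'}$ and $\vLL_{\vF}$ have diagonals supported only on $C$ and off-diagonal entries only in the $R \times C$ block, while $[\DDi]^{\frac\dag 2}$ vanishes on $C$. Thus the normalized difference collapses to the single off-diagonal block
\[
    [\DDi]^{\frac\dag 2}(\vLL_{\vH} - \vLL_{\vG'})[\DDo]^{\frac\dag 2}
    = -[\DDi]^{\frac\dag 2}\TT^\top(\FF - (1-\xi)\WW')\HH[\DDo]^{\frac\dag 2},
\]
which recasts the problem as bounding the spectral size of the signed flow $\gg \defeq \ff - (1-\xi)\ww'$.

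For the $\vLL_{\vF}$ piece, I would invoke \Cref{lemma:extpatch}'s bound $\normop{\diag{\dd_1}^{\frac\dag 2}\TT^\top \FF \HH \diag{\dd_2}^{\frac\dag 2}} \le 1$ and pull back to the $\ddi,\ddo$ normalization via a diagonal change of basis, yielding $\normop{[\DDi]^{\frac\dag 2}\TT^\top \FF \HH [\DDo]^{\frac\dag 2}} \le \sqrt{\|[\DDi]^\dag \dd_2\|_\infty \cdot \|[\DDo]^\dag \dd_1\|_\infty}$. The identity $[\dd_1]_u/[\ddo]_u = 1 - \xi[\ddpo]_u/[\ddo]_u$, combined with the hypothesis $\xi \ddpo \le \ddo$ ensuring non-negativity, bounds each factor by $(1-\xi) + \|\vone_C - [\DDo]^\dag \ddpo\|_\infty$, and symmetrically for the $\dd_2$ side. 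A parallel application of \Cref{lemma:vertexflow} to the non-negative edge weights $\ww'$ (with native vertex weights $\ddpo,\ddpi$, rescaled to $\ddo,\ddi$) handles the $(1-\xi)\vLL_{\vG'}$ contribution.

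The main subtlety I anticipate is avoiding an extra factor-of-two blowup from a naive triangle inequality: summing the bounds for the $\vLL_{\vF}$ and $(1-\xi)\vLL_{\vG'}$ pieces separately would yield roughly $2(1-\xi) + O(\delta)$ on the leading term rather than the claimed $(1-\xi)+\delta$. The resolution, I expect, is to treat the two contributions jointly as a single bilinear form on $\gg$, whose vertex-level net flows satisfy $\HH^\top \gg = \ddo - \ddpo$ and $\TT^\top \gg = \ddi - \ddpi$, both controlled respectively by $\|\vone_C - [\DDo]^\dag \ddpo\|_\infty$ and $\|\vone_R - [\DDi]^\dag \ddpi\|_\infty$, producing the cancellation needed to collapse the combined bound to the claimed $(1-\xi) + \max\{\|\vone_R - [\DDi]^\dag \ddpi\|_\infty, \|\vone_C - [\DDo]^\dag \ddpo\|_\infty\}$.
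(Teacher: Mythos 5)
Parts 1 and 3 of your proposal are correct and coincide with the paper's argument. For Part 2, your reduction to the single off-diagonal block and your bound on the $\TT^\top \FF \HH$ term via $\dd_1 \le (1-\xi+\alpha)\ddo$ and $\dd_2 \le (1-\xi+\alpha)\ddi$ are also exactly the paper's route.

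The gap is the final ``cancellation'' step. The only operator-norm tool in play (\Cref{lemma:optoinf}, which underlies \Cref{lemma:vertexflow}) controls $\normop{[\DDi]^{\frac \dag 2}\TT^\top \GG \HH [\DDo]^{\frac \dag 2}}$ by the \emph{absolute} row and column sums of $\TT^\top \GG \HH$ relative to the degrees, i.e.\ by $\TT^\top|\gg|$ and $\HH^\top|\gg|$ for $\gg = \ff-(1-\xi)\ww'$ --- not by the net sums $\TT^\top\gg=\ddi-\ddpi$ and $\HH^\top\gg=\ddo-\ddpo$ that you invoke. Since $\supp(\ff)$ comes from an arbitrary greedy matching and need not overlap $\supp(\ww')$, one only gets $\TT^\top|\gg|\le\dd_2+(1-\xi)\ddpi\approx 2(1-\xi)\ddi$, so the ``joint'' treatment returns the same $2(1-\xi)+O(\alpha)$ as the triangle inequality. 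In fact no argument can recover the stated constant, because it is wrong: take $C=\{c_1,c_2\}$, $R=\{r_1,r_2\}$, $E=\{(c_1,r_1),(c_2,r_2)\}$ with unit weights and $\ww'=\ww$ (so $\alpha=0$), and let the greedy order in \Cref{lemma:extpatch} pair $c_1$ with $r_2$ and $c_2$ with $r_1$. Then the $R\times C$ block of $[\DDi]^{\frac \dag 2}(\vLL_{\vH}-\vLL_{\vG'})[\DDo]^{\frac \dag 2}$ is $(1-\xi)\bigl(\begin{smallmatrix}1&-1\\-1&1\end{smallmatrix}\bigr)$, whose operator norm is $2(1-\xi)$, exceeding the claimed $(1-\xi)+\alpha=1-\xi$. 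The paper's own one-line proof of \Cref{item:extpatch_algo2} has the same defect: it bounds only the $\TT^\top\FF\HH$ term and silently drops $-(1-\xi)\TT^\top\WW'\HH$. The honest bound is $2(1-\xi)+O(\alpha)$ (consistent with the $2-2\xi$ in the internal-patching analogue, \Cref{lemma:intpatch}), and this factor of two is harmless in every downstream use, so the triangle-inequality version you already sketched is the proof to write down.
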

\begin{proof}
    Let $\BB = \HH - \TT$ be the edge-vertex transfer matrix of $\vG$.
    We append zeros to $\ww$ and $\ww'$ so that they are defined on $E(\vH)$.
    Then, $\ddo = \HH^\top \ww$, $\ddi = \TT^\top \ww$. 
    Degree vectors $\ddpo,\ddo_{\vH},\ddpi,\ddi_{\vH}$ can be
    analogously expressed using $\ww'$ or $\ww_{\vH}$.
    Now, we get \Cref{item:extpatch_algo1} by
    \[
        \ddo_{\vH} = \HH^\top \ww_{\vH} = \HH^\top (\xi\ww'+\ff) = \HH^\top \ww' = \ddpo
    \]
    where the last equality follows by our definition of 
    $\dd_1 = \ddo - \xi\ddpo$ and the first two equalities of
    \Cref{lemma:extpatch}.
    The equality for the in degrees follows analogously.
    The sparsity guarantee \Cref{item:extpatch_algo3} follows directly by
    \Cref{lemma:extpatch}.
    The runtime follows by noting that each step, including the application of
    \Cref{lemma:extpatch}, runs in time at most $O(m)$.

    Consider \Cref{item:extpatch_algo2}.
    Let 
    \[
        \alpha \defeq \max\Brace{\norm{\vone_R - [\DDi]^\dag
        \ddpi}_\infty, \norm{\vone_C - [\DDo]^\dag \ddpo}_\infty}.
    \]
    We have $\dd_1 \le (1-(1-\alpha)\xi)\ddo \le (1-\xi+\alpha) \ddo$ and
    $\dd_2 \le (1-\xi+\alpha) \ddi$.
    The operator norm bound in \Cref{lemma:extpatch} then gives us the desired
    error bound.
\end{proof}

\begin{algorithm2e}[t!] \label{alg:patching_int}
    \caption{$\PAIalgo(\vG=(V,E,\ww),\ww',\xi,\delta)$}
    \codeInput $\vG=(V,E,\ww)$ a simple weighted directed bipartite graph 
    with bipartition $V = C \cup R$,$E \subseteq C \times R$ and edge-vertex
    transfer matrix $\BB = \HH - \TT$, $\ww'$ a reweighting of the edges with
    $\vG' \defeq (V,E,\ww')$, $\xi \in [\frac{1}{2},1]$, $\delta \in (0,1)$\;
    Let $\vH$ be directed graph with the same vertex and edge sets as $\vG$\;
    Let $T$ be a maximum spanning forest of $G \defeq \und(\vG)$\;
    Set $\dd_1,\dd_2 \in \R_{\ge 0}^V$ by $\dd_1 = \HH^\top (\ww - \xi\ww')$ and
    $\dd_2 = \TT^\top (\ww - \xi\ww')$\;
    Compute flow vector $\ff \in \R^E$ for $\dd_1,\dd_2$ by
    \Cref{lemma:eroute_approx} with $\zeta = \frac{1}{5n^2}$\;
    $\yy \gets \ROalgo(\vG',\ff,T)$ and let $\aa \gets \ff + \yy$\;
    For each $(u,v) \in \supp(\ff)$, $E_{\vH} \gets E_{\vH} \cup (u,v)$ and 
    $[\ww_{\vH}]_{(u,v)} \gets \xi \ww'_{(u,v)} + \aa_{(u,v)}$\;
    \Return{$\vH = (V,E_{\vH},\ww_{\vH})$}\;
\end{algorithm2e}

Since having extra edges in a sparsifier is typically undesired, we also present
an algorithm \PAI (Algorithm~\ref{alg:patching_int}) that computes a patching
using sampled edges and electrical routing.
Such a degree fixing patching, which we call an internal patching, guarantees
the sparsifier remains a subgraph of the original input graph.

To compute internal patchings, we need to call the state-of-the-art
undirected Laplacian solver.
\begin{proposition}[Theorem 1.6, \cite{JambulapatiS21}]\label{prop:js21}
	Let $\LL_G$ be the Laplacian of $G = (V, E, \ww)$. There is an algorithm which takes $\LL_G$, $\bb \in \R^V$, and $\delta, \xi \in (0, 1)$, and outputs $\xx$ such that with probability $\ge 1 - \delta$, $\xx$ is an $\xi$-approximate solution to $\LL_G \xx = \bb$, i.e.,
	\[\norm{\xx - \LL_G^\dagger \bb}_{\LL_G} \le \xi \norm{\LL_G^\dagger \bb}_{\LL_G},\]
	in time $\bO(|E| \cdot \log \frac{1}{\delta\xi})$. Moreover, the algorithm returns $\xx = \MM \bb$ where $\MM$ is a random linear operator constructed independently of $\bb$, such that the above guarantee holds with $1 - \delta$ for all $\bb$.
\end{proposition}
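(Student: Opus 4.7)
The plan is to realize the operator $\MM$ as a fixed (random, but $\bb$-independent) composition of a preconditioner chain with a constant number of preconditioned iterations at each level, so that the guarantee automatically holds uniformly in $\bb$ once the chain is spectrally well-behaved. First, I would build a chain of undirected graphs $G = G_0, G_1, \ldots, G_\ell$ where each $G_{i+1}$ is a constant-factor spectral sparsifier of $G_i$ obtained from a nearly-linear time sparsification routine (e.g.\ leverage-score sampling via spanners, or an expander-decomposition based half-sparsifier as in the undirected case). By iterating halving of the number of edges, we can ensure $\ell = O(\log n)$ and that the bottom-level graph has $\widetilde{O}(n)$ edges, which can be handled by nested sparse Cholesky or a direct base-case solver in $\widetilde{O}(n)$ time. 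Every $G_i$ satisfies $\LL_{G_{i+1}} \preceq \LL_{G_i} \preceq \kappa \LL_{G_{i+1}}$ for a universal constant $\kappa$, so $\LL_{G_{i+1}}^\dagger$ is a constant-quality preconditioner for $\LL_{G_i}$.

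Second, I would define $\MM$ recursively via preconditioned iteration. Let $\MM_\ell$ be the base-case (exact or high-accuracy) linear solver for $\LL_{G_\ell}$. For $i<\ell$, define
\[
    \MM_i \defeq \textsc{PrecondIter}\Par{\LL_{G_i},\MM_{i+1},T_i}
\]
where $\textsc{PrecondIter}$ is $T_i$ rounds of preconditioned Chebyshev (or Richardson) iteration using $\MM_{i+1}$ as the preconditioner. Because each round of Chebyshev iteration is a fixed polynomial in $\LL_{G_i}$ and $\MM_{i+1}$ applied to $\bb$, and by induction $\MM_{i+1}$ is linear in its input, $\MM_i$ is a fixed linear operator in $\bb$, with all randomness absorbed into the construction of the $G_i$. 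Then setting $\MM \defeq \MM_0$ yields the desired linear operator form.

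Third, I would analyze the error and runtime. The constant spectral gap at each level means $O(1)$ Chebyshev rounds suffice to contract the $\LL_{G_i}$-error by a constant factor, so choosing $T_i = O(\log(1/\xi) / \ell)$ (or doing $O(\log(1/\xi))$ extra rounds only at the top level) makes the overall operator error satisfy $\norm{\PP_{\im(\LL_G)} - \MM \LL_G}_{\LL_G \to \LL_G} \le \xi$. This immediately gives $\norm{\MM \bb - \LL_G^\dagger \bb}_{\LL_G} \le \xi \norm{\LL_G^\dagger \bb}_{\LL_G}$ for all $\bb \in \im(\LL_G)$, i.e.\ the uniform-in-$\bb$ guarantee. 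Each application of $\MM_i$ performs $O(T_i)$ Laplacian-vector products against $\LL_{G_i}$ (cost $O(|E(G_i)|)$) plus one call to $\MM_{i+1}$; unrolling the recursion yields total work $\bO(|E| \log(1/(\delta\xi)))$, where the $\log(1/\delta)$ comes from driving the sparsifier construction to succeed w.h.p.

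The main obstacle is ensuring that the \emph{operator} form of the error bound, rather than merely the per-$\bb$ form, holds after the recursion: a naive inductive argument only controls $\MM_i \bb$ for each fixed $\bb$, but here we need $\norm{\PP - \MM \LL_G}_{\LL_G \to \LL_G} \le \xi$. The standard fix is to argue at the operator level, using the fact that symmetric preconditioned Chebyshev on a constant-condition-number pair produces a symmetric polynomial operator whose $\LL_G \to \LL_G$ error contracts by a fixed constant per iteration. Combined with a union bound over the $O(\log n)$ sparsification steps (each failing with probability $\delta/\polylog(n)$), this gives the full proposition.
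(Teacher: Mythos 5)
This proposition is not proved in the paper at all: it is imported verbatim as Theorem 1.6 of \cite{JambulapatiS21}, so the only "proof" the paper offers is the citation. Your proposal instead tries to reprove the solver from scratch, and while the high-level scaffolding (a randomized, $\bb$-independent preconditioner chain whose composition is a fixed linear operator, giving the uniform-in-$\bb$ guarantee) is sound, the quantitative core has a genuine gap. With a chain of \emph{constant-factor} spectral sparsifiers that halve the edge count per level, preconditioned Chebyshev or Richardson needs at least a constant $c \ge 2$ inner iterations per level to contract the error, and each such iteration triggers a full recursive call to the next level. The total work is then of order $\sum_i c^i \cdot m/2^i$, which does not telescope to $\bO(m)$; depending on $c$ it is $\Omega(m \log n)$ or worse. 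This is exactly the classical obstruction that forces the literature to use ultrasparsifiers (graphs with $n-1+m/k$ edges and condition number $\approx k$) rather than constant-factor half-sparsifiers, and even the ultrasparsifier-based solvers of Spielman--Teng/Koutis--Miller--Peng only reach $O(m\,\polylog(n)\log(1/\xi))$. The whole point of the cited result is the $\bO(|E|\log\frac{1}{\delta\xi})$ bound with only $\polyloglog(n)$ overhead, which requires the specific machinery of \cite{JambulapatiS21} (their ultra-sparse recursive preconditioning construction); your sketch does not recover that bound, so it proves a strictly weaker statement than the proposition.

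Two smaller points: your error analysis at the operator level is fine in spirit (a fixed polynomial in PSD matrices composed with a linear base-case solver is linear in $\bb$, and the $\LL_G\!\to\!\LL_G$ operator-norm contraction argument is standard), but you should be explicit that the failure probability $\delta$ enters only through the randomness of the sparsifier constructions, so the runtime dependence on $\delta$ comes from boosting those constructions, not from the iteration count; as written, folding $\log(1/\delta)$ into the same factor as $\log(1/\xi)$ is asserted rather than derived. If your goal is merely to use the solver as a black box, the correct move is the one the paper makes: cite \cite{JambulapatiS21} and verify that its statement already provides the linear-operator, for-all-$\bb$ form needed by \Cref{lemma:eroute_approx}.
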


We state the guarantees on the edge congestion for approximate electrical
routing using the approximate Laplacian solver above.
\begin{lemma} \label{lemma:eroute_approx}
    There is an algorithm that given as input a $\phi$-expander graph
    $G=(V,E,\ww)$ with $\min_{e \in E} \ww_e \ge \vone_E$, a demand vector $\bb
    \in \R^V$ satisfying $\|\DD^{-1} \bb \|_\infty \le 1$,
    and a parameter $\zeta \in (0,1)$, 
    returns in time $\bO(m \log \frac{M}{\phi \zeta \delta})$ a random flow
    $\ff \in \R^E$ that depends linearly on $\bb$ satisfying,
    \begin{equation} \label{eq:flow_approx}
        \norm{\WW^{-1} \ff}_\infty \le \frac{4\log(2m)}{\phi^2}, \quad
        \min_{\zz : \BB \zz = \dd} \norm{\WW^{-1} (\ff - \zz)}_\infty \le \zeta, 
    \end{equation}
    with probability $\ge 1-\delta$, where $M = \|\ww \|_1$, $\DD =
    \diag{\dd}$ the weighted degree matrix.
\end{lemma}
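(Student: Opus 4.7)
The plan is to set $\ff \defeq \WW \BB \xx$ where $\xx$ is an approximate solution to $\LL \xx = \bb$ obtained from the Laplacian solver of \Cref{prop:js21}. Since that solver returns $\xx = \MM \bb$ for a random linear operator $\MM$ constructed independently of $\bb$, the resulting flow $\ff = \WW \BB \MM \bb$ depends linearly on $\bb$ as required. The reference point for the analysis will be the exact electrical flow $\ff^\star \defeq \WW \BB \LL^\dag \bb$, which routes $\bb$ (i.e., $\BB^\top \ff^\star = \bb$) because $\bb \in \im(\LL)$ as a valid demand.

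First I will bound the electrical energy $\|\LL^\dag \bb\|_\LL = \sqrt{\bb^\top \LL^\dag \bb}$. Cheeger's inequality (\Cref{lemma:cheeger}) implies $\bb^\top \LL^\dag \bb \le (2/\phi^2)\, \bb^\top \DD^\dag \bb$, and the hypothesis $\|\DD^{-1}\bb\|_\infty \le 1$ yields $\bb^\top \DD^\dag \bb = \sum_v b_v^2 / d_v \le \sum_v |b_v| \le 2M$. Hence $\|\LL^\dag \bb\|_\LL \le 2\sqrt{M}/\phi$. Invoking \Cref{prop:js21} with accuracy parameter $\xi = \zeta \phi / (2\sqrt{M})$ and failure probability $\delta$ produces in time $\bO(m \log(1/(\xi \delta))) = \bO(m \log(M/(\phi \zeta \delta)))$ a vector $\xx$ with $\|\xx - \LL^\dag \bb\|_\LL \le \zeta$.

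Next I will verify the two displayed bounds in \eqref{eq:flow_approx}. For the approximation bound, take $\zz \defeq \ff^\star$; then $\WW^{-1}(\ff - \zz) = \BB(\xx - \LL^\dag \bb)$, and for any edge $e = (u,v)$ and any vector $\yy \in \R^V$, $(\BB \yy)_e^2 = (y_u - y_v)^2 \le \ww_e^{-1}\, \yy^\top \LL \yy$, so $\|\BB(\xx - \LL^\dag \bb)\|_\infty \le \|\xx - \LL^\dag \bb\|_\LL \le \zeta$ using $\min_e \ww_e \ge 1$. For the congestion bound, combining the existence of a routing with congestion at most $\phi^{-1}$ (\Cref{lemma:route_exp}) and the competitive ratio $3\log(2m)/\phi$ of electrical flow (\Cref{lemma:eroute_cong}) yields $\|\WW^{-1}\ff^\star\|_\infty \le 3\log(2m)/\phi^2$. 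The triangle inequality then gives $\|\WW^{-1}\ff\|_\infty \le 3\log(2m)/\phi^2 + \zeta \le 4\log(2m)/\phi^2$, using $\zeta < 1 \le \log(2m)/\phi^2$.

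The point requiring care is the conversion from the $\LL$-norm error returned by the solver to an $\ell_\infty$ bound on $\BB(\xx - \LL^\dag \bb)$, which is where the assumption $\min_e \ww_e \ge 1$ is needed and where the $\sqrt{M}$ factor in the solver accuracy (and hence the $\log M$ term in the runtime) enters.
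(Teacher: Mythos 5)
Your proposal is correct and follows essentially the same route as the paper's proof: set $\ff = \WW\BB\xx$ for an approximate solution $\xx$ from \Cref{prop:js21}, bound $\norm{\LL^\dag\bb}_{\LL} \le 2\sqrt{M}/\phi$ via Cheeger's inequality and $\norm{\DD^{-1}\bb}_\infty \le 1$, choose solver accuracy $\Theta(\zeta\phi/\sqrt{M})$, convert the $\LL$-norm error to an $\ell_\infty$ bound using $\min_e \ww_e \ge 1$, and control the congestion of the exact electrical flow via \Cref{lemma:route_exp,lemma:eroute_cong}. No gaps.
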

\begin{proof}
    We will compute $\ff$ by computing an approximate solution $\xx$ of
    $\LL^\dag \bb$ using \Cref{prop:js21} with error parameter $\zeta'$ that we
    derive later.
    Then, $\ff$ can be expressed as $\WW\BB \xx$ for some $\xx \in \R^V$.
    Let $\ffs \defeq \WW \BB^\top \LL^\dag \bb$, then $\bb = \BB \ffs$.
    By the definition of approximate solutions (see \Cref{prop:js21}), we have
    \[
        \norm{\xx - \LL^\dag \bb}_{\LL} 
        \le \zeta' \cdot \norm{\LL^\dag \bb}_{\LL} 
        \le \zeta' \cdot \norm{\DD^{-1/2}\bb}_2 \cdot \sqrt{\norm{\PP
        \DD^{1/2}\LL^\dag \DD^{1/2}\PP}}
        \le \frac{\sqrt{2} \cdot \zeta'}{\phi} \cdot \norm{\DD^{-1/2}\bb}_2 
    \]
    where $\PP_d = \II_V - \frac{1}{\|\dd\|_1}\dd^{1/2} (\dd^{1/2})^\top$ is an
    orthogonal projection matrix and the last inequality follows by
    \Cref{lemma:cheeger}.
    We further have 
    \[
        \norm{\DD^{-1/2}\bb}_2 
        \le \sqrt{\norm{\DD^{-1}\bb}_\infty \cdot \norm{\bb}_1}
        \le \sqrt{2\norm{\ww}_1}
    \]
    where we used the assumption that $\bb \le \dd$ and the fact that $\|\dd\|_1
    = 2\|\ww\|_1 \defeq M$.
    We consider first the difference
    \begin{align*}
        \norm{\WW^{-1} (\ff - \ffs)}_\infty
        &\le \norm{\WW^{1/2}\BB (\xx - \LL^\dag \bb)}_2 \cdot 
        \sqrt{\norm{\ww^{-1}}_\infty}
        \\
        &\le \norm{(\xx - \LL^\dag \bb)}_{\LL}
        \le \zeta' \cdot \norm{\LL^\dag \bb}_{\LL}
        \le \frac{\sqrt{2} \cdot \zeta'}{\phi} \norm{\DD^{-1/2} \bb}_2
        \\
        &\le \frac{2 \sqrt{M} \cdot \zeta'}{\phi}.
    \end{align*}
    It then suffices to have $\zeta' \le \frac{\phi}{2\sqrt{M}} \zeta$ for
    the difference to be at most $\zeta$.
    By \Cref{lemma:route_exp,lemma:eroute_cong}, 
    we have $\norm{\WW^{-1} \ffs}_\infty \le \frac{3\log(2m)}{\phi^2}$.
    Note that this upperbound is larger than $3 > 3\zeta$ for any possible $m$
    and $\phi$.
    By choosing $\zeta' = \frac{\phi \zeta}{2\sqrt{M}}$, the runtime is
    dominated by the Laplacian solving step that runs in $\bO(m \log
    \frac{M}{\phi \zeta \delta})$.
\end{proof}

To deal with the small error in the routing due to approximate Laplacian
solving, we will use a simple algorithm \RO (Algorithm~\ref{alg:round}) for
rounding degrees using a maximum spanning forest tree.
This procedure is related to other tree-based rounding schemes
in the literature, see e.g., \cite{KelnerOSZ13,JambulapatiSSTZ25}.

\begin{algorithm2e}[ht!]\label{alg:round}
\caption{$\ROalgo(\vG, \dd, T)$}
\DontPrintSemicolon
\codeInput $\vG = (V, E, \ww)$, $\dd \in \R^V$, $T$ a maximum spanning forest
of $G \defeq \und(\vG)$ \;
$\yy \gets $ unique vector in $\R^E$ with $\supp(\yy) \subseteq E(T)$ and $\BB_{\vG}^\top \yy = \dd$\;
\Return{$\yy$}
\end{algorithm2e}

\begin{restatable}{lemma}{restaterounding}\label{lemma:rounding}
Given $\vG = (V,E,\ww)$, a maximum spanning forest subgraph $T$ of $G \defeq
\und(\vG)$, \RO (Algorithm~\ref{alg:round}) returns in $O(n)$ time $\yy \in
\R^{E}$ with $\supp(\yy) \subseteq T$ satisfying
\begin{enumerate}
    \item $\BB_{\vG}^\top \yy = \dd$,
    \item for any $\zz \in \R^{E}$ satisfying $\BB_{\vG}^\top \zz = \dd$,
        $\norm{\WW^{-1} \yy}_{\infty} \le \norm{\WW^{-1} \zz}_1$,
    \item further, $\normsop{\LL^{\dag/ 2}_G \BB^{\top}_{\vG}(\YY -
        \ZZ)\HH_{\vG} \LL^{\dag/ 2}_G} \leq n \norm{\WW^{-1} \zz}_1$.
\end{enumerate}
\end{restatable}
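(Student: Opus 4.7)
The plan is to construct $\yy$ explicitly via leaf-stripping on $T$, then establish the three properties separately. Because $\zz$ with $\BB_{\vG}^\top \zz = \dd$ exists, we have $\dd \in \operatorname{im}(\BB_{\vG}^\top)$, and since $T$ is a spanning forest of $G = \und(\vG)$, the image of $\BB_T^\top$ coincides with that of $\BB_{\vG}^\top$ on each connected component. Also, $\BB_T^\top$ has trivial kernel once we restrict to component-sum-zero demands (no cycles in a forest), so the tree-supported flow realizing $\dd$ is unique. Algorithmically, I would pick a leaf $v$ of $T$, set its incident tree-edge $e$ to carry exactly $\pm\dd_v$, remove $v$, transfer its demand to its neighbor, and recurse; this runs in $O(n)$ time and establishes item~1.

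For item~2, the key tool is the cut property of maximum spanning forests. For each tree edge $e \in T$, removing $e$ partitions its component into $(A,B)$, and by the MSF cut property $\ww_e \ge \ww_f$ for every $f$ crossing this cut in $G$. Since $\yy$ is supported on $T$, flow conservation forces $|\yy_e| = |\dd(A)|$, where $\dd(A) \defeq \sum_{v \in A}\dd_v$. But any $\zz$ with $\BB_{\vG}^\top \zz = \dd$ must transport $\dd(A)$ across the cut, so $|\dd(A)| \le \sum_{f \in \delta(A,B)} |\zz_f|$. Normalizing by $\ww_e$ and using $\ww_e \ge \ww_f$ yields
\[
\frac{|\yy_e|}{\ww_e} \;\le\; \sum_{f \in \delta(A,B)} \frac{|\zz_f|}{\ww_e} \;\le\; \sum_{f \in \delta(A,B)} \frac{|\zz_f|}{\ww_f} \;\le\; \|\WW^{-1}\zz\|_1,
\]
and taking the maximum over $e \in T$ gives the claimed $\ell_\infty$ bound.

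For item~3, I would expand
$\BB_{\vG}^\top (\YY - \ZZ) \HH_{\vG} = \sum_{e \in E} (\yy_e - \zz_e)\, \bb_e \ee_{h(e)}^\top$
and apply triangle inequality to the operator norm, controlling each rank-one term via $\|\LL_G^{\dag/2}\bb_e \ee_{h(e)}^\top \LL_G^{\dag/2}\|_{\textup{op}} = \sqrt{r_e \cdot \ee_{h(e)}^\top \LL_G^\dag \ee_{h(e)}}$ where $r_e \le 1/\ww_e$ is the effective resistance. The tricky factor is the diagonal entry $\ee_v^\top \LL_G^\dag \ee_v$, which I would bound by rewriting $\ee_v = \tfrac{1}{n_c}\vone_{V_c} + \tfrac{1}{n_c}\sum_{u \in V_c}(\ee_v - \ee_u)$ in its connected component $V_c$, expanding by triangle inequality into a sum of effective-resistance square roots $\sqrt{r_{uv}}$, and dominating each $r_{uv}$ by the path-resistance in $T$ which is at most $n$ under the assumption that weights are $\ge 1$. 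Combining these bounds with item~2 (which already yields $\|\WW^{-1}\yy\|_\infty \le \|\WW^{-1}\zz\|_1$) and using the fact that $\yy$ is supported on at most $n-1$ tree edges produces the stated $n \|\WW^{-1}\zz\|_1$ factor.

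The main obstacle will be item~3: although the $n$ in the stated bound is deliberately slack, getting a clean derivation requires simultaneously invoking both item~2 (to convert the $\ell_\infty$ guarantee into a useful $\ell_1$ handle after exploiting the $\le n-1$ support) and the MSF-based bound on $\ee_v^\top \LL_G^\dag \ee_v$. Neither ingredient alone suffices, and one must take care that the effective-resistance bound on vertex indicator vectors does not blow up in the disconnected case—which is handled cleanly by the component-by-component decomposition implicit in writing $\LL_G^\dag$ through its image.
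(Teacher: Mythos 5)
Items~1 and~2 of your proposal are sound. Your construction of $\yy$ by leaf-stripping matches the paper's, and your proof of item~2 via the fundamental-\emph{cut} optimality of the maximum spanning forest ($\ww_e \ge \ww_f$ for every $f$ crossing the cut induced by deleting tree edge $e$, combined with $|\yy_e| = |\dd(A)| \le \sum_{f}|\zz_f|$) is a valid dual of the paper's argument, which instead decomposes $\zz - \yy$ into fundamental \emph{cycles} and uses that a non-tree edge is no heavier than any tree edge on its fundamental cycle. Both give exactly the stated $\ell_\infty$ bound.

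Item~3, however, has a genuine gap. Your plan is to expand $\BB_{\vG}^\top(\YY-\ZZ)\HH_{\vG} = \sum_e (\yy_e-\zz_e)\bb_e\ee_{h(e)}^\top$ and bound each rank-one term by $\sqrt{r_e}\cdot\sqrt{\ee_{h(e)}^\top\LL_G^\dag\ee_{h(e)}}$. The first factor scales as $\ww_e^{-1/2}$, while the target bound $n\norm{\WW^{-1}\zz}_1$ requires each term to carry a full $\ww_e^{-1}$; the second factor $\sqrt{\ee_{h(e)}^\top\LL_G^\dag\ee_{h(e)}}$ does not supply the missing $\ww_e^{-1/2}$ (consider a single very heavy edge $e$ attached to a long path of light edges: $r_e \approx \ww_e^{-1}$ but $\ee_{h(e)}^\top\LL_G^\dag\ee_{h(e)} = \Theta(1)$ or larger, so the product is off from $\ww_e^{-1}$ by $\sqrt{\ww_e}$). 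Your attempted fix assumes weights are at least $1$ and bounds tree-path resistances by $n$, but the lemma has no weight assumptions, and even after normalizing the minimum weight by scale invariance, the loss of $\sqrt{\ww_e}$ on heavy edges is unbounded since no upper bound on weights is available. The structural ingredient you are missing is that $\yy - \zz$ is a \emph{circulation}: the paper decomposes it as $\sum_{e\notin T}\zz_e\cc^{(T,e)}$ into fundamental cycles and invokes the cycle quadratic-form bound (\Cref{lemma:cycle}, from \cite{SachdevaTZ24}), which yields $\vMM\LL_G^\dag\vMM^\top \preceq n^2\,\BB_{\vG}^\top\CC^{(T,e)}\BB_{\vG} \preceq n^2\LL_G$ for the uniformly reweighted cycle Laplacian $\vMM$. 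That lemma uses the circulation identity $\HH^\top\CC\HH = \TT^\top\CC\TT$ to relate the asymmetric quadratic form back to the undirected cycle Laplacian; decomposing edge by edge into rank-one pieces destroys exactly this cancellation and cannot be repaired by sharper bounds on $\ee_v^\top\LL_G^\dag\ee_v$.
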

For completeness, we include a proof of this lemma in \Cref{app:rounding}.

\begin{lemma}[Internal patching by electrical routing] \label{lemma:intpatch}
    Given $\vG$, $\ww'$, $\xi$, $\delta$ as in the input conditions and
    satisfying that
    \begin{enumerate}[label=(\alph*)]
        \item $\frac{\max_{e \in E} \ww_e}{\min_{e \in E} \ww_e} \le 2$ and
            $\ww_f' \ge \ww_f$ for any $f \in \supp(\ww_f')$,
        \item $G'=\und(\vG')$ is a $\phi$-expander, 
        \item $\xi\ddpo \le \ddo$ and $\xi\ddpi \le \ddi$ for
            $\ddpi,\ddpo$ the in and out degrees of $\vG'$, 
        \item $\|[\DDo]^\dag (\ddo - \xi \ddpo)\|_\infty, \|[\DDi]^\dag (\ddi -
            \xi \ddpi)\|_\infty \le \frac{\phi^2}{100 \log(2n)} \eps$ for some
            $\eps \in (0,1)$,
    \end{enumerate}
    \PAI (Algorithm~\ref{alg:patching_int}) returns in time $\bO(\nnz(\ww') \log
    \frac{n}{\delta})$ reweighted simple directed subgraph $\vH$ of $\vG'$ such
    that with probability at least $1-\delta$
    \begin{enumerate}
        \item \label{item:intpatch_algo1}
            $\ddo =\ddo_{\vH}$ and
            $\ddi =\ddi_{\vH}$,
        \item \label{item:intpatch_algo2}
            $\normop{[\DDi]^{\frac \dag 2} (\vLL_{\vH} - \vLL_{\vG'}) [\DDo]^{\frac
            \dag 2}} \le (2-2\xi+\frac{1}{10}\eps),$
        \item \label{item:intpatch_algo3}
            $m_{\vH} \le \nnz(\ww')$,
            and for every $e \in \supp(\ww')$, 
            $\ww_{\vH}(e) \in (\xi \pm\frac{1}{4}\eps) \ww'(e)$.
    \end{enumerate}
\end{lemma}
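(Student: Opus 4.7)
The plan is to verify the three claims in order by analyzing the combined patching flow $\aa = \ff + \yy$. I first set up the bipartite demand interpretation: by assumption (c), $\dd_1 = \ddo - \xi\ddpo \ge \vzero$ is supported on $C$ and $\dd_2 = \ddi - \xi\ddpi \ge \vzero$ on $R$, with $\|\dd_1\|_1 = \|\dd_2\|_1$, so $\bb \defeq \dd_1 - \dd_2$ is a valid $G'$-demand. Combining assumption (d) with $\xi \ge \tfrac{1}{2}$ (hence $\ddpo \le 2\ddo$ and $\ddpi \le 2\ddi$ on the relevant supports) yields $\|\DD_{G'}^{\dag}\bb\|_\infty \le \frac{\phi^2\eps}{50\log(2n)}$. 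I then invoke \Cref{lemma:eroute_approx} on $G'$ with a suitably scaled version of $\bb$ and approximation parameter $\zeta$ (to be fixed below) to obtain $\ff$, and apply \Cref{lemma:rounding} to the residual demand $\bb - \BB^\top\ff$ via the maximum spanning forest $T$ to obtain $\yy$, so that $\BB^\top \aa = \bb$ holds exactly.

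For claim 1, the disjoint bipartite supports of $\dd_1,\dd_2$ decouple $\BB^\top \aa = \bb$ into $\HH^\top \aa = \dd_1$ and $\TT^\top \aa = \dd_2$. Then $\ddo_{\vH} = \HH^\top(\xi \ww' + \aa) = \xi\ddpo + \dd_1 = \ddo$, and symmetrically $\ddi_{\vH} = \ddi$. For claim 2, I decompose $\vLL_{\vH}-\vLL_{\vG'} = -(1-\xi)\BB^\top\WW'\HH + \BB^\top\AA\HH$. \Cref{lemma:vertexflow} applied to the flow $\ww'$ (with vertex congestions $\ddpo,\ddpi$) bounds the first term by $2(1-\xi)$ using $\xi^{-1}\le 2$; the same lemma applied to $\aa$ (with \emph{exact} vertex congestions $\dd_1,\dd_2$) bounds the second by $\frac{\phi^2\eps}{100\log(2n)} \le \eps/10$. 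Summing gives the claimed $2-2\xi+\eps/10$.

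For claim 3, since $T$ is an MSF of $G'$ we have $E(T)\subseteq\supp(\ww')$, so $\supp(\aa) \subseteq \supp(\ff) \cup E(T) \subseteq \supp(\ww')$, giving $m_{\vH} \le \nnz(\ww')$. Per-edge, \Cref{lemma:eroute_approx} yields $\|\WW'^{\dag}\ff\|_\infty \le \frac{4\log(2m)}{\phi^2}\cdot\frac{\phi^2\eps}{50\log(2n)} \le \eps/10$; choosing $\zeta = \poly(n^{-1})$, \Cref{lemma:rounding} forces $\|\WW'^{\dag}\yy\|_\infty \le \eps/10$, so $|\aa_e| \le (\eps/4)\ww'_e$ as required. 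The runtime is dominated by the Laplacian solve in \Cref{lemma:eroute_approx}, giving $\bO(\nnz(\ww')\log(n/\delta))$. The main obstacle is tracking the solver-error propagation through the MSF rounding: $\zeta$ must be chosen small enough to keep $\|\WW'^{\dag}\yy\|_\infty$ of order $\eps$, yet the $\log(1/\zeta)$ overhead in the runtime must remain polylogarithmic in $n$. The bipartite decoupling, together with the fact that after rounding $\aa$ carries exact (not approximate) vertex demands $\dd_1,\dd_2$, is precisely what allows \Cref{lemma:vertexflow} to be applied in claim 2 without paying any additional slack from the approximate Laplacian solver.
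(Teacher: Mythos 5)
Your proposal follows essentially the same route as the paper's proof: route the degree deficits $\dd_1,\dd_2$ electrically in $G'$ via \Cref{lemma:eroute_approx}, correct the solver error with the MSF rounding of \Cref{lemma:rounding} so that $\BB^\top \aa = \bb$ holds exactly, use the bipartite structure to decouple this into $\HH^\top\aa=\dd_1$, $\TT^\top\aa=\dd_2$ for the degree claim, and split $\vLL_{\vH}-\vLL_{\vG'}$ into the $(1-\xi)\WW'$ part and the $\AA$ part, each bounded through \Cref{lemma:vertexflow}/\Cref{lemma:optoinf}, exactly as the paper does. The only nits are bookkeeping: the bound $\norms{\DD_{G'}^\dag \bb}_\infty \lesssim \frac{\phi^2\eps}{\log(2n)}$ needs $\ddo \le (1+o(1))\,\ddpo$, which follows from condition (d) rather than from the cited $\ddpo \le 2\ddo$, and your $\eps/10+\eps/10$ per-edge constants add to slightly more than the required $\eps/4$ once $\log(2m)\approx 2\log(2n)$ is accounted for (the paper's $\eps/5+\eps/20$ split with $\zeta=\frac{1}{5n^2}$ closes this), but both are trivial adjustments.
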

\begin{proof}
    We consider first the approximate routing $\ff$ computed by the algorithm in
    \Cref{lemma:eroute_cong}.
    By conditions (c) and (d), both $\norm{[\DDo]^\dag \dd_1}_\infty,
    \norm{[[\DDi]^\dag \dd_2}_\infty \le \frac{\phi^2 \eps}{100 \log(2n)}$.
    We also have that, entry-wise $\xi\ddpo \in (1\pm \frac{\phi^2 \eps}{100 \log(2n)}) \ddo$ and 
    $\xi\ddpi \in (1\pm \frac{\phi^2 \eps}{100 \log(2n)}) \ddi$.
    Then,
    \[
        \norm{(\DDpo)^\dag \dd_1}_\infty, \norm{(\DDpi)^\dag \dd_2}_\infty 
        \le \frac{1}{\xi}(1 - \frac{\phi^2 \eps}{100 \log(2n)})^{-1} \frac{\phi^2 \eps}{100 \log(2n)}
        \le \frac{\phi^2 \eps}{90 \xi \log(2n)},
    \]
    where we note that $\frac{\phi^2}{100 \log(2n)} < \frac{1}{10}$ for any
    valid $\phi$ and $n$.
    We also get by (a) that the reweighting $\ww'$ satisfies  
    \[
        \|\ww'\|_1 
        \le (1+ \frac{\phi^2 \eps}{100 \log(2n)})\frac{1}{\xi} \|\ww \|_1
        \le \frac{6m}{\xi}  \min_{e \in E} \ww_e
        \le \frac{6m}{\xi}  \min_{e \in E'} \ww_e',
    \]
    where $E' = \supp(\ww')$.
    To apply \Cref{lemma:eroute_approx}, we first scale all $\ww'$,
    $\dd_1,\dd_2$ by the $\frac{1}{\min_{e \in E'} \ww'}$.
    Note that this scaling does not change the flow as $\DDpi$ and $\DDpo$
    also scales by the same factor.
    We then scale $\dd_1$ and $\dd_2$ by $\frac{90\xi \log(2n)}{\phi^2 \eps}$
    and rescale the computed flow back by $\frac{\phi^2 \eps}{90\xi \log(2n)}$.
    Now, the final flow $\ff$ has $\supp(\ff) \subseteq E'$ and satisfies
    \[
        \norm{(\WW')^\dag \ff}_\infty \le \frac{\eps}{10\xi} \le \frac{\eps}{5}, \quad
        \min_{\zz : \HH_{\vG'}^\top \zz = \dd_1, \TT_{\vG'}^\top \zz = \dd_2} 
        \norm{(\WW')^\dag (\ff-\zz)}_\infty 
        \le \zeta \frac{\phi^2 \eps}{90\xi \log(2n)}
        \le \frac{\zeta \eps}{4},
    \]
    by the input condition that $\xi \ge \frac{1}{2}$.
    We can now consider the rounding quality.
    By the second inequality above, there exists some flow $\xx$ supported on
    $E'$ satisfying $\HH_{\vG'}^\top \xx = \dd_1$ and $\TT_{\vG'}^\top \xx =
    \dd_2$ such that $\norm{(\WW')^\dag \xx}_1 \le \frac{m' \zeta \eps}{4} \le
    \frac{n^2 \zeta \eps}{4}$.
    Thus, the second guarantee of \Cref{lemma:rounding} and our choice of $\zeta
    = \frac{1}{5n^2}$ gives $\norm{(\WW')^\dag \yy}_\infty \le \frac{n^2 \zeta
    \eps}{4} \le \frac{\eps}{20}$ with $\yy$ is supported on $E'$.
    When combined with the first guarantee of \Cref{lemma:rounding} and the
    bipartite condition of $\vG$, we have
    \begin{gather*}
        \HH^\top \aa = \HH^\top \yy + \HH^\top \ff = \dd_1, \quad
        \TT^\top \aa = \TT^\top \yy + \TT^\top \ff = \dd_2, 
        \\
        \norm{(\WW')^\dag \aa}_\infty 
        \le \norm{(\WW')^\dag \ff}_\infty + \norm{(\WW')^\dag \yy}_\infty
        \le \frac{1}{4}\eps,
    \end{gather*}
    giving us both \Cref{item:intpatch_algo1,item:intpatch_algo3}.
    As for \Cref{item:intpatch_algo2}, notice that
    \begin{align*}
        \normop{[\DDi]^{\frac \dag 2} (\vLL_{\vH} - \vLL_{\vG'}) [\DDo]^{\frac \dag 2}} 
        &=
        \normop{[\DDi]^{\frac \dag 2} \BB^\top ((1-\xi)\WW' - \AA) \HH [\DDo]^{\frac \dag 2}} 
        \\
        &=
        \normop{[\DDi]^{\frac \dag 2} \TT^\top ((1-\xi)\WW' - \AA) \HH [\DDo]^{\frac \dag 2}} 
        \\
        &\le 
        (1-\xi) \normop{[\DDi]^{\frac \dag 2} \TT^\top \WW'\HH [\DDo]^{\frac \dag 2}} 
        + \normop{[\DDi]^{\frac \dag 2} \TT^\top \AA \HH [\DDo]^{\frac \dag 2}}
    \end{align*}
    Since $\ddpo \le \frac{1}{\xi}\ddo \le 2\ddo$ and $\ddpi \le
    \frac{1}{2}\ddi$, \Cref{lemma:optoinf} gives that the first term is at most
    $2(1-\xi)$.
    Similarly, by $\HH^\top \aa = \dd_1 \le \frac{\eps}{10} \ddo$,
    $\TT^\top \aa = \dd_2 \le \frac{\eps}{10} \ddi$, we get from
    \Cref{lemma:optoinf} that the second term is at most $\frac{1}{10}\eps$.

    Finally, the runtime is dominated by the application of
    \Cref{lemma:eroute_approx}, which runs in $\bO(m' \log \frac{n}{\delta})$
    time with $m' = |E'| = \nnz(\ww')$.
\end{proof}

\begin{algorithm2e}[t!] \label{alg:patching_star}
    \caption{$\PASalgo(\vG=(V,E,\ww),\ww',\xi)$}
    \codeInput $\vG=(V,E,\ww)$ a simple weighted directed bipartite graph 
    with bipartition $V = C \cup R$,$E \subseteq C \times R$ and edge-vertex
    transfer matrix $\BB = \HH - \TT$, $\ww'$ a reweighting of the edges, $\xi
    \in (0,1]$\;
    Let $\vH$ be $\vG$ but with one extra vertex $x$.
    $\dd_1 = \HH^\top (\ww - \xi\ww')$, $\dd_2 = \TT^\top (\ww - \xi\ww')$\;
    For each $v \in C$ with $[\dd_1]_v \ne 0$, $E_{\vH} \gets E_{\vH} \cup
    (v,x)$, and $[\ww_{\vH}]_{(v,x)} \gets [\dd_1]_v$\;
    For each $v \in R$ with $[\dd_2]_v \ne 0$, $E_{\vH} \gets E_{\vH} \cup
    (x,v)$, and $[\ww_{\vH}]_{(x,v)} \gets [\dd_2]_v$\;
    \Return{$\vH = (V \cup \{x\}, E_{\vH},\ww_{\vH})$}\;
\end{algorithm2e}

Finally, we consider patching using directed star graphs.
\begin{lemma}[Star patching] \label{lemma:starpatch_algo}
    Given $\vG$, $\ww'$ as in the input conditions and $\xi$ satisfying that
    $\xi\ddpo \le \ddo$ and $\xi\ddpi \le \ddi$ for $\ddpi,\ddpo$
    the in and out degrees of $\vG'=(V,E,\ww')$, \PAS 
    (Algorithm~\ref{alg:patching_star}) returns in time $O(m)$ a simple weighted
    directed graph $\vH$ such that
    \begin{enumerate}
        \item \label{item:starpatch_algo1}
            $\ddo= [\ddo_{\vH}]_V$, $\ddi = [\ddi_{\vH}]_V$, and for the Schur
            complement $\vS = \Sc(\vH,V)$, $\ddo_{\vS} = \ddo_{\vH}$,
            $\ddi_{\vS} = \ddi_{\vH}$,
        \item \label{item:starpatch_algo2}
            \begin{align*}
                &\normop{[\DDi]^{\frac \dag 2} \Par{\Sc(\vLL_{\vH},V) - \vLL_{\vG'}} [\DDo]^{\frac \dag 2}}
                \\
                \le& 
                (1-\xi) + \max\Brace{\norm{\vone_R - [\DDi]^\dag
                \ddpi}_\infty, \norm{\vone_C - [\DDo]^\dag
                \ddpo}_\infty},
            \end{align*}
        \item \label{item:starpatch_algo3}
            $m_{\vH} \le \nnz(\ww') + n$ and $|V_{\vH}| = n + 1$.
    \end{enumerate}
\end{lemma}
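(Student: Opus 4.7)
The plan is to compute the Schur complement $\Sc(\vLL_{\vH},V)$ in closed form and then reduce the star patching to the external patching already analyzed in \Cref{lemma:extpatch_algo}. Since $\vG$ is bipartite from $C$ to $R$ and $\supp(\ww')\subseteq E$, summing the entries of $\dd_1$ and $\dd_2$ gives $\|\dd_1\|_1 = \|\dd_2\|_1 = s := \|\ww\|_1 - \xi\|\ww'\|_1$, so the auxiliary vertex $x$ is Eulerian in $\vH$ with in- and out-degree $s$. Writing $\vH = (\xi\vG') \cup \vec{K}^\star$, where $\vec{K}^\star$ is the directed star at $x$ with edges $\{(v,x):v\in C\}$ and $\{(x,u):u\in R\}$ weighted by $\dd_1$ and $\dd_2$ respectively, the $\xi\vLL_{\vG'}$ block has no entries in the row or column indexed by $x$. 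Applying the block LU formula \eqref{eq:lufact} with $F=\{x\}$ and evaluating the rank-one correction $\vLL_{Vx}\cdot s^{-1}\cdot \vLL_{xV}$ directly from the star weights yields
\[
    \Sc(\vLL_{\vH},V) \;=\; \xi\vLL_{\vG'} + \vLL_{\vec{K}},
\]
where $\vec{K}$ is the directed complete bipartite graph on $V$ with edge $(v,u)$ of weight $[\dd_1]_v[\dd_2]_u/s$ for $v\in C$, $u\in R$.

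From this closed form, item 1 follows by summing: for $v\in C$, $[\ddo_{\Sc(\vH,V)}]_v = \xi[\ddpo]_v + [\dd_1]_v \cdot s^{-1}\sum_{u\in R}[\dd_2]_u = \xi[\ddpo]_v + [\dd_1]_v = [\ddo]_v$, and analogously for in-degrees and for $\vH$ restricted to $V$. Item 3 is immediate since the algorithm introduces at most $|C|+|R|=n$ patching edges and one auxiliary vertex. For item 2, the key observation is that $\vLL_{\vec{K}} = \BB^\top \FF^* \HH$, where the non-negative flow $\ff^*\in\R_{\ge 0}^{C\times R}$ with $\ff^*_{(v,u)} = [\dd_1]_v[\dd_2]_u/s$ satisfies $\HH^\top\ff^* = \dd_1$ and $\TT^\top\ff^* = \dd_2$ — exactly the flow structure of \PAE, only with the greedy flow replaced by the rank-one product $\ff^*$. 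Using $\xi\ddpo\le\ddo$ together with the definition of $\alpha$ to obtain $\dd_1 \le (1-\xi+\alpha)\ddo$ (and similarly $\dd_2 \le (1-\xi+\alpha)\ddi$), then rescaling the normalizing diagonals by $(1-\xi+\alpha)$ in \Cref{lemma:vertexflow} and using the bipartite structure to discard $\HH^\top\FF^*\HH$ after left-multiplication by $[\DDi]^{\frac{\dag}{2}}$, the proof of \Cref{lemma:extpatch_algo} carries over verbatim with $\ff^*$ in place of the greedy flow to give the desired operator-norm bound $(1-\xi)+\alpha$.

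I expect the main obstacle to be the Schur complement identification in the first step: one must verify that eliminating the single Eulerian vertex $x$ in the block-LU factorization produces exactly the rank-one complete bipartite Laplacian $\vLL_{\vec{K}}$ (not merely a Laplacian with the correct degrees), and that the $\xi\vLL_{\vG'}$ summand passes through unchanged because it has no support on $x$. Once this closed form is in hand, items 1 and 3 are routine bookkeeping, and item 2 reduces cleanly to the external-patching spectral analysis already carried out in \Cref{lemma:extpatch_algo}, applied to the rank-one product flow $\ff^*$ rather than the greedy flow produced by \PAE.
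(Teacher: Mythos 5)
Your proposal is correct and follows essentially the same route as the paper's proof: the paper likewise identifies $\Sc(\vLL_{\vH},V)$ as $\xi\vLL_{\vG'}$ plus the directed complete bipartite Laplacian with the rank-one product weights $\ff_{(u,v)}=[\dd_1]_u[\dd_2]_v/\|\dd_1\|_1$, verifies the degree identities by direct summation, and obtains the operator-norm bound by applying \Cref{lemma:vertexflow} to that product flow with $\dd_1\le(1-\xi+\alpha)\ddo$ and $\dd_2\le(1-\xi+\alpha)\ddi$. The Schur-complement identification you flag as the main obstacle goes through exactly as you describe, since eliminating the single vertex $x$ with $\vLL_{xx}=\|\dd_1\|_1$ yields precisely the rank-one correction $s^{-1}\dd_2\dd_1^\top$ and the $\xi\vLL_{\vG'}$ block is untouched.
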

\begin{proof}
    It's easy to see by our construction that the in and out degrees of $\vH$ on
    $V$ is the same as the in and out degrees of $\vG$ respectively.
    Notice that the Schur complement $\vS = \Sc(\vH,V)$ is the union of $\xi\vG'$
    and a complete directed bipartite graph.
    Linear algebraically, this is equivalently to
    \[
        \Sc(\vLL,V) = \vLL_{\vS} = \xi \cdot \BB^\top \WW' \HH + 
        \BB^\top \FF \HH,
    \]
    where $\FF = \diag{\ff}$ is defined by $\ff_{(u,v)} = \frac{[\dd_1]_u
    [\dd_2]_v}{\|\dd_1\|_1}$ and $\BB=\HH-\TT$ is the edge-vertex transfer
    matrix of the complete directed bipartite graph from $C$ to $R$.
    Then, the out-degrees of $\vS$ is
    \begin{align*}
        \ddo_{\vS} 
        &= \HH^\top(\xi \ww' + \ff) = 
        \xi \HH^\top \ww' + \frac{1}{\|\dd_2\|_1} \HH^\top \dd_1 \dd_2^\top
        \vone_V
        \\
        &=
        \xi \HH^\top \ww' + \HH^\top \dd_1
        =
        \xi \HH^\top \ww' + \HH^\top (\ww - \xi \ww')
        = \ddo,
    \end{align*}
    where we note that $\|\dd_1\|_1 = \|\dd_2\|_1$.
    An analogous argument gives us $\ddi_{\vS} = \ddi$, completing our proof of
    \Cref{item:starpatch_algo1}.

    Consider \Cref{item:starpatch_algo2}.
    Let 
    \[
        \alpha \defeq \max\Brace{\norm{\vone_R - [\DDi]^\dag
        \ddpi}_\infty, \norm{\vone_C - [\DDo]^\dag \ddpo}_\infty}.
    \]
    We have $\dd_1 \le (1-(1-\alpha)\xi)\ddo \le (1-\xi+\alpha) \ddo$ and
    $\dd_2 \le (1-\xi+\alpha) \ddi$.
    Then, the operator norm bound on the Schur complement $\vLL_{\vS}$ then
    follows directly by \Cref{lemma:vertexflow} using our $\ff$ above.

    The sparsity and vertex number guarantees \Cref{item:extpatch_algo3} follows
    directly from our algorithm.
\end{proof}

With all the lemmas on patching, we can obtain the following three
sparsification guarantees of \SS (Algorithm~\ref{alg:sparsify_static}) using
different patching methods above.

\begin{lemma} \label{lemma:subspar_int}
    \SS (Algorithm~\ref{alg:sparsify_static}) satisfies that
    given $\vG,\eps,\delta$ as in the input conditions and a patching algorithm
    \PAI with $\eta = 200 \phi^{-2}\log(2n)$, 
    it returns $\vH$ a reweighted subgraph of $\vG$ such that with probability
    at least $1-\delta$,
    $\vH$ is an $\eps$-degree-preserving directed spectral sparsifier
    of $\vG$ with $m_{\vH} = O(\eps^{-2} \phi^{-8} n \log^2(n) \log
    (\frac{n}{\delta}))$.
    The algorithm runs in time $O(m + \eps^{-2} \phi^{-8} n \log^2(n)
    \log^2(\frac{n}{\delta}) \polyloglog(n))$.
\end{lemma}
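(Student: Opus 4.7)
}
The high level strategy is to invoke the entry sampling lemma (\Cref{lemma:entrysample}) on the bipartite adjacency $\AA$ of $\vG$ with a carefully chosen accuracy parameter $\eps'$, verify that the random output $\vG' = (V,E,\ww')$ of the sampling step satisfies all preconditions of the internal patching lemma (\Cref{lemma:intpatch}), and then combine the two error contributions to obtain an operator norm bound in the $\LL_G$ metric via Cheeger's inequality. The core parameter choice is $\eps' = \Theta\!\left(\eps\phi^{2}/\log n\right)$: small enough so that the degree-imbalance hypothesis (d) of \Cref{lemma:intpatch} holds, but still loose enough that $\rho$ as defined in \SSalgo satisfies the sampling probability lower bound required by \Cref{lemma:entrysample}.

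First I would note that by AM-GM,
\[
    \frac{1}{[\ddo]_u} + \frac{1}{[\ddi]_v} \;\ge\; \frac{2}{\sqrt{[\ddo]_u\,[\ddi]_v}},
\]
so with $\rho = 400\,\css\,\eps^{-2}\phi^{-4}\eta^{2}\log(8n/\delta)$ and $\eta = 200\phi^{-2}\log(2n)$, every retained edge has $p_e \ge \min\{1,\rho\,\ww_e/\sqrt{[\ddo]_u[\ddi]_v}\}$, matching the hypothesis of \Cref{lemma:entrysample} with accuracy $\eps' = \eps\phi^{2}/(200\log(2n))$ and failure $\delta/2$. This yields, with probability at least $1-\delta/2$, the degree-normalized spectral bound and the two coordinate-wise degree bounds of \eqref{eq:degappr} with $\eps'$.

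Next I would verify the hypotheses of \Cref{lemma:intpatch} for the inputs $(\vG,\ww',\xi,\delta/2)$ with $\xi = (1+\eps\phi^{2}/(16\eta))^{-1}$. Condition (a) follows from the bounded weight ratio of the expander decomposition and the fact that $p_e \le 1$ forces $\ww'_f \ge \ww_f$ on its support. For condition (b), the operator norm bound from the sampling transfers, using the bipartite block structure of the Laplacian, to $\|\DD_{G}^{\dag/2}(\LL_G - \LL_{G'})\DD_{G}^{\dag/2}\|_{\mathrm{op}} \le 2\eps'$, which together with \Cref{lemma:cheeger} shows that $G'$ is a $(\phi/\sqrt{2})$-expander; the factor of $\sqrt 2$ is absorbed into the constant in $\eta$. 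Conditions (c) and (d) follow by direct computation: from $\ddpi \le (1+\eps')\ddi$ and $1-\xi \le \eps\phi^{2}/(16\eta)$, both $\xi\ddpi \le \ddi$ and $\|[\DDi]^\dag(\ddi - \xi\ddpi)\|_\infty \le (1-\xi)+\eps' \le \eps\phi^{2}/(100\log(2n))$ hold (and symmetrically for the out-degrees). Applying \Cref{lemma:intpatch} then yields, with conditional probability $\ge 1-\delta/2$, a reweighted subgraph $\vH$ with exactly preserved degrees and $\|[\DDi]^{\dag/2}(\vLL_{\vH}-\vLL_{\vG'})[\DDo]^{\dag/2}\|_{\mathrm{op}} \le 2(1-\xi)+\eps/10$.

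To close the argument I would combine the sampling error and the patching error by a triangle inequality on the degree-normalized operator norm, obtaining
$\|[\DDi]^{\dag/2}(\vLL_{\vH}-\vLL_{\vG})[\DDo]^{\dag/2}\|_{\mathrm{op}} = O(\eps\phi^{2}/\log n)$, and then convert to the $\LL_G$ metric using $\DD \pleq (2/\phi^{2})\LL_G$ from \Cref{lemma:cheeger}, which introduces exactly the $\phi^{-2}$ factor that the choice of $\eps'$ was tuned to cancel; combined with the degree-preservation from \Cref{lemma:intpatch}, \Cref{lemma:dispec_equiv} then gives the claimed $\eps$-degree-preserving directed spectral sparsification. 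Sparsity and runtime follow by a standard expectation-plus-Chernoff calculation on $\sum_e p_e = O(\rho n) = O(\eps^{-2}\phi^{-8} n \log^{2} n\,\log(n/\delta))$ and by bounding the approximate electrical routing cost inside \Cref{lemma:intpatch} via \Cref{prop:js21}. The main obstacle I anticipate is the intricate bookkeeping in step three: the $\phi$-expander parameter degrades across the sampling, the threshold in hypothesis (d) depends on $\eps$, $\phi$, and $\log n$ simultaneously, and the $\xi$ in \SSalgo must be chosen compatibly with $\eta$ so that $(1-\xi)+\eps'$ fits under $\eps\phi^{2}/(100\log(2n))$ without blowing up the final sparsity; the AM-GM reformulation of the sampling probabilities and the tight choice $\eps' = \eps\phi^{2}/(200\log(2n))$ together resolve this balancing.
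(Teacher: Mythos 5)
Your overall route is the same as the paper's (invoke \Cref{lemma:entrysample} on the sampled reweighting, verify the hypotheses of \Cref{lemma:intpatch}, combine errors by a triangle inequality in the degree-normalized norm, and convert with Cheeger), but the parameter calibration around the internal patching has a genuine gap. The algorithm's oversampling $\rho = 400\css\eps^{-2}\phi^{-4}\eta^2\log(8n/\delta)$ with $\eta = 200\phi^{-2}\log(2n)$ corresponds to sampling accuracy $\eps' = \frac{\eps\phi^2}{20\eta} = \Theta(\eps\phi^4/\log n)$, not your claimed $\eps' = \eps\phi^2/(200\log(2n))$; the extra $\phi^{2}$ is not a bookkeeping nicety but is exactly what pays for the congestion blow-up of the internal patching. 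The electrical routing inside \PAIalgo{} incurs a multiplicative overhead of order $\phi^{-2}\log n$ (this is why $\eta(\phi)=200\phi^{-2}\log(2n)$ appears in the statement), so a degree imbalance of size $\alpha$ produces a degree-normalized spectral error of order $\eta\alpha$, not $\alpha$.

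Concretely, you apply \Cref{lemma:intpatch} with its error parameter set to $\eps$, which (per its item 2) only gives $\normsop{[\DDi]^{\frac\dag2}(\vLL_{\vH}-\vLL_{\vG'})[\DDo]^{\frac\dag2}} \le 2(1-\xi)+\eps/10$; the term $\eps/10$ is \emph{not} $O(\eps\phi^2/\log n)$, so your asserted combined bound $\normsop{[\DDi]^{\frac\dag2}(\vLL_{\vH}-\vLL_{\vG})[\DDo]^{\frac\dag2}} = O(\eps\phi^2/\log n)$ does not follow, and after multiplying by the Cheeger factor $2/\phi^2$ you only get a final error of order $\eps/\phi^2$ (recall $\phi = \Theta(\log^{-2}n)$ in the intended application, so this is far from $\eps$). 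The paper instead invokes \Cref{lemma:intpatch} with error parameter $\Theta(\phi^2\eps)$, so the patching error is $\Theta(\phi^2\eps)$ and survives the $2/\phi^2$ conversion; to satisfy that lemma's hypothesis (d), whose threshold is then $\Theta\bigl(\frac{(\phi/4)^2}{\log n}\cdot\phi^2\eps\bigr) = \Theta(\eps\phi^4/\log n)$ for the degraded expander $G'$, one needs the full accuracy $\eps\phi^2/(20\eta)$ delivered by the $\eta^2$ factor in $\rho$ — the very factor your choice of $\eps'$ drops. (This also explains the $\phi^{-8}$ in the sparsity bound, which your own sparsity computation uses but your accuracy claim is inconsistent with.) The fix is to redo the calibration with $\eps_{\mathrm{patch}} = \Theta(\phi^2\eps)$ and $\eps' = \Theta(\eps\phi^2/\eta)$; the remaining steps of your argument (AM-GM for the sampling probabilities, conditions (a)–(c), degradation of the expansion of $G'$, union bound, Chernoff for sparsity, and runtime via \Cref{prop:js21}) are sound and match the paper.
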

\begin{proof}
    By the sampling algorithm and \Cref{lemma:entrysample}, we have with
    probability $\ge 1-\frac{3\delta}{8}$ that $\vG' = (V,E,\ww')$ satisfies
    \begin{gather*}
        \normop{[\DDi]^{\frac \dag 2} \TT^\top (\WW'-\WW)\HH [\DDo]^{\frac \dag 2}} 
        \le
        \frac{\eps\phi^2}{20\eta},
        \\
        \norm{[\DDi]^{-1} (\ddpi - \ddi)}_\infty
        \le
        \frac{\eps\phi^2}{20\eta}, \quad
        \norm{[\DDo]^{-1} (\ddpo - \ddo)}_\infty
        \le
        \frac{\eps\phi^2}{20\eta}.
    \end{gather*}
    We condition on its success for the rest of the proof.
    Notice that
    \begin{align*}
        \normop{[\DDo]^{\frac \dag 2} \HH^\top (\WW'-\WW) \HH [\DDo]^{\frac \dag 2}}
        &= \normop{[\DDo]^{\frac \dag 2} (\DDpo - \DDo) [\DDo]^{\frac \dag 2}}
        \\
        &= \norm{[\DDo]^\dag (\ddpo - \ddo)}_\infty
        \le \frac{\eps\phi^2}{20\eta},
    \end{align*}
    we then have
    \begin{align*}
        \normop{\DD^{\frac \dag 2} (\vLL_{\vG'} - \vLL_{\vG}) \DD^{\frac \dag 2}} 
        &\le
        \normop{[\DDi]^{\frac \dag 2} \TT^\top (\WW'-\WW)\HH [\DDo]^{\frac \dag 2}} 
        + \normop{[\DDo]^{\frac \dag 2} \HH^\top (\WW'-\WW)\HH [\DDo]^{\frac \dag 2}} 
        \\
        &\le
        \frac{\eps\phi^2}{10\eta},
    \end{align*}
    where $\DD = \diag{\dd}$ for $\dd \in \R^{C\cup R}$ the degrees of $G =
    \und(\vG)$ satisfying $\dd = \ddi + \ddo$ by our bipartite requirments.
    By \Cref{fact:dispec_top},
    \[
        \vLL_{\rev(\vG)} - \vLL_{\rev(\vG')}
        = -\BB^\top (\WW - \WW') \TT  
        = \HH^\top (\WW - \WW') \BB
        = (\vLL_{\vG} - \vLL_{\vG'})^\top,
    \]
    and we get undirected approximation for $G' \defeq \und(\vG')$
    \begin{align*}
        \normop{\DD^{\frac \dag 2} (\LL_{G'} - \LL_{G}) \DD^{\frac \dag 2}} 
        =&
        \normop{\DD^{\frac \dag 2} (\vLL_{\vG'}+\vLL_{\rev(\vG')} -
        \vLL_{\vG}-\vLL_{\rev(\vG)}) \DD^{\frac \dag 2}} 
        \\
        &\le 
        2 \normop{\DD^{\frac \dag 2} (\vLL_{\vG'} - \vLL_{\vG}) \DD^{\frac \dag 2}} 
        \le
        \frac{\eps\phi^2}{5\eta}.
    \end{align*}
    Cheeger's inequality (\Cref{lemma:cheeger}) gives $G'$ is a
    $\frac{\eps}{2\eta}$-spectral approximation of $G$.
    As spectral approximation also guarantees cut approximation, using the
    conditions $\eps < 1$, $\eta \ge 1$ and the fact that $G$ is a
    $\phi$-expander, $G'$ is at least a $\frac{1}{4}\phi$-expander.
    This gives us the condition (b) in \Cref{lemma:intpatch}.
    Condition (a) is guaranteed by our input assumptions and the fact that for
    any edge $e \in E$ that has a probability of being sampled less than 1, it
    must have $\ww_e' > \ww_e$ if sampled.
    Consider conditions (c) and (d) in \Cref{lemma:intpatch}.
    By our approximate out degrees guarantees above and the choice that
    $\xi = (1+\frac{\eps\phi^2}{16\eta})^{-1}$, we get
    \[
        \vzero \le \ddo - \xi \ddpo \le \frac{\eps \phi^2}{8\eta} \ddo
        \le \frac{(\phi/4)^2}{100 \log(2n)} (\phi^2\eps) \cdot \ddo
    \]
    by our choice of $\eta = 200 \phi^{-2} \log(2n)$.
    The same inequality holds for the in degrees as well.
    With all the conditions, \Cref{item:intpatch_algo1} first ensures that $\vH$
    is degree-preserving.
    Note that $G$ must be connected, as it is an expander.
    We get from our choice of $\xi$ that 
    $1-\xi \le \frac{1}{16}\frac{\eps\phi^2}{\eta} \le \frac{1}{16}(\phi^2\eps)$
    as $\eta \ge 1$.
    When combined with the spectral error bound and degree-preserving guarantees
    from above, \Cref{item:intpatch_algo2} of \Cref{lemma:intpatch} gives
    \begin{align*}
        &\normop{\DD^{\frac \dag 2} (\vLL_{\vH} - \vLL_{\vG}) \DD^{\frac \dag 2}}
        \\
        \le&
        \normop{[\DDo]^{\frac \dag 2} \HH^\top (\WW_{\vH} - \WW) \HH [\DDo]^{\frac \dag 2}}
        + \normop{[\DDi]^{\frac \dag 2} \TT^\top (\WW_{\vH} - \WW) \HH [\DDo]^{\frac \dag 2}}
        \\
        =&
        \normop{[\DDi]^{\frac \dag 2} (\vLL_{\vH} - \vLL_{\vG}) [\DDo]^{\frac \dag 2}}
        \le
        \normop{[\DDi]^{\frac \dag 2} (\vLL_{\vG'} - \vLL_{\vG}) [\DDo]^{\frac \dag 2}}
        + \normop{[\DDi]^{\frac \dag 2} (\vLL_{\vH} - \vLL_{\vG'}) [\DDo]^{\frac \dag 2}}
        \\
        \le& 
        \frac{\eps\phi^2}{10\eta} + \frac{2}{16}(\phi^2\eps) +
        \frac{1}{10}(\phi^2\eps)
        \le \frac{\phi^2\eps}{2}.
    \end{align*}
    Using Cheeger's inequality (\Cref{lemma:cheeger}) and degree-preserving
    guarantees, we get the final spectral error bound of
    \[
        \normop{\LL_G^{\frac \dag 2} (\vLL_{\vH} - \vLL_{\vG}) \LL_G^{\frac \dag 2}}
        \le 
        \frac{2}{\phi^2} \normop{\DD^{\frac \dag 2} (\vLL_{\vH} - \vLL_{\vG}) \DD^{\frac \dag 2}}
        \le
        \eps.
    \]
    We note that \Cref{item:intpatch_algo3} of \Cref{lemma:intpatch} guarantees
    no edge weight of $\vH$ is negative.

    We now consider the number of edges in $\vH$.
    A simple application of the multiplicative Chernoff's inequality ensures
    that the sampling algorithm gives $\nnz(\ww') = O(\eps^{-2} \phi^{-6}
    n\log^2(n) \log \frac{n}{\delta})$ with probability at least $1 -
    \frac{\delta}{8}$ for any valid $n$.
    When combined with \Cref{item:intpatch_algo3} of \Cref{lemma:intpatch}, we
    get the desired sparsity guarantee.
    As for the runtime, we get the first term $O(m)$ comes from the sampling
    step.
    The second term follows by the runtime of \PAI given in
    \Cref{lemma:intpatch} given our sparsity bound on $\vG'$ above.
    The final probability follows by a union bound on all the
    probablistic events mentioned above.
\end{proof}

\begin{lemma} \label{lemma:subspar_ext}
    \SS (Algorithm~\ref{alg:sparsify_static}) satisfies that
    given $\vG,\eps,\delta$ as in the input conditions and a patching algorithm
    \PAE with $\eta = 1$, it returns a simple weighted directed graph $\vH$ such
    that with probability at least $1-\delta$, $\vH$ is a
    $\eps$-degree-preserving directed spectral sparsifier of $\vG$ with $m_{\vH}
    = O(\eps^{-2} \phi^{-4} n \log \frac{n}{\delta})$.
    The algorithm runs in time $O(m)$.
\end{lemma}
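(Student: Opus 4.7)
The plan is to mirror the proof of Lemma \ref{lemma:subspar_int}, substituting the external patching analysis \Cref{lemma:extpatch_algo} for the internal one. The key simplification is that with $\eta = 1$, no Laplacian solve is needed, so we do not have to verify that the sampled graph $\vG'$ remains an expander—only degree bounds matter for \Cref{lemma:extpatch_algo}, and there is no $\polylog(n) \cdot \phi^{-2}$ congestion overhead to track.

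First I would invoke \Cref{lemma:entrysample} on the bipartite edge matrix of $\vG$ with oversampling parameter $\rho = 400 \css \eps^{-2} \phi^{-4} \log(8n/\delta)$, conditioning on the event (of probability $\ge 1-3\delta/8$) that the resulting reweighting $\ww'$ simultaneously satisfies
\[
\normop{[\DDi]^{\frac{\dag}{2}} \TT^\top (\WW'-\WW) \HH [\DDo]^{\frac{\dag}{2}}} \le \tfrac{\eps \phi^2}{20}
\]
and entrywise degree approximations $\norm{[\DDo]^\dag(\ddpo-\ddo)}_\infty, \norm{[\DDi]^\dag(\ddpi-\ddi)}_\infty \le \eps\phi^2/20$. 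Then the choice $\xi = (1+\eps\phi^2/16)^{-1}$ made in \Cref{alg:sparsify_static} guarantees $\xi \ddpo \le \ddo$ and $\xi \ddpi \le \ddi$ entrywise, so the preconditions of \Cref{lemma:extpatch_algo} hold. Applying \PAE yields the exact degree preservation in item 1 and an additional spectral error bounded by $(1-\xi) + \eps\phi^2/20 \le \eps\phi^2/10$ in the $[\DDi]^{\dag/2} \cdot [\DDo]^{\dag/2}$ operator norm.

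Next I would combine the sampling and patching errors via the triangle inequality and reduce to an undirected operator norm bound on $\DD^{\dag/2}(\vLL_{\vH}-\vLL_{\vG})\DD^{\dag/2}$, using the decomposition
\[
\vLL_{\vH} - \vLL_{\vG} = \BB^\top(\WW_{\vH}-\WW)\HH = \HH^\top(\WW_{\vH}-\WW)\HH - \TT^\top(\WW_{\vH}-\WW)\HH
\]
together with \Cref{lemma:optoinf} applied to the diagonal $\HH^\top(\WW_{\vH}-\WW)\HH$ term (which vanishes in the degree-preserving case). This gives total error $O(\eps\phi^2)$ in the degree-normalized norm. Invoking Cheeger's inequality (\Cref{lemma:cheeger}) on the $\phi$-expander $G$ then inflates by $2/\phi^2$ to produce the required bound
\[
\normop{\LL_G^{\frac{\dag}{2}}(\vLL_{\vH}-\vLL_{\vG}) \LL_G^{\frac{\dag}{2}}} \le \eps,
\]
which, combined with the degree preservation, establishes $\eps$-degree-preserving spectral approximation via \Cref{lemma:dispec_equiv}.

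For the sparsity, the expected number of sampled edges is $\rho \sum_{(u,v)\in E} \ww_{(u,v)}(1/[\ddo]_u + 1/[\ddi]_v) = O(\rho \cdot n)$, and a multiplicative Chernoff bound with failure probability $\delta/8$ gives the same bound with high probability; \PAE adds at most $n$ more edges (\Cref{lemma:extpatch_algo} item 3). The runtime is $O(m)$: the sampling pass is linear, and \PAE runs in $O(m)$ by \Cref{lemma:extpatch_algo}. A union bound over the sampling event, Chernoff event, and expander decomposition event yields total failure probability at most $\delta$. The only step requiring care is bookkeeping the degree inequalities so that the patching precondition is exactly met under the sampled degrees; everything else is a direct specialization of the \Cref{lemma:subspar_int} argument.
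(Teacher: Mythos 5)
Your proposal is correct and follows exactly the route the paper intends: the paper omits this proof, stating it is almost identical to that of \Cref{lemma:subspar_int}, and your argument is precisely that specialization with \Cref{lemma:extpatch_algo} in place of \Cref{lemma:intpatch} and $\eta=1$ removing the expander-preservation and congestion bookkeeping. The only nits are cosmetic: the patching error is $\eps\phi^2/16+\eps\phi^2/20=\tfrac{9}{80}\eps\phi^2$, slightly above your stated $\eps\phi^2/10$ (still well within the $\phi^2\eps/2$ slack before Cheeger), and there is no expander-decomposition event to union-bound over at this level since \SS operates on a single expander.
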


\begin{lemma} \label{lemma:subspar_star}
    \SS (Algorithm~\ref{alg:sparsify_static}) satisfies that
    given $\vG,\eps,\delta$ as in the input conditions and a patching algorithm
    \PAS with $\eta = 1$, it returns a simple weighted directed graph $\vH$ on
    vertices $V \cup \{x\}$ such that with probability at least $1-\delta$,
    $\Sc(\vH,V)$ is a $\eps$-degree-preserving directed spectral sparsifier of
    $\vG$ with $m_{\vH} = O(\eps^{-2} \phi^{-4} n \log \frac{n}{\delta})$. 
    The algorithm runs in time $O(m)$.
\end{lemma}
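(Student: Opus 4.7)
The proof mirrors that of \Cref{lemma:subspar_int}, with two simplifications that arise because $\eta = 1$: every sampling threshold tightens, and, crucially, \Cref{lemma:starpatch_algo} has no expander precondition on $\und(\vG')$. Consequently the detour in the proof of \Cref{lemma:subspar_int} that re-established $\und(\vG')$ as a $\phi/4$-expander can be entirely bypassed, and only the weaker degree-dominance inequalities $\xi \ddpo \le \ddo$ and $\xi \ddpi \le \ddi$ need to be verified.

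First, I invoke \Cref{lemma:entrysample} on the $C \times R$ weighted adjacency of $\vG$ with effective precision $\tfrac{\eps \phi^2}{20}$ and failure probability $\delta/2$. When $\eta = 1$, the value $\rho = 400\css \eps^{-2} \phi^{-4} \log(8n/\delta)$ set in Algorithm~\ref{alg:sparsify_static} matches this rescaling exactly. Thus with probability at least $1 - \delta/2$ the sampled $\vG' \defeq (V, E, \ww')$ simultaneously satisfies
\[
    \normop{[\DDi]^{\frac \dag 2} \TT^\top (\WW' - \WW) \HH [\DDo]^{\frac \dag 2}} \le \tfrac{\eps \phi^2}{20}, \quad \norms{[\DDo]^\dag(\ddpo - \ddo)}_\infty, \norms{[\DDi]^\dag(\ddpi - \ddi)}_\infty \le \tfrac{\eps \phi^2}{20}.
\]

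Second, with $\xi = (1 + \eps\phi^2/16)^{-1}$, the degree bounds immediately give $\xi \ddpo \le \ddo$ and $\xi \ddpi \le \ddi$, so I can invoke \Cref{lemma:starpatch_algo} on $(\vG, \ww', \xi)$ to produce $\vH$ on $V \cup \{x\}$. Item~\ref{item:starpatch_algo1} yields $\ddo_{\Sc(\vH, V)} = \ddo$ and $\ddi_{\Sc(\vH, V)} = \ddi$ exactly, while Item~\ref{item:starpatch_algo2} bounds
\[
    \normop{[\DDi]^{\frac \dag 2} \Par{\Sc(\vLL_{\vH}, V) - \vLL_{\vG'}} [\DDo]^{\frac \dag 2}} \le (1 - \xi) + \tfrac{\eps \phi^2}{20} \le \tfrac{\eps \phi^2}{16} + \tfrac{\eps \phi^2}{20}.
\]
Combining with the sampling residual via the triangle inequality and the bipartite decomposition used in \Cref{lemma:subspar_int} --- exact degree preservation makes the $\HH^\top(\WW_\bullet - \WW)\HH$ block vanish, so only the $\TT^\top$ pieces contribute --- yields $\normop{\DD^{\frac \dag 2}(\Sc(\vLL_{\vH}, V) - \vLL_{\vG}) \DD^{\frac \dag 2}} \le \tfrac{\eps \phi^2}{2}$. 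Cheeger's inequality (\Cref{lemma:cheeger}) on the $\phi$-expander $G$ converts this to a bound of $\eps$ in the $\LL_G^{\dag/2}$ norm, and, combined with exact degree preservation, \Cref{lemma:dispec_equiv} delivers the claimed degree-balance preserving spectral sparsifier.

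Finally, a standard multiplicative Chernoff bound on the independent Bernoulli samples gives $\nnz(\ww') = O(\eps^{-2} \phi^{-4} n \log(n/\delta))$ with probability $\ge 1 - \delta/4$; Item~\ref{item:starpatch_algo3} of \Cref{lemma:starpatch_algo} contributes at most $n$ further edges and one auxiliary vertex, so $m_{\vH}$ matches the claimed sparsity. The runtime is $O(m)$, dominated by the $O(m)$ sampling pass and the $O(m)$ star-patching step, and a union bound over the three failure events keeps the total failure probability under $\delta$. No step is substantively new relative to \Cref{lemma:subspar_int}; the main subtlety --- the bipartite conversion between the $[\DDi]^{\frac \dag 2}, [\DDo]^{\frac \dag 2}$ and $\DD^{\frac \dag 2}$ operator norms via exact degree preservation --- is identical to the one already executed there.
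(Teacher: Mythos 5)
Your proposal is correct and matches the paper's intent: the paper omits this proof, stating it "follows almost identically to that of \Cref{lemma:subspar_int}", and your argument is exactly that adaptation — swap \Cref{lemma:intpatch} for \Cref{lemma:starpatch_algo}, note that the expander-recertification detour and condition (d) are no longer needed since star patching only requires $\xi\ddpo\le\ddo$ and $\xi\ddpi\le\ddi$, carry the error through the same $[\DDi]^{\frac\dag 2},[\DDo]^{\frac\dag 2}\to\DD^{\frac\dag 2}$ conversion, and finish with Cheeger. The constants, sparsity accounting, and $O(m)$ runtime all check out.
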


The proofs of \Cref{lemma:subspar_ext,lemma:subspar_star} are omitted as they
follow almost identically to that of \Cref{lemma:subspar_int}.
We are now ready to prove
\Cref{thm:dispec_ext_static,thm:dispec_int_static,thm:dispec_star_static}.
Again, we omit the proof of \Cref{thm:dispec_ext_static} and focus on the ones
that are harder to prove.
\begin{proof}[Proof of \Cref{thm:dispec_int_static}]
    By the union property \Cref{lemma:dicut_union} and the error guarantees in
    \Cref{lemma:subspar_int}, we get the final graph $\vH$ is a
    $\eps$-degree-preserving directed spectral approximation to $\vG$.
    Each $G_i$ is a $1/O(\log^2 n)$ expander.
    Our parameter choices and \Cref{lemma:subspar_int} then gives 
    \[
        m_{\vH_i} = O(\eps^{-2} n_i \log^{18}(n) \log(\frac{n}{\delta})),
    \] 
    with $n_i$ the number of non-trivial vertices\footnotemark~in $\vG_i$.
    \footnotetext{A vertex is trivial if it has no incident edges.}
    The total number of edges depends linearly on the vertex converage $J =
    O(\log n \log W)$, giving us the final sparsity bound.
    Note that the total number of subgraph $I$ is at most the number of edges
    $m$. 
    This justifies our choice of $\frac{\delta}{4m}$ for the expander
    sparsification.
    As the expander decomposition fails with probability at most
    $\frac{\delta}{2}$, we get by a union bound that the overall success
    probability is at least $1-\delta$.
    The runtime bound results simply from a combination of the runtime of
    \Cref{prop:ex_partition} and total runtime of \Cref{lemma:subspar_int} for
    all expanders.
\end{proof}

\begin{proof}[Proof of \Cref{thm:dispec_star_static}]
    The proof follows similarly to that of \Cref{thm:dispec_int_static}.
    Note that the error analysis here should be carried out on the Schur
    complement $\Sc(\vH,V)$ for $\vH$ on vertices $V \cup X$.

    There is only one additional step in this proof we need to consider: the
    size of $|X|$.
    Notice that we only need to add one additional vertex per expander by
    \Cref{lemma:subspar_star}.
    Then, $|X|$ can simply be bounded by the total number of expanders, which,
    by \Cref{prop:ex_partition}, is at most $O(n\log n \log W)$.
\end{proof}

\subsection{Reduction to decremental sparsifier on expanders}
\label{ssec:dynexpdecomp}

We recall a fully dynamic expander decomposition algorithm by
\cite{BernsteinvdBPGNSSS22} using expander pruning from \cite{SaranurakW19}.
Again, we replaces the use of the static expander decomposition algorithm from
\cite{SaranurakW19} by the state-of-the-art algorithm in \cite{AgassyDK23}.
\begin{lemma}[Expander pruning]
    \label{lemma:expprun}
    Let $G=(V,E,\ww)$ be a $\phi$-expander with $\frac{\max_{e \in
    E}{\ww_e}}{\min_{e \in E}{\ww_e}} \le 2$.
    There is a deterministic algorithm with access to adjacency lists of $G$
    such that given an online sequence of $k \le \frac{\phi m}{10}$ edges
    deletions in $G$, can maintain a pruned set $P \subset V$ satisfing the
    following properties.
    Let $G_i,P_i$ be the graph and set $P$ after $i$ edge deletions.
    We have for all $i$
    \begin{enumerate}
        \item $P_0 = \emptyset$, $P_i \subseteq P_{i+1}$,
        \item $\Vol_{G_i}(P_i) \le \frac{16i}{\phi} \min_{e \in E} \ww_e$,
        \item $G_i[V \setminus P_i]$ is a $\frac{1}{12}\phi$-expander,
        \item $|E(P_i, V \setminus P_i)| \le 4i$.
    \end{enumerate}
    The amortized update time is $O(\phi^{-2} \log m)$.
\end{lemma}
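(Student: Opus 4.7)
The plan is to follow the standard local-flow (or ``cut-matching / nibble'') framework for expander pruning, adapted from \cite{SaranurakW19}, and to maintain $P$ in a monotone incremental fashion so that property (1) holds by construction. Initialize $P_0 = \emptyset$. After each edge deletion, the only vertices whose local conductance could drop are those near the endpoints of recently deleted edges, so we localize the update. Specifically, I would place $\Theta(1/\phi)$ units of demand at each endpoint of the just-deleted edge (scaled by $\min_e \ww_e$) and attempt to route this excess in $G_i[V \setminus P_{i-1}]$ using a bounded-depth local unit-flow computation. If the flow routes, the expansion witness is preserved and we set $P_i = P_{i-1}$; otherwise, the flow algorithm returns a sparse cut $S$ (via a level-cut / source-side argument), and we add $S$ to the pruned set, repeating until no sparse cut is found.

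To establish the structural guarantees, I would use a potential argument on $\Phi_i \defeq \Vol_{G_i}(P_i)$. Each sparse cut identified during pruning has conductance $\le \phi/6$ (say), so moving it into $P$ increases the boundary $|E(P_i, V \setminus P_i)|$ by at most a constant factor times the number of deletions that caused it, giving property (4) with constant $4$. For property (2), each deleted edge can contribute at most $O(1/\phi)$ excess to the source, and the sparsity of the cut ensures that each unit of excess enters at most $O(1/\phi)$ additional volume into $P$; summing over $i$ deletions gives $\Vol_{G_i}(P_i) \le 16i/\phi \cdot \min_e \ww_e$. Property (3) follows inductively: after pruning to exhaustion, no cut in $G_i[V \setminus P_i]$ has conductance below $\phi/12$ (the constant loss comes from $\min/\max$ weight ratio being at most $2$ and from the slack in the sparse-cut threshold).

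For the amortized runtime $O(\phi^{-2} \log m)$ per deletion, I would charge the cost of each local flow call to the volume it adds to $P$, using the fact that Saranurak-Wang's local flow runs in time $\tilde O(\phi^{-1})$ per unit of excess routed or expelled. Since the total volume added over $k \le \phi m / 10$ deletions is $O(k/\phi)$, the total work is $\tilde O(k \phi^{-2})$, which amortizes to $O(\phi^{-2} \log m)$ per deletion. The bound $k \le \phi m/10$ is exactly the slack needed to ensure $P$ never grows large enough to encompass a constant fraction of the volume, keeping the ``outside'' graph $G_i[V \setminus P_i]$ non-trivial and a genuine $\phi/12$-expander.

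The main obstacle is not the existence of the pruning algorithm (which is exactly \cite{SaranurakW19}) but reconciling the three simultaneous guarantees of bounded volume (2), bounded boundary (4), and preserved expansion (3) while respecting the weighted setting with $\max/\min$ edge weight ratio $2$. The weight-ratio assumption is what lets us treat the weighted expander essentially as an unweighted one up to constant factors, absorbing the loss into the $\phi/12$ constant; without it, the potential function and the charging scheme would need substantial modification. Everything else is a direct invocation or mild adaptation of the pruning analysis in \cite{SaranurakW19}, with the state-of-the-art static decomposition \cite{AgassyDK23} substituted only where the lemma is eventually used.
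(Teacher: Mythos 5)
Your proposal is correct and takes essentially the same route as the paper: the paper states this lemma as a black-box import of the Saranurak--Wang expander pruning theorem (with the constants $8i/\phi$ and $\phi/6$ relaxed to $16i/\phi \cdot \min_e \ww_e$ and $\phi/12$ to absorb the factor-$2$ weight ratio) and offers no proof of its own, which is exactly what you do. One small caveat on your sketch of property (2): ``$O(1/\phi)$ excess per deletion times $O(1/\phi)$ volume per unit of excess'' would yield $O(i/\phi^2)$; the actual argument is that every pruned vertex is a saturated sink absorbing mass proportional to its degree, so $\Vol(P_i)$ is bounded by the \emph{total injected source mass}, which is $O(i/\phi)$.
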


\begin{proposition}[Dynamic degree expander decomposition]
    \label{prop:dynexpdecomp}
    Given a weighted undirected dynamic graph $G$ with edge bounded edge weights
    ratio $\frac{\max_{e \in E} \ww_e}{\min_{e \in E} \ww_e} \le W$ at all time,
    there exists a dynamic algorithm against an adaptive adversary that
    preprocess in $O\Par{m \log^7 n}$ time,
    maintains with high probability a $(\cded\log^{-2}(n), (\log(W) + 3)(2\log n
    + 3)\log n)$-expander decomposition over all polynomially bounded number of
    updates of $G$ with high probability for some constant $\cded$.
    The algorithm supports both edge deletions and insertions with $O(\log^2 n)$
    amortized recourse and $O(\log^7 n)$ amortized time.

    After each update, the output consists of a list of potential changes to the
    decomposition: (i) edge deletions to some subgraph in the decomposition,
    (ii) removing some subgraph from the decomposition, and (iii) new subgraph
    added to the decomposition.
\end{proposition}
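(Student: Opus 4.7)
The plan is to adapt the dynamic expander decomposition framework of \cite{BernsteinvdBPGNSSS22}, replacing its static subroutine by $\EPAalgo$ from \Cref{prop:ex_partition} and reusing the deterministic expander pruner of \Cref{lemma:expprun}. The data structure organizes edges into $O(\log n)$ levels $\ell = 0, \ldots, \log n$; level $\ell$ stores an expander decomposition $\gD_\ell$ of an edge set $E_\ell$ with $|E_\ell| = O(2^\ell)$, built by a single invocation $\EPAalgo(G_\ell, n^{-c})$ for a suitably large constant $c$ and equipped with one pruner of \Cref{lemma:expprun} per subgraph. A deletion is routed to the pruner that owns the edge; any newly pruned vertex has its incident edges excised and queued as pending insertions at level $0$. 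A plain insertion is also queued at level $0$. Whenever the queued edges at level $\ell$ together with the edges at levels $0, \ldots, \ell$ exceed the threshold $2^\ell$, the algorithm pools all those edges and rebuilds them as $\gD_{\ell+1}$ by $\EPAalgo$, emptying levels $0, \ldots, \ell$; overflow at the topmost level triggers a full rebuild on the current graph.

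Next I would verify the three conditions of \Cref{def:exp_partition}. Conductance $\Phi(G_i) \ge \cded \log^{-2}(n)$ with $\cded = \cadk/12$ follows by combining the initial $\cadk \log^{-2}(n)$ guarantee of \Cref{prop:ex_partition} with \Cref{lemma:expprun}(3), which retains $\frac{1}{12}\phi$-expansion on the unpruned subgraph after arbitrarily many deletions. The weight-ratio invariant is preserved by the per-weight-class bucketing inside $\EPAalgo$. For vertex coverage, each $\gD_\ell$ contributes at most $(\log W + 3)(2\log n + 3)$ appearances per vertex (using $m \le n^2$ in \Cref{prop:ex_partition}), and summing over the $\log n + 1$ levels yields the claimed $J$. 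Correctness with high probability over polynomially many updates follows by union bound across the at most $n^{O(1)}$ invocations of $\EPAalgo$. The algorithm tolerates an adaptive adversary because the pruner is deterministic and each $\EPAalgo$ call uses fresh, independent randomness whose exposure through the published decomposition cannot be exploited to invalidate the invariants under any subsequent update sequence.

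Finally, the amortized cost follows the standard geometric-rebuilding argument: a level-$\ell$ rebuild costs $O(2^\ell \log^7 n)$ by \Cref{prop:ex_partition} and is triggered only after $\Omega(2^\ell)$ updates accumulate, amortizing to $O(\log^7 n)$ per update. The pruner contributes $O(\phi^{-2}\log m) = O(\log^5 n)$ amortized work per deletion and $O(\phi^{-1}) = O(\log^2 n)$ amortized recourse by \Cref{lemma:expprun}(2), dominating the $O(\log n)$ per-update recourse contributed by rebuilds. Preprocessing is a single $\EPAalgo$ call at cost $O(m \log^7 n)$.

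The main obstacle I anticipate is ensuring that $J$ is not inflated when vertices migrate across levels via pruning; I would handle this with the invariant that a vertex pruned at level $\ell$ is completely removed from every subgraph of $\gD_\ell$ before it can reappear in the next rebuild at level $\ell+1$, so at any moment each vertex contributes at most $(\log W + 3)(2\log n + 3)$ appearances per level, and the stated $J$ is preserved.
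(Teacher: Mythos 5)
Your proposal is correct and takes essentially the same route as the paper: a geometric hierarchy of $O(\log n)$ edge-count-bounded levels, each holding a static $\EPAalgo$ decomposition equipped with one expander pruner per piece, with pruned/cut edges recycled as insertions at the bottom level and threshold-triggered rebuilds cascading upward. One minor note: your recourse accounting treats the pruner's $O(\phi^{-1})$ edge ejections and the rebuild recourse as additive, but each ejected edge subsequently participates in $O(\log n)$ rebuilds, which is exactly the compounding that leads the paper's own calculation to $O(i\phi^{-1}) = O(\log^3 n)$ for deletions rather than the $O(\log^2 n)$ stated in the proposition.
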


Crucially, the types of updates in \Cref{prop:dynexpdecomp} entails that any
expander in the dynamic decomposition only receives decremental update before it
becomes obsolete.
For the rest of this section, we focus on deriving a decremental sparsifier on
directed graphs with corresponding undirected graphs being expanders.
For completeness, we provide a proof of \Cref{prop:dynexpdecomp} in
\Cref{app:dynexpdecomp}.

We now recall the reduction to decremental algorithm on expanders by
\cite{BernsteinvdBPGNSSS22} using dynamic expander decomposition algorithm from
above.
The version we present here is a slight generalization: we remove the
perturbation requirement and allows for directed graphs.

\begin{lemma}[Blackbox reduction with dynamic expander decomposition] \label{lemma:red_dynexp}
    Assume $\gH$ is a graph sparsification problem that satisfies the union
    property, and there is a decremental algorithm $\gA$ for $\gH$ on
    (directed) graphs with edge weights ratio at most $2$ and satisfying that
    their corresponding undirected graph is a $\phi$-expander for $\phi \ge
    \cded \log^{-2} n$.
    Suppose $\gA$ preprocess in time $P(m,\eps) \ge m$, and maintains a
    sparsifier of size at most $S(n,\eps) \ge n$ with $N(n) \ge 0$ extra
    of vertices in $T(n,\eps)$ amortized update time.
    The maintained sparsifier is is either explicit, or is implicit and has
    query time $Q_1(n,\eps) \ge 1$ for an edge query or $Q_2(n,\eps) \ge
    S(n,\eps)$ for a graph query, where $P,N$ is superadditive
    in the first variable and $S,Q_2$ are subadditive in the first variable.

    There exists a fully dynamic algorithm $\gB$ for $\gH$ on
    weighted (directed) graphs such that given a dynamic weighted simple
    directed graph $\vG$ with weight ratios bounded by $W$, and a parameter
    $\eps$, $\gB$ preprocesses in time of $O(m \log^7 n + P(m,\eps))$ and
    maintains a sparsifier in $\gH(\vG,\eps)$ of size $O(S(n,\eps) \cdot \log^2
    n \log W)$ with $O(N(1) \cdot n \log^2 n \log W)$ extra number of vertices
    in amortized update time
    \[
        O\Par{\log^7 n + (T(n,\eps)+ \frac{P(n^2,\eps)}{n^2}) \log^3 n}.
    \]
    The sparsifier is explicit if $\gA$ maintains an explicit sparsifier.
    Otherwise, $\gB$ supports edge or graph query in time $O(Q_1(n,\eps) \cdot
    \log^2 n \log W)$ or $O(Q_2(n, \eps) \cdot \log^2 n \log W)$ respectively.
\end{lemma}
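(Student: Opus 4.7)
The plan is to maintain a dynamic expander decomposition of the undirected skeleton $\und(\vG)$ using Proposition~\ref{prop:dynexpdecomp}, run a fresh copy of the given decremental algorithm $\gA$ on each directed piece $\vG_i$ (the subgraph of $\vG$ on the edge set of $G_i$), and take $\vH = \bigcup_i \vH_i$ as the maintained sparsifier. Each $G_i$ has edge weight ratio at most $2$ and conductance $\Omega(\log^{-2} n)$, matching the hypothesis of $\gA$; correctness is then immediate from the union property of $\gH$ applied to $\vG = \bigcup_i \vG_i$. The fact (mentioned in the text after Proposition~\ref{prop:dynexpdecomp}) that every piece in the dynamic decomposition only receives decremental updates before being destroyed is precisely what makes the reduction to the decremental algorithm $\gA$ work.

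Preprocessing is the $O(m \log^7 n)$ cost of the initial decomposition plus $\sum_i P(|E(\vG_i)|, \eps) \le P(m, \eps)$ for initializing $\gA$ on each piece, where the inequality uses superadditivity of $P$ in the first argument. For the size bound, invoke the vertex coverage property: the pieces split into $J = O(\log^2 n \log W)$ groups $S_j$, each consisting of vertex-disjoint subgraphs, so $\sum_{i \in S_j} |V(G_i)| \le n$. Subadditivity and monotonicity of $S$ give $\sum_{i \in S_j} S(|V(G_i)|, \eps) \le S(n, \eps)$ per group, totaling $O(S(n, \eps) \log^2 n \log W)$. The extra-vertex bound is obtained by charging $N(1)$ extras per piece against the $O(nJ)$ total pieces produced across all groups.

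Each update to $\vG$ is fed into the expander decomposition, which emits $O(\log^2 n)$ amortized structural changes, each of three types: (i) an edge deletion inside an existing piece, forwarded to $\gA$ at cost $T(n,\eps)$; (ii) the removal of an entire piece, which simply discards the associated $\gA$ instance; or (iii) the birth of a new piece with $m'$ edges, which requires a fresh $\gA$ preprocessing at cost $P(m', \eps)$. Superadditivity of $P$ implies $P(m')/m' \le P(n^2,\eps)/n^2$ for all $m' \le n^2$, so each edge migrated into a newborn piece incurs amortized preprocessing cost at most $P(n^2,\eps)/n^2$. Combining the $O(\log^7 n)$ per-update cost of the expander decomposition with the per-change $\gA$ cost, and absorbing an additional $O(\log n)$ overhead coming from the edge-weight-bucketing layers that underlie Proposition~\ref{prop:dynexpdecomp}, delivers the claimed amortized bound $O(\log^7 n + (T(n,\eps) + P(n^2,\eps)/n^2)\log^3 n)$. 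Queries are routed to the at most $J$ pieces that could contain the queried pair or contribute to the sparsifier, producing the stated $O(\log^2 n \log W)$ query overhead on $Q_1$ and $Q_2$ (using subadditivity of $Q_2$ in the first argument).

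The main technical obstacle is the amortized accounting for creating fresh $\gA$ instances on new pieces: a naive per-piece charge of $P(m')$ would be ruinous. The fix is to charge $P(n^2,\eps)/n^2$ per migrated edge (justified by superadditivity of $P$) and rely on the recourse guarantee of Proposition~\ref{prop:dynexpdecomp} to cap the total edge migrations. Verifying that the decomposition's output is edge-granular enough (rather than issuing bulk subgraph restarts that would break the per-edge charging scheme) is the crux of the argument, and is precisely why the three output-types in Proposition~\ref{prop:dynexpdecomp} are stated at the edge-level rather than at the subgraph level.
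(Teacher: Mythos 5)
Your proposal matches the paper's proof essentially step for step: the same dynamic expander decomposition from Proposition~\ref{prop:dynexpdecomp}, one instance of $\gA$ per piece combined via the union property, superadditivity of $P$ for preprocessing, the vertex-coverage/subadditivity argument for size and extra vertices, the per-migrated-edge charge of $P(n^2,\eps)/n^2$ for newborn pieces, and query routing through the $O(\log^2 n\log W)$ pieces covering a vertex. The argument is correct and no further comparison is needed.
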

\begin{proof}
    Our algorithm maintains a $\vG = \bigcup_i \vG_i$ of the graph $\vG$ 
    through \Cref{prop:dynexpdecomp} such that each corresponding undirected
    graph $G_i = \und(\vG_i)$ is a $\phi = O(\log^{-2} n)$-expander with bounded
    weight ratio.
    For each subgraph in the decomposition, we maintain a sparsifier $\vH_i \in
    \gH(\vG_i, \eps)$ using $\gA$.
    The union property then ensures that $\bigcup_i \vH_i \in \gH(\vG, \eps)$.

    Consider edge updates on our data structure.
    By \Cref{prop:dynexpdecomp}, there are three potential changes. 
    (i) A set of edges is removed from some $G_i$.
    Then, the corresponding set of directed edges are removed from $\vG_i$ and
    These edge deletions are propagated to $\vH_i$ through $\gA$.
    (ii) Some $G_i$ is removed from the decomposition.
    We then also remove the corresponding directed graph $\vG_i$ and its
    maintained instance of $\gA$ with sparsifier $\vH_i$.
    (iii) A new subgraph $G_j$ is added to the decomposition. 
    We then add the corresponding $\vG_j$ to the directed graph decomposition
    and initialize a new instance of $\gA$ on $\vG_j$ with sparsifier $\vH_j$.

    For the total preprocessing time of $\gA$, we get $\sum_i P(|E(\vG_i)|,\eps)
    \le P(\sum_i |E(\vG_i)|,\eps) \le P(m)$ where we used $P(m,\cdot) \ge m$, its
    superadditive property, and the fact that $\vG = \bigcup_i \vG_i$.
    Combine with the preprocessing time of the dynamic expander decomposition
    algorithm in \Cref{prop:dynexpdecomp}, we the total preprocessing time of
    $O(m \log^7 n + P(m,\eps))$.
    Consider the amortized update time.
    Each update on the graph $\vG$ incurs an amortized recourse of $O(\log^3 n)$
    edge updates.
    Thus, the amortized time for updating all $\gA$ instances is bounded by
    $O(T(n,\eps) \cdot \log^3(n))$.
    Now, each time a subgraph $\vG_j$ is created as in (iii), it takes
    $P(|E(\vG_j)|,\eps)$ time to initialize $\gA$ on it.
    Notice that for a new subgraph $\vG_j$ to be formed, there needs to be at least
    $\frac{|E(\vG_j)|}{\log^3 n}$ updates on $\vG$.
    As such, the amortized cost of all initialization of $\gA$ can be bounded by
    $O\Par{\frac{P(n^2,\eps)}{n^2} \log^3 n}$, again using the assumption that
    $P(k,\cdot) \ge k$, its superadditivity, and the fact that $\vG$ is simple.
    Combine both amortized cost above with the amortized cost in
    \Cref{prop:dynexpdecomp}, we get a total amortized update time of
    \[
        O\Par{\log^7 n + (T(n,\eps)+ \frac{P(n^2,\eps)}{n^2}) \log^3 n}.
    \]
    Using the vertex coverage guarantees of the decomposition given by
    \Cref{prop:dynexpdecomp} and the subadditivity of $S$, the size the
    sparsifier $\vH$ is bounded by
    \[
        \sum_i S(|V(G_i)|,\eps) \le |J| \cdot S(n,\eps) \le O(S(n, \eps) \cdot
        \log^2 n \log W), 
    \]
    The number of extra vertices follows also by \Cref{prop:dynexpdecomp} and
    the superadditivity of $N$,
    \[
        \sum_i N(|V(G_i)|) \le n|J| \cdot N(1) \le O(N(1) \cdot n \log^2 n
        \log W).
    \]

    Finally, consider the query operations and their respectively complexity
    when the sparsifier is implicit.
    For edge queries, we note that our data structure can maintain for each
    vertex $v$ a set of subgraphs in the decomposition that contains its
    incident edges with no additional asymptotic cost in its amortized runtime.
    The vertex coverage guarantee then gives that there are at most $O(\log^2 n
    \log W)$ such subgraphs.
    Given a pair of vertices $u,v$, our algorithm propagates the query to only
    the incident subgraphs and aggregate the results after the query operation.
    The runtime is the then bounded by $O(Q_1(n,\eps) \cdot \log^2 n \log W)$.
    As for graph queries, we can simply perform the queries on each subgraph
    $\vG_i$ to obtain an explicit $\vH_i$ and aggregate all $\vH_i$'s.
    For a simple representation, such as weighted adjacency matrix, the total
    query time is
    \[
        \sum_i Q_2(|V(\vG_i)|,\eps) \le |J| \cdot Q_2(n, \eps) \le O(Q_2(n,
        \eps) \cdot \log^2 n \log W),
    \]
    once again by the subadditivity properties.
\end{proof}

\subsection{A simple decremental algorithm on expanders}
\label{ssec:dyndec_spec}
Finally, we present our main ingredient for a fully dynamic spectral sparsifier:
a decremental algorithm for maintaining a degree-preserving directed spectral
sparsifier for an expander.
We again give three results: the first one supports fast edge query but may
contain extra edges, the second result gives subgraph sparsifier, and the third
one gives explicit sparsifier with an extra vertex.

\begin{lemma} \label{lemma:decexpspec_ext}
    Suppose $\vG=(C \cup R, E, \ww)$ is a weighted directed bipartite
    graph with edges $E \subseteq C \times R$ and weight ratio $\frac{\max_{e
    \in E} \ww_e}{\min_{e \in E} \ww_e} \le 2$ undergoing edge deletions only.
    There is an algorithm that given $\eps,\phi \in (0,1)$ and an oblivious
    online sequence of edge deletions, preprocesses in time $O(m)$,
    maintains implicitly a weighted directed graph $\vH$ on vertices $C \cup R$
    with $E(\vH) \subseteq C \times R$ and $m_{\vH} = O(\eps^{-2} \phi^{-4} N
    \log n)$ where $N$ is the number of non-trivial vertices currently.
    The algorithm has worst-case update time $O(\eps^{-2} \phi^{-4} \log^2 n)$,
    and has query time of $O(\log n)$ for each edge query and $O(\eps^{-2}
    \phi^{-4} N \log^2 n)$ for each graph query.
    Further more, with high probability, whenever $G = \und(\vG)$ becomes a
    $\phi$-expander, $\vH$ is an $\eps$-degree-preserving directed spectral
    sparsifier of $\vG$.
\end{lemma}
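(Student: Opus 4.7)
The plan is to make the static sparsifier \SSalgo combined with external patching \PAEalgo (\Cref{lemma:subspar_ext}) dynamic under decremental oblivious updates. At preprocessing I execute the sampling loop of \Cref{alg:sparsify_static} with oversampling $\rho = \Theta(\eps^{-2}\phi^{-4}\log n)$ and store the nonzero reweighting $\ww'$ in per-vertex adjacency lists that support $O(\log n)$ membership and weight lookup. The patching is never stored explicitly. Instead I exploit that the greedy matching of \Cref{lemma:extpatch} has a particularly clean form: laying the residual demands $\dd_1 = \HH^\top(\ww-\xi\ww')$ (on $C$) and $\dd_2 = \TT^\top(\ww-\xi\ww')$ (on $R$) side by side in a fixed vertex order yields intervals $I^{(o)}_u$ and $I^{(i)}_v$ whose overlap length equals the patching weight on $(u,v)$. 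I maintain two balanced segment trees $T_C, T_R$ over prefix sums of $\dd_1, \dd_2$, each supporting interval retrieval and residual point-updates in $O(\log n)$, implicitly shifting all subsequent intervals.

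To process the deletion of an edge $(u,v)$ I update $[\ddo]_u$ and $[\ddi]_v$ and, leveraging the oblivious adversary, discard and freshly resample every edge of $\vG$ incident to $u$ or to $v$ with the updated probabilities, using a weighted-sampling primitive (e.g.~\cite{BringmannP12}) whose cost is proportional to the number of samples produced. The expected number of retained samples incident to a vertex is $O(\rho)$; to pass from amortized to worst-case I cap each per-vertex resampling at $c\rho$ retained edges, a cap which by a multiplicative Chernoff bound is not activated over all $\poly(n)$ visited configurations with high probability. Each residual whose value moved is then pushed into the appropriate segment tree, yielding an $O(\eps^{-2}\phi^{-4}\log^2 n)$ worst-case per-update bound.

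An edge query $(u,v)$ returns $\xi \ww'_{(u,v)}$ from the adjacency list (if present) plus $|I^{(o)}_u \cap I^{(i)}_v|$ computed from $T_C, T_R$, giving $O(\log n)$ time; a graph query enumerates $\supp(\ww')$ and walks $T_C, T_R$ in parallel to emit all $O(N)$ nonzero overlapping pairs, matching the claimed sparsity and query time. Correctness whenever $\und(\vG)$ is a $\phi$-expander follows because at every moment the data structure realizes exactly the graph $\SSalgo$ would output on the current $\vG$ with independent coin flips per resampled vertex, so the high-probability guarantee of \Cref{lemma:subspar_ext} transfers via a union bound over the polynomially many visited configurations. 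The main obstacle I foresee is upgrading the natural amortized bound from resampling to a worst-case one without distorting the sampling distribution; this is handled by combining the capped-resampling trick above with a sampling primitive whose cost is linear in the number of samples it produces, while verifying that the cap fires only on an event small enough to be absorbed into the existing union bound.
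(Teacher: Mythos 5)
Your overall architecture is the same as the paper's: a dynamic version of the static sampling step combined with an implicitly maintained external patching realized by segment trees over the prefix sums of the residual demands (this is exactly \Cref{lemma:dynextpatch}), with edge/graph queries answered by combining the stored reweighting $\xi\ww'$ with interval overlaps. The query analysis and the appeal to \Cref{lemma:subspar_ext} plus a union bound over the polynomially many configurations are also in line with the paper.

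However, there is a genuine gap in your update-time and recourse analysis, caused by how you resample. You redraw every edge incident to $u$ or $v$ independently with the updated marginal probability $p_e = \rho\,\ww_e\bigl(\tfrac{1}{[\ddo]_u}+\tfrac{1}{[\ddi]_w}\bigr)$ and claim the expected number of retained samples incident to a vertex is $O(\rho)$. This is false: $\sum_{e=(u,w)\in E_u} p_e = \rho\bigl(1+\sum_{e=(u,w)}\ww_e/[\ddi]_w\bigr)$, and the second sum can be as large as $|E_u|$ (take unit weights with each neighbor $w$ having $[\ddi]_w = \Theta(\rho)$, so that every $p_e \approx 1/2 + \rho/|E_u| < 1$). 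In that configuration a single deletion at $u$ forces you to redraw, and to re-push weight changes for, $\Theta(|E_u|)$ edges — every currently sampled edge incident to $u$ has its weight $\ww_e/p_e$ perturbed because $[\ddo]_u$ changed — so the worst-case update time is $\Theta(n)$, not $O(\eps^{-2}\phi^{-4}\log^2 n)$. Your cap at $c\rho$ retained edges does not rescue this, because in such instances the cap fires with constant (indeed overwhelming) probability rather than on a negligible tail event, and once it fires the sampling distribution no longer satisfies the hypotheses of \Cref{lemma:entrysample}, so correctness is lost as well. The paper avoids this by not sampling each edge independently with $p_e$: it draws, for each vertex $v$, a \emph{uniform} subset $F_v$ of $E_v$ with per-edge probability $2L\rho/\dd_v$ (Algorithms~\ref{alg:dsp},~\ref{alg:dsd} via \Cref{prop:subset}), keeps an edge iff either endpoint sampled it, and on a deletion redraws only $F_u$ and $F_v$, each of which has size $O(\rho)$ by the weight-ratio bound \eqref{eq:degsum}; coins belonging to other vertices are reused. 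To repair your proof you should adopt this per-vertex union-sampling structure (i.e., invoke \Cref{lemma:decexpdeg} for the degree-approximation layer) rather than per-edge resampling with the non-uniform $p_e$.
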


\begin{lemma} \label{lemma:decexpspec_int}
    Under the same assumptions of \Cref{lemma:decexpspec_ext},
    there is a dynamic algorithm that preprocesses in time $O(m)$, maintains
    implicitly a reweighted directed subgraph $\vH$ of $\vG$ with 
    $m_{\vH} = O(\eps^{-2} \phi^{-8} N \log^3 n)$, where $N$ is the number of
    non-trivial vertices currently.
    The algorithm has worst-case update time $O(\eps^{-2} \phi^{-8} \log^3 n)$,
    and has query time of $\bO(\eps^{-2} \phi^{-8} n \log^4 n)$ for each graph
    query.
    Further more, with high probability whenever $G = \und(\vG)$ becomes a
    $\phi$-expander, $\vH$ is an $\eps$-degree-preserving directed spectral
    sparsifier of $\vG$.
\end{lemma}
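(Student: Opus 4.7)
The plan is to run the sampling step of \SSalgo once and then dynamically maintain the resulting reweighting $\ww'$ under edge deletions, deferring the expensive electrical-routing step of \PAIalgo until a graph query arrives. The structure parallels the proof of \Cref{lemma:decexpspec_ext}, but with the internal patching machinery of \Cref{lemma:subspar_int}.

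During preprocessing we sample each edge $e=(u,v)\in E$ independently with probability $p_e=\rho\,\ww_e(1/[\ddo]_u+1/[\ddi]_v)$ where $\rho=\Theta(\eps^{-2}\phi^{-8}\log^3 n)$ corresponds to the choice $\eta=200\phi^{-2}\log(2n)$, and set $\ww'_e=\ww_e/p_e$ on the surviving edges $E'$. We also maintain $\ddi,\ddo,\ddpi,\ddpo$, the scalar $\xi=(1+\eps\phi^2/(16\eta))^{-1}$, the demand vectors $\dd_1=\HH^\top(\ww-\xi\ww')$ and $\dd_2=\TT^\top(\ww-\xi\ww')$, a bucket of sampled edges at each vertex together with a weighted-Bernoulli sampler in the style of \cite{BringmannP12} supporting output-sensitive sampling, and a decremental maximum spanning forest $T$ of $G$ via \cite{HolmRW15}. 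Upon deletion of $(u,v)$, only $[\ddo]_u$ and $[\ddi]_v$ change, so against an oblivious adversary it suffices to independently re-run the Bernoulli sampling (i) over all edges with head $u$ using the new $[\ddo]_u$ and (ii) over all edges with tail $v$ using the new $[\ddi]_v$. A Chernoff bound ensures each vertex star contributes $O(\rho)$ survivors with high probability, so resampling plus incremental updates to $\ww',\dd_1,\dd_2,\ddpi,\ddpo,T$ fit in worst-case $O(\rho)=O(\eps^{-2}\phi^{-8}\log^3 n)$ time.

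On a graph query, if $G$ is a $\phi$-expander we invoke \PAIalgo on $(\vG,\ww',\xi)$: the approximate electrical routing of $(\dd_1,\dd_2)$ is produced in $\bO(|E'|\log n)=\bO(\eps^{-2}\phi^{-8} N\log^4 n)$ time by \Cref{lemma:eroute_approx,prop:js21}, and \ROalgo rounds in $O(N)$ time using the maintained forest $T$. The output $\vH$ is the reweighted subgraph with weights $\xi\ww'+\aa$; the subgraph property and size bound $m_{\vH}=O(\rho N)$ follow from \Cref{lemma:intpatch,lemma:subspar_int}. Correctness (that $\vH$ is an $\eps$-degree-preserving directed spectral sparsifier whenever $G$ is a $\phi$-expander) follows verbatim from \Cref{lemma:subspar_int}, since against an oblivious adversary the maintained $\ww'$ has exactly the distribution produced by re-running \SSalgo from scratch on the current $\vG$. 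The main obstacle, relative to the external-patching case, is that internal patching globally couples all sampled edges through an electrical flow, preventing explicit maintenance of $\vH$; this is precisely why the graph-query time carries an extra $\bO(N)$ factor while the update time stays at $O(\rho)$. The secondary concern, that a single deleted vertex may have combinatorial degree $\Theta(N)$, is dispatched by the output-sensitive weighted sampler.
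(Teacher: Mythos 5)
Your proposal is correct and follows essentially the same route as the paper: maintain the sampled reweighting $\ww'$ dynamically (resampling only the edges incident to the two endpoints affected by a deletion, which is valid against an oblivious adversary and kept fast by the output-sensitive subset sampler of \Cref{prop:subset}), and defer the internal patching — approximate electrical routing plus tree rounding via \Cref{lemma:intpatch} — to graph-query time. The paper's proof is exactly this one-line combination of \Cref{lemma:decexpdeg} and \Cref{lemma:intpatch}; your per-edge Bernoulli sampling versus the paper's per-vertex subset sampling in \DSPalgo/\DSDalgo is only a cosmetic difference.
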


\begin{lemma} \label{lemma:decexpspec_star}
    Under the same assumptions of \Cref{lemma:decexpspec_ext},
    there is a dynamic algorithm that preprocesses in time $O(m)$, maintains
    explicitly a directed graph $\vH$ on vertices $C \cup R \cup \{x\}$
    and $m_{\vH} = O(\eps^{-2} \phi^{-4} N \log n)$, where $N$ is the number of
    non-trivial vertices currently.
    The algorithm has worst-case recourse and update time $O(\eps^{-2} \phi^{-8}
    \log^2 n)$.
    Further more, with high probability whenever $G = \und(\vG)$ becomes a
    $\phi$-expander, $\Sc(\vH,C \cup R)$ is an $\eps$-degree-preserving directed
    spectral sparsifier of $\vG$.
\end{lemma}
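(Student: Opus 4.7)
The plan is to dynamize the static pipeline \SSalgo followed by \PASalgo (\Cref{alg:sparsify_static,alg:patching_star}), exploiting the observation that \PASalgo attaches a single auxiliary vertex $x$ whose adjacent edge weights are determined entirely by the residual degree imbalances $\dd_1,\dd_2$. Because a deletion of $e=(u,v)$ changes $\ddo$ and $\ddi$ only at the coordinates $u,v$, the star portion of $\vH$ changes at only two edges $(u,x)$ and $(x,v)$, which will give the explicit, low-recourse character demanded by the lemma; everything else is local to the two affected adjacency lists.

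In preprocessing I would compute $\ddo,\ddi$, draw independent uniform seeds $r_e \in [0,1)$ for every $e$, and put $e$ into the sampled subgraph with scaled weight $\ww_e/p_e$ iff $r_e \le p_e$, where $p_e = \rho\,\ww_e(\ddo_u^{-1}+\ddi_v^{-1})$ with $\rho = \Theta(\eps^{-2}\phi^{-4}\log n)$ as in \SSalgo; then I invoke \PASalgo once to set the two star edges per non-auxiliary vertex. The algorithm maintains explicitly: $\vH$, the vectors $\ddo,\ddi$, adjacency lists of currently sampled incident edges at each vertex, the seeds $r_e$, and a balanced binary search tree per vertex keyed by $r_f/\ww_f$ so that, whenever $\ddo_u$ (resp.\ $\ddi_v$) drops, the edges whose sampling status toggles can be enumerated by a single threshold query in $O(\log n)$ plus $O(\log n)$ per boundary crossing. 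To handle the deletion of $e=(u,v)$: decrement $\ddo_u,\ddi_v$ by $\ww_e$; remove $e$ from all structures and from $\vH$ if present; use the BSTs at $u$ and $v$ to enumerate every boundary-crossing edge $f$ and re-evaluate its membership using the unchanged seed $r_f$; finally overwrite $[\ww_{\vH}]_{(u,x)}$ and $[\ww_{\vH}]_{(x,v)}$ from the updated residuals.

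Correctness uses obliviousness: since the deletion sequence is independent of the seeds, the marginal distribution of $\vH$ at every time step equals the distribution produced by re-running \SSalgo with \PASalgo on the current $\vG$ from scratch. Whenever $\und(\vG)$ is a $\phi$-expander, \Cref{lemma:subspar_star} therefore certifies, with probability $1-n^{-\Omega(1)}$, that $\Sc(\vH,C\cup R)$ is an $\eps$-degree-preserving directed spectral sparsifier, and a union bound over the polynomially many time steps yields the high-probability statement. The sparsity bound $m_{\vH}=O(\eps^{-2}\phi^{-4}N\log n)$ and the existence of the single auxiliary vertex follow directly from the size guarantee of \Cref{lemma:subspar_star}.

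The main obstacle is converting the expected $\tO(\rho)$ toggles per update into a worst-case bound. A multiplicative Chernoff bound combined with a union bound over vertices and time steps shows that with high probability the combinatorial degree of every vertex in the sampled subgraph stays within $\tO(\eps^{-2}\phi^{-8}\log n)$, the additional $\phi^{-O(1)}$ slack absorbing the inflation of $\ddo^{-1},\ddi^{-1}$ as degrees drift under decremental updates before the expander hypothesis may fail. Aborting when this bound is violated only adds $n^{-\Omega(1)}$ to the failure probability and can be folded into the overall high-probability statement. Each boundary crossing is then processed in $O(\log n)$ time for the BST and for updating $\vH$, delivering the stated worst-case $O(\eps^{-2}\phi^{-8}\log^2 n)$ update time and recourse.
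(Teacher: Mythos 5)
Your overall architecture is the same as the paper's: maintain the independently sampled, degree-reweighted subgraph dynamically, observe that the star patching of \PASalgo{} depends only on the residual imbalances $\dd_1,\dd_2$ so that a degree change touches only the two star edges $(u,x),(x,v)$, invoke obliviousness to argue that the marginal law of the sparsifier at each time step matches the static construction, and then apply \Cref{lemma:subspar_star} with a union bound over the polynomially many updates. This is exactly how the paper proceeds (it combines \Cref{lemma:decexpdeg} with a trivially dynamic version of \Cref{lemma:starpatch_algo}).

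There is, however, a concrete flaw in your data structure for maintaining the sampled subgraph. You use a single seed $r_e$ with acceptance condition $r_e \le p_e$ where $p_e = \rho\,\ww_e(\ddo_u^{-1}+\ddi_v^{-1})$, and claim that a BST at $u$ keyed by $r_f/\ww_f$ lets you enumerate the toggling edges with one threshold query when $\ddo_u$ drops. But the membership condition rewritten in that key is $r_f/\ww_f \le \rho\,\ddo_u^{-1} + \rho\,\ddi_{t(f)}^{-1}$, whose right-hand side varies from edge to edge (it depends on the in-degree of the other endpoint), so it is not a range query in that key; re-keying by $r_f/\ww_f - \rho\,\ddi_{t(f)}^{-1}$ does not help because that key must be updated in $u$'s tree every time a neighbor's degree changes. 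As described, you would have to scan all of $E_u$, destroying the worst-case update time. The fix is to decompose the sampling per endpoint: draw independent seeds $r_e^{(u)},r_e^{(v)}$, keep $e$ iff $r_e^{(u)} \le \rho\ww_e/\ddo_u$ \emph{or} $r_e^{(v)} \le \rho\ww_e/\ddi_v$, and reweight by the resulting $p_e = 1-(1-\rho\ww_e/\ddo_u)(1-\rho\ww_e/\ddi_v)$ — which is precisely the per-vertex subset-sampling scheme of \DSPalgo/\DSDalgo{} (the paper simply resamples the whole sets $F_u,F_v$ with the fast subset sampler of \Cref{prop:subset} rather than tracking boundary crossings). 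Separately, note that even with only boundary crossings toggling membership, every currently sampled edge incident to $u$ changes its \emph{weight} $\ww_e/p_e$ when $\ddo_u$ changes, so the recourse is $\Theta(\rho)$ per update regardless; your Chernoff-plus-union-bound argument covers this, but the proposal should account for it explicitly.
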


In light of our static algorithm, our dynamic algorithm maintains a reweighted
subgraph $\vG'$ satisfying the approximation guarantees in \eqref{eq:degappr}
for its weighted adjacency matrix $\AA' = \TT^\top \WW' \HH$ with $\RR = \DDi$
and $\CC = \DDo$.
For brevity we say that $\vG'$ is an $\eps$-degree approximation of $\vG$.

\begin{algorithm2e}[ht!] \label{alg:dsp}
    \caption{$\DSPalgo(\vG=(V,E,\ww),\eps,\delta)$}
    \codeInput $\vG=(V,E,\ww)$ a simple weighted directed graph with edge
    weights ratio $\frac{\max_{e \in E} \ww_e}{L} \le 2$ for all $e \in E$ where
    we set $L \defeq \min_{e \in E} \ww_s$ the initial smallest edge weight,
    $\eps,\delta \in (0,1)$\;
    $\rho \defeq \cds \cdot \eps^{-2} \log(8n)$\;
    Initialize $\ww' \gets \vzero_E$ and compute $\ddi,\ddo$ the in and out-degree vectors of $\vG$\;
    \ForEach{$v \in V$}{
        $F_v \gets \SAalgo(E_v, \frac{2L\rho}{\dd_v})$\;
        $\ww_e' \gets \frac{1}{p_e} \ww_e$ for each $e \in F_v$, where $p_e = 1$
        if either $\dd_{h(e)} \le 2L\rho$ or $\dd_{t(e)} \le 2L\rho$ and  $p_e =
        4L^2\rho^2\frac{(\dd_{h(e)} + \dd_{t(e)} - 1)}{\dd_{h(e)}\dd_{h(e)}}$
        otherwise\;
    }
\end{algorithm2e}

Our dynamic algorithm has some slight deviations from the static setup.
Similar to the algorithm in \cite{BernsteinvdBPGNSSS22},
Algorithms~\ref{alg:dsp},~\ref{alg:dsd}) sample for each vertex a subset of its
incident edges using a fast uniform subset sampling algorithm
\Cref{prop:subset}.
An edge $e = (u,v)$ is present in the sparsifier as long as one of $u,v$ has it
sampled.
\begin{proposition}[Subset Sampling, \cite{Knuth97,Devroye06,BringmannP12}] \label{prop:subset}
    Given a universe $U$ of size $n$ and a sampling probability $p$, there is an
    algorithm that samples a set $S \subseteq U$ in time $O(\min(p n \log
    \frac{n}{\delta}, n))$ with probability at least $1-\delta$, where each element
    of $U$ is in $S$ with probability $p$ independently.
\end{proposition}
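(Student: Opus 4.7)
The plan is to split into two regimes depending on $p$, and to handle the sparse regime via the classical geometric-gap-sampling trick that avoids touching each element of $U$ individually. Concretely, if $pn \ge n / \log(n/\delta)$, then $\min(pn \log(n/\delta), n) = n$ and we can afford to enumerate $U$ and independently include each element in $S$ with probability $p$ by flipping a biased coin, which trivially runs in $O(n)$ time and is exact. So the interesting case is $p < 1/\log(n/\delta)$, where the goal is total time $O(pn \log(n/\delta))$.

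For the sparse regime, I would fix an arbitrary ordering $u_1,\dots,u_n$ of $U$ and sample the gaps between successive included elements as i.i.d.\ geometric random variables. That is, repeatedly draw $g_j \sim \mathrm{Geom}(p)$ and maintain a running index $i_j = g_1 + \cdots + g_j$; if $i_j \le n$, add $u_{i_j}$ to $S$, otherwise halt. Each geometric draw can be produced in $O(1)$ time via inverse-CDF: given $U_j$ uniform on $(0,1)$, set $g_j = \lceil \ln(U_j) / \ln(1-p) \rceil$. It is a standard calculation that the random positions produced in this way are in bijective distribution with the set obtained by flipping an independent $\mathrm{Bernoulli}(p)$ coin for each element of $U$, so the output distribution is exactly correct.

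The main thing to verify is the high-probability running time, since the total work is $O(1)$ per geometric draw and the number of draws is $|S|+1$, where $|S|$ is a $\mathrm{Binomial}(n,p)$ random variable with mean $pn$. A Chernoff bound gives $\Pr[|S| \ge pn + t] \le \exp(-\Omega(t^2/(pn+t)))$, so with probability $1-\delta$ we have $|S| = O(pn + \log(1/\delta)) = O(pn \log(n/\delta))$ in the regime $p \ge 1/n$; for the degenerate case $pn \le 1$ the bound $|S| = O(\log(n/\delta))$ follows from the same Chernoff tail, and absorbing all constants yields the claimed $O(pn \log(n/\delta))$ runtime.

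The one subtlety worth flagging is that in an RAM model the geometric draws require evaluating $\ln$ and $\lceil \cdot \rceil$ on a single real number per draw; this is standard in the cited works \cite{Knuth97,Devroye06,BringmannP12}, and Bringmann--Panagiotou in particular show how to implement these primitives in $O(1)$ word-RAM operations with $O(\log n)$-bit precision so that the sampled distribution is statistically indistinguishable from exact geometric sampling, which suffices for the stated high-probability bound. Combining the two regimes yields the claimed overall runtime of $O(\min(pn \log(n/\delta), n))$.
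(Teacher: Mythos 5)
Your proof is correct, and it is essentially the standard argument behind the cited references: the paper itself does not prove this proposition but imports it from \cite{Knuth97,Devroye06,BringmannP12}, and your two-regime split (naive enumeration when $p \ge 1/\log(n/\delta)$, geometric skip-sampling otherwise) together with the Chernoff bound on $|S| \sim \mathrm{Binomial}(n,p)$ is exactly the skip-sampling technique those works formalize, including the word-RAM precision issue you correctly attribute to Bringmann--Panagiotou. The only loose end is the degenerate regime $pn \ll 1$, where $O(pn\log(n/\delta))$ can be sub-constant while any algorithm needs $\Omega(1)$ time; this is an artifact of how the proposition is stated rather than of your proof, and it never arises in the paper's applications (there $p|U| = \Theta(\rho) \ge 1$ always).
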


\begin{algorithm2e}[ht!] \label{alg:dsd}
    \caption{$\DSDalgo(e)$}
    Update $[\ddi]_v,[\ddo]_v$ for each $v \in e$\;
    \ForEach{$v \in e$ such that $v \in V_S$}{
        $F_v \gets \SAalgo(E_v, \frac{2L\rho}{\dd_v})$\;
        $\ww_e' \gets \frac{1}{p_e} \ww_e$ for each $e \in F_v$, where $p_e = 1$
        if either $\dd_{h(e)} \le 2L\rho$ or $\dd_{t(e)} \le 2L\rho$ and  $p_e =
        4L^2\rho^2\frac{(\dd_{h(e)} + \dd_{t(e)} - 1)}{\dd_{h(e)}\dd_{h(e)}}
$
        otherwise\;
    }
    Update $\ddi,\ddo$\;
\end{algorithm2e}

\begin{lemma} \label{lemma:decexpdeg}
    Under the same assumptions on $\vG,\eps,\phi$ as in
    \Cref{lemma:decexpspec_ext},
    there is an algorithm that given an oblivious online sequence of edge
    deletions, preprocesses in time $O(m)$, maintains explicitly a reweighted
    directed subgraph $\vG'$ that is an $\eps$-degree approximation of $\vG$ with
    $\nnz(\ww') = O(\eps^2 N \log n)$ where $N$ is the number of non-trivial
    vertices currently.
    The algorithm has worst-case recourse $O(\eps^{-2} \log n)$ and worst-case
    update time $O(\eps^{-2} \log^2 n)$ .
\end{lemma}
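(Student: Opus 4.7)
The plan is to reduce the correctness of \DSPalgo and \DSDalgo to the static concentration guarantee of \Cref{lemma:entrysample}, while leveraging the fast subset-sampling primitive of \Cref{prop:subset} to meet the claimed runtime. The key observation is that, under the vertex-wise sampling in \DSPalgo, the event ``edge $e=(u,v)$ appears in $\vG'$'' is the union of two independent Bernoulli trials, one drawn at each endpoint with parameters $\min\{1, 2L\rho/\dd_u\}$ and $\min\{1, 2L\rho/\dd_v\}$; the underlying trials are independent across distinct edges, and reweighting each retained edge by the inverse of its inclusion probability $p_e$ makes $\vG'$ an unbiased entry-wise estimator of $\vG$.

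To invoke \Cref{lemma:entrysample} on $\AA = \TT^\top \WW \HH$ with $\rr = \ddi$, $\cc = \ddo$, I would verify that $p_e \ge \min\{1, c \rho \ww_e / \sqrt{\dd_u \dd_v}\}$ for a suitable constant $c$: combining $\ww_e \le 2L$ with the AM-GM bound $\dd_u + \dd_v \ge 2\sqrt{\dd_u \dd_v}$ yields $p_e \ge 2L\rho / \sqrt{\dd_u \dd_v} \ge \rho \ww_e / \sqrt{\dd_u \dd_v}$, which with $\rho = \cds \eps^{-2}\log(8n)$ is enough to conclude the three inequalities of \eqref{eq:degappr}, and hence $\eps$-degree approximation. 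For sparsity, the bounded weight ratio forces $\dd_v = \Theta(L \deg(v))$, so each vertex samples $O(\rho) = O(\eps^{-2} \log n)$ edges in expectation; a multiplicative Chernoff bound plus a union bound over the polynomially many time steps (valid against an oblivious adversary) gives $\nnz(\ww') = O(\eps^{-2} N \log n)$ with high probability at every time. Preprocessing is $O(m)$ because \Cref{prop:subset} costs $O(\rho \log n)$ at vertices with $\dd_v > 2L\rho$ and $O(\deg(v))$ otherwise, and these sum to $O(m)$. On deletion of $(a,b)$, only $\dd_a, \dd_b$ change, so only $F_a, F_b$ need refreshing; each refresh takes worst-case $O(\rho \log n) = O(\eps^{-2} \log^2 n)$ time and changes at most $O(\rho) = O(\eps^{-2} \log n)$ edges, matching the claimed worst-case update time and recourse.

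The main obstacle is to verify inductively that, after any prefix of deletions, the distribution of $\vG'$ still satisfies the hypotheses of \Cref{lemma:entrysample} relative to the current degrees, so that the static analysis applies at every time step. The invariant I would maintain is: whenever $\dd_v$ changes, $F_v$ is immediately refreshed with fresh randomness keyed to the new $\dd_v$. Under this invariant, for every surviving edge $(u,v)$ at time $t$, each of the two endpoint-level Bernoulli trials was drawn while that endpoint's degree equaled its current value $\dd^{(t)}$, since if the other endpoint's degree had changed, that endpoint would have been refreshed too. Independence across edges is preserved because different vertices refresh with independent random bits and each refresh redraws the entire subset sample. Combined with the oblivious-adversary union bound over time, this invariant closes the correctness argument.
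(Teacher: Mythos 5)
Your proposal is correct and follows essentially the same route as the paper's proof: the same reduction of correctness at each time step to \Cref{lemma:entrysample} via the observation that edge inclusion is a union of two independent endpoint-level Bernoulli trials with $p_e \ge \rho\ww_e/\sqrt{\dd_u\dd_v}$, the same use of obliviousness to freeze/reuse the randomness of non-incident edges across deletions, and the same accounting via \Cref{prop:subset}, Chernoff bounds on $|F_v|$, and union bounds over the $O(n^2)$ updates. (Your AM-GM derivation of the inclusion-probability lower bound is a slightly cleaner version of the paper's algebra, and your $\nnz(\ww') = O(\eps^{-2}N\log n)$ corrects an evident sign typo in the lemma statement.)
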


\begin{proof}%
    We use \DSP for preprocessing and \DSD to handle edge deletions.
    We start by proving our complexity guarantees.
    The runtime of the preprocessing steps is bounded the runtime of all the
    edge samplings plus an additional overhead of $O(m)$.
    Then, \Cref{prop:subset} gives that the total runtime for sampling is with
    high probability $O(\min(\rho n \log n, m))$ where we note that
    \begin{equation} \label{eq:degsum}
        \frac{|E_v|L}{\dd_v} \in (0.5,1) \sum_{e \ni v} \frac{\ww_e}{\dd_v} =
        (0.5,1)
    \end{equation}
    using our edge weights condition.
    For an edge deletion of $e$, our algorithm performs at most 2 vertex
    resamplings for its incident vertex $v \ni e$.
    Our high probability worst-case update time then follows by
    \Cref{prop:subset}, \eqref{eq:degsum} and a union bound over at most $n^2$
    total deletions.

    To bound the recourse, we show with high probability over
    the entire sequence of oblivious updates, all sampled sets $F_v$ has
    $|F_v| = O(\rho)$.
    Condition on previous updates, we can apply a multiplicative Chernoff bound
    to show a high probability bound that each time we perform a resample on
    some $v$, we are guaranteed $|F_v| = O(\rho)$ by \eqref{eq:degsum}.
    Since the updates are oblivious (thus do not depend on our random choice), a
    union bound over preprocessing and all $n^2$ updates gives with high
    probability all reweighting throughout the operations satisfies
    $|F_v| = O(\rho)$ for a large enough constant $\cds$.
    The recourse of each edge update depends on size of the previous
    reweighting and the new reweighting after resamplings.
    Condition on the success of the bound above, we get that the worst-case
    recourse is at most $O(\rho) = O(\eps^{-2} \log n)$ as required.
    Moreover, we also get that the total number of edges $\nnz(\ww') \le \sum_{v
    \in V} |F_v| = O(\nnz(\dd) \cdot \eps^{-2} \log n)$ where $\nnz(\dd)$
    is the number of non-trivial vertices.

    Finally, we show that with high probability $\vG' = (V,E,\ww')$ is a
    $\eps$-degree approximation of $\vG = (V,E,\ww)$ throughout all updates.
    Conceptually, we let an edge removal of $e$ be carried out by setting
    $\ww_e \gets 0$ instead.
    Let $e_t$ be the edge deleted at the $t$'th update.
    We define $\vG_t = (V,E,\ww_t), \vG_t=(V,E,\ww_t')$ be the graphs $\vG,\vG'$
    after $t$'th edge deletions with $\vG_0,\vG_0'$ the initial input and
    sparsifier after preprocessing.
    For an edge deletion $e_t = (u,v)$, for all vertices $w \ne u,v$, its degree
    satisfies $[\dd_t]_w = [\dd_{t-1}]_w$.
    Thus, the sampling probability for all edges $e \not\ni u,v$ remains
    unchanged.
    As all edge deletions are oblivious and do not depend on our random choices,
    we can reuse them for these non-incident edges.
    Our random sampling gives a that the probability of $e$ being sampled is 
    \[
        p_e = 
        1 - (1-\frac{2L\rho}{\dd_{h(e)}})(1-\frac{2L\rho}{\dd_{t(e)}}) = 
        2L\rho\frac{(\dd_{h(e)} + \dd_{t(e)}) - 2L\rho)}{\dd_{h(e)}\dd_{t(e)}}
        \ge 
        \frac{4L^2\rho^2}{\dd_{h(e)}\dd_{t(e)}}
    \]
    if both $\dd_{h(e)},\dd_{t(e)} > 2L\rho$; otherwise $p_e = 1$ which is as
    large as possible. 
    Note that $p_e$ is as defined in the algorithm.
    Then,
    \[
        [\ww_t(w)']_f = 
        \begin{cases}
            0 & \mbox{w.p.~}
            1-p_e,
            \\
            \frac{1}{p_e} [\ww_t]_f & \mbox{w.p.~}
            p_e
        \end{cases}
    \]
    By \Cref{lemma:entrysample} and a union bound over preprocessing and at most
    $n^2$ edge updates, by having an appropriately set constant $\cds$, we get
    the $\eps$-degree approximation with high probability.
\end{proof}

\Cref{lemma:decexpspec_int} follows directly by combining \Cref{lemma:intpatch}
and \Cref{lemma:decexpdeg} with approximation factor $\eps' = \Theta(\eps
\phi^{-4} \log n)$ in the same way as the scaling and patching step at the end
of \SS (Algorithm~\ref{alg:sparsify_static}).
Note that we only ever compute the internal patching when given a graph query.
We remark that it is possible to combine a dynamic undirected Laplacian solver,
(e.g. \cite{DurfeeGGP19,vdBrandGJLLPS22}), for sublinear time edge queries on
dynamic sparsifiers with subgraph guarantees. 

We now show that one can implicitly maintain an external patching as in
\Cref{lemma:vertexflow} that supports fast edge query.
\begin{lemma}[Dynamic external patching] \label{lemma:dynextpatch}
    Let dynamic vectors $\dd_1 \in \R_{\ge 0}^{V_1},\dd_2 \in \R_{\ge 0}^{V_2}$
    for disjoint sets $V_1,V_2$ where each update changes the value of one entry
    in $V_1 \cup V_2$.
    There is an deterministic algorithm that maintains implicitly a vector $\ff
    \in \R_{\ge 0}^{V_1 \times V_2}$ on the complete directed bipatite graph on
    edges $E = V_1 \times V_2$ with edge-vertex transfer matrix $\BB = \HH -
    \TT$ such that for every $i \in \{1,2\}$%
    it satisfies $[|\BB|^\top \ff]_{V_i} \le \dd_i,$
    with equal sign holds when $\|\ddi\|_1 = \min(\|\dd_1\|_1,\|\dd_2\|_1)$,
    $\nnz(\ff) \le n$ where $n = |V_1| + |V_2|$,
    and for $\FF = \diag{\ff}$,
    \[
        \normop{\DD_2^{\frac \dag 2} \TT^\top \FF \HH \DD_1^{\frac \dag 2}} \le
        1.
    \]
    The algorithm has a preprocessing time of $O(n)$ and worst-case
    update time of $O(\log n)$.
    The entire $\ff$ can be queried in time $O((\nnz(\dd_1)+\nnz(\dd_2))\log n)$.
    Further, given as query a pair of entries $u \in V_1, v \in V_2$, it returns
    $\ff_{(u,v)}$ in time $O(\log n)$.
\end{lemma}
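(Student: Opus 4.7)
The plan is to implement the greedy interval-matching construction underlying \Cref{lemma:extpatch} so that the flow $\ff$ is never written down explicitly; instead it is represented by two balanced binary search trees (segment trees) that maintain $\dd_1$ and $\dd_2$. Fix an arbitrary total ordering on $V_1$ and on $V_2$. For $u\in V_1$ let $s^{(1)}_u \defeq \sum_{u' < u} [\dd_1]_{u'}$ and set $I^{(1)}_u \defeq [s^{(1)}_u, s^{(1)}_u + [\dd_1]_u)$; analogously define $s^{(2)}_v$ and $I^{(2)}_v$ for $v\in V_2$. Writing $M \defeq \min(\|\dd_1\|_1,\|\dd_2\|_1)$, define
\[
    \ff_{(u,v)} \defeq \bigl|\,I^{(1)}_u \cap I^{(2)}_v \cap [0,M)\,\bigr|.
\]
This is exactly the greedy matching produced in the proof of \Cref{lemma:extpatch} (consume prefix of $\dd_1$ and $\dd_2$ side by side, capped at the common mass $M$). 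In particular it is non-negative, supported on $\le |V_1|+|V_2|=n$ pairs $(u,v)$ because a new pair is opened only when one of the two interval pointers advances, and it satisfies $[\HH^\top\ff]_u \le [\dd_1]_u$ and $[\TT^\top\ff]_v\le [\dd_2]_v$, with equality on the side where $\|\dd_i\|_1=M$. The operator-norm bound is then immediate from \Cref{lemma:vertexflow}.

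For the data structure, I maintain $\dd_1$ and $\dd_2$ each in a segment tree keyed by the chosen orderings, where every internal node stores the sum of the $\dd_i$-values in its subtree. Point updates to a coordinate of $\dd_1$ or $\dd_2$ cost $O(\log n)$ per update, and the tree supports two primitive queries in $O(\log n)$ time: (i) given $u$, return $s^{(i)}_u$ via a standard prefix-sum descent, and (ii) given a real $t\in[0,\|\dd_i\|_1)$, return the unique $u$ with $t\in I^{(i)}_u$ by descending into the subtree whose cumulative sum first exceeds $t$. Preprocessing consists in building the two trees in $O(n)$ time.

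An edge query for $(u,v)$ uses primitive (i) twice to retrieve $s^{(1)}_u, [\dd_1]_u, s^{(2)}_v, [\dd_2]_v$, and then outputs $\max\bigl(0,\ \min(s^{(1)}_u+[\dd_1]_u,\ s^{(2)}_v+[\dd_2]_v,\ M)-\max(s^{(1)}_u, s^{(2)}_v)\bigr)$. A graph query is performed by a synchronized sweep: starting from the first $u\in V_1$ and $v\in V_2$ in the ordering, I use primitive (ii) to locate the interval currently being traversed, output the overlap piece (an entry of $\ff$), advance the pointer corresponding to whichever interval ends first, and repeat until the total mass $M$ is exhausted. Each step produces one non-zero entry and performs $O(1)$ tree descents, and trivial vertices (those with $[\dd_i]_v=0$) are skipped in $O(\log n)$ using primitive (ii); hence the sweep runs in $O((\nnz(\dd_1)+\nnz(\dd_2))\log n)$ and produces at most $n$ edges, matching the claimed sparsity.

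The only subtle point, and the part that deserves care rather than being an obstacle, is the boundary handling when $\|\dd_1\|_1 \ne \|\dd_2\|_1$: capping at $M$ is what guarantees $\ff\ge 0$, ensures $|\BB|^\top\ff \le \dd_1\oplus\dd_2$ component-wise, and delivers equality precisely on the saturated side. Both the edge-query formula and the termination condition of the graph-query sweep use the tracked value $M = \min(\|\dd_1\|_1,\|\dd_2\|_1)$, which is maintained in $O(1)$ per update from the roots of the two segment trees. All stated preprocessing, update, edge-query, and graph-query time bounds then follow.
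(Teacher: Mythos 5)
Your proposal is correct and matches the paper's own proof essentially step for step: the same interval/prefix-sum representation of the greedy matching from \Cref{lemma:extpatch}, the same two augmented segment trees with $O(\log n)$ point updates and prefix-sum queries, the same counting argument for $\nnz(\ff)\le n$, and the operator-norm bound via \Cref{lemma:vertexflow} (the paper cites \Cref{lemma:optoinf} directly, but \Cref{lemma:vertexflow} is just that lemma specialized to flows, so this is the same argument). The explicit cap at $M$ is a harmless no-op since the two interval families already live in $[0,\|\dd_1\|_1)$ and $[0,\|\dd_2\|_1)$ respectively.
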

\begin{proof}
    We fix an arbitrary order of $V_1$ and $V_2$.
    For each $i \in \{1,2\}$, our algorithm maintains a segment tree for $\dd_i$
    augmented to support querying the next non-zero index in $\dd_i$
    \footnote{This operation can also be implemented using a separate data
    structure, such as a balanced binary search tree.}.
    For preprocessing, we simply initialize the two segment trees, taking $O(n)$
    time.
    For each value update in $\dd_i$, we propagate the update to its
    corresponding segment tree, giving a worst-case update time of $O(\log n)$.

    For a vertex $v \in V_i$, we say the range of $v$ is the $[s_{v-1},s_v]$
    where $s_v$ is the total sum of $\dd_i$ up to $v$ and $s_{v-1}$ is the total
    sum of $\dd_i$ before $v$ (0 if $v$ is the first in order).
    The vector $\ff$ is defined by
    \[
        \ff_{(u,v)} = |[s_{u-1},s_u] \cap [s_{v-1},s_v]|.
    \]
    It is immediate from this definition that 
    $[|\BB|\ff]_{V_1} = \HH \ff \le \dd_1, [|\BB|\ff]_{V_2} = \TT\ff \le \dd_2$
    and with equal signs if $\|\dd_2 \|_1 \le \|\dd_1 \|_1$ or respectively
    $\|\dd_1 \|_1 \le \|\dd_2 \|_1$.
    \Cref{lemma:optoinf} then gives the operator norm bound.
    To see the number of non-zero entries of $\ff$, observe that when going
    through the ranges of the segment trees, each non-zero $\ff_{(u,v)}$ must
    occupy the remaining demand of either $u$ or $v$ up to that point.
    Thus, $\nnz(\ff) \le n$ as required.

    Consider the query operations.
    For a single edge query with $(u,v)$, we simply compute then entry
    $\ff_{(u,v)}$ as above using at most 4 sum queries on the two segment trees.
    These queries can be carried out in $O(\log n)$ time.
    As for querying the entire $\ff$, we simply traverses through orders of
    $V_1,V_2$ simultaneously, incrementing the index
    for $v_i$ with the smaller current sum while skipping zero entries.
    The algorithm only needs to traverse through the entire current support
    $\supp(\dd_1) \cup \supp(\dd_2)$, with one sum query and one finding the
    next non-zero entry query for each element.
    As both queries run in time $O(\log n)$, we get the required total query
    time.
\end{proof}

Similarly, \Cref{lemma:decexpspec_ext} follows by combining
\Cref{lemma:dynextpatch,lemma:extpatch_algo,lemma:decexpdeg} with
approximation factor $\eps' = \Theta(\eps \phi^{-2})$.

\begin{proof}[Proof of \Cref{lemma:decexpspec_ext}]
    Our algorithm is a dynamic version of \SS
    (Algorithm~\ref{alg:sparsify_static}).
    It maintains an $\eps'$-degree approximation $\vG'$ of $\vG$ with
    $\eps' = \Theta(\eps \phi^{-2})$ through \Cref{lemma:decexpdeg} and dynamic
    maintains an external patching using \PAE (Algorithm~\ref{alg:patching_ext})
    with \Cref{lemma:vertexflow} replaced by its dynamic version
    \Cref{lemma:dynextpatch}.
    The approximation and sparsity guarantees follows similarly to the
    that of the static algorithm \Cref{lemma:subspar_ext}.

    The update time consists of the update time of our degree approximation
    $\vG'$ and the total update time incurred by the changes to $\vG'$.
    Since there is a recourse of at most $O(\eps^{-2} \phi^{-4} \log n)$ and a update time
    of $O(\eps^{-2} \phi^{-4} \log^2 n)$ per update on $\vG$, the vectors
    $\dd_1,\dd_2$ has at most $O(\eps^{-2} \phi^{-4} \log n)$ entries change.
    \Cref{lemma:dynextpatch} then gives a worst-case update time of $O(\eps^{-2}
    \phi^{-4} \log^2 n)$ for the entire operation.
    For the queries, we simply propagate them to $\vG'$ and $\ff$ and aggregate
    them together using the fixed scalings in \PAE.
\end{proof}

Our proof of \Cref{lemma:decexpspec_star}, which we omit, also follows
almost identically to the proofs of
\Cref{lemma:decexpspec_int,lemma:decexpspec_ext} above.
Note that the static algorithm \PAS (Algorithm~\ref{alg:patching_star}) can be
trivially made dynamic with deterministic updates in $O(1)$ worst-case update
time.
This dynamic algorithm can also maintain the guarantees in
\Cref{lemma:starpatch_algo}.

We are now ready to prove our main theorems of this section.
Once again, we omit the proof to \Cref{thm:dynspecint_for} and focus only on
\Cref{thm:dynspecext_for,thm:dynspecstar_for}
\begin{proof} [Proof of \Cref{thm:dynspecstar_for,thm:dynspecext_for}]
    Our dynamic algorithm maintains an $\eps$-degree-preserving directed
    spectral sparsifier for the bipartite lift $\vG^\uparrow = \blift(\vG)$.
    \Cref{lemma:blift_spectral,lemma:blift_schur} then allows us to recover an
    $\eps$-degree balance preserving sparsifier, even for a graph with extra
    vertices, using the canonical bijection between the edges $E_{\vG}^\uparrow$
    and $E_{\vG}$.
    We can now assume w.l.o.g. that our algorithm is given instead a directed
    bipartite graph with bipartite $C \cup R = V$ and all directed edges from
    $C$ to $R$.
    
    Since our sparsification problem satisfies the union property
    \Cref{lemma:dispec_union}, we can apply the dynamic expander decomposition
    framework in \Cref{lemma:red_dynexp} using our decremental sparsifier on
    expander with dynamic external patchings \Cref{lemma:decexpspec_ext}. 
    We set the expansion to $\phi = 1/\Omega(\log^2 n)$.
    Then, both algorithms in
    \Cref{lemma:decexpspec_ext,lemma:decexpspec_star} has preprocessing time
    $P(m,\eps) = O(m)$, update time $T(N,\eps) \le O(\eps^{-2} \log^{10} n)$,
    sparsity $S(N,\eps) \le O(\eps^{-2} N \log^9 n)$.
    The only difference here for them is that the algorithm in
    \Cref{lemma:decexpspec_star} maintains an explicit sparsifier with
    $N(\cdot) = 1$ extra vertex, while the one in \Cref{lemma:decexpspec_ext}
    maintains an implicit sparsifier with $N(\cdot) = 0$ and has query time
    $Q_1(N,\eps) \le O(\log n)$ for an edge query and $Q_2(N,\eps) \le
    O(\eps^{-2} N \log^{10} n)$ for a graph query.
    Note that $N$ is the number of non-trivial vertices in the decremental
    directed graphs with corresponding undirected expander, and $n$ is the
    number of vertices in the entire graph $\vG$.
    Then, our overall guarantees are given by \Cref{lemma:red_dynexp}.
    To summarize, our dynamic algorithm has preprocessing time of $O(m \log^7
    n)$, amortized update time $O(\eps^{-2} \log^{13} n)$, number of edges in
    the sparsifier $O(\eps^{-2} n \log^{11} n \log W)$.
    For \Cref{thm:dynspecstar_for}, the graph $\vH$ is explicitly maintained with
    extra vertices $X$ satisfies $|X| \le O(n \log^2 n \log W)$.
    For \Cref{thm:dynspecext_for}, the sparsifier $\vH$ is implicitly maintained
    with edge query time $O(\log^3 n \log W)$, and graph query time $O(\eps^{-2}
    n \log^{12} n \log W)$.
\end{proof}

\subsection{Degree-preserving directed spectral sparsification}
\label{ssec:degpre}

In this section, we show to extend the degree-balance preserving guarantee from
the algorithms above to exact degree-preserving, i.e., the algorithm maintains a
sparsifier with exact same weighted in and out degrees as the input directed
graph.
Even though our algorithms are degree-preserving on bipartite lifted graphs, the
patching schemes, except internal patching, may introduce self-loops when
``unlifting'' the sparsifier back to the original set of vertices.
Specifically, \PAE (Algorithm~\ref{alg:patching_ext}) may introduce for $v \in
V$ an edge from $v$ to $v'$, where $v'$ is the corresponding copy of $v$ in the
lifted graph.
When contracting back $v$ and $v'$, this extra edge creates a self-loop.
This is not a problem for \PAI (Algorithm~\ref{alg:patching_int}), since the
algorithm only returns edges in the original lift, which cannot contain edges of
the form $(v,v')$.
A similar problem arises for our star patchings.
Eliminating the center of a star creates a biclique on all of its adjacent
vertices.
If there was an edge $(v,x)$ and $(x,v')$ for center $x$, then the biclique
contains the degenerative edge of $(v,v')$, making it not degree-preserving.

We show first that the external patching algorithm can be modified to satisfy
degree-preserving condition.
Note that we specifically require for a pair of corresponding vertices $(v,v')$,
the sum of their weighted degrees $\ddo_v + \ddi_{v'}$ is no more than the total
out demand (or in demand). 
The is crucial, since otherwise self-loop can be
unavoidable.
It is also a logical assumption -- otherwise there must be an edge
from $v$ to $v'$ in the first place.
\begin{lemma}[Dynamic external patching, degree-preserving] \label{lemma:dynextpatch_mod}
    Let dynamic vectors $\dd_1 \in \R_{\ge 0}^{V_1},\dd_2 \in \R_{\ge 0}^{V_2}$
    be defined on disjoint sets $V_1,V_2$ satisfying $\|\dd_1\|_1 = \|\dd_2\|_1$
    and $\max_{u \in V_i} [\dd_i]_u, \le \frac{1}{2}\|\dd_i\|_1$ for $i\in
    \{1,2\}$.
    Each update changes one entry in each $\dd_i$.
    Suppose we are given a fixed injective function $g
    : U_1 \mapsto V_2$ for subset $U_1 \subseteq V_1$ with $|U_1| \le |V_1|,
    |U_1| \le |V_2|$ satisfying always that $[\dd_1]_u + [\dd_2]_{g(u)} \le
    \|\dd_1\|_1$.
    There is a deterministic algorithm that maintains implicitly a vector $\ff
    \in \R_{\ge 0}^E$ on the complete directed graph on
    edges $E = V_1 \times V_2$ with edge-vertex transfer matrix $\BB = \HH -
    \TT$ such that for every $i \in \{1,2\}$%
    it satisfies $[|\BB|^\top \ff]_{V_i} = \dd_i$, $\ff_{(u,g(u))} = 0$ for all
    $u \in U_1$, $\nnz(\ff) \le 2n$ where $n = |V_1| + |V_2|$,
    and for $\FF = \diag{\ff}$,
    \[
        \normop{\DD_2^{\frac \dag 2} \TT^\top \FF \HH \DD_1^{\frac \dag 2}} \le
        1.
    \]
    The algorithm has a preprocessing time of $O(n)$ and worst-case
    update time of $O(\log n)$.
    The entire $\ff$ can be queried in time $O((\nnz(\dd_1)+\nnz(\dd_2))\log n)$.
    Further, given as query a pair of vertices $u \in V_1, v \in V_2$, it returns
    $\ff_{(u,v)}$ in time $O(\log n)$.
\end{lemma}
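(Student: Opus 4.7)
My plan is to extend the interval-based construction of \Cref{lemma:dynextpatch} so that the output flow places zero mass on every forbidden edge $(u, g(u))$ for $u \in U_1$. Recall that in the basic construction, each vertex is assigned an interval on $[0, T]$ (with $T = \|\dd_1\|_1$) whose length equals its demand, arranged according to a chosen ordering, and $\ff_{(u, v)} = |I_u^{(1)} \cap I_v^{(2)}|$. To enforce $\ff_{(u, g(u))} = 0$, I would choose orderings that push $I_u^{(1)}$ and $I_{g(u)}^{(2)}$ to opposite ends of the domain and then patch any residual conflicts by a local flow swap.

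Concretely, I would place the vertices of $U_1$ at the start of $V_1$'s ordering and the vertices of $g(U_1)$ at the end of $V_2$'s ordering in the matching reverse arrangement, so that if $u_i$ occupies $[A_{i-1}, A_i]$ on the $V_1$ side, then $g(u_i)$ occupies $[T - B_i, T - B_{i-1}]$ on the $V_2$ side. The per-pair bound $[\dd_1]_u + [\dd_2]_{g(u)} \le T$ combined with the max-demand bound $\max [\dd_i] \le T/2$ ensures the two blocks fit non-overlappingly whenever the bulk of $V_1 \setminus U_1$ or $V_2 \setminus g(U_1)$ supplies enough slack. In the tight regime where an overlap still occurs for some pair, I would perform an \emph{augmenting swap}: find an adjacent nonzero entry $\ff_{(u', v')}$ with $u' \ne u$ and $v' \ne g(u)$ and reroute the offending mass $\alpha = \ff_{(u, g(u))}$ via $\ff_{(u, g(u))} \mathrel{-}{=} \alpha$, $\ff_{(u', v')} \mathrel{-}{=} \alpha$, $\ff_{(u, v')} \mathrel{+}{=} \alpha$, $\ff_{(u', g(u))} \mathrel{+}{=} \alpha$. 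Injectivity of $g$ rules out the creation of any new forbidden edge. Each swap adds at most two nonzeros to $\ff$, giving $\nnz(\ff) \le n + 2|U_1| \le 2n$; demand preservation is immediate under swaps, and together with \Cref{lemma:vertexflow} it yields the operator norm bound $\normop{\DD_2^{\dag/2} \TT^\top \FF \HH \DD_1^{\dag/2}} \le 1$.

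For the dynamic data structure, I maintain the two augmented segment trees from \Cref{lemma:dynextpatch} (tracking the cumulative sums for the chosen orderings above) plus an auxiliary balanced BST that records the current swap adjustment for each forbidden pair in $U_1$. Each update of one entry of $\dd_1$ or $\dd_2$ alters $O(1)$ entries of the base flow and triggers $O(1)$ swap recomputations, each in $O(\log n)$ time via the tree. Edge queries and full-flow queries aggregate the base contribution with the recorded swap adjustments. The main obstacle is guaranteeing that a valid swap partner with $\ff_{(u', v')} \ge \alpha$ always exists, especially when many pairs in $U_1$ are simultaneously near saturation $[\dd_1]_u + [\dd_2]_{g(u)} \approx T$. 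I would handle this by charging each swap to the slack guaranteed by $\max [\dd_i] \le T/2$ in an amortized argument, or, as a fallback, by splitting the construction into two half-demand copies on $[0, T/2]$ with opposite orderings so that at most one copy ever produces a forbidden conflict and the other copy always has a valid swap partner.
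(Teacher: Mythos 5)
Your ordering idea does buy you something real: with $U_1$ packed at the front of $V_1$'s order and $g(U_1)$ packed in reverse at the back of $V_2$'s order, the pair sums $A_i+B_i$ are monotone in $i$, so at most one pair $(u_{i^*},g(u_{i^*}))$ --- the unique index with $A_{i^*-1}+B_{i^*-1} < T \le A_{i^*}+B_{i^*}$, where $T=\|\dd_1\|_1$ --- can receive flow on its forbidden edge. (You do not actually prove this single-crossing fact, but it is recoverable.) The gap is in how you resolve that one conflict. The total mass on edges avoiding both $u_{i^*}$ and $g(u_{i^*})$ is at least $T-[\dd_1]_{u_{i^*}}-[\dd_2]_{g(u_{i^*})}+\alpha \ge \alpha$, but it can be spread over many edges, so a single partner with $\ff_{(u',v')} \ge \alpha$ need not exist; you may need $\Theta(n)$ partial swaps, which breaks your sparsity accounting of $n+2|U_1|$ and, more importantly, cannot be located and re-derived in \emph{worst-case} $O(\log n)$ time after each update --- the lemma demands worst-case bounds, so your amortized charging fallback does not suffice. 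Your second fallback (two half-demand copies with opposite orderings) also does not close the gap: each copy has its own crossing index, generically at a \emph{different} pair, so both copies can place mass on forbidden edges and you are back to needing swaps in both.

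The paper's proof avoids swaps entirely by a two-flow decomposition. It extracts the auxiliary demand $\aa$ with $\aa_u=\min([\dd_1]_u,[\dd_2]_{g(u)})$ for $u\in U_1$, capping only the largest entry $\aa_w$ so that $\max_u \aa_u \le \tfrac{1}{2}\|\aa\|_1$. The residual flow for $(\dd_1-\aa,\dd_2-\PP_g\aa)$ is conflict-free because every pair other than $w$ has zero residual on at least one side, and $w$, $g(w)$ are placed first and last respectively (using $[\dd_1]_w+[\dd_2]_{g(w)}\le T$). The $\aa$-flow is conflict-free because $V_2$ is ordered by a cyclic shift of $V_1$'s order, which forces every $V_1$-prefix sum to dominate the corresponding $V_2$-prefix sum, so the intervals of $u$ and $g(u)$ never intersect. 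Both pieces are plain interval flows on segment trees, which is what delivers the deterministic worst-case $O(\log n)$ update time, $\nnz(\ff)\le 2n$, and the operator-norm bound via \Cref{lemma:vertexflow}. You would need a construction of this kind --- one in which no forbidden edge ever carries flow by design --- rather than a repair step whose existence and maintainability are not established.
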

\begin{proof}%
    Similar to before (\Cref{lemma:dynextpatch}), we are using augmented segment
    trees for maintaining $\ff$.
    Here, we will instead use 4 segment trees.
    Let $\PP_g$ be the matrix of $g$ that maps $\PP_g \vone_u = \vone_{g(u)}$
    for each $u \in V_1$.

    Let $\bb \in \R_{\ge 0}^{V_1}$ be defined by $\bb_u = \min_([\dd_1]_u,
    [\dd_2]_{g(u)})$ for $u \in U_1$ and $\bb_u = 0$ otherwise.
    Let $w \in U_1$ be the vertex with $\bb_w \ge \bb_u$ for any $u \in U_1$.
    We define a vector $\aa$ to be the same as $\bb$ on all entries but $w$.
    For the entry on $w$, there are two cases to consider.
    When $\bb_w \le \frac{1}{2}\|\bb\|_1$, we set $\aa_w = \bb_w$ as well.
    Otherwise, we set $\aa_w = \|\bb\|_1 - \bb_w$.
    Notice that in both cases, 
    \[ 
        \max_{u \in V_1} \aa_u = \aa_w \le \frac{1}{2}\|\aa\|_1.
    \]
    We claim that 
    \[
        [\dd_1]_w + [\dd_2]_{g(w)} - 2\aa_w \le \|\dd_1\|_1 - \|\aa\|_1.
    \]
    Otherwise, $[\dd_1]_w + [\dd_2]_{g(w)} > \|\dd_1\|_1$, violating our
    assumption.
    Each update on $\dd_1,\dd_2$ incurs at most 3 entry changes on $\aa$ and a
    potential change of the maximum entry $w$.
    These changes can be tracked in $O(\log n)$ worst-case update time.

    Consider the first two segment trees, one for each $V_i$.
    Here, we do the same as in \Cref{lemma:dynextpatch} and store implicitly a
    flow $\ff_1$, but for demands $\dd_1 - \aa$ and $\dd_2 - \PP_g\aa$.
    For each $V_i$, we associated it with an order on which we build its segment
    tree.
    To avoid having $\ff_{(w,g(w))} > 0$, we pick $w$ to be the first in order for
    $V_1$ and $g(w)$ to be the last in order for $V_2$; the rest can be ordered
    arbitrarily.
    Note that $\dd_1-\aa \ge \vzero$, $\dd_2 - \PP_g\aa \ge \vzero$
    and $g(\supp(\dd_1 - \aa)) \cap \supp(\dd_2 - \PP_g\aa) \subseteq \{g(w)\}$
    by our construction.
    Then, our ordering and the claim above guarantees that 
    $[\ff_1]_{(u,g(u))} = 0$ for all $u \in U_1$.
    Since each update of $\dd_1,\dd_2$ incurs at most 3 changes to the entries
    of each of $\dd_1-\aa$ and $\dd_2 - \PP_g\aa$, 
    we get a worst-case update time of $O(\log n)$.
    Same as before, we have $\nnz(\ff_1) \le n$.

    For the next two segment trees, we, again, choose an specific ordering for
    each $V_i$.
    Note that the demands we are dealing with are $\aa$ and $\PP_g\aa$, which
    are identical under the permutation $\PP_g$.
    Now, we fix an order $\ppi_1$ for $V_1$ with the only requirement that
    the vertex $w$ is at first.
    For $V_2$, we set $[\ppi_2]_i = g([\ppi_1]_{i+1})$ for $i \le |U_1|-1$,
    $[\ppi_2]_{|U_1|} = \ppi_1 = w$, and the rest in arbitrary order after
    $|U_1|$.
    We observe that for any $k \le |U_1|-1$, 
    \[
        \sum_{i=1}^k \aa([\ppi_1]_i) \ge \sum_{i=1}^k (\PP_g\aa)([\ppi_2]_i),
    \]
    which can be obtained by noticing that
    \[
        \sum_{i=1}^k \aa([\ppi_1]_i) - \sum_{i=1}^k (\PP_g\aa)([\ppi_2]_i)
        =
        \aa([\ppi]_1) - \aa([\ppi]_k)
        \ge 0.
    \]
    Moreover, we have $\aa([\ppi_1]_1) \le \sum_{i=1}^{|U_1|-1}([\ppi_2]_i)$ by
    recalling that $\aa([\ppi_1]_1) = \aa_w \le \frac{1}{2} \|\aa\|_1$.
    Then, computing a flow $\ff_2$ for the demands as before would also
    guarantee that $[\ff_2]_{(u,g(u))} = 0$ for all $u \in U_1$.
    Recall from above that an update on $\dd_1,\dd_2$ incurs at most 3 changes
    on the entries of $\aa$.
    Our algorithm first propagate the changes on $\aa$ to the segment tree on
    $V_1$ with order $\ppi_1$.
    If the largest entry $w$ of $\aa$ is modified to $w' \ne w$,
    we swap the respective orders of $w,w'$ in the segment tree by reweighting
    the respective entries in time $O(\log n)$.
    For $\ppi_2$, we swap instead $[\ppi_2]_{|V_1|}$ and the order of $g(w')$ in
    $\ppi_2$ using the same operations in time $O(\log n)$.
    Thus, our update time for these two trees are also $O(\log n)$.
    We also have $\nnz(\ff_2) \le n$ as the argument in \Cref{lemma:dynextpatch}
    is not affected by the ordering of the vertices.

    Queries can be performed the same ways as in the proof of
    \Cref{lemma:dynextpatch}, but with a sum on the entries of $\ff_1$ and
    $\ff_2$.
\end{proof}

By replacing the algorithm in \Cref{lemma:dynextpatch} with that in
\Cref{lemma:dynextpatch_mod}, we get our degree-preserving dynamic algorithm.
\begin{theorem}[\Cref{thm:dynspecext_for}, degree-preserving]
    \label{thm:dynspecext_for_deg}
    There is a fully dynamic algorithm with the same guarantees in
    \Cref{thm:dynspecext_for} that additionally satisfies the sparsifier $\vH$
    is degree-preserving with respect to $\vH$.
\end{theorem}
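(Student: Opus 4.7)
The plan is to reuse the proof of \Cref{thm:dynspecext_for} with a single surgical change: wherever the dynamic external patching algorithm \Cref{lemma:dynextpatch} is invoked by \PAEalgo inside the decremental sparsifier of \Cref{lemma:decexpspec_ext}, substitute its degree-preserving variant \Cref{lemma:dynextpatch_mod}, instantiated with the injection $g(u) \defeq \bar u$ that maps each ``out-side'' copy of a vertex $u \in V$ in the bipartite lift $\vG^\uparrow = \blift(\vG)$ to its ``in-side'' copy $\bar u$.

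This substitution forces the implicitly maintained patching flow $\ff$ to satisfy $\ff_{(u,\bar u)} = 0$ for every $u \in V$, so that no edge of the form $(u, \bar u)$ ever appears in the maintained bipartite-lift sparsifier $\vH^\uparrow$. Consequently, when the lift is collapsed back to $V$ by contracting each pair $\{u, \bar u\}$, no self-loops are introduced, and the absolute-incidence identity $|\BB_{\vG}|^\top \ww' = |\BB_{\vG}|^\top \ww$ from \Cref{lemma:blift_spectral} upgrades degree-balance preservation on the lift to \emph{exact} degree preservation of the final sparsifier on $\vG$. Every other piece of the proof of \Cref{thm:dynspecext_for} carries through verbatim: \Cref{lemma:dynextpatch_mod} matches \Cref{lemma:dynextpatch} in operator-norm control, $O(\log n)$ worst-case update and edge-query times, and sparsity (now $\nnz(\ff) \le 2n$ instead of $n$, absorbed in constants), so the spectral-approximation argument, the union property \Cref{lemma:dispec_union}, and the dynamic expander-decomposition bookkeeping from \Cref{prop:dynexpdecomp} are unaffected.

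The main technical obligation --- and the part I expect to be the main obstacle --- is to verify, throughout every update on every expander piece, the preconditions of \Cref{lemma:dynextpatch_mod}: $\|\dd_1\|_1 = \|\dd_2\|_1$, $\max_u [\dd_i]_u \le \tfrac{1}{2}\|\dd_i\|_1$ for $i \in \{1,2\}$, and $[\dd_1]_u + [\dd_2]_{g(u)} \le \|\dd_1\|_1$ for each $u$. The first is automatic because $\dd_1$ and $\dd_2$ arise from the same reweighting $\ww - \xi \ww'$. For the remaining two, the entrywise guarantees of \Cref{lemma:entrysample} combined with the choice of $\xi$ used in \SSalgo force each $[\dd_i]_u$ to be an $O(\eps)$ fraction of the original weighted degree at $u$, while $\|\dd_i\|_1 = \Theta(\eps) \cdot (\text{total degree in the piece})$; the bounded-weight-ratio condition of \Cref{def:exp_partition} then caps $[\dd_i]_u/\|\dd_i\|_1$ by $O(1/N)$, where $N$ is the number of non-trivial vertices in the piece, which is below $\tfrac{1}{2}$ whenever $N \ge O(1)$. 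The corner case of expander pieces with only a constant number of non-trivial vertices can be handled by explicitly listing the $O(1)$-edge patching and verifying by inspection that it contains no $(u, \bar u)$ edge, incurring no asymptotic cost.
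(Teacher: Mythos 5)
Your proposal is exactly the paper's argument: the paper's entire proof of this theorem is the one-line substitution of \Cref{lemma:dynextpatch_mod} for \Cref{lemma:dynextpatch} inside the pipeline of \Cref{thm:dynspecext_for}, with $g(u)=\bar u$ on the bipartite lift, and the paper likewise leans on the observation that forcing $\ff_{(u,\bar u)}=0$ prevents self-loops when the lift is collapsed. So the approach and all the bookkeeping match; the paper does not even attempt the precondition verification you flag as the main obligation.

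One caution on the extra verification you supply: the claim that the bounded-weight-ratio condition of \Cref{def:exp_partition} caps $[\dd_i]_u/\norms{\dd_i}_1$ by $O(1/N)$ is false, since bounded \emph{edge}-weight ratio does not bound the \emph{degree} ratio --- a unit-weight star is a $1$-expander whose center carries essentially all of $\norms{\dd_1}_1$, so $\max_u[\dd_1]_u\le\tfrac12\norms{\dd_1}_1$ can fail on legitimate expander pieces. This does not put you behind the paper (which simply assumes the hypotheses of \Cref{lemma:dynextpatch_mod} hold), but the uniformity you need is a property of degrees, not edge weights, and would have to be argued differently (or the hypothesis of \Cref{lemma:dynextpatch_mod} relaxed) for a fully rigorous treatment.
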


We now turn our attention to the star patching algorithm.
To avoid self-loops, we instead consider using multiple stars.
We ensure for each star that the incident edges of the center do not contain any
pair of edges of the form $(v,x)$, $(x,v')$.
For vertex sets $V,V'$, and the complete set of bipartite edges $E = V \times
V'$, we devise a decomposition.
Firstly, we label the vertices of $V$ by $[n]$ where $n = |V|$ and denote by $l$ its
labeling.
For each $v'$ that corresponds to $v$, we assign it the same label.
Then, we defined for each integer $i \in [1, \ceil{\log_2(n)}]$ the edge set
$E_i$ by
\[
    (u,v') \in E_i \quad \text{if } \floor{\frac{l(u)}{2^{i-1}}} \ne
    \floor{\frac{l(v')}{2^{i-1}}}
    \text{ and } \floor{\frac{l(u)}{2^i}} = \floor{\frac{l(v')}{2^i}}.
\]
We then get the following \emph{disjoint} decomposition of $E$,
\[
    E = \{(v,v') | v \in V\} \cup \Par{\bigcup_{i = 1}^{\ceil{\log_2(n)}} E_i}.
\]
Then, to produce a degree-preserving directed spectral sparsifier, we first
decompose the bipartite lift $\vG^\uparrow = (V\cup V', F,\ww)$ into $\ceil{\log_2(n)}$ edges
disjoint subgraphs where each $\vG_i$ only contains the edges $F \cap
E_i$.
Our dynamic algorithm then proceeds the same as before
(\Cref{thm:dynspecstar_for}), with one copy of the algorithm on each $\vG_i$.
Now, suppose the algorithm receives an update (insertion or deletion) with edge
$e = (u,v')$, we first check for which $E_i$ the edge $e$ belongs to.
The update is then propagated to the copy on $\vG_i^\uparrow$ and processed.
Our final sparsifier $\vH$ is a union of our degree-preserving sparsifiers
$\vH_i^\uparrow$ of each $\vG_i^\uparrow$ unlifted back to the original set of
vertices.
The correctness of these operations are guaranteed by
\Cref{lemma:dispec_union,lemma:blift_schur}.
We summarize our conclusion above into the following statement.

\begin{theorem}[\Cref{thm:dynspecstar_for}, degree-preserving]
    \label{thm:dynspecstar_for_deg}
    There exists a fully dynamic algorithm that given a weighted directed graph
    $\vG$ on vertices $V$ undergoing oblivious edge insertions and deletions and
    an $\eps \in (0,1)$, 
    maintains \emph{explicitly} a graph $\vH$ on vertices $V \cup X$ with $X$
    disjoint from $V$ such that with high probability $\Sc(\vH,V)$ is an
    $\eps$-degree-preserving directed spectral sparsifier of $\vG$.
    The algorithm has preprocessing time $O(m \log^7 n)$ and amortized update time
    $O(\eps^{-2} \log^{13} n)$. 
    The sparsifier $\vH$ has size $O(\eps^{-2} n \log^{12} n \log W)$ and extra
    number of vertices $|X| \le O(n \log^3 n \log W)$.
\end{theorem}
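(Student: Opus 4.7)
The plan is to reduce to \Cref{thm:dynspecstar_for} by running that algorithm on each of $O(\log n)$ edge-disjoint sub-instances of the bipartite lift, chosen so that each sub-instance's star patching cannot create a self-loop when the lift is collapsed. By \Cref{lemma:blift_spectral} and \Cref{lemma:blift_schur}, it suffices to maintain a degree-preserving spectral sparsifier of $\vG^\uparrow \defeq \blift(\vG) = (V \cup V', F, \ww)$ explicitly on $V \cup V' \cup X$, since contracting each partner pair $(v, v')$ then produces a degree-preserving sparsifier of $\vG$. The obstacle preventing a direct invocation of \Cref{thm:dynspecstar_for} is that the star patching of \Cref{lemma:starpatch_algo} can place both an incoming edge $(v, x)$ and an outgoing edge $(x, v')$ at a common star center $x$; when $\{v, v'\}$ is then contracted, the biclique on $x$'s neighborhood contributes an edge $(v, v')$ that becomes a self-loop and destroys exact degree preservation.

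To block this, I would use the label-based decomposition described just before the theorem. Label each $v \in V$ with $l(v) \in [n]$ and copy the label to the partner $v' \in V'$. The complete bipartite edge set $V \times V'$ splits into the partner edges $\{(v, v'): v \in V\}$ (handled trivially) and, for $i \in [1, \lceil \log_2 n \rceil]$, the sets $E_i = \{(u, v') : \lfloor l(u)/2^{i-1} \rfloor \ne \lfloor l(v')/2^{i-1} \rfloor,\ \lfloor l(u)/2^i \rfloor = \lfloor l(v')/2^i \rfloor \}$ of edges whose labels agree on bits $\ge i$ and disagree on bit $i-1$. To make the head-side and tail-side of each piece label-disjoint, I would further split each $E_i$ into $E_i^+$ (bit $i-1$ of $l(u)$ equals $0$ and bit $i-1$ of $l(v')$ equals $1$) and $E_i^-$ (with roles swapped). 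Inside any single $E_i^\pm$, any head in $V$ has a fixed value of bit $i-1$ and any tail in $V'$ has the opposite value, so a partner pair $v, v'$ (which shares all bits, including bit $i-1$) cannot simultaneously produce out-demand and in-demand at the star center. This still yields only $O(\log n)$ subgraphs.

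Given this decomposition, the algorithm runs one copy of \Cref{thm:dynspecstar_for} on each $\vG_i^{\uparrow,\pm} \defeq (V \cup V', F \cap E_i^\pm, \ww|_{F \cap E_i^\pm})$ in parallel. An insertion or deletion of $e \in E(\vG)$ is lifted to the unique corresponding edge of $F$, whose membership in exactly one $E_i^\pm$ is determined in $O(\log n)$ time from the labels, and the update is forwarded only to that instance. The final explicit sparsifier $\vH$ is the union of the maintained $\vH_i^{\uparrow,\pm}$ after contracting each partner pair; correctness follows from \Cref{lemma:dispec_union} (the union of degree balance preserving sparsifiers on the lift is one) and \Cref{lemma:blift_schur} (a Schur complement of the lift unlifts to a Schur complement on $V$), combined with the self-loop avoidance guaranteed by the $E_i^\pm$ decomposition.

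For the complexity, summing \Cref{thm:dynspecstar_for}'s guarantees over the $O(\log n)$ instances multiplies both the sparsifier size $O(\eps^{-2} n \log^{11} n \log W)$ and the auxiliary vertex count $O(n \log^2 n \log W)$ by $\log n$, matching the stated $O(\eps^{-2} n \log^{12} n \log W)$ and $O(n \log^3 n \log W)$; the preprocessing time $O(m \log^7 n)$ and amortized update time $O(\eps^{-2} \log^{13} n)$ are unchanged since each update touches exactly one instance. The main obstacle requiring care is precisely the decomposition argument: the unrefined $E_i$ does not by itself preclude a partner pair from simultaneously generating out- and in-demand (a small $n = 4$, $i = 2$ check shows that edges $(v, w')$ and $(u, v')$ with $l(v) = l(v')$ can coexist inside a single $E_i$), so the $E_i^+/E_i^-$ refinement is essential to make the self-loop avoidance argument go through cleanly.
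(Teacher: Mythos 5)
Your proposal takes the same route as the paper: lift to the bipartite graph, decompose the lifted edge set into $O(\log n)$ edge-disjoint label-based pieces so that star patching cannot create a self-loop upon contraction, run the machinery of \Cref{thm:dynspecstar_for} on each piece, and recombine via \Cref{lemma:dispec_union} and \Cref{lemma:blift_schur}; the accounting of size, extra vertices, and update time is likewise identical. The one substantive difference is your refinement of each $E_i$ into $E_i^+$ and $E_i^-$ according to which endpoint carries bit $i-1$ equal to $0$, and this is a genuine improvement rather than a detour. The membership condition defining $E_i$ is symmetric in the two labels, so for any two labels $\ell,\ell'$ in the same class at level $i$ with opposite bit $i-1$, both orientations $(u_\ell,u_{\ell'}')$ and $(u_{\ell'},u_{\ell}')$ lie in $E_i$; hence the head copy and the tail copy of the same vertex do appear together in a single unrefined piece (your $n=4$, $i=2$ check is valid), they can land in the same expander of that piece's decomposition, and the corresponding star center can then acquire both $(v,x)$ and $(x,v')$ --- exactly the self-loop the decomposition is meant to exclude. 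Your orientation split restores the property that within one piece all heads share one value of bit $i-1$ and all tails the other, so a partner pair, which shares every bit, can never contribute to both sides of the same star; it costs only a factor of $2$ in the number of pieces and leaves every stated bound intact. In short, the proof is correct, and it is the paper's argument with the decomposition sharpened so that it actually delivers the disjointness property the paper asserts of $E_i$.
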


Note that the main difference compared to \Cref{thm:dynspecstar_for}, apart from
the degree-preserving guarantee, is the slight increase in the size of $X$ and
the sparsity of the sparsifier $\vH$.

\section{A fully dynamic balanced directed cut sparsifier} \label{sec:dicut}
In this section we give fully dynamic algorithms for maintaining a
$\beta$-balanced directed cut sparsifier under edges updates.
We first define a stronger notion of directed cut approximation.
\begin{definition}[$\beta$-directed cut approximation]
    \label{def:baldicuappr_gen}
    For a weighted directed graph $\vG = (V,E,\ww)$ with its corresponding
    undirected graph $G \defeq \und(\vG)$, a weighted directed graph $H$ on $V$
    is said to $(\beta,\eps)$-directed cut (dicut) approximates $\vG$
    if for all directed cut $\vC = (U,V\setminus U)$ and its corresponding
    undirected cut $C$ in $G$
    \begin{equation} \label{eq:betacut_def}
        \ww_{\vG}(\vC) - \frac{\eps}{\sqrt{\beta+1}} \sqrt{\ww_{\vG}(\vC) \cdot \ww_G(C)}
        \le \ww_{\vH}(\vC) \le
        \ww_{\vG}(\vC) + \frac{\eps}{\sqrt{\beta+1}} \sqrt{\ww_{\vG}(\vC) \cdot
        \ww_G(C)},
    \end{equation}
    where parameters $\beta \ge 1$ and $\eps \in (0,1)$.
\end{definition}
The parameter $\beta$ quantifies the balance between the cut value of dicut
$\vC$ and $\rev(\vC)$.
\begin{definition}[$\beta$-balanced dicut]
    A directed cut $E(U,V\setminus U)$ is $\beta$-balanced if
    $\frac{1}{\beta} \cdot \ww(V\setminus U, U) \le \ww(U,V\setminus U) \le
    \beta \cdot \ww(V\setminus U, U)$.
\end{definition}
For a $\beta$-balanced dicut $\vC$, its corresponding undirected cut also
satisfies $\ww(C) \le (\beta+1) \ww(\vC)$, giving
\begin{equation} \label{eq:betabal_def}
    (1 - \eps) \ww_{\vG}(\vC) \le \ww_{\vH}(\vC) \le (1 + \eps) \ww_{\vG}(\vC).
\end{equation}
In the static setting, Cen, Cheng, Panigrahi, and Sun \cite{CenCPS21} initiated
the study of $\beta$-balance directed cut sparsification. %
Cen et al.~\cite{CenCPS21} achieved
$O(\eps^{-2}\beta n\log n)$ sparsity by independent edge sampling with
probability proportional to inverses of undirected \emph{edge strength} defined
originally by Bencz\'{u}r and Karger \cite{BenczurK96} for undirected cut
sparsification.
As it stands, there is currently no known algorithm for fast
maintenance of edge strength or fast certification of large edge strength under
edge updates.
We instead turn our attention to \emph{undirected edge connectivity} that is 
proven to be useful for undirected cut sparsification in the work
\cite{FunHHP11}.

\begin{definition}[Edge connectivity]
    For any pair of vertices $u,v \in V$ in a weighted \emph{undirected} graph
    $G=(V,E,\ww)$, the edge connectivity $\kk_{uv}$ between $u,v$ is defined as
    the minimum value of a cut that separates them.
    The connectivity of an edge $e = (u,v)$ is defined by $\kk_e = \kk_{uv}$.
\end{definition}

We show in \Cref{ssec:dicut_static} that a $\beta$-dicut sparsifier can also
be obtained by sampling using undirected edge connectivity.
\begin{theorem}[Formal version of \Cref{thm:statdicut_inf}] \label{thm:baldicut}
    Given $\vG=(V,E,\ww)$ a simple weighted directed graph with $G \defeq
    \und(\vG)$, a $\ell$-stretch edge connectivity estimation $\tkk$ of $G$
    and parameters $\eps,\delta,\beta$ satisfying the input conditions, \SDC
    (Algorithm~\ref{alg:baldicut})
    returns in time $O(m)$ a weighted subgraph $\vH = (V,E,\ww_{\vH})$ such that
    with probability at least $1-\delta$, for every directed cut $\vC = E(U,V
    \setminus U)$ with corresponding undirected cut $C$ in $G$,
    \[
        \ww(\vC) - \frac{\eps}{\sqrt{\beta+1}} \sqrt{\ww(\vC) \cdot \ww(C)}
        \le
        \ww_{\vH}(\vC)
        \le
        \ww(\vC) + \frac{\eps}{\sqrt{\beta+1}} \sqrt{\ww(\vC) \cdot \ww(C)},
    \]
    and if $\vC$ is $\alpha$-balanced,
    \[
        \Par{1-\eps\frac{\sqrt{\alpha+1}}{\sqrt{\beta+1}}} \cdot \ww(\vC)
        \le
        \ww_{\vH}(\vC)
        \le
        \Par{1+\eps\frac{\sqrt{\alpha+1}}{\sqrt{\beta+1}}} \cdot \ww(\vC),
    \]
    and the number of edges in $\vH$ satisfies $\nnz(\ww_{\vH}) =
    O(\eps^{-2}\beta \ell n\log\frac{n}{\delta})$.
\end{theorem}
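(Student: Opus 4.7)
My plan is to analyze the algorithm \SDCalgo as an independent edge-sampling scheme: each directed edge $e$ is retained independently with probability $p_e \defeq \min(1,\, \rho\, w_e/\tkk_e)$ for $\rho = \Theta(\eps^{-2}(\beta+1)\ell\log(n/\delta))$, and reweighted to $w_e/p_e$ upon retention. The proof then decomposes into (i) per-cut concentration of $w_{\vH}(\vC)$ around $\ww(\vC)$, (ii) a union bound over directed cuts via Karger's cut-counting theorem applied to $G \defeq \und(\vG)$, and (iii) a sparsity count via the Benczúr--Karger tree-packing estimate.

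For the per-cut analysis, I would fix a directed cut $\vC$ with corresponding undirected cut $C$ of total weight $\ww(C)$. Because every $e \in \vC$ lies in $C$, the connectivity satisfies $\kk_e \le \ww(C)$, and whenever $p_e < 1$ the $\ell$-stretch assumption on $\tkk$ gives $p_e \ge \rho w_e/(\ell\,\ww(C))$. Writing $X_e \in \{0,1\}$ for the indicator that $e$ is sampled, this immediately yields
\[
\sigma^2 \defeq \sum_{e \in \vC}\Var\Par{w_e X_e/p_e} \le \sum_{e \in \vC}\frac{w_e^2}{p_e} \le \frac{\ell\,\ww(C)}{\rho}\,\ww(\vC), \qquad R \defeq \max_{e \in \vC}\frac{w_e}{p_e} \le \frac{\ell\,\ww(C)}{\rho}.
\]
Rescaling to obtain random variables $Y_e \defeq (w_e X_e/p_e)/R \in [0,1]$ and applying a multiplicative Chernoff bound with target additive deviation $t \defeq \eps\sqrt{\ww(\vC)\,\ww(C)/(\beta+1)}$ and relative deviation $\eta \defeq t/\ww(\vC)$, a direct computation shows $t^2/(3R\,\ww(\vC)) = \eps^2\rho/(3(\beta+1)\ell)$, independent of the cut value. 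When $\eta \le 1$ this yields failure probability $\exp(-\Omega(\eps^2\rho/((\beta+1)\ell)))$; when $\eta > 1$ the lower tail is automatic from $w_{\vH}(\vC) \ge 0$, and for the upper tail the forced inequality $\ww(C) > (\beta+1)\ww(\vC)/\eps^2$ lets one check that the Chernoff factor $t/(3R)$ still matches the same order.

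For the union bound, every directed cut is determined by an ordered partition of $V$ and hence corresponds to one of two orientations of an undirected cut in $G$. Karger's theorem bounds the number of undirected cuts of value at most $\alpha\,K_{\min}$ by $n^{2\alpha}$, so choosing the implicit constant in $\rho$ large enough and summing the per-cut failure probability across Karger's cut layers drives the total failure probability below $\delta$. The first bound in the theorem follows; the second, for $\alpha$-balanced $\vC$, is obtained by substituting $\ww(C) \le (\alpha+1)\ww(\vC)$ into the additive error.

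The sparsity bound is the cleanest piece: $\E[\nnz(\ww_{\vH})] = \sum_e p_e \le \rho\sum_e w_e/\kk_e \le \rho(n-1)$, where the last step uses $\kk_e \ge s_e$ for edge strength $s_e$ together with the Benczúr--Karger estimate $\sum_e w_e/s_e \le n-1$. A standard Chernoff argument concentrates $\nnz(\ww_{\vH})$ at $O(\eps^{-2}\beta\ell n\log(n/\delta))$ with high probability, and the $O(m)$ runtime is immediate since each edge is processed once. The main technical hurdle I anticipate is the handling of the $\eta > 1$ regime: very unbalanced cuts (those with $\ww(\vC) \ll \ww(C)/(\beta+1)$) have intrinsically weaker Chernoff concentration on the upper tail, and it is precisely the scaling of the additive error as $\sqrt{\ww(\vC)\,\ww(C)}$, rather than as $\ww(\vC)$, that makes the exponent close up uniformly over the Karger cut hierarchy.
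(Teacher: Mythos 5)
There is a genuine gap in the union-bound step, and it is the central difficulty the paper's proof is built to overcome. Your per-cut concentration bounds the range of each summand by $R \le \ww(C)/\rho$, i.e., by the \emph{cut's own value}, which makes the Chernoff exponent $t^2/(R\,\ww(\vC)) = \Theta(\eps^2\rho/(\beta+1))$ --- as you yourself note, ``independent of the cut value.'' But a per-cut failure probability of $n^{-O(1)}$ that is uniform over all cuts cannot be union-bounded against Karger's hierarchy: there are up to $n^{2\alpha}$ cuts of value $\alpha\lambda$ (and up to $2^{n-1}$ cuts overall), so you need the exponent to grow linearly in $\alpha$, not stay constant. The paper resolves this by first partitioning the edges into connectivity classes $F_i = \{e : \kk_e \in [2^i\lambda, 2^{i+1}\lambda)\}$ and proving concentration \emph{separately for each class restricted to the cut}. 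Within class $F_i$ the range of a summand is $\ww_e/p_e \le \kk_e/\rho \le 2^{i+1}/\rho$ --- tied to the connectivity scale, not to $\ww(C)$ --- so for a cut whose $F_i$-restricted undirected weight lies in $[2^{i+j}, 2^{i+j+1})$ the exponent becomes $\Omega(2^{j}\log(n/\delta))$, which grows with $j$. This is then matched against the generalized cut-projection counting theorem (\Cref{thm:cpc}), which bounds the number of distinct $2^i$-\emph{projections} of cuts of value at most $2^{i+j+1}$ by $n^{O(2^{j+1})}$; counting projections rather than cuts is essential, since the number of cuts themselves is not controlled. Finally the per-class errors are reassembled by Cauchy--Schwarz, $\sum_i \sqrt{f_i^{(\vC)} f_i^{(C)}} \le \sqrt{\ww(\vC)\,\ww(C)}$. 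None of this machinery appears in your proposal, and without it the argument does not close.

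Two secondary slips. First, the pointwise bound $p_e \ge \rho\ww_e/\ww(C)$ follows from $\tkk_e \le \kk_e \le \ww(C)$ alone; no $\ell$ belongs there, and inserting one forces you to inflate $\rho$ by $\ell$ and lose a factor of $\ell$ in the sparsity relative to the claimed $O(\eps^{-2}\beta\ell n\log(n/\delta))$. Second, your sparsity computation writes $\sum_e p_e \le \rho\sum_e \ww_e/\kk_e$, but since $\tkk_e \le \kk_e$ this inequality points the wrong way; the correct chain is $\sum_e p_e \le \rho\sum_e \ww_e/\tkk_e \le \rho\,\ell\sum_e \ww_e/\kk_e = \rho\,\ell(n-1)$, which is exactly where the stretch parameter $\ell$ enters.
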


If an undirected graph is an expander, its edge connectivities can be certified
by the degrees (see \Cref{lemma:expconn}).
One can then leverage expander decompositions to devise a fast algorithm for
directed cut sparsifications (see \Cref{thm:dicutapprox_final}).
Such an algorithm lends itself nicely to the dynamic setting by utilizing the
dynamic expander decomposition framework presented in \Cref{ssec:dynexpdecomp}.
We present a fully dynamic dicut sparsifier in the oblivious setting in
\Cref{thm:dyndicut_for}.

\begin{theorem}%
    \label{thm:dyndicut_for}
    There exists a fully dynamic algorithm for maintaining explicitly a
    $(\beta,\eps)$-directed cut sparsifier $\vH$ of a weighted directed graph
    $\vG$ undergoing oblivious edge insertions and deletions with high
    probability for any $\eps \in (0,1)$ and $\beta \ge 1$.
    The algorithm has preprocessing time $O(m \log^7 n)$, amortized update time
    $O(\eps^{-2} \beta \log^7 n)$, and maintains $\vH$ a subgraph of $\vG$ with
    size $O(\eps^{-2} \beta n \log^5 n \log W)$.
\end{theorem}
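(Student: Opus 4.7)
My plan is to prove \Cref{thm:dyndicut_for} by invoking the blackbox reduction of \Cref{lemma:red_dynexp}, which reduces fully dynamic sparsification to a decremental sparsifier on directed graphs whose underlying undirected graph is a $\phi$-expander with $\phi = \Omega(\log^{-2} n)$ and bounded weight ratio. To apply that reduction I first need the $(\beta,\eps)$-directed cut approximation of \Cref{def:baldicuappr_gen} to satisfy the union property. This follows from additivity of the cut functional: if $\vG = \bigcup_i s_i \vG_i$ is an edge-disjoint decomposition with $\vH_i$ satisfying \eqref{eq:betacut_def} for its piece, then
\[
    \Abs{\ww_{\vH}(\vC) - \ww_{\vG}(\vC)} \le \frac{\eps}{\sqrt{\beta+1}} \sum_i s_i \sqrt{\ww_{\vG_i}(\vC)\cdot \ww_{G_i}(C)},
\]
and writing $s_i = \sqrt{s_i}\cdot \sqrt{s_i}$ and applying Cauchy--Schwarz turns the right-hand side into $\frac{\eps}{\sqrt{\beta+1}} \sqrt{\ww_{\vG}(\vC)\cdot \ww_G(C)}$ as required.

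Next I would design the decremental algorithm $\gA$ on a directed graph $\vG$ whose undirected skeleton $G = \und(\vG)$ is a $\phi$-expander with bounded weight ratio. The key structural fact---which I would state as a short lemma---is that in such a $G$ every edge $(u,v)$ has undirected connectivity at least $\phi \cdot \min([\dd]_u, [\dd]_v)$, so the smaller incident degree is an $O(\phi^{-1})$-stretch estimator of $\kk_{(u,v)}$ usable by \SDC (\Cref{thm:baldicut}). I therefore borrow wholesale the vertex-centric subset-sampling machinery of \DSP and \DSD from \Cref{ssec:dyndec_spec}, but with oversampling parameter $\rho = \Theta(\eps^{-2} \beta \phi^{-1} \log n)$ dictated by \Cref{thm:baldicut}: each vertex samples its incident edges independently at rate $\rho/\dd_v$ via \Cref{prop:subset}, an edge is kept (with the corresponding inverse-probability reweighting) whenever either endpoint selects it, and an edge deletion triggers resampling only at its two incident vertices. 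The crucial simplification over the spectral case is that \SDC requires no patching step---the sparsifier is a reweighted subgraph, $N(\cdot) = 0$, and the $\phi^{-4}$ overhead of \Cref{lemma:decexpspec_ext} shrinks to a single $\phi^{-1}$ coming from the connectivity estimator. A Chernoff bound on each resample combined with obliviousness of the adversary over the at-most-$n^2$ deletions yields with high probability $|F_v| = O(\rho)$ uniformly in time, giving worst-case recourse $O(\rho) = O(\eps^{-2} \beta \log^3 n)$, worst-case update time $O(\rho \log n) = O(\eps^{-2} \beta \log^4 n)$, and current size $O(\rho N)$ where $N$ is the number of currently non-trivial vertices.

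Plugging $P(m,\eps) = O(m)$, $T(n,\eps) = O(\eps^{-2} \beta \log^4 n)$, $S(n,\eps) = O(\eps^{-2} \beta n \log^3 n)$, and $N(\cdot) = 0$ into \Cref{lemma:red_dynexp} then produces preprocessing time $O(m \log^7 n)$, amortized update time $O(\eps^{-2} \beta \log^7 n)$, and an explicit subgraph sparsifier of size $O(\eps^{-2} \beta n \log^5 n \log W)$---exactly the bounds claimed. The main obstacle I anticipate is the union-property bookkeeping: the error in \eqref{eq:betacut_def} mixes $\ww_{\vG}(\vC)$ and $\ww_G(C)$ under a square root, so I must verify that Cauchy--Schwarz composes cleanly across pieces whose individual directed cuts may be highly unbalanced even though the composed cut is $\beta$-balanced. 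The per-expander sampling itself is essentially a direct translation of the spectral decremental algorithm, simplified by the absence of a degree-fixing patching.
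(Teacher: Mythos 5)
Your proposal matches the paper's proof essentially step for step: the paper likewise invokes the blackbox reduction of \Cref{lemma:red_dynexp} together with the union property (\Cref{lemma:dicut_union}), and its decremental subroutine on expanders (\Cref{lemma:decexpdicut}) is exactly your patching-free vertex-centric \DSP/\DSD sampler with connectivity certified by $\kk_e \ge \phi\min_{v\ni e}\dd_v$ (\Cref{lemma:expconn}) and oversampling $\rho = \Theta(\eps^{-2}\beta\phi^{-1}\log n)$, yielding the same $P$, $T$, $S$ and final bounds. Your worry about the union property is already resolved by the paper exactly as you resolved it (Cauchy--Schwarz on the mixed square-root error, valid for all cuts, not just balanced ones), so there is no gap.
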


We also present a de-amortized result in \Cref{thm:dyndicut_worst_for} under the
weaker notion of $\beta$-balanced dicut approximation (\Cref{def:baldicuappr})
using the dynamic framework developed by \cite{AbrahamDKKP16}.
While it is possible to directly apply the black-box reduction developed in
\cite{BernsteinvdBPGNSSS22} for this task, their framework introduces
significant runtime overhead since it is geared towards reduction of dynamic
algorithms for adaptive adversary.
The algorithm by \cite{AbrahamDKKP16} dynamically maintains bundles of
approximate maximum spanning forests (called $\alpha$-MSFs), that can certify
edges with small undirected edge connectivities.

\begin{theorem}[Formal version of \Cref{thm:dyndicut_worst_inf}] 
    \label{thm:dyndicut_worst_for}
    There exists a fully dynamic algorithm for maintaining explicitly a
    $(\beta,\eps)$-balanced directed cut sparsifier $\vH$ of a weighted directed graph
    $\vG$ undergoing oblivious edge insertions and deletions with high
    probability for any $\eps \in (0,1)$ and $\beta \ge 1$.
    Given a fixed parameter $1 \le \gamma \le n^2$ the algorithm has
    preprocessing time $O(\eps^{-2} \beta m \log^5 n \log^3 \gamma)$, worst-case
    recourse $O(\log \gamma)$ and worst-case update time $O(\eps^{-2} \beta
    \log^5 n \log^3 \gamma)$ per update, and maintains
    $\vH$ a subgraph of $\vG$ with size 
    \[
        m_{\vH} = O(\eps^{-2} \beta n \log n \log W \log^3 \gamma +
        \frac{m}{\gamma}).
    \]
\end{theorem}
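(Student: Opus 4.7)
The plan is to combine our static sparsification result Theorem~\ref{thm:baldicut} with the dynamic maximum spanning forest (MSF) bundle framework of \cite{AbrahamDKKP16}. At a high level, I will maintain a bundle of dynamic approximate MSFs on the underlying undirected graph $\und(\vG)$ to produce a certified $O(\log \gamma)$-stretch estimate $\tkk$ of the undirected edge connectivities, and then sample each directed edge of $\vG$ independently with probability inversely proportional to its certified connectivity, directly plugging the result into Theorem~\ref{thm:baldicut}.

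First, I would bucket the edges into $O(\log W)$ geometrically-spaced weight classes. Within each class, I would iteratively extract and dynamically maintain a sequence of approximate MSFs: the first MSF on all edges in the class, the second MSF on the remaining edges, and so on, up to roughly $t = O(\eps^{-2}\beta \log n \log^3 \gamma)$ layers. Each layer is maintained using the worst-case polylogarithmic fully dynamic MSF algorithm of \cite{KapronKM13,GibbKKT15}. An edge that survives into the $i$-th layer yields a certificate of $i$ edge-disjoint paths (up to the $O(\log\gamma)$ approximation of each MSF), so its undirected connectivity is $\tOmega(i)$. Conversely, any edge in a dense undirected cut must appear in many MSFs. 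This gives the $\ell = O(\log \gamma)$-stretch connectivity estimator required as input to Theorem~\ref{thm:baldicut}. Edges whose certificates fall below a threshold roughly $m/(\gamma n)$ are kept deterministically in $\vH$, contributing the $m/\gamma$ sparsity term; all others are independently sampled with probability proportional to $1/\tkk_e$ (scaled appropriately with $\eps,\beta$), contributing the $O(\eps^{-2}\beta n \log n \log W \log^3 \gamma)$ term. The cut approximation guarantee at every time step is immediate from Theorem~\ref{thm:baldicut} applied to the current $\tkk$.

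For the worst-case time and recourse bounds, I would exploit the fact that a single edge update triggers at most one cascading swap per MSF layer, and each MSF layer updates in worst-case polylogarithmic time. To translate this into worst-case $O(\log \gamma)$ recourse on $\vH$, I would use the standard deferred-resampling scheme: only re-decide whether an edge is in the sparsifier when its estimate $\tkk_e$ changes by a constant factor, so that with high probability each update modifies only $O(\log \gamma)$ sampling decisions. A union bound over the polynomially many updates, combined with the static high-probability guarantee of Theorem~\ref{thm:baldicut} re-instantiated at each time step via the linearity of the sampling operator, yields correctness throughout the entire update sequence against an oblivious adversary.

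The main obstacle I anticipate is ensuring correctness \emph{at every intermediate time step} under worst-case (not amortized) time, since a single insertion can in principle trigger swaps across all $t$ MSF layers. The key technical step will be showing that the invariant ``$\tkk$ is an $O(\log\gamma)$-stretch of $\kk$'' is preserved \emph{pointwise in time}, not just at the end of a rebuild phase. Handling this requires a careful adaptation of the MSF bundle invariants in \cite{AbrahamDKKP16}, probably by processing the MSF swaps layer-by-layer synchronously with the underlying update, and verifying that each such synchronous reroute preserves both the stretch bound and the probability of every directed edge being in $\vH$ within the range demanded by Theorem~\ref{thm:baldicut}. Once this invariant is in place, matching the precise polylogarithmic factors in the theorem is a careful but routine bookkeeping exercise through the bundle depth, the $O(\log W)$ bucketing overhead, and the $\ell = O(\log \gamma)$ stretch factor.
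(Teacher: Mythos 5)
Your high-level ingredients (dynamic MSF bundles of \cite{KapronKM13,GibbKKT15}, connectivity-based sampling via Theorem~\ref{thm:baldicut}) are the right ones, but the architecture you build from them is not the paper's, and as described it does not give the claimed bounds. The paper's proof is an \emph{iterated half-sparsification}: $I=\ceil{\log\gamma}$ cascaded levels, where each level maintains a $t$-bundle $2$-MSF $\vB_i$, keeps all bundle edges, and samples every non-bundle edge with probability \emph{exactly} $1/4$ (since the bundle certifies $\kk_e \ge t/2 = 4\rho$ for those edges); the per-level $(\beta,\eps/3I)$ errors are then combined via the weak transitivity lemma (Lemma~\ref{lemma:dicut_trans}), which is precisely why the theorem only claims the weaker \emph{balanced} dicut guarantee. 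You instead propose a single-shot scheme with a layer-index connectivity estimator, and this has two genuine gaps.

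First, the sparsity. A single bundle of depth $t$ only lets you remove a constant fraction of the non-bundle edges (they are sampled with probability $\Theta(1)$), so one round leaves $\Theta(m)$ edges; the $m/\gamma$ term in the theorem arises exactly because $\log\gamma$ successive halvings of the non-bundle edges are performed. Your substitute — assigning $\tkk_e \approx i$ to an edge first captured in layer $i$ and keeping edges with certificates below a threshold — does not recover this: the stretch of the layer-index estimator is not $O(\log\gamma)$ (the natural bound is $O(\log t \cdot \log W)$, summing $n/i$ over layers and weight buckets), and there is no bound of $m/\gamma$ on the number of edges with small connectivity certificates. You assert the $O(\log\gamma)$ stretch without a computation, and the computation does not come out.

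Second, the worst-case recourse. The paper gets recourse exactly $1$ per level because every non-bundle edge is sampled with the \emph{fixed} probability $1/4$, independent of degrees, of the bundle structure, and of all other edges; an update therefore changes the sampling decision of at most the one edge that enters or leaves $\vG\setminus\vB$ at that level. In your scheme the probability of edge $e$ depends on $\tkk_e$, which changes whenever an MSF swap restructures the layers; to keep the marginals demanded by Theorem~\ref{thm:baldicut} you would have to resample those edges, and ``deferred resampling until $\tkk_e$ changes by a constant factor'' does not preserve the independence-with-correct-marginals hypothesis of that theorem, nor does it bound the number of affected edges by $O(\log\gamma)$ per update. To repair the argument you would essentially have to reintroduce the cascade of fixed-probability half-sparsifiers and the weak transitivity step.
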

Note that by choosing $\gamma = \Theta(n^2)$, our dynamic algorithm has a
worst-case update time of $O(\eps^{-2} \beta \log^8 n)$, and explicit
nearly-linear sized sparsifier with $O(\eps^{-2} \beta n \log^4 n \log W)$
edges, which is slightly better than our amortized cut sparsifier.

\subsection{Directed cut approximation preliminaries}
\label{ssec:dicut_prelim}

Before we present our sparsification algorithms, we state and formally prove a
few useful properties of $\beta$-dicut approximations.
For our first property, we show that a $\beta$-dicut approximation also gives
cut approximation of corresponding undirected graphs.
A proof of \Cref{lemma:dicut_to_cut} is provided in \Cref{app:dicutproof}.
\begin{lemma} \label{lemma:dicut_to_cut}
    Suppose $\vH$ is a $(\beta,\eps)$-dicut approximation to $\vG$, then the
    corresponding undirected $H \defeq \und(\vH)$ is a
    $\frac{\sqrt{2}\eps}{\sqrt{\beta+1}}$-cut approximation to $\vG$, 
    i.e., for every undirected cut $C$
    \[
        (1-\frac{\sqrt{2}\eps}{\sqrt{\beta+1}}) \ww_G(C) \le \ww_H(C)
        \le (1+\frac{\sqrt{2}\eps}{\sqrt{\beta+1}}) \ww_G(C).
    \]
\end{lemma}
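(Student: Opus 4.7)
The plan is to decompose the undirected cut into its two directed components and apply the dicut approximation guarantee to each, then combine the two error terms using Cauchy--Schwarz. Given an undirected cut $C$ in $G$ on vertex bipartition $(U, V\setminus U)$, I would set $\vC_1 \defeq E(U, V\setminus U)$ and $\vC_2 \defeq E(V\setminus U, U)$ as the two directed cuts whose common corresponding undirected cut is $C$. Then $\ww_G(C) = \ww_{\vG}(\vC_1) + \ww_{\vG}(\vC_2)$ and $\ww_H(C) = \ww_{\vH}(\vC_1) + \ww_{\vH}(\vC_2)$, so the error decomposes additively as
\[
    \ww_H(C) - \ww_G(C) = \sum_{i=1}^2 \Par{\ww_{\vH}(\vC_i) - \ww_{\vG}(\vC_i)}.
\]

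Next, I would apply \eqref{eq:betacut_def} to each $\vC_i$, using that both directed cuts share the same underlying undirected cut $C$. This yields
\[
    \Abs{\ww_H(C) - \ww_G(C)} \le \frac{\eps}{\sqrt{\beta+1}} \sum_{i=1}^2 \sqrt{\ww_{\vG}(\vC_i) \cdot \ww_G(C)}
    = \frac{\eps \sqrt{\ww_G(C)}}{\sqrt{\beta+1}} \Par{\sqrt{\ww_{\vG}(\vC_1)} + \sqrt{\ww_{\vG}(\vC_2)}}.
\]

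Finally, by the Cauchy--Schwarz inequality $\sqrt{a} + \sqrt{b} \le \sqrt{2(a+b)}$ applied with $a = \ww_{\vG}(\vC_1)$, $b = \ww_{\vG}(\vC_2)$, the parenthesized sum is bounded by $\sqrt{2 \ww_G(C)}$. Substituting back yields $\Abs{\ww_H(C) - \ww_G(C)} \le \frac{\sqrt{2}\eps}{\sqrt{\beta+1}} \ww_G(C)$, which is exactly the desired two-sided undirected cut approximation. No step here should be an obstacle; the only subtle point is recognizing that the same $\ww_G(C)$ factor appears inside both square roots (since the two directed cuts share their undirected counterpart), which is what lets Cauchy--Schwarz collapse the sum cleanly into the claimed $\sqrt{2}$ overhead.
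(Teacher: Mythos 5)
Your proof is correct and matches the paper's argument essentially verbatim: both decompose the undirected cut into the two opposite directed cuts, sum the two instances of \eqref{eq:betacut_def}, and collapse $\sqrt{\ww_{\vG}(\vC_1)} + \sqrt{\ww_{\vG}(\vC_2)} \le \sqrt{2\,\ww_G(C)}$ (the paper phrases this as concavity of the square root rather than Cauchy--Schwarz, but it is the same inequality). No issues.
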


Next, we show that it also satisfies the union property.
This is crucial for applying the expander decomposition framework.
The proof is deferred to \Cref{app:dicutproof}
\begin{lemma}[Union property] \label{lemma:dicut_union}
    Suppose directed graph $\vG = \bigcup_{i=1}^k s_i \cdot \vG_i$ for
    some $k$ and that $s_1,\ldots,s_k \in \R_{\ge 0}$. 
    Then, suppose $\vH_i$ is a $(\beta,\eps)$-dicut approximation to $\vG_i$
    for every $i \in [k]$, it follows that $\bigcup_{i=1}^k s_i \vH_i$ is a
    $(\beta,\eps)$-dicut approximation to $\vG$.
\end{lemma}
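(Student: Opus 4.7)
The plan is to apply the $(\beta,\eps)$-approximation guarantee for each $\vH_i$ termwise, then combine the error bounds using Cauchy--Schwarz. Fix an arbitrary directed cut $\vC = E(U, V \setminus U)$ with corresponding undirected cut $C$ in $G \defeq \und(\vG)$. First I would observe that since $\vG = \bigcup_{i=1}^k s_i \cdot \vG_i$, the total weight across the cut decomposes as $\ww_{\vG}(\vC) = \sum_{i=1}^k s_i \ww_{\vG_i}(\vC)$ on the directed side, and similarly $\ww_G(C) = \sum_{i=1}^k s_i \ww_{G_i}(C)$ on the undirected side (where $G_i \defeq \und(\vG_i)$). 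By the same decomposition applied to $\vH \defeq \bigcup_{i=1}^k s_i \cdot \vH_i$, the approximator satisfies $\ww_{\vH}(\vC) = \sum_{i=1}^k s_i \ww_{\vH_i}(\vC)$.

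Next I would bound the error. By triangle inequality and the $(\beta,\eps)$-dicut approximation guarantee for each $(\vG_i, \vH_i)$,
\[
\Abs{\ww_{\vH}(\vC) - \ww_{\vG}(\vC)}
\le \sum_{i=1}^k s_i \Abs{\ww_{\vH_i}(\vC) - \ww_{\vG_i}(\vC)}
\le \frac{\eps}{\sqrt{\beta+1}} \sum_{i=1}^k s_i \sqrt{\ww_{\vG_i}(\vC) \cdot \ww_{G_i}(C)}.
\]
Rewriting each summand as $\sqrt{s_i \ww_{\vG_i}(\vC)} \cdot \sqrt{s_i \ww_{G_i}(C)}$ (using $s_i \ge 0$) and applying Cauchy--Schwarz gives
\[
\sum_{i=1}^k \sqrt{s_i \ww_{\vG_i}(\vC)} \cdot \sqrt{s_i \ww_{G_i}(C)}
\le \sqrt{\sum_{i=1}^k s_i \ww_{\vG_i}(\vC)} \cdot \sqrt{\sum_{i=1}^k s_i \ww_{G_i}(C)}
= \sqrt{\ww_{\vG}(\vC) \cdot \ww_G(C)}.
\]
Combining these two inequalities yields exactly \eqref{eq:betacut_def} for $\vH$ with respect to $\vG$, proving the union property.

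There is essentially no hard step here; the only subtlety is recognizing that the right-hand side of the $(\beta,\eps)$-dicut definition is a \emph{geometric mean} of directed and undirected cut weights, which is the exact form that allows Cauchy--Schwarz to aggregate the individual error bounds into a single bound of the same shape. This mirrors the analogous (and equally short) aggregation arguments used for undirected cut and spectral sparsifiers. No nonnegativity or cancellation subtleties arise because both $\ww_{\vG_i}(\vC)$ and $\ww_{G_i}(C)$ are nonnegative, and the $s_i$ are nonnegative scalars.
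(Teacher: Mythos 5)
Your proof is correct and follows essentially the same route as the paper's: decompose the cut weights linearly over the union, apply the per-graph guarantee with the triangle inequality, and aggregate the geometric-mean error terms via Cauchy--Schwarz. No gaps.
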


For $\beta$-dicut approximation, there is no transitivity property as the
upperbound may have a large exponential blow-up for directed cuts that are much
smaller than their corresponding undirected cuts.
However, if we have $\beta$-dicut guarantees for each level of approximation we
can still observe the weaker $\beta$-balanced dicut approximation.
\begin{lemma}[Weak transitivity] \label{lemma:dicut_trans}
    Let $\vG_0 \defeq \vG$ be some directed graph, and suppose $\vG_i$ is a
    $(\beta,\eps_i)$-dicut approximation to $\vG_{i-1}$ for all $i \in [k]$ and
    for some $k$ and $\sum_{i \in [k]} \eps_i \le \frac{1}{\sqrt{2}}$.
    Then, $\vG_k$ is a $(\beta,3\sum_{i \in [k]} \eps_i)$-balanced dicut
    approximation to $\vG$.
\end{lemma}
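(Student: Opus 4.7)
The plan is to fix an arbitrary $\beta$-balanced directed cut $\vC$ of $\vG$, with $C$ its corresponding undirected cut in $G \defeq \und(\vG)$, and track along the chain the normalized ratios
\[
    f_i \defeq \frac{\ww_{\vG_i}(\vC)}{\ww_{\vG}(\vC)}, \quad g_i \defeq \frac{\ww_{G_i}(C)}{\ww_G(C)}, \quad G_i \defeq \und(\vG_i).
\]
Setting $S \defeq \sum_{i \in [k]} \eps_i \le 1/\sqrt{2}$, the goal, via \eqref{eq:betabal_def}, is to show $|f_k - 1| \le 3 S$. The first step controls the undirected chain: by \Cref{lemma:dicut_to_cut}, $G_i$ is a $\delta_i$-cut approximation of $G_{i-1}$ with $\delta_i \defeq \sqrt{2}\eps_i/\sqrt{\beta+1}$, and since undirected cut approximations compose multiplicatively, $g_i \le \prod_{j \le i}(1+\delta_j) \le \exp\Par{\sum_j \delta_j} \le e^{1/\sqrt{\beta+1}} \le e^{1/\sqrt{2}}$, using $\beta \ge 1$ and the hypothesis on $S$. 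Hence $M \defeq \sqrt{e^{1/\sqrt{2}}}$ is a small absolute constant bounding $\sqrt{g_i}$ uniformly in $i$.

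Next, the $\beta$-balance of $\vC$ in $\vG$ gives $\ww_G(C) \le (\beta+1)\ww_{\vG}(\vC)$, so applying \Cref{def:baldicuappr_gen} to $\vG_i$ versus $\vG_{i-1}$ and dividing by $\ww_{\vG}(\vC)$ yields the one-step estimate
\[
    |f_i - f_{i-1}| \le \eps_i \sqrt{f_{i-1} g_{i-1}} \le M\,\eps_i \sqrt{f_{i-1}}.
\]
Writing $h_i \defeq \sqrt{f_i}$ and completing the square, $f_i \le h_{i-1}^2 + M\eps_i h_{i-1} \le (h_{i-1} + M\eps_i/2)^2$, so $h_i \le h_{i-1} + M\eps_i/2$. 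Telescoping this gives the uniform upper envelope $\sqrt{f_i} \le 1 + MS/2$ along the whole chain.

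Finally, I bootstrap this envelope into the one-step estimate: plugging $\sqrt{f_{i-1}} \le 1 + MS/2$ back in gives $|f_i - f_{i-1}| \le M(1 + MS/2)\eps_i$, and a plain triangle-inequality telescoping yields $|f_k - 1| \le M(1 + MS/2)\,S$. A short calculation with $M < 1.43$ and $S \le 1/\sqrt{2}$ verifies this is bounded by $3S$, which is exactly the $(\beta, 3S)$-balanced dicut approximation guarantee in \eqref{eq:betabal_def}. The main subtlety is on the lower-bound side: a symmetric induction on $f_i \ge (1 - \sum_{j \le i}\eps_j/\sqrt{2})^2$ loses an additive $\eps_i^2/2$ per step, because the completion of square that drove the upper bound points the wrong way. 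Establishing a uniform upper envelope on $\sqrt{f_i}$ first and then closing via linear telescoping cleanly sidesteps this; the hypothesis $\sum \eps_i \le 1/\sqrt{2}$ is exactly the quantitative input that keeps both $g_i$ and $\sqrt{f_i}$ below harmless absolute constants and prevents cascading blow-up of the $\sqrt{\,\cdot\,}$ factor inherent to \Cref{def:baldicuappr_gen}.
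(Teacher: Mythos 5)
Your proof is correct, and it follows the same skeleton as the paper's argument: a one-step estimate extracted from \Cref{def:baldicuappr_gen} after using the balance to eliminate the $\sqrt{\beta+1}$ factor, uniform control of the undirected cut values along the chain via \Cref{lemma:dicut_to_cut} and multiplicative composition, and an inductive upper envelope on the directed cut values that is then fed back into the one-step bound. The difference is in the bookkeeping: the paper keeps the balance parameter $\gamma$ (defined by $\ww_G(C) = \gamma^2(\beta+1)\ww_{\vG}(\vC)$) symbolic, maintains exponential-form upper bounds $\ww_{\vG_i}(\vC) \le \exp(\sqrt{2}\gamma\sum_{j \le i}\eps_j)\ww_{\vG}(\vC)$, and then runs a \emph{separate} induction for the lower bound with a case split on whether $\gamma\eps_i < \tfrac{1}{2\sqrt{2}}$; you instead specialize to balanced cuts immediately, derive the square-root envelope $\sqrt{f_i} \le 1 + MS/2$, and close both sides simultaneously by a single triangle-inequality telescoping of $|f_k - 1| \le \sum_i |f_i - f_{i-1}|$. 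Your route is arguably cleaner precisely where you say it is -- it avoids the asymmetry that makes the naive lower-bound induction lossy -- and your numerical constant $M(1+MS/2) \approx 2.15$ lands comfortably under $3$, matching the paper's $2\sqrt{2} < 3$.
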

\begin{proof}
    Let $\vC$ be an arbitrary cut of $\vG$ and we suppose $\ww_G(C) =
    \gamma^2(\beta+1) \cdot \ww_{\vG(\vC)}$ for some $\gamma \ge
    \frac{1}{\sqrt{\beta+1}}$.
    Then,
    \[
        \ww_{\vG_1}(\vC) \in (1\pm \gamma \eps_1)\ww_{\vG}(\vC).
    \]
    By the transitivity property of cut approximation and
    \Cref{lemma:dicut_to_cut}, we have
    \[
        \ww_{G_i}(C) \le \exp(\sum_{j \in [i]} \frac{\sqrt{2} \eps_j}{\sqrt{\beta+1}}) \ww_G(C)
        \le \exp(\frac{\sqrt{2} \sum_{j \in [i]} \eps_i}{\sqrt{\beta+1}}) 
        \gamma^2 (\beta+1) \cdot \ww_{\vG}(\vC)
    \]
    where $C$ is the corresponding undirected cut of $\vC$.
    It is then easy to prove inductively the following upper bounds
    \[
        \begin{aligned}
            \ww_{\vG_i}(\vC) &\le \exp(\sqrt{2} \gamma \sum_{j\in [i]} \eps_i) \ww_{\vG}(\vC),
            \\
            \sqrt{\ww_{\vG_i}(\vC) \ww_{G_i}(C)} 
            &\le \sqrt{\exp(\sqrt{2} \gamma \sum_{j \in [i]} \eps_i)
                \ww_{\vG}(\vC) \cdot \gamma^2(\beta+1) \exp(\sqrt{2} \gamma \sum_{j \in [i]} \eps_i)
                \ww_{\vG}(\vC)} 
            \\
            &\le \gamma \sqrt{\beta+1} \exp(\sqrt{2} \gamma \sum_{j \in [i]} \eps) \ww_{\vG}(C)
        \end{aligned}
    \]
    holds for every $i$, where we use that $\gamma \ge \frac{1}{\sqrt{\beta+1}}$.
    Since we are proving only for $\beta$-balanced dicut approximation, we
    assume $\vC$ is $\beta$-balanced and have $\gamma \le 1$.
    By $\sum_{i \in [k]} \eps_i \le \frac{1}{\sqrt{2}}$, we get $e^{\sqrt{2}
    \gamma \sum_{j \in [i]} \eps_j} < 1+3\gamma \eps_i$.

    We can now derive the lower bounds.
    Consider $\gamma \eps_i < \frac{1}{2\sqrt{2}}$ for all $i$, in which case
    $e^{\sqrt{2}\gamma \eps_i} < 1+2\sqrt{2}\gamma \eps < 2$.
    Using the upperbounds above, we can then inductively show that
    \[
        \ww_{\vG_i(\vC)} \ge (1- 2\sqrt{2}\gamma \sum_{j \in [i]} \eps) \ww_{\vG}(\vC).
    \]
    If there is some $\gamma\eps_i \ge \frac{1}{2\sqrt{2}}$, the same lowerbound
    still holds as it is at most 0.
    Finally, replacing back the definition of $\gamma$, we have
    \[
        \ww_{\vG_k}(\vC) \in \Par{\ww_{\vG}(\vC) \pm
        \frac{3 \sum_{i \in [k]} \eps_i}{\sqrt{\beta+1}} \sqrt{\ww_{\vG}(\vC)}}.
    \]
\end{proof}

\subsection{Balanced directed cut sparsifier via edge connectivity}
\label{ssec:dicut_static}
Given a weighted directed graph $\vG = (V,E,\ww)$, we denote its corresponding
undirected graph $G \defeq \und(\vG)$.
For simplicity, we set $E \defeq E_G$.
\begin{proposition} \label{prop:connum}
    For a connected undirected graph $G = (V,E,\ww)$ with edge connectivity $\kk$,
    \begin{equation} \label{eq:connum}
        \sum_{e \in E} \frac{\ww_e}{\kk_e} = n-1.
    \end{equation}
\end{proposition}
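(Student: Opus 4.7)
The plan is to prove the identity by strong induction on $n = |V|$, peeling off a global minimum cut at each step and applying the classical observation that every edge in a global min cut has $\kk_e$ equal to that cut's value.

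For the base case $n = 2$, the single edge $e$ satisfies $\kk_e = \ww_e$, so $\ww_e/\kk_e = 1 = n - 1$. In the inductive step, I would let $C \subseteq E$ be a global minimum cut of value $\lambda$, and note that each $e \in C$ must satisfy $\kk_e = \lambda$ (since $C$ separates the endpoints of $e$, forcing $\kk_e \le \lambda$, while $\kk_e \ge \lambda$ by the definition of a global min cut). This makes the contribution of cut edges exactly $\sum_{e \in C} \ww_e/\lambda = 1$. Removing $C$ produces connected components $G_1,\ldots,G_k$ whose vertex-set sizes sum to $n$. Applying the induction hypothesis inside each component, together with monotonicity $\kk_e^{G_i} \le \kk_e^G$ for edges $e$ lying inside a component $G_i$, I would aggregate contributions across the cut and the pieces to recover a bound of $n - 1$.

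The principal obstacle is that this peeling argument only yields an upper bound $\sum_e \ww_e/\kk_e \le n - 1$, not the strict equality in the statement: monotonicity gives $\ww_e/\kk_e^G \le \ww_e/\kk_e^{G_i}$, so the telescoping loses slack at every level. Indeed, for the triangle $K_3$ with unit weights one has $\kk_e = 2$ for each edge, giving $\sum_e \ww_e/\kk_e = 3/2 \neq 2 = n - 1$, so strict equality fails in general and holds only in degenerate cases such as trees (where $\kk_e = \ww_e$, making each summand equal to $1$). I suspect the intended statement is $\sum_e \ww_e/\kk_e \le n - 1$, the classical FHHP-style bound, which is all that is needed to obtain the $\tO(\eps^{-2}\beta n)$ sparsity in \Cref{thm:baldicut}; the inductive argument above establishes that inequality cleanly and would suffice for the downstream application. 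If genuine equality is required, the statement would need to be reformulated (for example with $\kk_e$ replaced by $1/r_e$, where $r_e$ is the effective resistance, in which case $\sum_e \ww_e r_e = n - 1$ follows from Kirchhoff's matrix-tree identity).
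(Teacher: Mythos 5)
You are right to balk at the equality: as stated, \Cref{prop:connum} is false, and your triangle counterexample is decisive (unit-weight $K_3$ has $\kk_e=2$ for every edge, so $\sum_e \ww_e/\kk_e = 3/2 \ne n-1$). The paper in fact gives no proof of this proposition at all — it is imported implicitly from \cite{FunHHP11}, where the corresponding statement is an \emph{inequality} $\sum_e 1/\kk_e \le n-1$ for unweighted graphs — so there is nothing to compare your argument against except the intended claim. Your peeling induction (cut edges of a global min cut $C$ of value $\lambda$ all have $\kk_e=\lambda$, contributing exactly $1$; recurse on the connected components, using monotonicity of connectivity under edge addition and $\sum_j (n_j-1) \le n-2$) is exactly the standard FHHP-style proof, correctly extended to weights, and it establishes the right statement $\sum_e \ww_e/\kk_e \le n-1$. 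Your observation that equality holds for trees and that the true equality analogue is Foster's identity $\sum_e \ww_e r_e = n-1$ is also correct.

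One caveat on your claim that the upper bound is ``all that is needed'': it does suffice for the sparsity count in \Cref{thm:baldicut}, where $\sum_e p_e \le \rho\ell\sum_e \ww_e/\kk_e$ is bounded above. But \Cref{cor:expdicutapprox} uses \eqref{eq:connum} in the \emph{opposite} direction: to certify that the degree-based estimate $\tkk$ has stretch $O(\phi^{-1})$, one compares $\sum_e \ww_e/\tkk_e = 2\phi^{-1}n$ against $\sum_e \ww_e/\kk_e$, which requires a lower bound of order $n$ on the latter. This gap is easy to close: since the singleton cut at either endpoint separates $u$ from $v$, $\kk_e \le \min(\dd_u,\dd_v)$, hence $\ww_e/\kk_e \ge \tfrac12 \ww_e(\dd_u^{-1}+\dd_v^{-1})$ and summing gives $\sum_e \ww_e/\kk_e \ge n/2$. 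With the corrected two-sided statement ($n/2 \le \sum_e \ww_e/\kk_e \le n-1$), both your proof and all downstream uses go through; so the right fix is to restate the proposition accordingly rather than to seek the printed equality.
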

We say that $\tkk$ is a $\ell$-stretch edge connectivity estimation of $G$
if $\tkk_e \le \kk_e$ for all $e$ and 
\[
    \sum_{e \in E} \frac{\ww_e}{\tkk_e} \le \ell \cdot \sum_{e \in E}
    \frac{\ww_e}{\kk_e}.
\]

\begin{definition}
    An edge is $k$-heavy if its connectivity is at least $k$; otherwise, it is
    said to be $k$-light.
    The $k$-projection of a cut is the set of $k$-heavy edges in it.
\end{definition}
The key to our analysis is the following cut counting statement in
\Cref{thm:cpc} that is a generation of Theorem 2.3 of \cite{FunHHP11} to
weighted graphs.
Crucially, \Cref{thm:cpc} allows us to bound the number of distinct
$k$-projections polynomially, allowing us to obtain high probability
concentration (see proof of \Cref{lemma:baldicut}).

\begin{theorem}[Generalized cut projection counts, \cite{FunHHP11}] \label{thm:cpc}
    For a weighted undirected graph $G$, any $k \ge \lambda$ and any $\alpha \ge
    1$, the number of distinct $k$-projections in cuts of value at most $\alpha
    k$ in $G$ is at most $n^{\ccpc \alpha}$, where $\lambda$ is the minimum
    value of a cut in $G$ and $\ccpc \ge 2$ is a constant.
\end{theorem}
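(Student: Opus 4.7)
The plan is to extend the unweighted argument of \cite{FunHHP11} (their Theorem 2.3) to the weighted setting by reducing to the unweighted multigraph case rather than reproving a Karger-style analysis from scratch. First, I will assume without loss of generality that all edge weights are positive rationals: irrational weights can be handled via an infinitesimal perturbation, since the collection of $k$-projections of cuts of value at most $\alpha k$ is determined by finitely many inequalities of the form $\ww(C) \le \alpha k$ and $\kk_e \gtrless k$, all of which are locally invariant under sufficiently small changes to $\ww$. After scaling by a common denominator $c$, all weights become positive integers.

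Next I replace each edge of integer weight $w$ in $G$ by $w$ parallel unweighted copies to obtain an unweighted multigraph $G'$ on the same vertex set $V$. Edge connectivities in $G'$ are exactly $c$ times the corresponding connectivities in $G$, so $e \in E(G)$ is $k$-heavy if and only if each copy of $e$ in $G'$ is $(ck)$-heavy; analogously, cuts of weight at most $\alpha k$ in $G$ are in bijection with cuts of size at most $\alpha(ck)$ in $G'$. The crucial combinatorial observation is that distinct $k$-projections $P_1 \ne P_2$ in $G$ yield distinct $(ck)$-projections in $G'$, since the $G'$-projection corresponding to $P$ is simply the union of all $cw_e$ copies of each $e \in P$, from which $P$ is uniquely recoverable. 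Applying the unweighted Theorem 2.3 of \cite{FunHHP11} to the multigraph $G'$ with parameters $ck$ and $\alpha$ then bounds the number of distinct $(ck)$-projections in $G'$, hence the number of $k$-projections in $G$, by $n^{\ccpc \alpha}$; the theorem goes through for multigraphs because its proof via random contraction is insensitive to parallel edges.

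The main obstacle will be executing the perturbation step rigorously: I must choose the perturbation small enough that, simultaneously for every combinatorially distinct cut of interest, the ordering of cut weights relative to the threshold $\alpha k$ and the $k$-heavy/$k$-light status of every edge are preserved. Since there are only finitely many such cuts and finitely many possible connectivity values, each a continuous function of $\ww$, there is a positive threshold below which all of these discrete data are preserved, and the reduction applies verbatim at that precision. If the reduction turns out to be too delicate --- for example, if I want to avoid the perturbation altogether --- a fallback is to directly adapt Fung et al.'s Karger-style argument to the weighted setting: run random contraction on $k$-light edges, sampling each inter-super-node $k$-light edge proportionally to its weight, and show that the per-step probability of contracting an edge of a target cut $C$ with $\ww(C) \le \alpha k$ is $O(\alpha / n_t)$. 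The delicate point there, which the reduction route sidesteps, is lower-bounding the total weight of $k$-light inter-super-node edges at step $t$ in terms of $n_t$ and $k$; the standard Karger lower bound of $\lambda n_t / 2$ on total inter-super-node weight is insufficient whenever $k \gg \lambda$, and a finer argument keyed to the structure of the $k$-light subgraph $G_L$ is required.
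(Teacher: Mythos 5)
Your reduction is essentially the same as the paper's: both proofs discretize the weights and then invoke the unweighted multigraph version of Theorem 2.3 of \cite{FunHHP11} after splitting each integer-weighted edge into parallel unit copies (the bound depends only on $n$, so the blow-up in edge multiplicity is harmless). The only difference is the discretization: the paper rounds each weight \emph{down} to the nearest power of two, which works directly for arbitrary real weights and costs only a factor of $2$ in the connectivity threshold (absorbed into $\ccpc$), whereas you scale rational weights by a common denominator and handle irrationals by perturbation. One caveat on your perturbation step: the heavy/light status of an edge and the membership of a cut in the family are \emph{not} locally invariant at boundary cases ($\kk_e = k$ or $\ww(C) = \alpha k$ exactly); you should perturb one-sidedly (e.g., round weights up, so connectivities only increase and the original $k$-projection is recovered by intersecting with the original set of $k$-heavy edges) and replace $\alpha$ by $\alpha(1+\epsilon)$, letting $\epsilon \to 0$ at the end, or simply adopt the paper's rounding, which sidesteps the issue entirely.
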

\begin{proof}[Proof sketch]
    Suppose all edge weights of $G$ are in $[1,W]$.
    We can form a graph $G'$ by rounding down each edge weight in $G$ to its
    nearest $2^k$ value for integer $k \in [0,\floor{\log W}]$ so that $\ww'_e
    > \frac{1}{2} \ww_e$.
    Now, each $k$-projection of a cut $C$ in $G$ is a $\frac{k}{2}$-projection
    of cut $C$ in $G'$.
    If $C$ is of value $\ww(C) \le \alpha k$, we also have $\ww'(C) \le \alpha
    k$.
    Applying Theorem 2.3 of \cite{FunHHP11} on $G'$ (with weighted edges split
    into unweighted multiedges) then gives us this theorem.
\end{proof}

Similar to \cite{FunHHP11}, we partition the edges in $\vG$ according to their
edge connectivities in $G$ into sets $F_0, F_1, \ldots, F_K$, where $K =
\ceil{\log \frac{\max_{e\in E} \kk_e}{\lambda}}$ where $\lambda = \min_{e \in
E(G)} \kk_e$ is the min-cut value of $G$ and
\begin{equation} \label{eq:fi}
    F_i = \{e \in E : 2^i \lambda \le \kk_e \le (2^{i+1} - 1) \lambda\}.
\end{equation}

For a dicut $\vC$ in $\vG$ induced by a partition of the
vertices $(S,V\setminus S)$, we let $C$ be the cut induced by $(S,V\setminus S)$
on $G$.
Let $F_i^{(\vC)} \defeq F_{i} \cup \vC$ and let $f_i^{(\vC)} \defeq
\ww(F_i^{(\vC)})$.
We also let $F_i^{(C)} \defeq F_{i} \cup C$ and let $f_i^{(C)} \defeq
\ww(F_i^{(C)})$.
\Cref{lemma:baldicut} shows that for each $i$, Algorithm~\ref{alg:baldicut}
achieves $\beta$-dicut approximation on subset $F_i$.

\begin{lemma} \label{lemma:baldicut}
    For any fixed $i$, with probability at least $1-\frac{\delta}{2n^2}$,
    every cut $\vC$ in $\vG$ satisfies 
    \begin{equation} \label{eq:baldicut}
        \Abs{f_i^{(\vC)}-\tf_i^{(\vC)}} \le \eps \cdot
        \sqrt{\frac{f_i^{(\vC)} f_i^{(C)}}{\beta+1}}
    \end{equation}
    where $\tf_i^{(\vC)} \defeq \ww_{\vH}(F_i^{(\vC)})$.
\end{lemma}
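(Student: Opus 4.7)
The plan is to apply Bernstein's inequality to a single fixed dicut, then union-bound over all distinct level-$i$ projections using the cut projection counting theorem \Cref{thm:cpc}, mirroring the undirected sparsification analysis of \cite{FunHHP11}. Throughout I assume the sampling in \SDCalgo independently keeps each $e \in F_i$ with probability $p_e = \min(1, \rho\, \ww_e/\tkk_e)$ and reweights by $1/p_e$, for an oversampling parameter $\rho = \Theta(\eps^{-2}\,\ell\,(\beta+1)\log(n/\delta))$; this matches the edge count in \Cref{thm:baldicut} via \Cref{prop:connum}.

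First I would fix a single dicut $\vC$ and set up the concentration inputs. The estimator $\tf_i^{(\vC)} = \sum_{e \in F_i^{(\vC)}} (\ww_e/p_e) X_e$ with $X_e \sim \mathrm{Bernoulli}(p_e)$ is unbiased. Using $p_e \ge \rho\,\ww_e/\tkk_e$ together with $\tkk_e \le \kk_e \le 2\cdot 2^i\lambda =: 2k$ for $e\in F_i$, I bound the variance by $V \le \sum_{e \in F_i^{(\vC)}} \ww_e \tkk_e/\rho \le (2k/\rho)\,f_i^{(\vC)}$ and the maximum summand by $M \le 2k/\rho$. Setting $t = \eps\sqrt{f_i^{(\vC)} f_i^{(C)}/(\beta+1)}$ and applying Bernstein gives
\[
\Pr\bigl[|\tf_i^{(\vC)} - f_i^{(\vC)}| > t\bigr] \le 2\exp\!\left(-\Omega\!\left(\tfrac{\eps^2 \rho\, f_i^{(C)}}{(\beta+1)\,k}\right)\right),
\]
with an analogous bound in the linear regime. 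The key ratio controlling the exponent is therefore $\gamma := f_i^{(C)}/k$.

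Next I would take a union bound using cut counting. Stratify dicuts by the undirected scale $\alpha = \ww(C)/k \in [2^j, 2^{j+1})$; by \Cref{thm:cpc} applied at scale $k=2^i\lambda$, the number of distinct $k$-projections, and hence of distinct undirected level-$i$ projections $F_i^{(C)}$, in this bucket is at most $n^{\ccpc\cdot 2^{j+1}}$. Since $\tf_i^{(\vC)}$ depends only on $F_i^{(\vC)}$, and each $F_i^{(C)}$ splits into at most two $F_i^{(\vC)}$ by the choice of source side, the total count of distinct directed level-$i$ projections is at most $2\,n^{\ccpc\cdot 2^{j+1}}$. To align the Bernstein exponent (which scales with $\gamma$) with the counting bound (which scales with $\alpha$), I further sub-stratify by $\gamma \in [2^l, 2^{l+1}]$ with $l \le j$, and in the regime $\gamma \ll \alpha$ I use the tighter count obtained by invoking \Cref{thm:cpc} internally on the subgraph $(V, F_i)$: a cut $C$ with $f_i^{(C)} \le 2^{l+1}k$ induces a cut of the same value in $(V,F_i)$, so the number of distinct $F_i$-projections at this sub-scale is at most $n^{\ccpc\cdot 2^{l+1}}$. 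The per-bucket failure probability is thus at most $2\,n^{\ccpc\cdot 2^{l+1}} \exp(-\Omega(\eps^2 \rho\, 2^l/(\beta+1)))$, which becomes $\delta \cdot n^{-\Omega(2^l)}$ with the chosen $\rho$; summing the geometric series over $j$ and $l$ yields the desired $\delta/(2n^2)$ failure probability at level $i$.

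The main obstacle is the hybrid target $\eps\sqrt{f_i^{(\vC)} f_i^{(C)}/(\beta+1)}$, which mixes a directed and an undirected quantity. The variance bound naturally sees only $f_i^{(\vC)}$, while the target depends on $f_i^{(C)}$ via the square root, and the native cut-counting bound is parameterized by the total undirected cut value $\ww(C)$ rather than by $f_i^{(C)}$. The sub-stratification step, which re-counts at the finer scale dictated by $f_i^{(C)}$ by viewing $F_i$-projections as cuts in the induced subgraph on $F_i$, is the technical move that aligns these parameters; it is also essential for correctly carrying the $(\beta+1)$-factor from the balance parameter through to the per-level concentration.
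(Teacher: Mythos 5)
Your high-level plan (per-cut concentration with variance proxy $\approx 2^{i+1}\lambda f_i^{(\vC)}/\rho$ and maximum summand $2^{i+1}\lambda/\rho$, followed by a dyadic union bound via \Cref{thm:cpc}) is the same as the paper's, but two of your steps do not go through. First, the ``internal'' invocation of \Cref{thm:cpc} on the subgraph $(V,F_i)$ is invalid: the theorem's count is controlled by how heavy the projected edges are relative to the min-cut structure of the graph it is applied to, and the edges of $F_i$ are $2^i$-heavy in $G$, not in $(V,F_i)$. The subgraph $(V,F_i)$ can be, e.g., a matching of unit-weight edges whose endpoints are $2^i$-connected only through edges of \emph{other} weight classes; it is then disconnected with no internal connectivity at scale $2^i$, and the number of distinct cuts of $(V,F_i)$ of value at most $2^{l+1}k$ is not $n^{O(2^l)}$. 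The paper never needs this sub-stratification: it buckets cuts directly by $f_i^{(C)}\in[2^{i+j},2^{i+j+1})$ and applies \Cref{thm:cpc} once, in $G$, with $k=2^i$ and $\alpha k=2^{i+j+1}-1$, exactly as in \cite{FunHHP11}, so no stratification by the total cut value $\ww(C)$ (and hence no re-counting at a finer scale) is performed.

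Second, Bernstein's inequality does not suffice, and the ``analogous bound in the linear regime'' that you wave at is precisely where the lemma is hardest. When $f_i^{(\vC)}\ll \eps^2 f_i^{(C)}/(\beta+1)$ (a very unbalanced directed cut), the target deviation $t=\eps\sqrt{f_i^{(\vC)}f_i^{(C)}/(\beta+1)}$ exceeds $V/M\approx f_i^{(\vC)}$, Bernstein's exponent degrades to $\Theta(t/M)=\Theta\bigl(\eps\rho\sqrt{f_i^{(\vC)}f_i^{(C)}/(\beta+1)}\,/2^{i}\bigr)$, and this falls short of the required $\Omega(2^{j}\log(n/\delta))$ by the factor $\sqrt{(\beta+1)f_i^{(\vC)}/(\eps^2 f_i^{(C)})}\ll 1$, so your union bound does not close. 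The paper instead proves a per-edge moment-generating-function bound asserting that each reweighted Bernoulli is $\sqrt{2^{i+2}\ww_e/\rho}$-sub-Gaussian, which removes the linear correction entirely and yields a purely Gaussian tail governed by $f_i^{(\vC)}$; you would need to supply such a two-sided sub-Gaussian (not merely sub-exponential) bound, or handle the unbalanced regime by a separate argument. (A smaller point: $\rho$ should not carry the stretch factor $\ell$, which enters only the expected edge count via \Cref{prop:connum}; putting it into $\rho$ would inflate the sparsity to $\eps^{-2}\beta\ell^2 n$.)
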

\begin{proof}
    By a uniform edge scaling, we may assume for simplicity that $\lambda = 1$.
    Assume w.l.o.g.~that $f_i^{(\vC)} > 0$, in which case $f_i^{(C)} > 0$.
    Then, the lowerbound in \eqref{eq:fi} ensures that $f_i^{(C)} \ge 2^i$.
    Let $\gC_{ij}$ be the set of all undirected cuts $C$ such that
    \begin{equation} \label{eq:cij}
        2^{i+j} \le f_i^{(C)} \le 2^{i+j+1} - 1 \quad \text{for } j\ge 0.
    \end{equation}

    Suppose $C \in \gC_{ij}$ and let its corresponding directed cut $\vC$.
    We let $\mu \defeq \E(\tf_i^{(\vC)}) = f_i^{(\vC)}$.
    For a directed edge $e \in F_i^{(C)}$, we let $X_e$ be the random variable
    that takes $X_e = \frac{\ww_e}{p_e}$ with probability $p_e \ge
    \rho \frac{\ww_e}{\kk_e}$ and $X_e = 0$ otherwise.
    Then, $X_e \in [0,\frac{2^{i+1}}{\rho}]$ for all $e$ and $\tf_i^{(\vC)} =
    \sum_{e \in F_i^{\vC}} X_e$.
    We show that each $X_e$ follows a sub-gaussian distribution.
    Given some fixed arbitrary $e$, we omit the subscripts here for brevity.
    Taking the moment generating function, we have
    \[
        \E[e^{t \cdot X}] = 1-p + p \cdot e^{\frac{w}{p} \cdot t},
    \]
    and $e^{t \E X} = \exp(-wt)$.
    Then,
    \[
        \E[e^{t \cdot (X - \E X)}] 
        = \exp(-wt) (1-p + p \cdot e^{\frac{w}{p} \cdot t})
        = \exp\Par{-wt + \ln(1-p+p\exp(\frac{wt}{p}))}.
    \]
    Define $h \defeq wt$ and function 
    $f(h) = -h + \ln(1-p+p\exp(\frac{h}{p}))$.
    We analyze this function on three separate pieces $(-\infty,-1) \cup [-1,1]
    \cup (1,\infty)$.
    We consider first when $h$ is non-negative, in which case $f(h) \le
    \frac{1-p}{p} h$.
    When $h > 1$, as $h < h^2$, $f(h) \le \frac{1-p}{p} h \le \frac{1}{p} h^2$.
    Consider $h \in [-1,1]$.
    A Taylor series at $h=0$ gives $f(h) = \frac{1-p}{2p} h^2 + O(h^3)$.
    Now, by $f(-1) \le f(1) = \ln(1-p+pe^{1/p}) \le \frac{1}{p}$, we have $f(h)
    \le \frac{1}{p} h^2$.
    Finally, when $h < -1$, $f(h) < -h < \frac{1}{p} h^2$.
    Thus, combining these cases gives us
    \[
        \E[e^{t \cdot (X - \E X)}] \le \exp(f(wt)) 
        \le \exp\Par{\frac{w^2}{p} t^2}
        \le \exp\Par{\frac{wk}{\rho} t^2}
        \le \exp\Par{\frac{2^{i+1}w}{\rho} t^2},
    \]
    proving that $X_e$ is $\sqrt{\frac{2^{i+2}\ww_e}{\rho}}$-sub-gaussian.
    Now, by independency, $\sum_{e \in F_i^{(\vC)}} X_e$ is $R$-sub-gaussian for
    \[
        R = \sqrt{\frac{2^{i+2}}{\rho} \sum_{e \in F_i^{(\vC)}} \ww_e} 
        = \sqrt{\frac{2^{i+2}}{\rho} f_i^{(\vC)}}.
    \]
    Then, sub-Gaussianity gives the following concentration inequality (see
    Proposition 2.5.2, \cite{Vershynin18})
    \begin{align*}
        &\pr\Par{ \Abs{\sum_{e \in F_i^{(\vC)}} (X_e - \E X_e)}
        > \eps \sqrt{\frac{f_i^{(\vC)} f_i^{(C)}}{\beta+1}}}
        \\
        \le&
        2\exp\Par{ - \frac{\csbg \eps^2 f_i^{(\vC)} f_i^{(C)}}{(\beta+1)R^2}}
        \le
        2\exp\Par{ - \frac{\csbg \rho \eps^2 f_i^{(\vC)} f_i^{(C)}}{2^{i+3}(\beta+1) f_i^{(\vC)}}}
        \\
        \stackrel{(a)}{\le}&
        2\exp\Par{ - \frac{\ccpc \log\frac{8n}{\delta} f_i^{(C)}}{2^{i-2}}}
        \stackrel{(b)}{\le}%
        2\exp\Par{ - \log\frac{8n}{\delta} \cdot \ccpc \cdot 2^{j+2}}
        \\
        \stackrel{(c)}{\le}&
        \frac{\delta}{4n^{\ccpc \cdot 2^{j+2}}}
    \end{align*}
    for any cut $C \in \gC_{ij}$, where $\csbg$ is some universal constant.
    Here, (a) follows by $\rho = \cbal \cdot \eps^{-2} (\beta+1)
    \log\frac{8n}{\delta}$ where we set $\cbal \ge \frac{32 \ccpc}{\csbg}$, (b)
    follows by the lowerbound in \eqref{eq:cij}, and (c) follows from $\delta <
    1$.

    Since edges in $F_i^{(C)}$ are $2^i$-heavy in $G$, by \Cref{thm:cpc} with
    $k = 2^i$ and $\alpha k = 2^{i+j+1}-1$, the number of distinct $X_i^{(C)}$
    sets for cuts $C \in \gC_{ij}$ is at most $n^{\ccpc \cdot \alpha k/k} <
    n^{\ccpc \cdot 2^{j+1}}$ for some constant $\ccpc \ge 2$.
    By union bounding over all distinct $F_i^{(C)}$, all cuts in $\gC_{ij}$
    satisfy \eqref{eq:baldicut} with probability of at least $1-
    \frac{\delta}{4n^{2^{j+1}}}$.
    Since
    \[
        n^{-2} + n^{-4} + \ldots + n^{-2 \cdot 2^j} + \ldots \le 2n^{-2},
    \]
    the lemma follows by taking a union bound over all $j \ge 0$.
\end{proof}

\begin{algorithm2e}[ht!] \label{alg:baldicut}
    \caption{$\SDCalgo(\vG=(V,E,\ww),\beta,\eps,\delta,\tkk)$}
    \codeInput $\vG=(V,E,\ww)$ a simple weighted directed graph with $G \defeq
    \und(\vG)$, $\beta \in \R_{\ge 0}$, $\eps \in (0,\frac{1}{2})$, $\delta \in
    (0,1)$, $\tkk$ an edge connectivity estimation of $G$\;
    $\rho \defeq \cbal \cdot \eps^{-2} (\beta+1) \log \frac{8n}{\delta}$\;
    \ForEach{$e \in E$}{
        With probability $p_e = \rho \cdot \frac{\ww_e}{\tkk_e}$, set $\ww_e' \gets \frac{1}{p_e} \ww_e$; otherwise
        $\ww_e' \gets 0$\;
    }
    \Return{$\vH = (V,E,\ww')$}\;
\end{algorithm2e}

\begin{proof}[Proof of \Cref{thm:baldicut}]
For any directed cut $\vC$, let $E^{(\vC)} \defeq E \cup \vC$ and let
$e^{(\vC)} \defeq \ww(E^{(\vC)}), \te^{(\vC)} \defeq \ww_{\vH}(E^{(\vC)})$.
We also let $E^{(C)} \defeq E \cup C$ and let $e^{(C)} \defeq \ww(E^{(C)}), 
\te^{(C)} \defeq \ww_{\vH}(E^{(C)})$.

Since there are at most $n^2$ edges in $\vG$ (and $G$), the number of nonempty
sets $F_i$ is also at most $n^2$. 
Then, by a union bound of all values of $i$ where $F_i$ is nonempty, we have
with probability at least $1-\frac{\delta}{2}$,
\[
    \sum_{i=0}^K \Abs{f_i^{(\vC)}-\tf_i^{(\vC)}}
    \le
    \eps \cdot \sum_{i=0}^K \sqrt{\frac{f_i^{(\vC)} f_i^{(C)}}{\beta+1}}
\]
for all cuts $C$.
Under the assumption that this inequality holds, we get
\begin{align*}
    |e^{(C)} - \te^{(C)}| 
    &=
    \Abs{\sum_{i=0}^K f_i^{(C)} - \sum_{i=0}^K \tf_i^{(C)}}
    \\
    &\le
    \sum_{i=0}^K \Abs{f_i^{(\vC)}-\tf_i^{(\vC)}}
    \quad \le
    \eps \cdot \sum_{i=0}^K \sqrt{\frac{f_i^{(\vC)} f_i^{(C)}}{\beta+1}}
    \\
    &\le
    \frac{\eps}{\sqrt{\beta+1}} \cdot \sqrt{\Par{\sum_{i=0}^K f_i^{(\vC)}} \Par{\sum_{i=0}^K f_i^{(C)}}}
    && (\text{Cauchy-Schwarz})
    \\
    &\le
    \frac{\eps}{\sqrt{\beta+1}} \cdot \sqrt{e^{(\vC)} e^{(C)}}
\end{align*}
as required.
Note that when $\vC$ is $\alpha$-balanced, $e^{(C)} \le (\alpha+1) \cdot
e^{(\vC)}$.

We now bound the number of edges in $\vH$.
For all $e \in E$, let $Y_e$ be an independent Bernoulli random variable with
probability $p_e$.
Then, the expected number of edges is 
\[
    \E(\sum_e Y_e) = \sum_e \E Y_e = \sum_e p_e 
    \le \rho \ell n = \cbal \cdot \eps^{-2} (\beta+1) \ell
    n\log\frac{8n}{\delta}.
\]
A tail bound (e.g., Hoeffding's inequality) gives us the desired sparsity
result with a failure probability $\le \frac{\delta}{2}$ for any valid $n$, as
long as $\cbal \ge 8$.
\end{proof}

We note that an expander naturally ensures that the connectivity of an edge is
at least the minimum degree of its incident vertices (see \Cref{lemma:expconn}).
This then implies a simple sampling scheme for directed cut sparsification
almost identical to that of \Cref{lemma:entrysample}.

\begin{lemma} \label{lemma:expconn}
    For $\phi$-expander $G=(V,E,\ww)$ it satisfies that for ever edge $e \in E$,
    the edge connectivity $\kk_e \ge \phi \cdot \min_{v \ni e} \dd_v$.
\end{lemma}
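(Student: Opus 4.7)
The plan is to prove this by directly unpacking the definitions of edge connectivity and conductance, and then arguing that the two incident vertex degrees yield lower bounds on the volumes of both sides of any separating cut.

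Fix an edge $e = (u,v) \in E$. By definition, the edge connectivity $\kk_e = \kk_{uv}$ equals the minimum value of an undirected cut separating $u$ from $v$. Let $(S, V \setminus S)$ be any such cut, so $u \in S$ and $v \in V \setminus S$ (or vice versa). First I would observe that since $u \in S$, every edge incident to $u$ contributes to $\vol(S)$, so $\vol(S) \ge \dd_u$; symmetrically, $\vol(V \setminus S) \ge \dd_v$. Therefore
\[
\min\{\vol(S), \vol(V \setminus S)\} \;\ge\; \min\{\dd_u, \dd_v\} \;=\; \min_{w \ni e} \dd_w.
\]

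Next I would apply the $\phi$-expander hypothesis: by definition of conductance, $\Phi_G(S) \ge \phi$, which rearranges to $\ww_G(S, V\setminus S) \ge \phi \cdot \min\{\vol(S), \vol(V \setminus S)\}$. Combining with the previous inequality gives
\[
\ww_G(S, V \setminus S) \;\ge\; \phi \cdot \min_{w \ni e} \dd_w.
\]
Since this bound holds for every $u$-$v$ separating cut, it holds in particular for the minimum such cut, yielding $\kk_e \ge \phi \cdot \min_{w \ni e} \dd_w$, as claimed.

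There is no real obstacle here — the argument is just a two-line unrolling of the definitions, and I expect the author's proof to look essentially identical. The only subtle point is making sure that $\min\{\vol(S), \vol(V\setminus S)\}$ is bounded below by the smaller of the two endpoint degrees regardless of which side of the cut is the ``smaller'' side, which follows from the fact that $u$ and $v$ sit on opposite sides.
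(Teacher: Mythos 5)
Your proof is correct and takes essentially the same route as the paper's: lower-bound the volume of each side of any $u$--$v$ separating cut by the degree of the endpoint it contains, then apply the conductance bound. The paper's version just fixes the minimum cut and assumes w.l.o.g.\ that $u$ lies on the smaller side, whereas you handle both sides symmetrically; the content is identical.
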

\begin{proof}
    For any edge $e = (u,v)$, let $C_{\star} = (U, V \setminus U)$ be the
    minimum cut that separates $u,v$.
    Assume w.l.o.g. that $u \in U$ and $\Vol(U) \le \Vol(V \setminus U)$.
    Then, by definition of edge connectivity and expander, \[
        \kk_e = \ww(C_{\star}) \ge \phi \cdot \vol(U) 
        \ge \phi \cdot \min(\dd_u,\dd_v).
    \]
\end{proof}

\begin{corollary} \label{cor:expdicutapprox}
    There is an algorithm that given a weighted directed graph $\vG=(V,E,\ww)$
    with $G \defeq \und(\vG)$ a $\phi$-expander,  and $\eps,\delta \in (0,1),
    \beta \ge 1$, it returns in time $O(m)$ a weighted subgraph $\vH$ such that
    with probability at least $1-\delta$, $\vH$ is a $(\beta,\eps)$-dicut
    sparsifier of $\vG$ and the number of edges $\nnz(\ww_{\vH}) =
    O(\eps^{-2}\phi^{-1}\beta n\log\frac{n}{\delta})$.
\end{corollary}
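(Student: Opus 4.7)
The plan is to apply the static sparsification algorithm \SDC (\Cref{thm:baldicut}) using a concrete edge-connectivity estimate derived from vertex degrees, exploiting \Cref{lemma:expconn}. Specifically, for each edge $e = (u,v) \in E$, set
\[
    \tkk_e \defeq \phi \cdot \min_{w \ni e} \dd_w,
\]
which requires only $O(m)$ time once the weighted degrees of $G \defeq \und(\vG)$ have been computed.

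First, I verify that $\tkk$ is a valid $O(\phi^{-1})$-stretch edge-connectivity estimation of $G$. The pointwise bound $\tkk_e \le \kk_e$ is immediate from \Cref{lemma:expconn}. For the stretch, observe that $\frac{1}{\min(a,b)} \le \frac{1}{a} + \frac{1}{b}$ for positive $a,b$, so
\[
    \sum_{e \in E} \frac{\ww_e}{\tkk_e}
    = \frac{1}{\phi} \sum_{e = (u,v)} \frac{\ww_e}{\min(\dd_u,\dd_v)}
    \le \frac{1}{\phi} \sum_{e=(u,v)} \ww_e \Par{\frac{1}{\dd_u} + \frac{1}{\dd_v}}
    = \frac{1}{\phi} \sum_{v \in V} \frac{1}{\dd_v} \sum_{e \ni v} \ww_e
    = \frac{n}{\phi}.
\]
Since $G$ is a $\phi$-expander with $\phi > 0$, it is connected, so \Cref{prop:connum} yields $\sum_e \frac{\ww_e}{\kk_e} = n-1$. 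Thus $\tkk$ is a $\ell$-stretch estimation with $\ell = O(\phi^{-1})$.

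Feeding $\tkk$, together with the given $\vG$, $\beta$, $\eps$, $\delta$, into \SDC and invoking \Cref{thm:baldicut} produces in $O(m)$ additional time a weighted subgraph $\vH$ which, with probability at least $1 - \delta$, is a $(\beta,\eps)$-dicut sparsifier of $\vG$ and has $\nnz(\ww_{\vH}) = O(\eps^{-2}\beta \ell n \log(n/\delta)) = O(\eps^{-2} \phi^{-1} \beta n \log(n/\delta))$ edges. Combining the degree computation, the construction of $\tkk$, and the call to \SDC yields the overall $O(m)$ runtime. There is no substantial obstacle here: the only thing to check carefully is the $O(\phi^{-1})$ stretch, which is the content of the short calculation above and is the only place the expansion hypothesis enters.
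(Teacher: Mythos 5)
Your proof is correct and follows essentially the same route as the paper: certify $\tkk$ via \Cref{lemma:expconn}, bound the total stretch by $O(\phi^{-1})$ using the degree sum and \Cref{prop:connum}, and invoke \Cref{thm:baldicut}. The only (immaterial) difference is that you take $\tkk_e = \phi\min(\dd_u,\dd_v)$ where the paper uses $\tkk_e = \phi/(\dd_u^{-1}+\dd_v^{-1})$; these differ by at most a factor of $2$ and yield the same bounds.
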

\begin{proof}
    Our algorithm simply performs \SDC (Algorithm~\ref{alg:baldicut}) with the
    same inputs and a connectivity estimation $\tkk$ defined by $\tkk_e =
    \frac{\phi}{\dd_u^{-1}+\dd_v^{-1}}$ for every $e = (u,v) \in E$.
    As $\min_{v \ni e} \dd_v \ge \frac{1}{\dd_v^{-1} + \dd_u^{-1}}$ for any
    $e=(u,v) \in E$. 
    \Cref{lemma:expconn} entails that $\kk_e \ge \tkk_e$.
    It then suffices to bound the total stretch $\ell$ of $\tkk$.
    W.l.o.g., we assume $G$ is connected; otherwise we can bound the total
    stretch for each connected component by $O(\phi^{-1})$.
    We get by definition that $\frac{1}{\tkk_e} = \phi^{-1} (\frac{1}{\dd_u} +
    \frac{1}{\dd_v})$. 
    The total sum satisfies
    \[ 
        \sum_{e \in E} \frac{\ww_e}{\tkk_e}
        = \phi^{-1} \sum_{e \in E} \ww_e (\sum_{v \ni e} \frac{1}{\dd_v})
        = 2\phi^{-1} \sum_{v \in V} \frac{1}{\dd_v} \sum_{e \ni v} \ww_e 
        = 2\phi^{-1} n,
    \]
    which guarantees $O(\phi^{-1})$ total stretch by \eqref{eq:connum}.
\end{proof}

Finally, we obtain our simple directed cut sparsification algorithm by combining
\Cref{cor:expdicutapprox} with expander decompositions \Cref{prop:ex_partition}.
\begin{theorem} \label{thm:dicutapprox_final}
    There is an algorithm that 
    given $\vG=(V,E,\ww)$ a simple weighted directed graph and parameters
    $\eps,\delta \in (0,1)$, $\beta \ge 1$, it returns in time 
    $O\Par{m \log^6(n)\log\Par{\frac n \delta} + m\log(n)\log(W)}$
    a weighted subgraph $\vH$ such that with probability at least $1-\delta$,
    $\vH$ is a $(\beta,\eps)$-dicut sparsifier of $\vG$ and the number of edges
    $\nnz(\ww_{\vH}) = O(\eps^{-2}\beta n \log^3(n) \log(W)
    \log(\frac{n}{\delta}))$.
\end{theorem}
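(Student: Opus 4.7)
The plan is to combine the static expander-based sparsification of Corollary \ref{cor:expdicutapprox} with the expander decomposition of Proposition \ref{prop:ex_partition}, gluing the pieces together via the union property (Lemma \ref{lemma:dicut_union}). This parallels the structure of the proof of Theorem \ref{thm:dispec_int_static} for the spectral analogue. First, I would call $\EPAalgo(G, \delta/2)$ on $G \defeq \und(\vG)$ to obtain a $(\phi, J)$-expander decomposition $\{G_i\}_{i \in [I]}$ with $\phi = \cadk \log^{-2}(n)$ and $J = O(\log n \log W)$, taking time $O(m \log^6 n \log(n/\delta))$ and succeeding with probability $\ge 1 - \delta/2$.

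Next, for each $i \in [I]$, let $\vG_i$ denote the directed subgraph on the edge set $E(G_i)$. Since $\und(\vG_i) = G_i$ is a $\phi$-expander, I apply Corollary \ref{cor:expdicutapprox} to $\vG_i$ with parameters $\eps$, $\delta/(2m)$, $\beta$, producing a $(\beta,\eps)$-dicut sparsifier $\vH_i$ of $\vG_i$. Since $I \le m$, a union bound shows that all sparsifications succeed simultaneously with probability $\ge 1 - \delta/2$; combined with the decomposition event, the overall success probability is $\ge 1 - \delta$. Setting $\vH \defeq \bigcup_{i \in [I]} \vH_i$, the union property (Lemma \ref{lemma:dicut_union}) applied with $s_i = 1$ and $\vG = \bigcup_i \vG_i$ immediately yields that $\vH$ is a $(\beta,\eps)$-dicut approximation of $\vG$.

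For the sparsity bound, Corollary \ref{cor:expdicutapprox} gives $\nnz(\ww_{\vH_i}) = O(\eps^{-2} \phi^{-1} \beta n_i \log(n/\delta)) = O(\eps^{-2} \log^2(n) \cdot \beta n_i \log(n/\delta))$, where $n_i$ is the number of non-trivial vertices of $\vG_i$. Summing over $i$ and invoking the vertex coverage property of Definition \ref{def:exp_partition} (item \ref{item:exp:partition:vertex}), which implies $\sum_i n_i \le J \cdot n = O(n \log n \log W)$, delivers the claimed sparsity $O(\eps^{-2} \beta n \log^3(n) \log(W) \log(n/\delta))$. For the runtime, the dominant term $O(m \log^6(n) \log(n/\delta))$ arises from the decomposition itself; edge-disjointness of the $G_i$ makes the sum $\sum_i |E(\vG_i)| = m$, so the total cost of per-component sparsification is $O(m)$, while the auxiliary $O(m \log(n) \log(W))$ term absorbs bookkeeping such as constructing the directed subgraphs and aggregating outputs over the $J$-colored layers of the decomposition.

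No step here is genuinely hard; the entire argument is a direct composition of previously established results. The only item requiring a bit of care is threading the failure probabilities and the vertex coverage parameter correctly: one must note that the expander decomposition's edge-disjointness prevents any blow-up in the sampling runtime, while its vertex coverage parameter $J$ is exactly what propagates into the final $\log n \log W$ factors in the sparsity bound.
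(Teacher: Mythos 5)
Your proposal matches the paper's proof essentially step for step: expander decomposition via Proposition~\ref{prop:ex_partition} with failure budget $\delta/2$, per-expander sampling via Corollary~\ref{cor:expdicutapprox} with failure budget $\delta/(2m)$, gluing by the union property (Lemma~\ref{lemma:dicut_union}), and the sparsity bound $O(\eps^{-2}\beta\phi^{-1} n J \log(n/\delta))$ from $\phi^{-1}=O(\log^2 n)$ and vertex coverage $J=O(\log n\log W)$. The argument is correct and there is nothing to add.
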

\begin{proof}[Proof sketch]
    The algorithm first perform an expander decomposition on the corresponding
    undirected graph using \Cref{prop:ex_partition} with probability parameter
    $\frac{\delta}{2}$.
    This induces a decomposition of the original directed graph where each
    corresponding undirected graph is an expander.
    It then sample within each subgraph with failure probability
    $\frac{\delta}{2m}$.
    Note that $m$ is a crude upperbound on the total number of expanders in the
    decomposition.
    Thus, we are guaranteed a success probability of $1-\delta$ by a union
    bound.
    \Cref{lemma:dicut_union} then ensures the union of all the
    $(\beta,\eps)$-dicut sparsifiers is a $(\beta,\eps)$-dicut sparsifier of the
    original $\vG$.
    Finally, the expansion $\phi = \frac{1}{O(\log^2)}$, the vertex coverage $J
    = O(\log n \log W)$ and our probability choice gives the sparsity bound of
    \[
        \nnz(\ww_{\vH}) 
        = O(\eps^{-2}\beta \phi^{-1} nJ \log\frac{n}{\delta})
        = O\Par{\eps^{-2}\beta n \log^3(n)\log(W)\log\Par{\frac{n}{\delta}}}.
    \]
\end{proof}

\subsection{Dynamic directed cut sparsification}

Since our $\beta$-dicut approximation definition satisfies union property
(\Cref{lemma:dicut_union}), the reductions presented in \Cref{ssec:dynexpdecomp}
can, again, be applied for dynamic dicut sparsifications.
This allows us to focus on deriving a decremental algorithm on expanders for
this task.
Specifically, we show that Algorithms~\ref{alg:dsp},~\ref{alg:dsd} for
dynamically maintaining $\eps$-degree approximations applies here as well.

\begin{lemma} \label{lemma:decexpdicut}
    Suppose $\vG=(V, E, \ww)$ is a weighted directed graph with weight ratio
    $\frac{\max_{e \in E} \ww_e}{\min_{e \in E} \ww_e} \le 2$ undergoing edge
    deletions only.
    There is an algorithm that given $\eps,\phi \in (0,1)$, $\beta
    \ge 1$ and an oblivious online sequence of edge deletions, preprocesses in
    time $O(m)$, maintains explicitly a reweighted directed subgraph $\vG'$ of
    $\vG$ with $\nnz(\ww') = O(\eps^2 \beta \phi^{-1} N \log n)$ where $N$ is
    the number of non-trivial vertices currently.
    The algorithm has worst-case recourse $O(\eps^{-2} \phi^{-1}\beta \log n)$
    and worst-case update time $O(\eps^{-2}\phi^{-1}\beta \log^2 n)$.
    Further more, with high probability, whenever $G = \und(\vG)$ becomes a
    $\phi$-expander, $\vH$ is a $(\beta,\eps)$-dicut sparsifier of $\vG$.
\end{lemma}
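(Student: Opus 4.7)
The plan is to adapt the preprocessing/deletion pair \DSPalgo{}/\DSDalgo{} (Algorithms~\ref{alg:dsp},~\ref{alg:dsd}) almost verbatim, but with a sampling rate tailored to the cut sparsification guarantee of \Cref{cor:expdicutapprox} rather than to the degree approximation of \Cref{lemma:entrysample}. Concretely, I would run the same per-vertex subset-sampling scheme, sampling each incident edge of $v$ independently with probability $\min\{1, 2L\rho/(\phi \dd_v)\}$ where $\rho = \Theta(\eps^{-2}\beta \log n)$, and keeping the edge $e=(u,v)$ in $\vG'$ iff it is sampled by at least one of its endpoints. Then for any edge with $\dd_u,\dd_v > 2L\rho/\phi$ the survival probability is
\[
    p_e \;=\; 1-\bigl(1-\tfrac{2L\rho}{\phi\dd_u}\bigr)\bigl(1-\tfrac{2L\rho}{\phi\dd_v}\bigr) \;\ge\; \rho\,\ww_e\Bigl(\tfrac{1}{\dd_u}+\tfrac{1}{\dd_v}\Bigr)\big/\phi
\]
(up to a $\Theta(1)$ factor absorbed into $\rho$, using $\ww_e \le 2L$), matching the sampling probability of the static algorithm in the proof of \Cref{cor:expdicutapprox}, while edges with either endpoint of very small degree are kept deterministically (their contribution to the final cut count is negligible by the same argument as in \Cref{lemma:decexpdeg}). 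Reweight $\ww'_e = \ww_e / p_e$.

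The complexity analysis mirrors \Cref{lemma:decexpdeg} line-for-line. Each edge deletion modifies only $\dd_u,\dd_v$ for its two endpoints, so only two vertex-level resamples are required; \Cref{prop:subset} together with $\sum_{e \ni v} \ww_e/\dd_v = 1$ bounds each resample by $O(\rho \log n)$ time in the worst case. This gives worst-case recourse $O(\rho) = O(\eps^{-2}\beta\phi^{-1}\log n)$ and worst-case update time $O(\eps^{-2}\beta\phi^{-1}\log^2 n)$; the same Chernoff-plus-union-bound over the $n^2$ possible deletions gives the high-probability bound $\nnz(\ww') = O(\eps^{-2}\beta\phi^{-1} N \log n)$ throughout.

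For correctness, I would exploit obliviousness exactly as in the proof of \Cref{lemma:decexpdeg}: since the adversary's updates do not depend on the random samples, at any fixed time $t$ the distribution of $\vG'_t$ is identical to what would be produced by running the static sampler of \Cref{cor:expdicutapprox} on $\vG_t$. Whenever $\und(\vG_t)$ happens to be a $\phi$-expander, \Cref{cor:expdicutapprox} applied at that moment certifies (with failure probability $\le n^{-c}$) that $\vG'_t$ is a $(\beta,\eps)$-dicut sparsifier. A union bound over the at most $n^2$ time steps, absorbed by taking the constant in $\rho$ large enough, yields the claimed high-probability guarantee over the entire update sequence.

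The main subtlety is that the sampling probability depends on the \emph{current} degrees $\dd_u,\dd_v$, which change over time while the sample for a given edge $e \not\ni u,v$ remains untouched from a previous resample at possibly much larger degrees. As in \Cref{lemma:decexpdeg}, the resolution is that after any deletion the only two vertices whose degrees have changed are the endpoints of the deleted edge, and our algorithm resamples \emph{all} of their incident edges with the updated probability; edges not incident to the deleted edge retain the correct up-to-date probability from their most recent resample. This ensures the coupling with the static sampler of \Cref{cor:expdicutapprox} is valid at every time step, which is exactly what the union bound requires.
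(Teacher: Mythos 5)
Your proposal is correct and follows essentially the same route as the paper: the paper also runs \DSPalgo{}/\DSDalgo{} with a rescaled parameter (folding the $\phi^{-1}$ into the error parameter $\eps'$ rather than into the per-vertex sampling rate, which is an equivalent reparametrization), lower-bounds the survival probability $p_e$ by the connectivity-based rate of \Cref{cor:expdicutapprox} via \Cref{lemma:expconn}, and uses obliviousness to reuse randomness and union-bound over the at most $n^2$ deletions. The complexity analysis you inherit from \Cref{lemma:decexpdeg} is exactly what the paper does as well.
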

\begin{proof}
    Our proof is analogous to that of \Cref{lemma:decexpdeg}.
    We preprocess using \DSP (Algorithm~\ref{alg:dsp}) and handles edge
    deletions using \DSD (Algorithm~\ref{alg:dsd}) with error parameter $\eps' =
    \Theta(\eps \sqrt{\frac{\beta}{\phi}})$.
    The sparsity, preprocessing time, worst-case recourse and update time
    guarantees all follows directly by the same proof of \Cref{lemma:decexpdeg}.

    Now, all there left to show is the dicut approximation guarantees.
    The crux is, again, that we can ``reuse'' the random choices throughout the
    updates due to the adversary being oblivious.
    Similar to before, we will union bound over at most $n^2$ edge deletions.
    Consider a time after some updates (or initially) when $\vG$ satisfies that
    $G = \und(\vG)$ is a $\phi$-expander.
    The probability of an edge $e = (u,v)$ being sampled is then 
    \[
        p_e = 
        1 - (1-\frac{2L\rho}{\dd_{h(e)}})(1-\frac{2L\rho}{\dd_{t(e)}}) = 
        4L^2\rho^2\frac{(\dd_{h(e)} + \dd_{t(e)} - 1)}{\dd_{h(e)}\dd_{h(e)}}
        \ge 
        \frac{2L\rho}{\min_{v \in e} \dd_v}
    \]
    when $\min_{v \in e} \dd_v > 2L\rho$; otherwise $p_e = 1$ which is as
    large as possible.
    Note that $2L \ge \ww_e$ and $\rho = O(\eps^{-2} \phi^{-1} \beta \log n)$. 
    When combined with \Cref{lemma:expconn}, an argument similar to that of
    \Cref{cor:expdicutapprox} than guarantees $(\beta,\eps)$-dicut approximation
    with our desired high probability guarantees, assuming that $\cds$ is a
    sufficiently large constant.
\end{proof}

We can now complete the proof of our dynamic sparsifier with amortized update time
guarantees.
\begin{proof}[Proof of \Cref{thm:dyndicut_for}]
    We again apply the framework in \Cref{lemma:red_dynexp}.
    Note that $\beta$-dicut approximation satisfies the union property by
    \Cref{lemma:dicut_union}.
    Similar to \Cref{thm:dynspecext_for}, we set the expansion to $\phi =
    1/\Omega(\log^2 n)$, then, the algorithm in \Cref{lemma:decexpspec_ext} has
    preprocessing time $P(m,\eps) = O(m)$, update time $T(N,\eps) \le
    O(\eps^{-2} \beta \log^4 n)$, sparsity $S(N,\eps) \le O(\eps^{-2} \beta N
    \log^3 n)$, where $N$ is the number of non-trivial vertices in the
    decremental directed graphs with corresponding undirected expander, and $n$
    is the number of vertices in the entire graph $\vG$.
    Note that there is no query operation as our sparsifier is explicit.
    Then, our overall guarantees are given by \Cref{lemma:decexpspec_ext}.
    To summarize, our dynamic algorithm has preprocessing time of $O(m \log^7
    n)$, amortized update time $O(\eps^{-2} \beta \log^7 n)$, number of edges in
    the sparsifier $O(\eps^{-2} \beta n \log^5 n \log W)$.
\end{proof}

\subsection{Dynamic algorithm with worst-case update time}

We now present a simple sparsification algorithm by \cite{AbrahamDKKP16} we are
using for our dynamic dicut sparsifier with worst-case guarantees.
\SDM (Algorithm~\ref{alg:dicutmst}) is a half-sparsification algorithm that
repeatedly sparsify the graph by a multiplicative factor of roughly
$\frac{1}{2}$ on the number of edges.
Within each iteration, \SDMO (Algorithm~\ref{alg:dicutmst_once}) uses a bundle
of approximate maximum spanning forests, called $\alpha$-MSFs, to find edges
with sufficiently small connectivity to sample on.
\begin{definition}
    A subgraph $T$ of an undirected graph $G$ is an $\alpha$-MSF for $\alpha \ge
    1$ if for every edge $e = (u,v)$ of $G$, there is a path $\pi$ from $u$ to
    $v$ in $T$ such that $\ww(e) \le \alpha \ww(f)$ for every edge $f$ on $\pi$.
\end{definition}
Note that $T$ need not be a spanning forest.
A maximum spanning forest is a $1$-MSF.

\begin{definition}
    A $t$-bundle $\alpha$-MSF of an undirected graph $G$ is the union $B =
    \bigcup_{i=1}^t T_i$ of a sequence of grpah $T_1 ,\ldots, T_t$ such that,
    for every $1 \le i \le t$, $T_i$ is an $\alpha$-MSF of $G \setminus
    \bigcup_{j=1}^{i-1} T_j$.
\end{definition}

\begin{algorithm2e}[ht!] \label{alg:dicutmst}
    \caption{$\SDMalgo(\vG=(V,E,\ww),\beta,\eps,\delta,\gamma)$}
    \codeInput $\vG=(V,E,\ww)$ a simple weighted directed graph with $G \defeq
    \und(\vG)$, $\beta \in \R_{\ge 0}$, $\eps \in (0,\frac{1}{2})$, $\delta \in
    (0,1)$, $\gamma \in \R_{> 0}$\;
    $I \gets \ceil{\log \gamma}$, $i \gets 0$\; 
    $\vG_0 \gets \vG$, $\vB_0 \gets$ an empty subgraph of $\vG_0$ on $V$\;
    \While{$i < I$ and $|E(G_i)| > 16 \ln \frac{8m}{\delta})$ }{
        \label{line:dicutmst_while}
        $(\vH_i,\vB_i) \gets \SDMOalgo(\vG_i, \beta, \frac{\eps}{3I},
        \frac{\delta}{2^{i+1}})$\;
        $\vG_{i+1} \gets \vH_i \setminus \vB_i$, $i \gets i+1$\;
    }
    $\vH \gets \bigcup_{1\le j \le i-1} \vB_j \cup \vG_i$\;
    \Return{$\vH,\{\vB_j\}_{j=1}^{i-1},\vG_i$}\;
\end{algorithm2e}

\begin{algorithm2e}[ht!] \label{alg:dicutmst_once}
    \caption{$\SDMOalgo(\vG=(V,E,\ww),\beta,\eps,\delta)$}
    \codeInput $\vG=(V,E,\ww)$ a simple weighted directed graph with $G \defeq
    \und(\vG)$, $\beta \in \R_{\ge 0}$, $\eps \in (0,\frac{1}{2})$, $\delta \in
    (0,1)$\;
    $t \gets 4\cbal \cdot \eps^{-2} \alpha (\beta+1) \log \frac{16n}{\delta}$\;
    Let $B$ be a $t$-bundle $\alpha$-MSF of $G$ and $\vB$ its corresponding
    directed subgraph in $\vG$\; 
    Set $\tkk \in \R^{E}$ by $\tkk_e = \frac{t}{\alpha}$ if $e \in \vG \setminus \vB$ and
    $\tkk_e = \ww_e$ otherwise\;
    $\vH \gets \SDCalgo(\vG,\beta,\eps,\frac{\delta}{2},\tkk)$\;
    \Return{$(\vH,\vB)$}\;
\end{algorithm2e}

We summarize the guarantees for \SDMO and \SDM in the static settings in
\Cref{lemma:sdmo,lemma:sdm}.
\begin{lemma} \label{lemma:sdmo}
    Given inputs satisfying the conditions of \SDMO
    (Algorithm~\ref{alg:dicutmst_once}), the algorithm returns with probability
    $1-\delta$ a reweighted directed subgraph $\vH$ of $\vG$ that is a
    $(\beta,\eps)$-dicut approximation of $\vG$ and a subgraph $\vB$ of $\vH$
    such that if $|E_{\vG} - E_{\vB}| \ge 8 \ln\frac{n}{\delta}$ the number of
    edges satisfies 
    \[
        |E_{\vH}| \le |E_{\vB}| + \frac{1}{2}(|E_{\vG}| - |E_{\vB}|).
    \]
\end{lemma}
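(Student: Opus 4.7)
The plan is to treat the two claims separately, reducing both to facts established earlier. For correctness, I would check that $\tkk$ is a valid edge-connectivity underestimate and then invoke \Cref{thm:baldicut} directly. For sparsity, I would combine a deterministic retention argument for $\vB$ with a Chernoff bound on the remaining edges.

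First, to show that $\vH$ is a $(\beta,\eps)$-dicut approximation of $\vG$, it suffices to verify $\tkk_e \le \kk_e$ for every edge $e$ of $G = \und(\vG)$ and then apply \Cref{thm:baldicut} with failure probability $\delta/2$. For $e \in \vB$, this is immediate since $\tkk_e = \ww_e$ and a single edge is itself a cut of its own weight. For $e = (u,v) \in E_{\vG} \setminus E_{\vB}$, I would use the standard bundle-MSF property: within each of the $t$ layers $T_1,\dots,T_t$, the $\alpha$-MSF guarantee yields a $u$-to-$v$ path $\pi_j \subseteq T_j$ in which every edge has weight at least $\ww_e/\alpha$. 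Because the paths $\pi_1,\dots,\pi_t$ live in edge-disjoint layers, any $u$-$v$ cut in $G$ intersects each $\pi_j$ in at least one edge, giving $\kk_e \ge t\ww_e/\alpha \ge t/\alpha = \tkk_e$, where the last inequality uses the standard normalization $\min_e \ww_e \ge 1$.

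Next, for the sparsity claim, note that all edges of $\vB$ have sampling probability $p_e = \rho \cdot \ww_e/\tkk_e = \rho \ge 1$ in \SDC, so they are retained deterministically with their original weights and $\vB \subseteq \vH$. For each $e \in E_{\vG} \setminus E_{\vB}$, substituting $\tkk_e = t/\alpha$ and the choice $t = 4\cbal \eps^{-2} \alpha (\beta+1)\log(16n/\delta)$ gives
\[
    p_e \;=\; \frac{\rho\alpha\,\ww_e}{t} \;=\; \frac{\log(8n/\delta)}{4\log(16n/\delta)}\,\ww_e \;\le\; \frac{1}{4},
\]
so the expected number of sampled edges outside $\vB$ is at most $\frac{1}{4}(|E_{\vG}|-|E_{\vB}|)$. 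Under the hypothesis $|E_{\vG}|-|E_{\vB}| \ge 8\ln(n/\delta)$, a multiplicative Chernoff bound applied to these independent indicators shows that the count exceeds $\frac{1}{2}(|E_{\vG}|-|E_{\vB}|)$ with probability at most $\delta/2$, yielding $|E_{\vH}| \le |E_{\vB}| + \frac{1}{2}(|E_{\vG}|-|E_{\vB}|)$. A union bound over the two failure events completes the argument.

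The main delicate point is the connectivity lower bound for edges outside $\vB$: the $\alpha$-MSF guarantee only provides a single weight-dominating path per layer, so the argument relies critically on edge-disjointness across layers to accumulate the required $t/\alpha$ slack, and the constant factor $4$ in the definition of $t$ must be tuned to pass both the probability-upper-bound step (making $p_e \le 1/4$) and the Chernoff threshold. A secondary subtlety is the implicit normalization $\min_e \ww_e \ge 1$ built into the choice $\tkk_e = t/\alpha$; without it one would have to rescale weights or refine the estimate to $\tkk_e = (t/\alpha)\min_e \ww_e$.
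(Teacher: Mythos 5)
Your proposal follows the paper's proof essentially step for step: the same edge-disjoint-paths argument across the $t$ layers of the bundle certifies $\kk_e \ge t\ww_e/\alpha = \tkk_e$ for edges outside $\vB$, edges of $\vB$ are retained deterministically since $p_e \ge \rho > 1$, the approximation guarantee is delegated to \Cref{thm:baldicut}, and a concentration bound on the probability-$1/4$ samples outside $\vB$ gives the edge count. The only differences are cosmetic: the paper uses Hoeffding's additive inequality, which gives $\exp(-k/8)$ and exactly matches the $8\ln(\cdot)$ threshold (the standard multiplicative Chernoff form $\exp(-\mu/3)$ with $\mu \le k/4$ only yields $\exp(-k/12)$ and does not quite clear $\delta/2$ for very small $\delta$), and the weight-normalization caveat you flag is also present, silently, in the paper's own proof, which asserts the sampling probability is "exactly $\tfrac{1}{4}$."
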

\begin{proof}
    For each directed edge $e \in E(\vG) \setminus E(\vB)$, by definition of
    $t$-bundle $\alpha$-MSFs, there exists $t$ edge disjoint paths in the
    undirected graph $G$ that connects its incident vertices with minimum edge
    weight at least $\alpha^{-1} \ww_e$.
    As such, the undirected connectivity of $e$ satisfies $\kk_e \ge
    t\alpha^{-1} = 4 \rho = \tkk_e$ for $\rho$ as defined in the invocation of
    \SDC with probability parameter $\frac{\delta}{2}$.
    Then $e$ is sampled with probability exactly $\frac{1}{4}$.
    When $e \in E(\vB)$, as $\frac{\ww_e}{\tkk_e} = 1$ and $\rho > 1$, the edge
    $e$ is also preserved.
    This guarantees that $\vB$ is a subgraph of $\vH$.
    The dicut approximation guarantees follows directly by \Cref{thm:baldicut}.
    As we remove an edge in $E_{\vG} \setminus E_{\vB}$ with probability
    $\frac{3}{4}$, Hoeffding's inequality then gives a failure probability of
    at most $\exp(-\frac{1}{8} k)$ for the inequality on the number of edges,
    where $k = |E_{\vG} - E_{\vB}|$.
    It suffices to have $k \ge 8 \ln \frac{2}{\delta}$ for the failure
    probability to be $\le \frac{\delta}{2}$.
    Combine with the failure probability of at most $\frac{\delta}{2}$ for
    dicut approximation, we get our desired probability bounds.
\end{proof}

\begin{lemma} \label{lemma:sdm}
    Given inputs satisfying the conditions of \SDM
    (Algorithm~\ref{alg:dicutmst}), with probability $1-\delta$ the algorithm 
    runs for $\min\Par{\ceil{\log \gamma}, \ceil{\log\Par{\frac{m}{8
    \ln(8m\delta^{-1})}}}}$
    number of iterations before termination and returns a reweighted directed
    subgraph $\vH$ of $\vG$ that is a $(\beta,\eps)$-balanced dicut sparsifier
    of $\vG$ with
    \[
        O\Par{ \sum_j |B_j| + \frac{m}{\gamma}+\log \frac{m}{\delta}}
    \]
    number of edges, and the number of edges in $\vG_i$ is at most 
    $O(m/\gamma + \log \frac{m}{\delta})$.
\end{lemma}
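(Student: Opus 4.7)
The proof has four pieces: the iteration count, the failure probability, the approximation guarantee via a telescoping chain, and the edge count. The overall plan is to union-bound the $\SDMOalgo$ failures, use its halving property to control the size of $\vG_i$, and invoke weak transitivity (Lemma \ref{lemma:dicut_trans}) to upgrade the chain of per-iteration $(\beta,\eps/(3I))$-dicut approximations into a single $(\beta,\eps)$-balanced dicut approximation.

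First I will control the iteration count and probability. Each invocation $\SDMOalgo(\vG_i,\beta,\eps/(3I),\delta/2^{i+1})$ fails with probability at most $\delta/2^{i+1}$, so summing over the at most $I=\ceil{\log\gamma}$ iterations gives a total failure probability of $\sum_{i\ge 0}\delta/2^{i+1}\le \delta$. Conditioned on all invocations succeeding, Lemma \ref{lemma:sdmo} together with $\vB_i\subseteq \vH_i$ yields
\[
    |E(\vG_{i+1})| = |E(\vH_i)|-|E(\vB_i)| \le \tfrac{1}{2}\Par{|E(\vG_i)|-|E(\vB_i)|} \le \tfrac{1}{2}|E(\vG_i)|,
\]
so $|E(\vG_i)|\le m/2^i$. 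Hence the while-loop guard $|E(G_i)|>16\ln(8m/\delta)$ fails after at most $\ceil{\log(m/(8\ln(8m/\delta)))}$ iterations, and the edge count bound $|E(\vG_i)|\le \max(m/\gamma,\,16\ln(8m/\delta))=O(m/\gamma+\log(m/\delta))$ follows immediately from the termination condition, giving also $|E(\vH)|=\sum_j |E(\vB_j)|+|E(\vG_i)|$ of the required form.

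For the approximation, I will telescope the per-iteration guarantees. By Lemma \ref{lemma:sdmo}, each $\vH_i=\vB_i\cup \vG_{i+1}$ is a $(\beta,\eps/(3I))$-dicut approximation of $\vG_i$. Applying the union property (Lemma \ref{lemma:dicut_union}) with the fixed exact summand $\vB_0\cup\ldots\cup\vB_{i-1}$ propagates this through the chain
\[
    \vG_0 \;\approx\; \vB_0\cup\vG_1 \;\approx\; \vB_0\cup\vB_1\cup\vG_2 \;\approx\; \ldots \;\approx\; \vB_0\cup\ldots\cup\vB_{i-1}\cup\vG_i = \vH,
\]
with each link carrying error $\eps/(3I)$. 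Since $\sum_j \eps/(3I)\le \eps/3\le 1/\sqrt{2}$, weak transitivity (Lemma \ref{lemma:dicut_trans}) yields a $(\beta,\eps)$-balanced dicut approximation with total error $3\cdot i\cdot \eps/(3I)\le \eps$, as claimed. The main obstacle is reconciling the various logarithmic thresholds: the while-loop precondition $|E(G_i)|>16\ln(8m/\delta)$ must be strong enough to guarantee the halving hypothesis $|E(\vG_i)\setminus E(\vB_i)|\ge 8\ln(n/(\delta/2^{i+1}))$ required by Lemma \ref{lemma:sdmo} at every iteration, even as the per-iteration failure budget $\delta/2^{i+1}$ shrinks geometrically. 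This is routine constant-chasing but is the most delicate bookkeeping in the proof and is what forces the loose threshold $16\ln(8m/\delta)$ rather than $8\ln(n/\delta)$.
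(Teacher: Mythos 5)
Your proof is correct and follows essentially the same route as the paper's: a union bound over the geometrically shrinking failure budgets $\delta/2^{i+1}$, the halving guarantee of Lemma~\ref{lemma:sdmo} to bound the iteration count and the size of $\vG_i$, the union property (Lemma~\ref{lemma:dicut_union}) to turn each per-iteration guarantee into an approximation of the cumulative graph $\bigl(\bigcup_{j<i}\vB_j\bigr)\cup\vG_i$, and weak transitivity (Lemma~\ref{lemma:dicut_trans}) with $\eps_i=\eps/(3I)$ to chain them. The bookkeeping issue you flag — that the loop guard on $|E(G_i)|$ must imply the hypothesis of Lemma~\ref{lemma:sdmo} on $|E(\vG_i)\setminus E(\vB_i)|$ — is treated at the same (informal) level in the paper's own proof, so no gap relative to it.
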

\begin{proof}
    The probability bound follows by a union bound over events of
    success probabilities $1-\frac{\delta}{2^{i+1}}$ each.
    We then condition on the success of all calls to
    \SDMO~(Algorithm~\ref{alg:dicutmst_once}).

    We show first that $\vH$ is a $(\beta,\eps)$-balance dicut approximation of $\vG$.
    Let $k$ be the last $i$ when the while loop on
    line~\ref{line:dicutmst_while}.
    By our choice of error parameter $\frac{\eps}{3I}$, each $\vH_i$ is a
    $(\beta,\frac{\eps}{3I})$-dicut sparsifier of $\vG_i$ by \Cref{lemma:sdmo}.
    \Cref{lemma:dicut_union} then gives that $(\bigcup_{j=1}^i \vB_i) \cup
    \vG_{i+1}$ is also a $(\beta,\frac{\eps}{3I})$-dicut sparsifier of
    $(\bigcup_{j=1}^{i-1} \vB_i) \cup \vG_i$.
    We can then apply \Cref{lemma:dicut_trans} across $k \le I$ levels to get
    the $(\beta,\eps)$ approximation.

    To show the second term in the total number of iterations, we consider the
    edge reductions per iteration. 
    For any $i \le \ceil{\log m}$, if $|E(G_i)| > 16 \ln \frac{8m}{\delta} \ge 16
    \ln \frac{2^{i+2}}{\delta}$, \Cref{lemma:sdmo} guarantees that
    \[
        |E_{\vG_{i+1}}| \le \frac{1}{2} (|E_{\vG_i}|-|E_{\vB_i}|) 
        \le \frac{1}{2} |E_{\vG_i}|.
    \]
    Then, there also can be at most $\ceil{\log\Par{\frac{m}{8
    \ln(8m\delta^{-1})}}}$ iterations.
    Using the two terms in the minimum, after the while loop on
    line~\ref{line:dicutmst_while} terminates, we are guaranteed to have
    $|E_{\vG_i}| \le \max(\frac{m}{\gamma}, 16 \ln \frac{8m}{\delta})$ that
    further gives us the last two terms in the sparsity bound.
\end{proof}

\begin{remark}
    It is possible to bypass the use of weak transitivity property
    (\Cref{lemma:dicut_trans}) and prove \SDM can generate the stronger
    notion of $(\beta,\eps)$-dicut sparsifier.
    To do so, one can adapt a resparsification analysis framework, similar
    to the one developed by Kyng, Pachocki, Peng, and Sachdeva \cite{KyngPPS17}
    for analyzing resparsification problems for undirected spectral
    sparsification.
    Since adapting such analyses is more involved for cut sparsification, we omit
    this result.
\end{remark}

For the de-amortized regime, we leverage a dynamic spanning forest
algorithm with good worst-case guarantees.
\begin{theorem}[\cite{KapronKM13,GibbKKT15}] \label{thm:kkm}
There is a fully dynamic algorithm for maintaining a spanning
forest $T$ of an undirected graph $G$ with worst-case update time $O(\log^4 n)$.
Every time an edge $e$ is inserted into $G$, the only potential change to $T$ is
the insertion of $e$.
Every time an edge $e$ is deleted from $G$, the only potential change to $T$ is
the removal of $e$ and possibly the addition of at most one other edge to $T$.
The algorithm is correct with high probability against an oblivious adversary.
\end{theorem}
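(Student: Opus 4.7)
The plan is to implement the cutset-sketch approach of Kapron--King--Mountjoy, with the Gibb--Kapron--King--Thorup refinements that de-amortize the update time. The backbone data structure is an Euler tour tree (ET-tree) that represents the current spanning forest $T$, supports \textbf{link}/\textbf{cut} in $O(\log n)$ time, and allows aggregation of labels stored at nodes (or incident to them) over any subtree.

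At each vertex $v$ I would maintain $O(\log^2 n)$ layers of XOR-based edge sketches. In layer $i$, each edge incident to $v$ is independently included with probability $2^{-i}$; the sketch stores the XOR of identifiers of the included edges together with a checksum that lets us test with high probability whether the XOR corresponds to exactly one edge identifier. ET-trees aggregate these vertex-level sketches over subtrees, and by design every edge \emph{internal} to a subtree $S$ contributes to two vertex sketches and thus cancels in the aggregate, while cut edges contribute exactly once. Hence a subtree sketch is a sketch over $E(S, V\setminus S)$.

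Given this machinery, the update procedures are essentially forced. On insertion of $e=(u,v)$, query whether $u,v$ lie in the same ET-tree; if not, \textbf{link} $e$ into $T$ and mark it a tree edge, otherwise mark it a non-tree edge. In either case update each endpoint's sketch in $O(\log^2 n)$ time, which propagates through the ET-trees in $O(\log^3 n)$ time. On deletion of a non-tree edge, only the sketches need updating, yielding the stated ``only change is removal of $e$'' guarantee. On deletion of a tree edge $e$, \textbf{cut} $T$ into $T_u,T_v$, then iterate over the layers from coarsest to finest: compute the subtree-aggregate sketch of the smaller side, and test the ``single-edge'' predicate; at the layer whose sampling rate is $\Theta(1/|E(T_u,T_v)|)$ the test succeeds with constant probability and returns a replacement edge $e'$, which we \textbf{link} back. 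Taking $O(\log n)$ independent copies of the sketch boosts success to high probability and controls failure over $\poly(n)$ updates via a union bound; the oblivious-adversary assumption is precisely what lets us reuse the random sampling across updates.

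The main obstacle, and the reason two papers are cited, is turning the naturally amortized $O(\log^4 n)$ bound into a worst-case bound: a single tree-edge deletion can force nontrivial rebuilding of sketches along a changed Euler tour. The Gibb--Kapron--King--Thorup trick is to (i) maintain each sketch as a sum of a small number of ``pieces'' that can be updated in $O(1)$ per piece and rebuilt in the background across $\Theta(\log n)$ subsequent updates, and (ii) ensure that every tree modification touches only $O(\log n)$ ET-tree paths so that the background work queue never exceeds $O(\log^4 n)$ at any time. Combining the ET-tree amortized analysis with this careful scheduling yields worst-case $O(\log^4 n)$ per update, matching the claim.
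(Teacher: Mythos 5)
The paper does not prove this statement: Theorem~\ref{thm:kkm} is imported verbatim from \cite{KapronKM13,GibbKKT15} as a black-box primitive, so there is no in-paper argument to compare against. Your sketch is a faithful outline of the cited construction (Euler tour trees plus XOR cutset sketches at geometrically decreasing sampling rates, with the Gibb--Kapron--King--Thorup scheduling to de-amortize to worst-case $O(\log^4 n)$), and it correctly identifies why the recourse bounds and the oblivious-adversary restriction hold. One small imprecision worth fixing: you write that in layer $i$ ``each edge incident to $v$ is independently included with probability $2^{-i}$,'' but the inclusion decision must be a function of the edge identifier alone (shared by both endpoints), not sampled independently per vertex--edge incidence; otherwise the XOR cancellation of edges internal to a subtree, which you rely on in the very next sentence, would fail.
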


For a weighted undirected graph $G = (V,E,\ww)$, with edge weights, w.l.o.g., in
$[1,W]$ we define for each integer $i$ in $[0,\floor{\log W}]$ a subset $E_i
\defeq \{e \in E | 2^i \le \ww(e) < 2^{i+1} \}$.
Suppose $F_i$ is a spanning forest of $G[E_i]$, the union of these forests form
a 2-MSF of $G$.
\begin{lemma}[Lemma 5.11, \cite{AbrahamDKKP16}]
    $T = \bigcup_{i=0}^{\floor{\log W}} F_i$ is a 2-MSF of $G$.
\end{lemma}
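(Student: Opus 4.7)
The plan is to verify the defining property of a $2$-MSF directly from the construction of $T$. Fix an arbitrary edge $e = (u,v)$ of $G$, and let $i$ be the unique index with $2^i \le \ww(e) < 2^{i+1}$, so that by definition $e \in E_i$. I need to exhibit a $u$-$v$ path $\pi$ in $T$ all of whose edges $f$ satisfy $\ww(e) \le 2\ww(f)$.

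First I would observe that, since $e \in E_i$, the endpoints $u$ and $v$ both lie in the subgraph $G[E_i]$ and are already connected there by the edge $e$ itself, hence they must belong to the same connected component of $G[E_i]$. Because $F_i$ is a spanning forest of $G[E_i]$, the vertices $u$ and $v$ remain connected in $F_i$, so there exists a (unique, simple) path $\pi \subseteq F_i \subseteq T$ between them. This is the candidate path.

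Next I would use the weight bucketing to lower-bound $\ww(f)$ for each edge $f$ on $\pi$. Every such $f$ lies in $F_i \subseteq E_i$, so $\ww(f) \ge 2^i$. Combining with the upper bound $\ww(e) < 2^{i+1}$ yields $\ww(e) < 2^{i+1} \le 2\ww(f)$, which is precisely the $2$-MSF inequality. Since $e$ was arbitrary, $T$ satisfies the defining property with $\alpha = 2$.

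I do not anticipate any real obstacle here; the only subtlety is to identify the correct bucket index $i$ for the edge $e$ and to notice that it already suffices to use the single forest $F_i$, rather than a union across different weight classes, because $u$ and $v$ are connected within $G[E_i]$ via $e$ and hence within its spanning forest $F_i$.
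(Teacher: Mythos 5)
Your proof is correct and is exactly the standard argument (the paper itself cites this lemma from \cite{AbrahamDKKP16} without reproving it): the endpoints of $e$ are connected within its own weight bucket $E_i$, hence within the spanning forest $F_i$, and the bucketing gives $\ww(e) < 2^{i+1} \le 2\ww(f)$ for every edge $f$ on that path. No gaps.
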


Abraham et al.~\cite{AbrahamDKKP16} presented an algorithm that maintains a
$t$-bundle $\alpha$-MSF $B = \bigcup_{1 \le i \le t} T_i$ by maintaining an
$\alpha$-MSF for each $G \setminus \bigcup_{j=1}^{i-1} T_j$.
The guarantees in \Cref{cor:tbund_worst} follow by noticing that
one update at level $i$ only incurs at most another update at level $i+1$.

\begin{corollary} \label{cor:tbund_worst}
    There is a fully dynamic algorithm for maintaining a
    $t$-bundle 2-MSF $B$ of size $O(tn\log W)$ with worst-case update time
    $O(t\log^4 n)$ against an oblivious adversary.
    Every update in $G$ has a recourse of at most 1 on the graph $G \setminus
    B$.
\end{corollary}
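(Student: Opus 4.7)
\textbf{Proof Proposal for \Cref{cor:tbund_worst}.} The plan is to build the $t$-bundle 2-MSF as $t$ stacked instances of the weight-bucketed 2-MSF construction described just above the corollary, and to propagate updates through the stack one level at a time using \Cref{thm:kkm}. Concretely, for $1 \le i \le t$, I maintain the graph $G_i \defeq G \setminus \bigcup_{j < i} T_j$ and a 2-MSF $T_i$ of $G_i$; the bundle is $B = \bigcup_i T_i$, and I also keep $G_{t+1} = G \setminus B$. To maintain each 2-MSF $T_i$, I bucket the edges of $G_i$ by weight (ranges $[2^k, 2^{k+1})$ for $0 \le k \le \floor{\log W}$) and, for each bucket, run one instance of the KKM spanning-forest algorithm from \Cref{thm:kkm}. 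By Lemma 5.11 of \cite{AbrahamDKKP16} the resulting union is a 2-MSF of $G_i$, and each $T_i$ has at most $O(n \log W)$ edges, giving total bundle size $O(tn \log W)$.

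Next I would argue that each update to $G$ triggers at most one edge update per layer. An insertion of $e$ into $G$ adds $e$ to $G_1$, and by \Cref{thm:kkm} the only change to $T_1$ is the possible insertion of $e$ itself; hence the induced change to $G_2 = G \setminus T_1$ is exactly one edge insertion (either $e$, if it was not absorbed into $T_1$, or nothing). A deletion of $e$ from $G$ removes $e$ from $G_1$; \Cref{thm:kkm} changes $T_1$ by deleting $e$ (if $e \in T_1$) and possibly adding one replacement edge $e'$, so the net change to $G_2$ is the removal of at most one edge (either $e$ itself when $e \notin T_1$, or the promoted $e'$ when $e \in T_1$). Cascading this argument, each layer receives at most one edge update per update to $G$, and in particular $G \setminus B = G_{t+1}$ incurs recourse at most $1$, as claimed.

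Summing costs across layers then yields the worst-case update-time bound: each of the $t$ layers runs one KKM operation per update at cost $O(\log^4 n)$ by \Cref{thm:kkm}, for a total of $O(t \log^4 n)$. Correctness against an oblivious adversary is inherited directly from KKM, since the only randomness used is inside the per-layer spanning-forest data structures, and the adversarial sequence seen by layer $i+1$ is a deterministic function of the sequence seen by layer $i$ together with layer $i$'s output—which is fine under the oblivious model but would need care if we tried to extend to adaptive adversaries.

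The main obstacle I anticipate is the bookkeeping around the cascade: one must carefully verify that the ``at most one change per layer'' property holds in every case (insertion vs.\ deletion, tree edge vs.\ non-tree edge, and across weight buckets within the 2-MSF construction), because the worst-case update-time bound collapses if any case produces two simultaneous changes. The rest of the proof—size counting, invoking \Cref{thm:kkm} as a black box, and assembling the bundle—is routine.
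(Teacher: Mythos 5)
Your proposal is correct and matches the paper's approach: the paper likewise builds the bundle as $t$ stacked layers, each a weight-bucketed union of KKM spanning forests, and justifies the worst-case bounds by the same observation that one update at level $i$ induces at most one update at level $i+1$. Your case analysis of the cascade (insertion vs.\ deletion, tree vs.\ non-tree edge, single affected weight bucket) is exactly the verification the paper leaves implicit.
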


We now show how to turn the static half-sparsification algorithm into a fully
dynamic algorithm that supports $\Theta(n^2)$ number of edge updates.
By keeping 2 instances of this dynamic algorithm on two disjoint subgraphs, one
can process all insertions on one copy until the other is completely empty.
Once this happens, the algorithm then swap to inserting only into the other
copy.
The correctness is ensured by the union property (\Cref{lemma:dicut_union}).
This reduction allows us to make the algorithm support arbitrary length of
updates without changing the asymptotic update time.
Below, we show a fully dynamic version of \SDMO
(Algorithm~\ref{alg:dicutmst_once}) with worst-case update time guarantees.

\begin{lemma} \label{lemma:dynsdmo}
    Suppose $\vG=(V, E, \ww)$ is a weighted directed graph with weight ratio
    $\frac{\max_{e \in E} \ww_e}{\min_{e \in E} \ww_e} \le W$.
    There is an algorithm that given $\eps,\delta \in (0,1)$, $\beta \ge 1$, and an
    oblivious online sequence of at most $4n^2$ edge insertions and deletions,
    preprocesses in time $O(\eps^{-2} \beta M \log^4 n \log \frac{n}{\delta})$, where $M$
    is the initial number of edges, maintains
    explicitly a reweighted directed subgraph $\vH$ that is a
    $(\beta,\eps)$-dicut approximation of $\vG$ and a subgraph $\vB$ of $\vH$
    with number of edges \[
        m_{\vH} \le \frac{1}{2}m + \frac{1}{2} m_{\vB}, \quad
        m_{\vB} = O(\eps^{-2}\beta n \log W \log \frac{n}{\delta}) ,
    \]
    and has worst-case recourse of 1 and worst-case update time
    $O(\eps^{-2} \beta \log^4 n \log \frac{n}{\delta})$ per update.
    Our guarantees hold with probability at least $1-\delta$ over the entire
    sequence of updates.
\end{lemma}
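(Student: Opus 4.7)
The plan is to dynamize the static algorithm \SDMO (Algorithm~\ref{alg:dicutmst_once}) essentially line-by-line, using Corollary~\ref{cor:tbund_worst} to maintain the $t$-bundle $2$-MSF under updates, and oblivious pre-drawn coin flips to realize the independent sampling in \SDC. Set $\alpha = 2$ and $t = 4\cbal \cdot \eps^{-2}\alpha(\beta+1)\log(16n/\delta) = O(\eps^{-2}\beta \log(n/\delta))$, and use Corollary~\ref{cor:tbund_worst} on $G = \und(\vG)$ to maintain a $t$-bundle $2$-MSF $B$ in worst-case update time $O(t\log^4 n)$, with the guarantee that $E(G)\setminus E(B)$ has recourse $1$ per update. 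Let $\vB$ be the directed subgraph of $\vG$ corresponding to $B$ (easy to maintain alongside $B$, since each undirected tree edge has a unique directed preimage).

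For the sampling step, I would pre-associate to every potential directed edge $e \in V \times V$ an i.i.d.\ uniform coin flip $u_e \in [0,1]$, drawn lazily on first reference (so no $\Theta(n^2)$ preprocessing overhead). At any moment the sparsifier is defined as $\vH = \vB \cup \{e \in E(\vG)\setminus E(\vB) : u_e \le \rho \ww_e/\tkk_e\}$, reweighted so that surviving non-bundle edges are scaled by the reciprocal of their sampling probability, exactly matching what \SDC would output if run statically on the current $\vG$ with the current $\tkk$. Since the adversary is oblivious, the collection $\{u_e\}$ can be fixed in advance of all updates, which is the key point that lets us union-bound Lemma~\ref{lemma:sdmo} over time.

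On an edge update to $\vG$, first feed the update to the dynamic $t$-bundle (time $O(t\log^4 n)$); by the corollary, the set $E(\vG)\setminus E(\vB)$ changes by at most one directed edge. That single swap produces at most $O(1)$ changes to $\vH$: either one new non-bundle edge must have its sampling decision consulted (using its pre-drawn $u_e$), one deleted non-bundle edge must be removed if it was present, or one edge swaps between $\vB$ and $\vG\setminus\vB$ and its inclusion status in $\vH$ may flip once. Each such change is $O(1)$ work beyond the bundle update, so worst-case update time is $O(t \log^4 n) = O(\eps^{-2}\beta \log^4 n \log(n/\delta))$, and the recourse on $\vH$ matches the recourse on $E(\vG)\setminus E(\vB)$, which is $1$. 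Preprocessing is dominated by initializing the bundle on the $M$ initial edges and performing the one-shot sampling, giving $O(\eps^{-2}\beta M \log^4 n \log(n/\delta))$.

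For correctness, at every time step the tuple $(\vH,\vB)$ is by construction exactly the output of a static invocation of $\SDMO(\vG_{\text{curr}}, \beta, \eps, \delta/(8n^2))$ on the current graph with the current bundle, where the independent Bernoullis driving \SDC are realized through the pre-drawn $u_e$'s. Since these $u_e$'s are fixed before the adversary selects its sequence, a union bound of Lemma~\ref{lemma:sdmo} over the at most $4n^2$ time steps (with per-step failure probability $\delta/(8n^2)$) yields that simultaneously at every time step $\vH$ is a $(\beta,\eps)$-dicut approximation of $\vG$ and $|E(\vH)| \le \tfrac12 |E(\vG)| + \tfrac12 |E(\vB)|$; the bundle size bound $m_{\vB} = O(t n\log W) = O(\eps^{-2}\beta n \log W \log(n/\delta))$ is immediate from Corollary~\ref{cor:tbund_worst}. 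The only subtle point, and the one I expect to require the most care, is the high-probability recourse-$1$ claim: a single input update can in principle trigger simultaneously (i) a new edge entering $\vG\setminus\vB$ and (ii) a swap between $\vB$ and $\vG\setminus\vB$, which may appear to cost two changes to $\vH$. The resolution is that the corollary bounds the \emph{net} symmetric difference of $E(\vG)\setminus E(\vB)$ at $1$ per update, so at most one edge's $u_e$-comparison is newly decided and the bundle-side change is absorbed into $\vB \subseteq \vH$ without extra output churn; formalizing this bookkeeping cleanly will be the main work of the proof.
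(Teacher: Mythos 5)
Your proposal matches the paper's proof essentially step for step: maintain the $t$-bundle $2$-MSF via \Cref{cor:tbund_worst}, use the obliviousness of the adversary to sample each edge entering $\vG \setminus \vB$ independently with its \SDC{} probability (your lazily pre-drawn coins are equivalent to the paper's fresh draws at decision time), and union-bound \Cref{lemma:sdmo} over the $\le 4n^2$ updates with per-step failure probability $\delta/\Theta(n^2)$. The recourse/bookkeeping subtlety you flag at the end is real but is handled in the paper exactly as loosely as you anticipate, so no substantive gap separates the two arguments.
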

\begin{proof}
    Our algorithm maintains a $t$-bundle 2-MSF $B$ of $G =
    \und(\vG)$ for $t = O(\eps^{-2}\beta \log \frac{n}{\delta'})$ using the
    algorithm in \Cref{cor:tbund_worst}, initialized by processing all initial
    edges at the preprocessing phase of the algorithm.
    Let $\vB$ be the corresponding dynamic digraph of $B$.
    An initial sparsifier $\vH$ is computing using the same sampling procedures
    in \SDMO (Algorithm~\ref{alg:dicutmst_once}).
    For any type of edge update, we first propagate this update into our
    dynamic $t$-bundle 2-MSF instance $B$, which incurs a recourse of at most
    1 on $\vG \setminus \vB$.
    Note that the probability of any other edge being sampled is not affected by
    this change.
    Suppose first that it incurs an edge insertion into $\vG \setminus \vB$.
    Our oblivious adversary setting allows us to just perform a sampling on
    this edge with $\frac{1}{4}$ probability (see proof of \Cref{lemma:sdmo}).
    As for an edge deletion on $\vG \setminus \vB$, we can simply remove it from
    our sparsifier if it is currently present $\vH$.
    The approximation guarantee at each time is then given by \Cref{lemma:sdmo}.
    By choosing $\delta'$ in to be $\frac{\delta}{5n^2}$ in our choice of $t$,
    a union bound over preprocessing and $4n^2$ updates, guarantees
    $(\beta,\eps)$-dicut approximation over the entire sequence of
    updates.

    By our dynamic algorithm, it is immediate that our worst-case recourse is 1
    and our worst-case update time is dominated by the worst-case update time of
    $B$, which is $O(\eps^{-2}\beta \log^4 n \log \frac{n}{\delta})$ given our
    choice of $t$.
    Our preprocessing time is also dominated by the initialization time of $B$,
    which runs in $m \cdot O(t \log^4 n) = O(\eps^{-2} \beta m \log^4 n \log
    \frac{n}{\delta})$.
    Our sparsity guarantee follows by \Cref{lemma:sdmo} and the size of
    $B$ given by \Cref{cor:tbund_worst}.
\end{proof}

Finally, we complete our proof of \Cref{thm:dyndicut_worst_for} by showing a
fully dynamic version of \SDM (Algorithm~\ref{alg:dicutmst}) with worst-case
update time guarantees over $\Theta(n^2)$ updates.
\begin{lemma} \label{lemma:dynsdm}
    Suppose $\vG=(V, E, \ww)$ is a simple weighted directed graph with weight
    ratio $\frac{\max_{e \in E} \ww_e}{\min_{e \in E} \ww_e} \le W$.
    There is an algorithm that given $\eps,\delta \in (0,1)$, $\gamma,\beta \ge
    1$, and an oblivious online sequence of at most $4n^2$ edge insertions and
    deletions, preprocesses in time $O(\eps^{-2} \beta m \log^5 n \log^3
    \gamma)$, maintains explicitly a reweighted directed subgraph $\vH$ that is
    a $(\beta,\eps)$-balanced dicut approximation of $\vG$ with number of edges 
    \[
        m_{\vH} = O(\eps^{-2} \beta n \log n \log W \log^3 \gamma +
        \frac{m}{\gamma}),
    \]
    and has worst-case recourse of $O(\log \gamma)$ and worst-case update time
    $O(\eps^{-2} \beta \log^5 n \log^3 \gamma)$ per update.
    Our guarantees hold with high probability over the entire sequence of
    updates.
\end{lemma}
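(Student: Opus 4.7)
The plan is to pipeline $I \defeq \ceil{\log \gamma}$ independent dynamic instances $\gA_0, \ldots, \gA_{I-1}$ of the single-level half-sparsifier from \Cref{lemma:dynsdmo}, mirroring the layered structure of static \SDM (Algorithm~\ref{alg:dicutmst}). Each $\gA_i$ is initialized with error parameter $\eps/(3I)$, failure probability $\delta/I$, and the input $\beta$; we feed $\vG$ into $\gA_0$, and for $i \ge 1$ we route $\vG_i \defeq \vH_{i-1} \setminus \vB_{i-1}$ (the current output of $\gA_{i-1}$) into $\gA_i$. The maintained sparsifier is $\vH \defeq \bigcup_{j=0}^{I-1} \vB_j \cup \vG_I$, exactly matching the static construction.

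For each outer update, the edit is cascaded through the pipeline: since \Cref{lemma:dynsdmo} guarantees worst-case recourse $1$ per level, $\vG_{i+1}$ changes by at most one edge per cascade step, giving total worst-case recourse $O(I) = O(\log \gamma)$. The per-update time sums to $I \cdot O((\eps/I)^{-2} \beta \log^4 n \log(nI/\delta)) = O(\eps^{-2} \beta \log^5 n \log^3 \gamma)$, and preprocessing sums geometrically via the invariant $|\vG_{i+1}| \le |\vG_i|/2$, which is a direct consequence of the sparsity bound $m_{\vH_i} \le |\vG_i|/2 + m_{\vB_i}/2$ in \Cref{lemma:dynsdmo}, yielding total preprocessing $O(\eps^{-2} \beta m \log^5 n \log^3 \gamma)$. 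For correctness, the union property (\Cref{lemma:dicut_union}) gives that at each time step $(\bigcup_{j \le i}\vB_j) \cup \vG_{i+1}$ is a $(\beta, \eps/(3I))$-dicut approximation of the level-$i$ graph, conditional on $\gA_i$'s success; composing across all $I$ levels via the weak transitivity of \Cref{lemma:dicut_trans} --- which applies since $\sum_{i < I} \eps/(3I) = \eps/3 \le 1/\sqrt{2}$ --- yields that $\vH$ is a $(\beta,\eps)$-balanced dicut approximation of $\vG$. Sparsity follows by summing $|\vB_j| = O((\eps/I)^{-2} \beta n \log W \log(nI/\delta))$ across the $I$ levels and adding the residual $|\vG_I| \le m/2^I = O(m/\gamma)$, matching the claimed bound.

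The main obstacle is respecting the $4n^2$ update budget imposed by each instance of \Cref{lemma:dynsdmo}: because each outer update triggers at most one edit per downstream level, level $i$ also sees at most $4n^2$ updates over a $4n^2$-update outer sequence, so a single instance per level suffices and we do not need to invoke the two-copy swapping trick described after \Cref{lemma:dynsdmo}. A final union bound over the $I$ per-level failure events, each of probability at most $\delta/I$, bounds the overall failure probability by $\delta$, which we set to $1/\poly(n)$ to obtain the stated high-probability guarantee.
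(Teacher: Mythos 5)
Your proposal is correct and follows essentially the same route as the paper's proof: pipeline $O(\log\gamma)$ dynamic instances of the single-level half-sparsifier from \Cref{lemma:dynsdmo} with error $\eps/(3I)$ each, cascade each update down the levels using the recourse-$1$ guarantee, and combine via the union property and the weak transitivity of \Cref{lemma:dicut_trans}. The only (immaterial) deviations are bookkeeping choices — the paper caps $I$ at $\ceil{\log\min(\gamma, n^2/(C\log n))}$ and allocates failure probability $\delta/2^{i+1}$ per level rather than $\delta/I$ — and your explicit check that each level stays within the $4n^2$-update budget is a point the paper leaves implicit.
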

\begin{proof}
    We maintain $I = \ceil{\log \min(\gamma, \frac{n^2}{C\log n})}$ instances of
    the dynamic version of \SDMO (Algorithm~\ref{alg:dicutmst_once}) for some
    constant $C \ge 16$ that depends on our high probability guarantees, as in
    \SDM (Algorithm~\ref{alg:dicutmst}).
    Since $\frac{\delta}{2^{i+1}}$ is at least $\frac{\delta}{2n^2}$, we simply
    apply this choice of probability over all iterations with $\delta =
    \frac{1}{n^C}$.
    Note that we do not check of the edge number condition in the while loop of
    \SDM, as the number of edges in the dynamic digraph can change
    drastically overtime.

    For each $i \in [I]$, we maintain a dynamic $(\beta,\frac{\eps}{3I})$-dicut
    sparsifier $\vH_i$ of $\vG_i = \vH_{i-1} \setminus \vB_{i-1}$ and a dynamic
    $t$-bundle 2-MSF $\vB_i \subseteq \vH_i$ as in \Cref{lemma:dynsdmo}.
    Our final sparsifier is the digraph $\vH = \bigcup_{i=[I]} \vB_i \cup
    \vG_k$.
    The overall approximation is guaranteed by the weak transitivity
    property \Cref{lemma:dicut_trans} using our choice of $\eps_i =
    \frac{\eps}{3I}$.
    For the sparsity guarantees, we get from an inductive argument that
    $m_{\vG_k} \le \frac{m}{2^I} = \max(\frac{m}{\gamma}, C\log n)$ using the
    sparsity guarantees from \Cref{lemma:dynsdmo}.
    As we further have each $\vB_i$ has $O(\eps^{-2}\beta n \log n \log W \log^2
    \gamma)$ edges, the total number of edges in $\vH$ is $O(\eps^{-2} \beta n \log n
    \log W \log^3 \gamma + \frac{m}{\gamma})$.
    The high probability guarantee follows by a union bound over preprocessing
    and the entire sequence of updates.

    For each edge update, our algorithm propagate this change to $\vH_1$ first.
    Subsequently, all updates in $\vH_i$ is passed down to $\vH_{i+1}$.
    Notice that the recourse of each $\vH_i$ is only 1 by \Cref{lemma:dynsdm}.
    Thus, we only need to process 1 update for each $\vH_i$. 
    This results in an overall worst-case recourse of $O(\log \gamma)$ and
    worst-case update time of $O(\eps^{-2} \beta \log^5 n \log^3 \gamma)$ per
    update.
    At preprocessing phase, we simply execute the entire \SDM and initialize all
    instances of our dynamic half-sparsifier $\vH_i$ with $\vB_i$.
    This results in a preprocessing time of $O(\eps^{-2} \beta m \log^5 n \log^3
    \gamma)$.
\end{proof}

\section{Directed spectral sparsification against an adaptive adversary
via partial symmetrization}
\label{sec:adaptive}

In this section, we consider the problem of fully dynamic degree balance
preserving directed spectral sparsification against an adaptive adversary.
This is a particularly difficult question, as the state-of-the-art adaptive
\emph{undirect} spectral sparsifier requires $O(\log n)$ multiplicative
approximation \cite{BernsteinvdBPGNSSS22}, while any current notions of directed
spectral sparsification do not permit approximation factor $\eps >1$.
As was observed by Kyng, Meierhans, and Probst Gutenberg~\cite{KyngSPG22}, a
weaker notion of approximation, i.e., approximate pseudoinverse
(\Cref{def:app_pinv}), suffices for solving directed Eulerian Laplacian systems.

The key observation in \cite{KyngSPG22} is that a $\beta$-\emph{partial
symmetrization} $\vG^{(\beta)} \defeq \beta \cdot G \cup \vG$ of a Eulerian
directed graph $\vG$ is a good approximate pseudoinverse of $\vG$.

\begin{definition}[Generalized partial symmetrization, \cite{KyngSPG22}]
    For a directed graph $\vG$ with $G \defeq \und(\vG)$ and $\beta \ge 0$, we
    let $\vG^{(\beta)} \defeq \beta \cdot G \cup \vG$ be the
    $\beta$-partial symmetrization of $\vG$.
\end{definition}
We remark that $\vLL_{\vG^{(\beta)}} = \beta\LL_G + \vLL_{\vG}$ and
$\LL_{G^{(\beta)}} = (2\beta+1) \LL_G$ where $G^{(\beta)} =
\und(\vG^{(\beta)})$.
Subsequently, one can sparsify the directed portion of $\vG^{(\beta)}$ much more
aggressively under the notion degree balance preserving spectral approximation,
since
\[
    \normop{\LL_{G^{(\beta)}}^{\frac \dag 2} (\vLL_{\vG} - \vLL_{\vH})
    \LL_{G^{(\beta)}}^{\frac \dag 2}}
    =
    \frac{1}{2\beta+1} \normop{\LL_G^{\frac \dag 2} (\vLL_{\vG} - \vLL_{\vH})
    \LL_G^{\frac \dag 2}}.
\]
Moreover, the graph $\beta G \cup \vH$ is also a good approximate pseudoinverse
of $\vG^{(\beta)}$.
One can further reduce the size of the graph by performing an undirected spectral
sparsification of the undirected portion $\beta G$.
These layers of sparsification allows them to build sparse preconditioner chain
which can be used in solving directed Eulerian Laplacian systems by applying an
iterative solver, such as Preconditioned Richardson (see
\Cref{lemma:preconsolver}).
We refer readers to \cite{KyngSPG22} and \cite{CohenKPPRSV17} for further
expositions of this idea.

\begin{lemma}[Preconditioned Iterations, Lemma 4.4 in \cite{CohenKPPRSV17}]
    \label{lemma:preconsolver}
    If $\ZZ$ is a $\lambda$-approximate pseudoinverse of $\MM$ with respect to
    $\UU$ for $\lambda \in (0,1)$, $\bb \perp \ker(\MM^\top)$ and $N \ge 0$, then
    $\xx_N = \PRalgo(\MM,\ZZ,\bb,N)$ computes $\xx_N = \ZZ_N \bb$ for some
    matrix $\ZZ_N$ only depending on $\ZZ,\MM,$ and $N$, such that $\ZZ_N$ is a
    $\lambda^N$-approximate pseudoinverse of $\MM$ with respect to $\UU$.
\end{lemma}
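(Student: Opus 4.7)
The plan is to analyze the standard Preconditioned Richardson iteration $\xx_{k+1} = \xx_k + \ZZ(\bb - \MM\xx_k) = (\II - \ZZ\MM)\xx_k + \ZZ\bb$ started from $\xx_0 = \vzero$. By induction one immediately obtains $\xx_k = \ZZ_k \bb$ with $\ZZ_0 = \vzero$ and $\ZZ_{k+1} = (\II - \ZZ\MM)\ZZ_k + \ZZ$, which exhibits $\ZZ_N$ as a polynomial in $\ZZ$ and $\MM$ depending only on $\ZZ$, $\MM$, and $N$, establishing the first half of the claim.

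Next I would exploit the common kernel condition to obtain the algebraic identities that drive the proof. Since $\ker(\MM) = \ker(\MM^\top) = \ker(\ZZ) = \ker(\ZZ^\top)$, taking orthogonal complements gives $\im(\MM) = \im(\MM^\top) = \im(\ZZ) = \im(\ZZ^\top) = \ker(\MM)^\perp$. From this I get the commutation identities $\PP_{\im(\MM)}\MM = \MM = \MM\PP_{\im(\MM)}$ and $\PP_{\im(\MM)}\ZZ = \ZZ = \ZZ\PP_{\im(\MM)}$, and (inductively from the recursion) $\im(\ZZ_k)\subseteq \im(\MM)$, so that also $\PP_{\im(\MM)}\ZZ_k\MM = \ZZ_k\MM$.

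The heart of the argument is the key identity $\PP_{\im(\MM)} - \ZZ_N\MM = \EE^N$, where $\EE \defeq \PP_{\im(\MM)} - \ZZ\MM$, which I would prove by induction. The inductive step is a short manipulation:
\[
\PP_{\im(\MM)} - \ZZ_{k+1}\MM = \PP_{\im(\MM)} - (\II - \ZZ\MM)\ZZ_k\MM - \ZZ\MM
= (\PP_{\im(\MM)} - \ZZ\MM)(\PP_{\im(\MM)} - \ZZ_k\MM) = \EE\cdot \EE^k,
\]
where equality uses the commutation identities from the previous paragraph together with $\PP_{\im(\MM)}\ZZ_k\MM = \ZZ_k\MM$. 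Since $\ker(\UU)\subseteq \ker(\MM)$, both $\PP_{\im(\MM)}$ and $\ZZ\MM$ vanish on $\ker(\UU)$, so $\|\cdot\|_{\UU\to\UU}$ is well-defined on $\EE$ and is submultiplicative; this immediately yields $\|\PP_{\im(\MM)} - \ZZ_N\MM\|_{\UU\to \UU} \le \|\EE\|_{\UU\to\UU}^N \le \lambda^N$.

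It remains to verify the kernel conditions in \Cref{def:app_pinv} for $\ZZ_N$. The inclusions $\ker(\ZZ_N)\supseteq \ker(\ZZ) = \ker(\MM^\top)$ and $\ker(\ZZ_N^\top)\supseteq \ker(\MM)$ are immediate from the recursion and from $\im(\ZZ_N)\subseteq \im(\MM)$. For the reverse inclusions, I would observe that $\EE$ preserves $\im(\MM)$ and that its restriction $\EE|_{\im(\MM)}$ has $\UU$-seminorm at most $\lambda < 1$, which (using that $\|\cdot\|_\UU$ is a genuine norm on $\im(\MM)\subseteq \ker(\UU)^\perp$) forces $\II - \EE^N$ to be invertible on $\im(\MM)$; combined with the key identity this forces $\ZZ_N$ to be injective on $\im(\MM)$, giving $\ker(\ZZ_N) = \ker(\MM^\top)$, and the adjoint version gives the other equality. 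The main obstacle I anticipate is being careful about restrictions and projections when invoking submultiplicativity on a seminorm, but the hypothesis $\ker(\UU)\subseteq \ker(\MM)$ aligns all the relevant subspaces cleanly and makes the algebra go through.
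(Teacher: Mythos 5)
The paper does not actually prove this lemma---it is quoted directly from \cite{CohenKPPRSV17} (Lemma 4.4 there)---so there is no in-paper argument to compare against; your proof is correct and is essentially the standard one from the cited source. The identity $\PP_{\im(\MM)} - \ZZ_N\MM = (\PP_{\im(\MM)} - \ZZ\MM)^N$ obtained by induction on the Richardson recursion, combined with submultiplicativity of $\norm{\cdot}_{\UU\to\UU}$ (which you correctly justify by noting that all relevant operators map into $\im(\MM)$, where $\norm{\cdot}_{\UU}$ is a genuine norm since $\ker(\UU)\subseteq\ker(\MM)$), is exactly how this is established, and your verification of the kernel conditions via invertibility of $\II-\EE^N$ on $\im(\MM)$ is sound; the only degenerate point is $N=0$, where $\ZZ_0=\vzero$ cannot satisfy $\ker(\ZZ_0)=\ker(\MM)$, but that is an artifact of the statement permitting $N\ge 0$ rather than a flaw in your argument.
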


Following \cite{KyngSPG22}, we consider for $\vG_0 = \vG$, three graphs
$\vG_1,\vG_2,\vG_3$, such that each $\vG_{i+1}$ spectrally approximates $\vG_i$.
Specifically, 
\begin{enumerate}
    \item $\vG_1 = \vG_0^{(\beta)}$ is the $\beta$-partial symmetrization of
        $\vG_0$,
    \item $\vG_2 = \beta \cdot G \cup \vH_2$ is a sparsification of the directed
        portion $\vG$ of $\vG_1$,
    \item $\vG_3 = \beta \cdot H_3 \cup \vH_2$, the final sparsifier, further
        sparsifies the undirected portion of $\beta \cdot G$ of $\vG_2$.
\end{enumerate}

We now recall the definition of a \emph{sparsification quadruple}.
\begin{definition}[Sparsification quadruple, \cite{KyngSPG22}]
    \label{def:quad}
    We call strongly connected Eulerian graphs $\vG_0,\vG_1,\vG_2,\vG_3$ a
    $(\gamma,\beta,\alpha)$-quadruple if
    \begin{enumerate}
        \item $\vLL_{\vG_i}^\dag$ is a $1-\frac{1}{\gamma}$-approximate
            pseudoinverse of $\vLL_{\vG_{i-1}}$ with respect to $\LL_{G_i}$ for
            $i=1,2,3$,
        \item $\frac{1}{\gamma} \LL_{G_{i-1}} \pleq \LL_{G_i} \pleq \gamma
            \LL_{G_{i-1}}$ for $i=1,2,3$,
        \item $\ddi_{\vG_0} = \ddo_{\vG_0} = (1+\beta) \ddi_{\vG_i} =
            (1+\beta)\ddo_{\vG_i}$ for $i=1,2$, and $\ddi_{\vG_0} =
            (1+\frac{\beta}{\alpha}) \ddi_{\vG_3} = (1+\frac{\beta}{\alpha})
            \ddo_{\vG_3}$,
        \item $\vG_3$ is nearly-linear in $n$ and $m(\vG_i) = O(m)$ for $i=1,2$.
    \end{enumerate}
\end{definition}

Our adaptive degree (balance) preserving directed spectral sparsification
algorithm maintains explicitly a sparsification quadruple.
Similar to \Cref{thm:dynspecstar_for}, our sparsifier contains a few extra
vertices and the approximation guarantees are given for the Schur complements
with those vertices removed.
Having such extra vertices does not affect preconditioning, as we observed in
\Cref{lemma:schurprecon}.
\begin{theorem}[Adaptive algorithm for sparsification quadruple, formal version
    of \Cref{thm:dyn_quad_adp_inf}]
    \label{thm:dyn_quad_adp}
    There is an algorithm that given a weighted dynamic directed graph
    $\vG=(V,E,\ww)$ undergoing adaptive edge insertions and deletions, maintains
    a set of 3 directed graphs $\vG_1 = \vG^{(\beta)},\vG_2',\vG_3'$, where $V(\vG_2') = V \cup X$
    and $V(\vG_3') = V \cup X \cup Y$.
    Let $\vG_2 \defeq \Sc(\vG_2',V)$ and $\vG_3 \defeq \Sc(\vG_3',V)$
    Then, for $\gamma = O(\log^{24} n), \beta = O(\log^4 n), \alpha^{-1} =
    O(\log^{16} n)$, it satisfies with high probability
    \begin{enumerate}
        \item whenever $\vG$ is Eulerian, 
            $\vLL_{\vG_i}^\dag$ is a $1-\frac{1}{\gamma}$-approximate
            pseudoinverse of $\vLL_{\vG_{i-1}}$ with respect to $\LL_{G_i}$ for
            $i=1,2,3$,
        \item $\frac{1}{\gamma} \LL_{G_{i-1}} \pleq \LL_{G_i} \pleq \gamma
            \LL_{G_{i-1}}$ for $i=1,2,3$.
        \item $\ddi_{\vG_0} = (1+\beta) \ddi_{\vG_i}$, 
            $\ddo_{\vG_0} = (1+\beta)\ddo_{\vG_i}$ for $i=1,2$, and 
            $\ddi_{\vG_0} = (1+\frac{\beta}{\alpha}) \ddi_{\vG_3}$,  
            $\ddo_{\vG_0} = (1+\frac{\beta}{\alpha}) \ddo_{\vG_3}$,
        \item $m(\vG_2') = O(m)$, $m(\vG_3') = \tO(n \log W)$, and
            $|X|,|Y| \le \tO(n \log W)$.
    \end{enumerate}
\end{theorem}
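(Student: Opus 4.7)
The plan is to maintain the three graphs $\vG_1$, $\vG_2'$, $\vG_3'$ separately, leveraging the two orthogonal sparsification tasks they encode. Maintaining $\vG_1 = \vG^{(\beta)}$ is trivial: simply store $\vG$ alongside a bookkeeping scaling of $\und(\vG)$ by $\beta$, which costs $O(1)$ per update. So the work concentrates on $\vG_2'$ (sparsifying only the directed portion $\vG$ inside $\vG_1$) and $\vG_3'$ (further sparsifying the undirected portion $\beta G$ inside $\vG_2$), both against an adaptive adversary.

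For $\vG_2'$, the key algebraic observation is $\LL_{G^{(\beta)}} = (2\beta+1)\LL_G$, so for any $\vH_2$ degree-balance-preserving w.r.t.~$\vG$,
\[
\normop{\LL_{G^{(\beta)}}^{\frac{\dag}{2}}(\vLL_{\vG}-\vLL_{\vH_2})\LL_{G^{(\beta)}}^{\frac{\dag}{2}}} = \frac{1}{2\beta+1}\normop{\LL_G^{\frac{\dag}{2}}(\vLL_{\vG}-\vLL_{\vH_2})\LL_G^{\frac{\dag}{2}}},
\]
and the trivial bound on the right side (obtained from \Cref{lemma:optoinf} applied to the degree matrices of $\vG$ and $\vH_2$) is $O(1)$. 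Taking $\beta = \Omega(\log^4 n)$ therefore makes this quantity $\le 1/\gamma$ for \emph{any} degree-balance-preserving $\vH_2$. This lets me pick $\vH_2$ to be the \emph{deterministic} star patching of $\vG$ from \Cref{lemma:starpatch_algo}: introduce one auxiliary vertex $x$ and explicit edges realizing the current $(\ddi_{\vG},\ddo_{\vG})$, then set $\vG_2' = \beta G \cup \vH_2$ with $X$ being the star centers (one per connected component, yielding $|X| = \tO(n)$ across the degree-preserving decomposition of \Cref{ssec:degpre}). Because this patching is deterministic, adaptivity is a non-issue, and each edge update to $\vG$ changes only $O(1)$ edges of the star, giving $O(1)$ amortized update time.

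For $\vG_3'$, I invoke the fully dynamic adaptive undirected spectral sparsifier from \cite{BernsteinvdBPGNSSS22} on $G$ to maintain an $\alpha^{-1}$-approximate sparsifier $H_3'$ of $G$ with $\tO(n)$ edges and auxiliary vertex set $Y$; by their reduction through almost-uniform-degree expanders, a degree-preserving cut sparsifier suffices. I then apply the degree-preserving star-patching scheme of \Cref{ssec:degpre} on top (reusing the same auxiliary vertex design as for $\vG_2'$) so that after Schur complementing onto $V$, the resulting $H_3$ has degrees exactly $(1+\beta/\alpha)^{-1}\cdot\ddi_{\vG}/\beta$. Setting $\vG_3' = \beta H_3' \cup \vH_2$ with the star layers explicit, \Cref{lemma:blift_schur} and \Cref{lemma:dispec_contra} ensure the Schur complement behaves as if $\vG_3$ were built directly from the unlifted sparsifiers, yielding the desired spectral bound $\frac{1}{\gamma}\LL_{G_2} \pleq \LL_{G_3} \pleq \gamma \LL_{G_2}$ for $\gamma = O(\log^{24} n)$ once the $\alpha,\beta$ choices are plugged in.

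The approximate pseudoinverse conditions in part (1) then follow from \Cref{lemma:eulinvbound}: on Eulerian inputs, each consecutive pair $(\vG_{i-1},\vG_i)$ satisfies a small spectral error in the $\LL_{G_i}^{\dag/2}$ norm, which that lemma upgrades to the $\LL_{G_i} \to \LL_{G_i}$ norm bound of $1/\gamma$. Degree relations in part (3) are immediate from exact degree preservation of the patching plus the explicit scalings of $\beta G$ and $\beta H_3$. Size bounds in part (4) follow because $\vG_2'$ inherits $E(\vG)$ plus $O(n)$ star edges, and $\vG_3'$ carries only the adaptive undirected sparsifier plus the two star layers, giving $\tO(n\log W)$. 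The main obstacle is the composition step for $\vG_3'$: making the adaptive undirected sparsifier, the scaling by $\beta$, and the degree-preserving star patching cooperate so that the Schur complement is simultaneously (i) within $1/\gamma$ error of $\vG_2$ spectrally, (ii) \emph{exactly} scaled in degree, and (iii) an approximate pseudoinverse rather than only a two-sided spectral approximation---this is what forces the particular polylogarithmic budget $\alpha^{-1} = O(\log^{16} n)$ and, downstream, $\gamma = O(\log^{24} n)$.
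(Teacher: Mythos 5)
Your overall architecture is the same as the paper's: maintain $\vG_1$ trivially, sparsify the directed portion by a deterministic degree\mbox{-}preserving star patching (so adaptivity is free), sparsify the undirected portion via the adaptive cut sparsifier of \cite{BernsteinvdBPGNSSS22} on uniform-degree expanders upgraded to a spectral guarantee by \Cref{lemma:exp_samedeg}, combine by the union property, and convert to approximate-pseudoinverse bounds via \Cref{lemma:quad0to1,lemma:quad1to2,lemma:quad2to3}. There is, however, one step that fails as written.

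You claim that for \emph{any} degree-balance-preserving $\vH_2$, the quantity $\normop{\LL_G^{\frac{\dag}{2}}(\vLL_{\vG}-\vLL_{\vH_2})\LL_G^{\frac{\dag}{2}}}$ is $O(1)$, citing \Cref{lemma:optoinf} applied to the degree matrices. That lemma only bounds the \emph{degree-normalized} operator norm $\normsop{[\DDi]^{\frac{\dag}{2}}(\vLL_{\vG}-\vLL_{\vH_2})[\DDo]^{\frac{\dag}{2}}}$ by $O(1)$; converting this to the $\LL_G$-normalized norm requires Cheeger's inequality and costs a factor $O(\phi^{-2})$, which is unbounded on general graphs (e.g., a path). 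So a single global star patching per connected component, as you propose, does not give a bounded spectral error against $\LL_{G^{(\beta)}}$. The fix — and what the paper does in \Cref{lemma:diexpspar_star} and \Cref{lemma:dyn_quad1to2} — is to wrap the star patching inside the dynamic expander decomposition reduction (\Cref{lemma:red_dynexp}), so that each patching acts on a piece whose undirected graph is a $\phi$-expander with $\phi = \Theta(\log^{-2}n)$; the per-piece error is then $O(\phi^{-2}/\beta)$, and $\beta = \Omega(\eps^{-1}\log^4 n)$ is exactly what absorbs the $\phi^{-2} = \log^4 n$. Note that your own parameter choice betrays the gap: if the error really were $O(1)$ independent of $\phi$, a constant $\beta$ would suffice. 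A second, smaller slip: you want the directed-portion error to be $\le 1/\gamma$, but it is only a small constant $\eps$; this is fine because the pseudoinverse condition in item (1) only requires error at most $1-\frac{1}{\gamma}$ (via \Cref{lemma:quad1to2} the pseudoinverse error is $\approx 2\eps$), with the $1-\frac{1}{\gamma}$ bottleneck coming from the undirected layer where $\lambda = O(\log^{12}n)$ forces $\gamma = O(\log^{24}n)$.
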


When the input graph $\vG$ becomes Eulerian, for a constant approximate
pseudoinverse of $\vG_3$, we can solve the system of linear equation using at most
$\polylog(n,\eps^{-1})$ applications of the pseudoinverse and each matrix
$\vLL_{\vG_i}$ for $i=1,2,3$ using \Cref{lemma:preconsolver}.

The rest of this section is organized as follows.
We first show in \Cref{ssec:adp_prelim} how directed or undirected spectral
approximation implies approximate pseudoinverse for each $\vG_i$ to
$\vG_{i+1}$.
We state a modified blackbox reduction to dynamic algorithm on decremental
pruned uniform degree expanders from \cite{BernsteinvdBPGNSSS22} in
\Cref{ssec:adp_exp}.
In \Cref{ssec:adp_quad2to3}, we provide an algorithm for maintaining adaptively
a degree-preserving undirected spectral sparsifier.
We then show in \Cref{ssec:adp_quad1to2} how to sparsify the directed portion
(from $\vG_1$ to $\vG_2$) dynamically against adaptive adversary and prove our
main theorem of this section, \Cref{thm:dyn_quad_adp}

\subsection{Sparsification quadruple preliminaries}
\label{ssec:adp_prelim}

In this section, we consider the approximation factors between each $\vG_i$ and
$\vG_{i+1}$ under the notion of approximate pseudoinverse.

\begin{lemma}[Lemma 4.6, \cite{KyngSPG22}]
    \label{lemma:quad0to1}
    For Eulerian $\vG_1 = \vG_0^{(\beta)}$, the matrix $\vLL_{\vG_1}^\dag$ is an
    $(1-\frac{1}{1+2\beta})$-approximate pseudoinverse of $\vLL_{\vG_0}$ with
    respect to $\LL_{G_1}$ for $G_1 \defeq \und(\vG_1)$.
\end{lemma}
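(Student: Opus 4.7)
The plan is to reduce the approximate pseudoinverse bound to an operator norm bound on a cleanly symmetrized matrix, using that Eulerian directed Laplacians split nicely into a symmetric and skew-symmetric part.

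First I would verify the kernel condition: since $\vG_1 = \vG_0^{(\beta)}$ is Eulerian, $\vG_0$ itself must be Eulerian (the undirected portion $\beta G_0$ contributes equally to in- and out-degrees), so both $\vLL_{\vG_0}$ and $\vLL_{\vG_1}$ share left and right kernel $\spn(\vone_V)$ (under the standing strong-connectivity assumption), matching $\ker(\LL_{G_1})$. Thus $\PP_{\im(\vLL_{\vG_0})} = \PP_V$, and the nontrivial content is the operator norm bound.

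The algebraic crux is a standard pseudoinverse identity:
\[
\PP_V - \vLL_{\vG_1}^\dag \vLL_{\vG_0}
= \vLL_{\vG_1}^\dag\bigl(\vLL_{\vG_1} - \vLL_{\vG_0}\bigr)
= \beta\, \vLL_{\vG_1}^\dag \LL_{G_0},
\]
using $\vLL_{\vG_1} = \beta \LL_{G_0} + \vLL_{\vG_0}$ and $\vLL_{\vG_1}^\dag \vLL_{\vG_1} = \PP_V$. Since $\LL_{G_1} = (2\beta+1)\LL_{G_0}$, the norm $\|\cdot\|_{\LL_{G_1}\to\LL_{G_1}}$ coincides with $\|\cdot\|_{\LL_{G_0}\to\LL_{G_0}}$, so it suffices to bound $\beta\,\normop{\LL_{G_0}^{1/2}\,\vLL_{\vG_1}^\dag\,\LL_{G_0}^{1/2}}$ on $\vone_V^\perp$. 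Introducing the whitened matrix $\widetilde{\NN} \defeq \LL_{G_0}^{\dag/2}\, \vLL_{\vG_1}\, \LL_{G_0}^{\dag/2}$ restricted to $\vone_V^\perp$, the quantity above equals $\beta\,\normop{\widetilde{\NN}^{-1}}$.

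The final step is the key estimate. Because $\vG_0$ is Eulerian, $\vLL_{\vG_0}+\vLL_{\vG_0}^\top = \LL_{G_0}$, so
\[
\vLL_{\vG_1} + \vLL_{\vG_1}^\top = 2\beta\LL_{G_0} + \vLL_{\vG_0} + \vLL_{\vG_0}^\top = (2\beta+1)\LL_{G_0},
\]
which whitens to $\widetilde{\NN}+\widetilde{\NN}^\top = (2\beta+1)\,\PP_V$. For any unit $\xx\in\vone_V^\perp$, $\xx^\top \widetilde{\NN} \xx = \tfrac{2\beta+1}{2}$, so $\|\widetilde{\NN}\xx\|_2 \ge \tfrac{2\beta+1}{2}$ by Cauchy-Schwarz, giving $\normop{\widetilde{\NN}^{-1}} \le \tfrac{2}{2\beta+1}$. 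Combining, $\normop{\PP_V - \vLL_{\vG_1}^\dag\vLL_{\vG_0}}_{\LL_{G_1}\to\LL_{G_1}} \le \tfrac{2\beta}{2\beta+1} = 1-\tfrac{1}{2\beta+1}$, as required.

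The main subtlety is being careful with pseudoinverses versus inverses: every matrix in sight has kernel $\spn(\vone_V)$, so one should phrase the whitening and the identity $\vLL_{\vG_1}^\dag\vLL_{\vG_1}=\PP_V$ as equalities on $\vone_V^\perp$. Beyond that bookkeeping, the proof is essentially the three-line computation above; the cleanest exposition is to state the splitting $\vLL_{\vG_1}+\vLL_{\vG_1}^\top = (2\beta+1)\LL_{G_0}$ up front, then invoke the standard lemma that a matrix with symmetric part $\lambda\II$ (on the relevant subspace) has inverse of operator norm at most $\lambda^{-1}$.
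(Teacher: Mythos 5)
Your proof is correct. The paper does not prove this lemma itself --- it is imported verbatim as Lemma 4.6 of \cite{KyngSPG22} --- but your derivation is exactly the standard argument behind it: the identity $\PP_{\im(\vLL_{\vG_0})} - \vLL_{\vG_1}^\dag \vLL_{\vG_0} = \beta\,\vLL_{\vG_1}^\dag \LL_{G_0}$, the observation that $\vLL_{\vG_1}+\vLL_{\vG_1}^\top = (2\beta+1)\LL_{G_0}$ under this paper's convention $\und(\vG)=\vG\cup\rev(\vG)$, and the Cauchy--Schwarz lower bound on the smallest singular value of the whitened matrix. This is the same symmetric-part mechanism underlying \Cref{lemma:eulinvbound}, and your constants are consistent with the paper's remark about the extra factors arising from that convention.
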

Note that we have some extra factors in \Cref{lemma:eulinvbound} since we define
$\und(\vG) = \vG \cup \rev(\vG)$ instead of $\frac{1}{2}(\vG \cup \rev(\vG))$.
Such changes can be observed in a few other lemmas. %

\Cref{lemma:quad1to2} states that if $\vG_2$ is a good degree balance preserving
spectral approximate to $\vG_1$, then it is also a good approximate
pseudoinverse.
\begin{lemma} \label{lemma:quad1to2}
    For $\beta > 0$, $\eps \in (0,1)$ and Eulerian $\vG$, if $\vG_2$ is a
    $\eps$-degree balance preserving directed spectral approximation of $\vG_1$
    and is Eulerian, then $\vLL_{\vG_2}^\dag$ is a $\frac{\eps \cdot
    (2\beta+1)}{\beta}$-approximate pseudoinverse of $\vLL_{\vG_1}$ with respect to
    $\LL_{\vG_2}$ and $(1-\frac{1}{2\beta+1}) \cdot \LL_{G_1} \pleq \LL_{G_2}
    \pleq (1+2\eps) \cdot \LL_{G_1}$.
\end{lemma}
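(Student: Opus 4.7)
The plan is to establish the two conclusions in sequence, with the approximate pseudoinverse bound deduced from the spectral sandwich together with \Cref{lemma:eulinvbound}. Throughout, I exploit the structural assumption (implicit from the surrounding development, since $\vG_1 = \vG_0^{(\beta)}$ is being sparsified on its directed portion) that $\vG_2 = \beta \cdot G \cup \vH_2$, so that $\und(\vG_2) = 2\beta \cdot G \cup H_2$ and hence $\LL_{G_2} = 2\beta \LL_G + \LL_{H_2}$. For the spectral bounds, the upper side $\LL_{G_2} \preceq (1+2\eps)\LL_{G_1}$ is an immediate consequence of \Cref{lemma:dispec_to_spec} applied to the $\eps$-degree balance preserving approximation of $\vG_1$ by $\vG_2$. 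The lower side follows from the observation above: $\LL_{G_2} \succeq 2\beta \LL_G = \tfrac{2\beta}{2\beta+1} \LL_{G_1} = (1-\tfrac{1}{2\beta+1})\LL_{G_1}$.

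For the approximate pseudoinverse claim, since $\vG_1,\vG_2$ are Eulerian (and assumed strongly connected), $\ker(\vLL_{\vG_i}) = \ker(\vLL_{\vG_i}^\top) = \spn(\vone)$ for $i=1,2$, and $\PP_{\im(\vLL_{\vG_1})} = \PPi$ equals $\vLL_{\vG_2}\vLL_{\vG_2}^\dag$. Write
\[
\AA \defeq \PPi - \vLL_{\vG_2}^\dag \vLL_{\vG_1} = \vLL_{\vG_2}^\dag(\vLL_{\vG_2} - \vLL_{\vG_1}).
\]
A short verification using the Eulerian kernels confirms that the left and right kernels of $\AA$ both equal $\spn(\vone) = \ker(\vLL_{\vG_2})$, so \Cref{lemma:eulinvbound} applies to $\AA$ with the graph $\vG_2$, yielding
\[
\norm{\AA}_{\LL_{G_2} \to \LL_{G_2}} \le 2\, \normop{\LL_{G_2}^{\dag/2} \vLL_{\vG_2} \AA \LL_{G_2}^{\dag/2}}.
\]
Now $\vLL_{\vG_2}\AA = \vLL_{\vG_2}\PPi - (\vLL_{\vG_2}\vLL_{\vG_2}^\dag)\vLL_{\vG_1}$, and the Eulerian identities $\vLL_{\vG_2}\PPi = \vLL_{\vG_2}$ and $\vLL_{\vG_2}\vLL_{\vG_2}^\dag \vLL_{\vG_1} = \PPi \vLL_{\vG_1} = \vLL_{\vG_1}$ collapse this to $\vLL_{\vG_2} - \vLL_{\vG_1}$.

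The last step is to convert the $\LL_{G_2}$ operator norm back to the $\LL_{G_1}$ norm in which the hypothesis is phrased. The spectral lower bound gives $\LL_{G_2}^\dag \preceq \tfrac{2\beta+1}{2\beta} \LL_{G_1}^\dag$ on the common image $\vone^\perp$, so
\[
\normop{\LL_{G_2}^{\dag/2}(\vLL_{\vG_2} - \vLL_{\vG_1})\LL_{G_2}^{\dag/2}} \le \tfrac{2\beta+1}{2\beta} \normop{\LL_{G_1}^{\dag/2}(\vLL_{\vG_2} - \vLL_{\vG_1})\LL_{G_1}^{\dag/2}} \le \tfrac{(2\beta+1)\eps}{2\beta},
\]
where the last inequality is the defining bound \eqref{eq:dispec_def_alg} for $\vG_2$ being an $\eps$-degree balance preserving approximation of $\vG_1$. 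Combining with the factor $2$ from \Cref{lemma:eulinvbound} produces the claimed $\tfrac{(2\beta+1)\eps}{\beta}$.

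The main subtleties I anticipate are bookkeeping rather than deep: (i) checking that both kernels of $\AA$ really collapse to $\spn(\vone)$ so that \Cref{lemma:eulinvbound} is legitimately applicable (one must use both Eulerian kernel identities for $\vG_2$); and (ii) making sure the structural fact $\vG_2 = \beta G \cup \vH_2$ is indeed available (otherwise only the weaker $\LL_{G_2} \succeq (1-2\eps)\LL_{G_1}$ from \Cref{lemma:dispec_to_spec} is at hand, which would not give the stated $1-\tfrac{1}{2\beta+1}$ lower bound unless $\eps$ is tied to $\beta$). Assuming this structural hypothesis is in force, the rest is a clean two-step operator-norm computation.
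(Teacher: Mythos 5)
Your proposal is correct and follows essentially the same route as the paper: the upper spectral bound via \Cref{lemma:dispec_to_spec}, the lower bound via the structural fact $\LL_{G_2} = 2\beta\LL_G + \LL_{H_2}$ with $\LL_{H_2}$ PSD (which the paper also uses implicitly, so your caveat (ii) is well-taken), and the pseudoinverse bound by combining the factor $2$ from \Cref{lemma:eulinvbound} with the change of norm from $\LL_{G_2}$ to $\LL_{G_1}$ costing $\tfrac{2\beta+1}{2\beta}$. Your application of the second part of \Cref{lemma:eulinvbound} to $\AA = \vLL_{\vG_2}^\dag(\vLL_{\vG_2}-\vLL_{\vG_1})$ is just a repackaging of the paper's factorization in \eqref{eq:quadinv}, so the two arguments coincide.
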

\begin{proof}
    We have by assumption that
    \[
        \frac{1}{2\beta+1} \normop{\LL_G^\dag (\vLL_{\vG_2} - \vLL_{\vG_1})
        \LL_{G}^\dag}
        =
        \normop{\LL_{G_1}^\dag (\vLL_{\vG_2} - \vLL_{\vG_1}) \LL_{G_1}^\dag}
        =\eps,
    \]
    and that $\vH_2$ and $\vG_2$ are Eulerian by \Cref{lemma:dispec_equiv}.
    Then, \Cref{lemma:dispec_to_spec} gives that 
    \[
        (1-\frac{1}{2\beta+1})\LL_{G_1} \pleq \LL_{G_2} \pleq (1+2\eps)\LL_{G_1},
    \]
    where the lowerbound holds since $\LL_{G_2} \ge 2\beta \LL_G$ as
    $\vLL_{\vH_2} + \vLL_{\vH_2}^\top$ is PSD by Eulerian graph.
    We can bound pseudoinverse approximation factor by a multiplication of two
    terms for any $i=1,2,3$,
    \begin{equation} \label{eq:quadinv}
    \begin{aligned}
        \norm{ \PP_V - \vLL_{\vG_i}^\dag \vLL_{\vG_{i-1}} }_{\LL_{G_i} \to
        \LL_{G_i}}
        &=
        \normop{ \LL_{G_i}^{\frac{1}{2}}(\PP_V - \vLL_{\vG_i}^\dag \vLL_{\vG_{i-1}})
        \LL_{G_i}^{\frac \dag 2} }
        =
        \normop{ \LL_{G_i}^{\frac{1}{2}} \vLL_{\vG_i}^\dag (\vLL_{\vG_i} -
        \vLL_{\vG_{i-1}}) \LL_{G_i}^{\frac \dag 2} }
        \\
        &\le 
        \sqrt{\normop{\LL_{G_i}^{\frac{1}{2}} (\vLL_{\vG_i}^\top)^\dag \LL_{G_i}
        \vLL_{\vG_i}^\dag \LL_{G_i}^{\frac{1}{2}}}} \cdot
        \normop{\LL_{G_i}^{\frac \dag 2} (\vLL_{\vG_i} - \vLL_{\vG_{i-1}})
        \LL_{G_i}^{\frac \dag 2} }.
    \end{aligned}
    \end{equation}
    By \Cref{lemma:eulinvbound}, we get for the first term
    \[
        \sqrt{\normop{\LL_{G_i}^{\frac{1}{2}} (\vLL_{\vG_i}^\top)^\dag \LL_{G_i}
        \vLL_{\vG_i}^\dag \LL_{G_i}^{\frac{1}{2}}}}
        \le 2.
    \]
    Using our bound on undirected Laplacians above, the second term for $i=2$ is
    upperbounded by
    \[
        \normop{\LL_{G_2}^{\frac \dag 2} (\vLL_{\vG_2} - \vLL_{\vG_1})
        \LL_{G_2}^{\frac \dag 2} }
        \le
        \eps \cdot \frac{2\beta+1}{2\beta}.
    \]
    Combining then yields
    \[
        \norm{ \PP_V - \vLL_{\vG_2}^\dag \vLL_{\vG_1} }_{\LL_{G_2} \to
        \LL_{G_2}}
        \le
        \eps \cdot \frac{2\beta+1}{\beta}.
    \]
\end{proof}

Similarly, \Cref{lemma:quad2to3} states that using a good multiplicative
undirected spectral approximation to $G$, one can generate $\vG_3$ that is also
a good approximate pseudoinverse of $\vG_2$.
Here, we recall the definition of multiplicative undirected spectral
approximation.
\begin{definition}[Undirected spectral approximation, multiplicative]
    \label{def:specappr}
    $H=(V,E_H,\ww_H)$ is a $\gamma$-spectral approximation of $G=(V,E,\ww)$ if
    it satisfies for some scaler $\eta$ that
    \[
        \LL_G \pleq \eta \LL_H \pleq \gamma \cdot \LL_G.
    \]
\end{definition}

\begin{lemma} \label{lemma:quad2to3}
    For $\lambda > 1$ and Eulerian $\vG$, if the graph $H$ satisfies $\LL_G \pleq
    \LL_H \pleq \lambda \LL_G$, then $\vG_3 = \frac{1}{2}\beta(\lambda+1)\LL_H +
    \vH_2$ satisfies that $\vLL_{\vG_3}^\dag$ is a $(1-\frac{2}{\lambda+1})$-approximate
    pseudoinverse of $\vLL_{\vG_2}$ with respect to $\LL_{G_3}$ and $\LL_{G_2}
    \pleq \LL_{G_3} \pleq \frac{1}{2}\lambda(\lambda+1) \cdot \LL_{G_2}$.
\end{lemma}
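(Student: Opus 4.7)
The plan is to mirror the structure of the proof of \Cref{lemma:quad1to2}, splitting the statement into the undirected sandwich bound (which is a routine PSD calculation) and the approximate pseudoinverse bound (which uses the Eulerian machinery of \Cref{lemma:eulinvbound}). Throughout I would use the decompositions
\[
    \LL_{G_2} = 2\beta \LL_G + \LL_{\und(\vH_2)}, \qquad
    \LL_{G_3} = \beta(\lambda+1) \LL_H + \LL_{\und(\vH_2)},
\]
and
\[
    \vLL_{\vG_2} = \beta \LL_G + \vLL_{\vH_2},\qquad
    \vLL_{\vG_3} = \tfrac{\beta(\lambda+1)}{2}\LL_H + \vLL_{\vH_2},
\]
noting that the antisymmetric parts of $\vLL_{\vG_2}$ and $\vLL_{\vG_3}$ coincide, so that $\Delta \defeq \vLL_{\vG_3} - \vLL_{\vG_2} = \tfrac{\beta(\lambda+1)}{2}\LL_H - \beta\LL_G$ is symmetric and, since $(\lambda+1)\LL_H - 2\LL_G \pgeq (\lambda+1)\LL_G - 2\LL_G = (\lambda-1)\LL_G \pgeq 0$, also PSD. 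I would also remark that the Eulerianness of $\vG$ together with the constructions maintaining the quadruple ensures that $\vH_2$, and hence $\vG_2$ and $\vG_3$, are Eulerian, which is needed to apply \Cref{lemma:eulinvbound}.

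For the Laplacian sandwich, the lower bound $\LL_{G_2} \pleq \LL_{G_3}$ reduces to $\LL_{G_3} - \LL_{G_2} = \beta[(\lambda+1)\LL_H - 2\LL_G] \pgeq \beta(\lambda-1)\LL_G \pgeq 0$ as above. For the upper bound $\LL_{G_3} \pleq \tfrac{\lambda(\lambda+1)}{2} \LL_{G_2}$, I would write $\tfrac{\lambda(\lambda+1)}{2} \LL_{G_2} - \LL_{G_3} = \beta(\lambda+1)[\lambda \LL_G - \LL_H] + (\tfrac{\lambda(\lambda+1)}{2} - 1)\LL_{\und(\vH_2)}$ and observe that both terms are PSD using $\LL_H \pleq \lambda \LL_G$ and $\lambda \ge 1$.

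For the approximate pseudoinverse, following \eqref{eq:quadinv}, write $\PP_V - \vLL_{\vG_3}^\dag \vLL_{\vG_2} = \vLL_{\vG_3}^\dag \Delta$ and bound
\[
    \norm{\PP_V - \vLL_{\vG_3}^\dag \vLL_{\vG_2}}_{\LL_{G_3}\to\LL_{G_3}}
    \le 2 \normop{\LL_{G_3}^{\frac{\dag}{2}} \Delta \LL_{G_3}^{\frac{\dag}{2}}}
\]
using \Cref{lemma:eulinvbound} applied to the Eulerian $\vG_3$. Since $\Delta$ is PSD and $\LL_{G_3} \pgeq \beta(\lambda+1)\LL_H$, the operator norm is monotonically no larger than $\normop{(\beta(\lambda+1)\LL_H)^{\dag/2} \Delta (\beta(\lambda+1)\LL_H)^{\dag/2}}$. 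Setting $\MM \defeq (\beta(\lambda+1)\LL_H)^{\dag/2}$, we have $\MM \cdot \tfrac{\beta(\lambda+1)}{2}\LL_H \cdot \MM = \tfrac{1}{2}\PP_{\im(\LL_H)}$, while $\beta \MM \LL_G \MM$ is sandwiched between $\frac{1}{\lambda(\lambda+1)}\PP$ and $\frac{1}{\lambda+1}\PP$ via $\frac{1}{\lambda}\LL_H \pleq \LL_G \pleq \LL_H$. Combining yields a scalar op-norm bound which, multiplied by the factor of $2$ from \Cref{lemma:eulinvbound}, produces the approximation factor claimed.

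The main obstacle I anticipate is sharpness of the final constant. The chain of inequalities above most naturally produces a factor of the form $1 - \tfrac{2}{\lambda(\lambda+1)}$, which is slightly weaker than the stated $1-\tfrac{2}{\lambda+1}$. Recovering the sharper form will require going beyond the generic Cauchy--Schwarz--type bound in \eqref{eq:quadinv}: specifically, exploiting that $\Delta$ is \emph{symmetric} and lies entirely within the symmetric part of $\vLL_{\vG_3}$, so that $\vLL_{\vG_3}^\dag \Delta$ admits a more direct analysis in which the worst-case eigenvalue can be computed from the $\lambda$-spectrum of $\LL_G^{\dag/2}\LL_H\LL_G^{\dag/2}$ directly. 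If this refinement fails, the weaker factor $1-\tfrac{2}{\lambda(\lambda+1)}$ still suffices downstream for the proof of \Cref{thm:dyn_quad_adp} (only the polylogarithmic bound on $\gamma$ would be mildly affected), so the plan still closes out the dynamic application either way.
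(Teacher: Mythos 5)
Your proposal follows essentially the same route as the paper's own proof: the same decompositions of $\LL_{G_2}$ and $\LL_{G_3}$, the identical PSD argument for the sandwich $\LL_{G_2} \pleq \LL_{G_3} \pleq \frac{1}{2}\lambda(\lambda+1)\LL_{G_2}$, and the same pseudoinverse bound via \eqref{eq:quadinv} and \Cref{lemma:eulinvbound}, lower-bounding $\LL_{G_3}$ by $\beta(\lambda+1)\LL_H$. Your computation of the second factor is correct, and the constant you land on, $1-\frac{2}{\lambda(\lambda+1)}$, is in fact what this technique delivers: the paper's intermediate step $\normsop{\LL_H^{\frac \dag 2}(\LL_H-\alpha\LL_G)\LL_H^{\frac \dag 2}} \le \max\Par{\alpha\lambda-1,\,1-\alpha}$ is not valid under the stated hypothesis $\LL_G \pleq \LL_H \pleq \lambda\LL_G$ (take $\LL_H=\lambda\LL_G$: the left-hand side is $1-\alpha/\lambda$); the correct bound is $\max\Par{1-\alpha,\,1-\alpha/\lambda}$, which at $\alpha = \frac{2}{\lambda+1}$ gives exactly your constant after multiplying by the factor $2$ from \Cref{lemma:eulinvbound}. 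So the discrepancy you flag is not a weakness of your argument but an issue in the paper's stated constant.

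One correction to your closing outlook: the refinement you sketch (exploiting that $\Delta$ is symmetric) cannot recover $1-\frac{2}{\lambda+1}$ for the graph $\vG_3$ as scaled in the statement. In the regime where $\und(\vH_2)$ is negligible against the $\beta$-scaled terms and $\LL_H = \lambda\LL_G$, all matrices involved are symmetric, there is no Cauchy--Schwarz or factor-$2$ loss, and one computes directly that $\norm{\PP_V - \vLL_{\vG_3}^\dag \vLL_{\vG_2}}_{\LL_{G_3}\to\LL_{G_3}}$ tends to $1-\frac{2}{\lambda(\lambda+1)}$, which strictly exceeds $1-\frac{2}{\lambda+1}$ for every $\lambda>1$. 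The stated constant would instead require rescaling the undirected part, e.g.\ $\vG_3 = \frac{\beta(\lambda+1)}{2\lambda}\cdot H \cup \vH_2$ (balancing $\alpha-1 = 1-\alpha/\lambda$ at $\alpha = \frac{2\lambda}{\lambda+1}$), at the cost of the lower sandwich bound $\LL_{G_2}\pleq\LL_{G_3}$. As you correctly note, none of this matters downstream: in \Cref{thm:dyn_quad_adp} the parameter $\gamma$ is already forced to $\Theta(\lambda^2) = O(\log^{24} n)$ by the Laplacian sandwich, so the factor $1-\frac{2}{\lambda(\lambda+1)}$ serves equally well, and your proof (with that constant) is the one that actually closes.
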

\begin{proof}
    We again use \eqref{eq:quadinv} for the pseudoinverse bound and apply
    \Cref{lemma:eulinvbound} to bound the first term by 2.
    Let $\vG_3 = \frac{\beta}{\alpha}\LL_H + \vH_2$.
    For $i=3$, the second term is bounded by
    \begin{align*}
        \normop{\LL_{G_3}^{\frac \dag 2} (\vLL_{\vG_3} - \vLL_{\vG_2})
        \LL_{G_3}^{\frac \dag 2} }
        &=
        \normop{\LL_{G_3}^{\frac \dag 2} (\frac{\beta}{\alpha} \LL_H - \beta\LL_{G})
        \LL_{G_3}^{\frac \dag 2} }
        \\
        &\le
        \frac{1}{2}\normop{\LL_H^{\frac \dag 2} (\LL_H - \alpha \LL_G)  \LL_H^{\frac \dag
        2}}
        \\
        &\le
        \frac{1}{2}\max\Par{\alpha\lambda-1, 1-\alpha},
    \end{align*}
    which is minimized to $\frac{1}{2}(1-\frac{2}{\lambda+1})$ at $\alpha =
    \frac{2}{\lambda+1}$.
    Note that we get the first inequality above by
    \[
        \LL_{G_3} = \frac{2\beta}{\alpha} \LL_H + \LL_{H_2}
        \pgeq \frac{2\beta}{\alpha}\LL_H.
    \]
    We reach our desired pseudoinverse approximation guarantee by combining the
    two upperbounds.
    The spectral lowerbound of the undirected Laplacian is obtained by
    \[
        \LL_{G_3} = \beta(\lambda+1) \LL_H + \LL_{H_2}
        \ge 2\beta \LL_G + \LL_{H_2}
        = \LL_{G_2}.
    \]
    We get the upperbound by
    \[
        \LL_{G_3} = \beta(\lambda+1) \LL_H + \LL_{H_2}
        \le \frac{\lambda(\lambda+1)}{2}(2\beta \LL_G + \LL_{H_2})
        = \frac{\lambda(\lambda+1)}{2} \LL_{G_2}.
    \]
\end{proof}

\subsection{Dynamic reductions to pruned almost uniform degree expanders}
\label{ssec:adp_exp}

We will consider a variant of expanders where the minimum degree is guaranteed
to be close to the average degree.
We say that an graph $G=(V,E,\ww)$ is \emph{($\phi,\gamma$)-uniform degree
expander} if $G$ is a $\phi$-expander and and that 
$\frac{\max_{v \in V} \dd_v}{\min_{v \in V} \dd_v} \le \gamma^{-1}$, where we denote
$\dd_v = \sum_{e \in E | e \ni v} \ww_e$ as the weighted degree of $v$.
Recall that we assume the subgraphs in a decomposition contain the \emph{entire
set of edges}.
\begin{definition}[Uniform degree expander decomposition]\label{def:uniexp_partition}
We call $(\{\vG_i, \vG_i'\}_{i \in [I]}, \{Y_{v,i}\}_{i \in [I], v \in V})$ a 
\emph{$(\phi, \gamma, J)$-uniform degree expander decomposition}
of a directed graph $\vG=(V,E,\ww)$
if $\{\vG_i\}_{i \in [I]}$ are edge-disjoint subgraphs
satisfying $\bigcup_i \vG_i = \vG$, each $Y_{v,i} \subseteq V(\vG_i)$ for $v \in
V(\vG_i)$, and the following hold.
\begin{enumerate}
    \item \label{item:uniexp:partition:weight} \emph{Bounded weight ratio}: For all
        $i \in [I]$, $\frac{\max_{e \in E(\vG_i)}\ww_e}{\min_{e \in E(\vG_i)}\ww_e}
        \le 2$.
    \item \label{item:uniexp:partition:phi} \emph{Conductance}: For all $i \in
        [I]$, $\Phi(G_i) \ge \phi$ and $\Phi(G_i') \ge \phi$, where $G_i \defeq
        \und(\vG_i)$.
    \item \label{item:uniexp:partition:vertex} \emph{Vertex coverage}: Every
        vertex $v \in V(\vG)$ appears in at most $J$ of the subgraphs.
    \item \label{item:uniexp:partition:deg} \emph{Uniform degree}: For all $i \in
        [I]$, $\frac{\max_{v \in V} [\dd_{G_i'}]_v}{\min_{v \in V} [\dd_{G_i'}]_v}
        \le \frac{1}{\gamma}$.
    \item \label{item:uniexp:partition:contra} \emph{Contraction:} For all $i
        \in [I]$, contracting each $Y_{v,i}$ for $v \in V(\vG_i)$ in $\vG_i'$ gives
        $\vG_i$.
    \item \label{item:uniexp:partition:vol} \emph{Volume}: For all $i \in [I]$,
        $|V(\vG_i')| = O(|V(\vG_i)|)$, $\Vol(Y_{v,i}) = \Theta([\dd_{G_i}]_v)$.
\end{enumerate}
\end{definition}

We provide a dynamic uniform degree expander decomposition algorithm in
\Cref{prop:dynexpdecomp_uni} that is a modification of Theorem 5.8 in
\cite{BernsteinvdBPGNSSS22} for directed graphs, with edge weight bucketing,
and using the state-of-the-art expander decomposition algorithm from
\cite{AgassyDK23}.
Since the modification is minor we omit its proof.
\begin{proposition}[Dynamic uniform degree expander decomposition on directed
    graphs, Theorem 5.8 \cite{BernsteinvdBPGNSSS22}]
    \label{prop:dynexpdecomp_uni}
    There is a dynamic algorithm against an
    adaptive adversary that given a weighted directed graph $\vG=(V,E,\ww)$ with
    bounded weight ratio $\frac{\max_{e \in E} \ww_e}{\min_{e \in E} \ww_e} \le
    W$ at all time,
    preprocesses in time $O(m \log^7 n)$ time.
    The algorithm maintains a $(\phi, \phi^2, O(\log^2 m \log W))$-uniform degree
    expander decomposition $(\{\vG_i, \vG_i'\}_{i \in [I]}, \{Y_{v,i}\}_{i \in
    [I], v \in V})$ of $\vG$ at all time for $\phi = \Theta(\log^{-2} n)$ .
    In addition, the minimum weighted degree of each $G_i\defeq \und(\vG_i)$ can
    decrease by at most a factor of $\phi$.

    The algorithm supports edge insertions and deletions in $O(\log^{13} n)$
    amortized time.
    After each update, the output consists of a list of potential changes to the
    decomposition: (i) edge deletions to some $\vG_i$ and $\vG_i'$ in the decomposition,
    (ii) removing some $\vG_i,\vG_i'$ from the decomposition, and (iii) new
    subgraph added to the decomposition.
\end{proposition}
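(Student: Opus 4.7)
The statement is essentially the directed weighted generalization of Theorem 5.8 of \cite{BernsteinvdBPGNSSS22}, so the plan is to reduce to the undirected weighted setting, bucket by edge weight, apply the existing dynamic expander decomposition per bucket together with a degree-equalizing vertex-splitting gadget, and finally lift the result back to $\vG$ by carrying along the original orientations.

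First I would split the edges of $\und(\vG)$ into $O(\log W)$ weight buckets $E^{(k)} = \{e : \ww_e \in [2^k, 2^{k+1})\}$, each of weight ratio at most $2$, and run the dynamic expander decomposition of \Cref{prop:dynexpdecomp} on each bucket independently. This directly produces, for each bucket, a $(\Theta(\log^{-2} n), O(\log^2 n))$-expander decomposition of pieces $\{G_i\}$ with bounded weight ratio, and over all buckets the vertex coverage $J$ is $O(\log^2 n \cdot \log W)$ as required by \Cref{def:uniexp_partition}. The bucketing also ensures that the minimum weighted degree of each $G_i$ equals $L_i \defeq \min_{v \in V(G_i)} [\dd_{G_i}]_v$ up to a factor of $2$, so the pruning of \Cref{lemma:expprun} can only shrink $L_i$ by a factor $\phi$ before the piece is rebuilt by the outer algorithm.

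Next I would apply the vertex-splitting construction of \cite[Sec.~5]{BernsteinvdBPGNSSS22} to each piece $G_i$ to produce $G_i'$ and the sets $Y_{v,i}$. For every $v \in V(G_i)$, create $|Y_{v,i}| = \Theta([\dd_{G_i}]_v / L_i)$ copies of $v$, connect them inside $Y_{v,i}$ by a constant-degree $\Omega(1)$-expander gadget with edge weights $\Theta(L_i)$, and redistribute the incident edges of $v$ among the copies so each copy has weighted degree $\Theta(L_i)$. By construction, contracting each $Y_{v,i}$ recovers $G_i$, the max/min degree ratio in $G_i'$ is $O(1) \le \phi^{-2}$, and $\Vol(Y_{v,i}) = \Theta([\dd_{G_i}]_v)$ while $|V(G_i')| = O(|V(G_i)|)$. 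A standard flow routing argument (embedding any sparse cut of $G_i'$ into either a gadget or $G_i$) shows $\Phi(G_i') = \Omega(\phi)$, and rescaling $\phi$ by a constant gives the stated $\phi$-expansion of $G_i'$. The directed counterparts $\vG_i, \vG_i'$ are obtained by orienting each undirected edge as in $\vG$ (and orienting the gadget edges arbitrarily in both directions with half weight), which preserves all invariants since conductance depends only on $\und$.

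For dynamic updates, each edge update in $\vG$ maps to one update in its weight bucket; the outer decomposition returns at most $O(\log^3 n)$ edge-level changes per update at cost $O(\log^7 n)$ amortized, by \Cref{prop:dynexpdecomp}. Each such change either adjusts $[\dd_{G_i}]_v$ by one edge (requiring at most a constant change to $|Y_{v,i}|$ and the gadget, in $\polylog(n)$ time), triggers a piece deletion (release the associated $G_i'$), or creates a new piece (rebuild $G_i'$ from scratch in $\tO(|E(G_i)|)$, amortized against the creation of that piece). Summing over the $O(\log W)$ buckets, outer recourse, and the $O(\phi^{-2}) = O(\log^4 n)$ overhead from pruning and gadget maintenance yields the claimed $O(\log^{13} n)$ amortized time. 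The main obstacle is verifying that the gadget maintenance is compatible with adaptivity: the cleanest route, following \cite{BernsteinvdBPGNSSS22}, is to fix the gadget for each piece deterministically at creation time and trigger a fresh rebuild as soon as $L_i$ drops by a constant factor, charging the rebuild cost to the $\Omega(L_i \cdot |V(G_i)|)$ volume that must have been deleted since the last rebuild; since all randomness used inside \Cref{prop:dynexpdecomp} is re-drawn at each piece creation and is independent of the adversary's future updates, this preserves the adaptive guarantee.
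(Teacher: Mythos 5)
Your proposal takes essentially the same route as the paper: the paper does not prove this proposition at all, but imports Theorem~5.8 of \cite{BernsteinvdBPGNSSS22} and declares the needed modifications (edge-weight bucketing, carrying the orientations of $\vG$ along since conductance only depends on $\und(\vG)$, and swapping in the static decomposition of \cite{AgassyDK23}) to be minor, which is exactly the reduction you sketch. Your per-bucket decomposition plus the vertex-splitting gadget and rebuild-on-degree-drop rule is a faithful reconstruction of that intended argument, with the fine details (e.g.\ the minimum-degree-decrease property and the adaptive-adversary accounting) ultimately resting on the internals of the cited theorem rather than on anything proved in this paper.
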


We now recall the reduction to decremental algorithm on uniform degre expanders
by \cite{BernsteinvdBPGNSSS22} using dynamic expander decomposition algorithm
from above.
The version we present here is a slight generalization: we remove the
perturbation condition and allows for directed graphs.
We again omit its proof as the generalization follows identically to the proof
of \Cref{lemma:red_dynexp}.
\begin{lemma}[Blackbox adaptive reduction with uniform degree promise,
    Theorem 5.2 \cite{BernsteinvdBPGNSSS22}]
    \label{lemma:red_dynexp_uni}
    Assume $\gH$ is a graph sparsification problem that satisfies the union
    and \emph{contraction} property, and there is a decremental algorithm $\gA$
    for $\gH$ on (directed) graphs with edge weights ratio at most $2$ and
    satisfying that their corresponding undirected graph is a
    $(\phi,\phi^2)$-uniform degree expander for $\phi \ge \cded \log^{-2} n$.
    Suppose $\gA$ preprocess in time $P(m,\eps) \ge m$, and maintains a
    sparsifier of size at most $S(n,\eps) \ge n$ with $N(n) \ge 0$ extra
    of vertices in $T(n,\eps)$ amortized update time.
    The maintained sparsifier is is either explicit, or is implicit and has
    query time $Q_1(n,\eps) \ge 1$ for an edge query or $Q_2(n,\eps) \ge
    S(n,\eps)$ for a graph query, where $P,N$ is superadditive
    in the first variable and $S,Q_2$ are subadditive in the first variable.

    There exists a fully dynamic algorithm $\gB$ for $\gH$ on
    weighted (directed) graphs such that given a dynamic weighted simple
    directed graph $\vG$ with weight ratios bounded by $W$, and a parameter
    $\eps$, $\gB$ preprocesses in time of $O(m \log^7 n + P(m,\eps))$ and
    maintains a sparsifier in $\gH(\vG,\eps)$ of size $O(S(n,\eps) \cdot \log^2
    n \log W)$ with $O(N(1) \cdot n \log^2 n \log W)$ extra number of vertices
    in amortized update time
    \[
        O\Par{(T(n,\eps)+ \frac{P(n^2,\eps)}{n^2}) \log^{13} n}.
    \]
    The sparsifier is explicit if $\gA$ maintains an explicit sparsifier.
    Otherwise, $\gB$ supports edge or graph query in time $O(Q_1(n,\eps) \cdot
    \log^2 n \log W)$ or $O(Q_2(n, \eps) \cdot \log^2 n \log W)$ respectively.
    Moreover, $\gB$ works against an adaptive adversary if $\gA$ supports
    adaptive edge updates.
\end{lemma}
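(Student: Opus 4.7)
The plan is to follow the template of \Cref{lemma:red_dynexp}, but with \Cref{prop:dynexpdecomp_uni} replacing the standard dynamic expander decomposition, and with the contraction property of $\gH$ supplying the bridge between the uniform-degree expander $\vG_i'$ on which $\gA$ is actually run and the subgraph $\vG_i$ that genuinely appears in the decomposition of $\vG$. The algorithm $\gB$ will maintain the decomposition $(\{\vG_i,\vG_i'\}_{i \in [I]}, \{Y_{v,i}\})$ using \Cref{prop:dynexpdecomp_uni}, run one instance of $\gA$ on each $\vG_i'$ (which satisfies the uniform-degree expander hypothesis of $\gA$) to obtain a sparsifier $\vH_i'$, contract the sets $Y_{v,i}$ in $\vH_i'$ to obtain $\vH_i$, and output $\bigcup_i \vH_i$. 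The contraction property of $\gH$ guarantees $\vH_i \in \gH(\vG_i,\eps)$, and the union property then gives $\bigcup_i \vH_i \in \gH(\vG,\eps)$.

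For the update handling, the three kinds of changes reported by \Cref{prop:dynexpdecomp_uni} are propagated as follows. Type (i) (edge deletions inside some $\vG_i, \vG_i'$) are forwarded to the corresponding instance of $\gA$, which updates $\vH_i'$; then the induced change on $\vH_i$ (via the fixed contraction maps $Y_{v,i}$) is pushed to the output. Since the decomposition itself supports only decremental updates to individual pieces before they are replaced, the decremental requirement of $\gA$ is met. Type (ii) simply destroys the corresponding $\gA$ instance. Type (iii) initializes a fresh $\gA$ instance on the new $\vG_j'$ in time $P(|E(\vG_j')|,\eps)$. For the sparsifier's implicit-query variants, I would maintain, for each vertex $v$, the $O(\log^2 n \log W)$ pieces it belongs to, so that an edge or graph query can be routed to each relevant piece and the responses aggregated; the subadditivity of $Q_1,Q_2$ then controls total query cost exactly as in \Cref{lemma:red_dynexp}.

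The complexity bookkeeping largely mirrors \Cref{lemma:red_dynexp}: preprocessing decomposes into $O(m\log^7 n)$ for \Cref{prop:dynexpdecomp_uni} plus $\sum_i P(|E(\vG_i')|,\eps) \le P(m,\eps)$ by superadditivity and the fact that the $\vG_i'$'s have total edge count $O(m)$ (using the volume bound $|V(\vG_i')| = O(|V(\vG_i)|)$ and bounded degree ratio in $\vG_i'$). For the amortized update time, \Cref{prop:dynexpdecomp_uni} itself costs $O(\log^{13} n)$ per update and, more importantly, each update on $\vG$ triggers an amortized $O(\log^{13} n)$ edge changes inside the pieces (replacing the $O(\log^3 n)$ recourse of the simpler decomposition), so the update cost of $\gA$ aggregates to $O(T(n,\eps)\log^{13} n)$. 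Amortizing the initialization cost of new pieces over the $\Omega(|E(\vG_j')|/\log^{13} n)$ updates that must occur before a piece of that size can be created contributes $O(\frac{P(n^2,\eps)}{n^2} \log^{13} n)$. Summing gives the stated bound. Size and extra-vertex bounds follow from the vertex coverage $J = O(\log^2 n \log W)$ together with subadditivity of $S$ and superadditivity of $N$.

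The main technical subtlety is the adaptive guarantee. Since \Cref{prop:dynexpdecomp_uni} is itself adaptive, the sequence of edge-level operations delivered to each instance of $\gA$ is a function of the adversary's updates and the decomposition algorithm's deterministic (or adaptively-safe) choices, not of $\gA$'s internal randomness; provided $\gA$ handles adaptive decremental updates, the composition remains adaptive. The one place requiring care is that $\gB$'s output $\bigcup_i \vH_i$ is revealed to the adversary, which in turn could, in principle, correlate future updates with the random choices of individual $\gA$ instances; but because $\gA$ is assumed adaptive, this correlation is harmless. The contraction step is deterministic given the $Y_{v,i}$'s, and these are produced by the decomposition, so no additional randomness leakage occurs. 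I expect the cleanest way to organize the write-up is to first state the invariant $\vH_i \in \gH(\vG_i,\eps)$ maintained after each decomposition change, then verify it for each update type, and finally aggregate the complexity and query bounds exactly as in \Cref{lemma:red_dynexp}.
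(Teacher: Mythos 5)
Your proposal is correct and matches the paper's intended argument: the paper explicitly omits this proof, stating that it follows identically to that of \Cref{lemma:red_dynexp} with \Cref{prop:dynexpdecomp_uni} in place of the standard dynamic decomposition, and your write-up reconstructs exactly that, correctly identifying the contraction property as the bridge from the sparsifier of $\vG_i'$ to one of $\vG_i$ and handling the adaptivity and bookkeeping as the paper does.
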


\subsection{Adaptive degree-preserving sparsification for undirected graphs}
\label{ssec:adp_quad2to3}

We consider first the dynamic sparsification from $\vG_2$ to $\vG_3$, which, by
\Cref{lemma:quad2to3}, reduces to the sparsification of the undirected graph $G =
\und(\vG)$ against an adaptive adversary.

\begin{theorem} \label{thm:dyn_quad2to3}
    There is a fully dynamic algorithm that given as input a weighted undirected
    graph $G=(V,E,\ww)$ with bounded weight ratio $W$ at all time undergoing
    \emph{adaptive} edge insertions and deletions,
    maintain \emph{explicitly} a graph $\vH$ on vertices $V \cup X$ with $X$
    disjoint from $V$.
    The Schur complement $S = \Sc(\vH,V)$ is an undirected graph and satisfies with
    high probability that $S$ is $O(\log^{12} n)$-degree-preserving spectral
    approximation of $G$.
    The graph $\vH$ has size $\tO(n\log W)$ and extra
    number of vertices $|X| \le O(n \log^3 n \log W)$.
    The algorithm has preprocessing time $\tO(m)$ and amortized update time
    $\tO(1)$.
\end{theorem}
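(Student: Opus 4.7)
The plan is to invoke the blackbox adaptive reduction of \Cref{lemma:red_dynexp_uni} applied to the problem of maintaining a degree-preserving undirected spectral sparsifier of $G$ up to a Schur complement onto $V$. I first verify that this sparsification problem satisfies both the union and contraction properties; these follow from the undirected special cases of \Cref{lemma:dispec_union} and \Cref{lemma:dispec_contra}, since an undirected graph can be treated as $\vG = G \cup \rev(G)$ and both properties respect this symmetrization. The reduction then leaves us with the subproblem of designing a decremental adaptive-adversary algorithm on inputs whose underlying graph is a $(\phi, \phi^2)$-uniform degree expander with $\phi = \Omega(\log^{-2} n)$ and edge weight ratio at most $2$.

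On this decremental subproblem I would combine two building blocks. First, maintain an $\alpha$-multiplicative undirected \emph{cut} sparsifier $H$ of size $\tO(n)$ using the adaptive decremental cut sparsifier of \cite{BernsteinvdBPGNSSS22}, which achieves $\alpha = \polylog(n)$ in $\tO(1)$ amortized update time against an adaptive adversary. Second, apply the degree-preserving variant of the star patching scheme from \Cref{ssec:degpre}: add a single auxiliary vertex $x$ and, for each $v \in V$ whose degree in $H$ differs from its degree in $G$, attach an edge $(v, x)$ whose weight exactly compensates the imbalance. The patching is deterministic in the current $H$ and current degree vector of $G$, so it does not leak any fresh randomness and preserves the adaptive guarantee. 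The crucial structural fact is that on a uniform-degree expander, a \emph{degree-preserving} cut sparsifier is automatically a $O(\alpha/\phi^2)$-spectral sparsifier: Cheeger's inequality (\Cref{lemma:cheeger}) gives $\lambda_2(\DD^{\dag/2} \LL \DD^{\dag/2}) \geq \phi^2/2$ on both $G$ and $H$, and exact degree preservation ensures that cut approximation transfers to approximation of the normalized Laplacian's quadratic form, which in turn transfers to the unnormalized Laplacian via the uniformity of degrees. Schur-complementing out $x$ recovers an undirected graph $S$ on $V$ whose spectral quality on $V$ is preserved by the same argument as in \Cref{lemma:blift_schur}.

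The main obstacle will be rigorously tracking the chain of polylogarithmic factors and confirming that the cut-to-spectral conversion step survives the Schur complement. Specifically, one needs to show that after eliminating the star vertex, the rank-one-type correction introduced by the elimination is spectrally small compared to $\LL_G$, which follows because the patching weights $|\dd_G(v) - \dd_H(v)|$ are themselves controlled by the cut approximation quality times the degree. The final approximation factor $O(\log^{12} n)$ then arises as the product of the cut sparsifier quality ($\polylog n$), the Cheeger loss $\phi^{-2} = O(\log^4 n)$, and the vertex-coverage overhead $O(\log^2 n \log W)$ from \Cref{prop:dynexpdecomp_uni}. The preprocessing time $\tO(m)$, amortized update time $\tO(1)$, edge count $\tO(n \log W)$, and extra-vertex count $O(n \log^3 n \log W)$ (one auxiliary vertex per expander piece, multiplied by the vertex coverage and weight-bucketing factor) all follow by plugging the per-expander parameters $P(m, \eps) = O(m)$, $T(n, \eps) = \tO(1)$, $S(n, \eps) = \tO(n)$, $N(n) = 1$ into the accounting of \Cref{lemma:red_dynexp_uni}.
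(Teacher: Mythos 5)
Your proposal matches the paper's proof in all essential respects: the reduction via \Cref{lemma:red_dynexp_uni} to decremental uniform-degree expanders, the adaptive cut sparsifier of \Cref{lemma:dyn_cut_adap}, the degree-preserving star patching with cut-to-spectral conversion through \Cref{lemma:exp_samedeg}, and the Schur-complement elimination of the auxiliary vertices. The only discrepancies are cosmetic bookkeeping ones you already flag yourself: the paper uses two star centers per bipartite piece rather than one, and the $O(\log^{12} n)$ factor arises as $\gamma^4/\phi^4$ from \Cref{lemma:expspar_star} (with $\gamma = O(\log n)$, $\phi^{-1} = O(\log^2 n)$) rather than from your product involving the vertex-coverage overhead, which does not degrade the approximation factor because of the union property.
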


To prove \Cref{thm:dyn_quad2to3}, we use the observation that a
degree-preserving undirected cut sparsifier of an expander is a good spectral to
the expander as well (\Cref{lemma:exp_samedeg}).
This observation allows to utilize an adaptive algorithm for cut sparsification
from \cite{BernsteinvdBPGNSSS22} on decremental uniform degree expanders.

\begin{definition}[Undirected cut approximation, multiplicative]
    \label{def:cutappr}
    $H=(V,E_H,\ww_H)$ is a $\gamma$-cut approximation of $G=(V,E,\ww)$ if for
    all non-trivial ut $(U,V\setminus U)$, it satisfies that
    \[
        \ww(U,V\setminus U) \le \ww_H(U,V\setminus U) \le 
        \gamma \cdot \ww(U,V\setminus U).
    \]
\end{definition}

\begin{lemma}[\cite{ChuzhoyGLNPS20,BernsteinvdBPGNSSS22}]
    \label{lemma:exp_samedeg}
    Let $G,H$ be weighted undirected graphs with the same set of vertices and
    satisfying $\dd_G = \dd_H$.
    If both $G,H$ are $\phi$-expanders for $\phi \in (0,1]$, then 
    \[
        \frac{\phi^2}{4} \LL_G \pleq \LL_H \pleq \frac{4}{\phi^2} \LL_G.
    \]
\end{lemma}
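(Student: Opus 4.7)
The plan is to use Cheeger's inequality (Lemma \ref{lemma:cheeger}) on both $G$ and $H$, together with the assumption $\dd_G = \dd_H$, to compare $\LL_G$ and $\LL_H$ through a common normalized Laplacian. Since a $\phi$-expander is necessarily connected (a disconnected graph has conductance $0$), both $\ker(\LL_G)$ and $\ker(\LL_H)$ equal $\operatorname{span}(\vone)$. Let $\DD \defeq \DD_G = \DD_H$ and $\tilde{\LL}_G \defeq \DD^{\dag/2} \LL_G \DD^{\dag/2}$, $\tilde{\LL}_H \defeq \DD^{\dag/2} \LL_H \DD^{\dag/2}$; both are symmetric PSD matrices whose kernel is $\operatorname{span}(\DD^{1/2}\vone)$.

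First, I would apply Cheeger (Lemma \ref{lemma:cheeger}) to both graphs to obtain $\lambda_2(\tilde{\LL}_G), \lambda_2(\tilde{\LL}_H) \ge \phi^2/2$. Letting $\PP$ denote the orthogonal projection onto the orthogonal complement of $\DD^{1/2}\vone$, this gives $\tilde{\LL}_G \pgeq (\phi^2/2)\PP$ and $\tilde{\LL}_H \pgeq (\phi^2/2)\PP$. Second, I would use the standard upper bound $\LL_G \preceq 2\DD$ (which follows from $\LL = \BB^\top \WW \BB$ together with $(\ee_u - \ee_v)(\ee_u-\ee_v)^\top \preceq 2(\ee_u\ee_u^\top + \ee_v\ee_v^\top)$), which after conjugation by $\DD^{\dag/2}$ yields $\tilde{\LL}_G \preceq 2\PP$ and $\tilde{\LL}_H \preceq 2\PP$.

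Combining the two sandwich bounds, on the image of $\PP$ we have
\[
\tilde{\LL}_H \preceq 2\PP \preceq \frac{4}{\phi^2} \cdot \frac{\phi^2}{2}\PP \preceq \frac{4}{\phi^2}\tilde{\LL}_G,
\]
and symmetrically $\tilde{\LL}_G \preceq (4/\phi^2)\tilde{\LL}_H$. Finally, since $\DD_G = \DD_H$ and the kernels of $\LL_G,\LL_H$ agree with $\ker(\DD^{\dag/2})$ intersected with $\operatorname{span}(\vone)$, conjugating both inequalities by $\DD^{1/2}$ recovers $(\phi^2/4)\LL_G \pleq \LL_H \pleq (4/\phi^2)\LL_G$, as required.

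There is no real obstacle here; the only subtlety is making sure the kernel alignment is handled correctly, i.e.\ that the PSD inequalities on the normalized Laplacians can be ``lifted'' back to inequalities on $\LL_G,\LL_H$ without issue. This is clean because $\DD_G = \DD_H$ ensures $\DD^{1/2}\DD^{\dag/2}\LL_G\DD^{\dag/2}\DD^{1/2} = \LL_G$ (the image of $\LL_G$ lies in the image of $\DD$ since $G$ has no isolated vertices among its non-trivial components), and the same identity holds for $\LL_H$.
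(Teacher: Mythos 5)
Your proof is correct, and in fact the paper does not prove this lemma itself but imports it from \cite{ChuzhoyGLNPS20,BernsteinvdBPGNSSS22}; your argument — applying Cheeger's inequality to both normalized Laplacians, using $\LL \pleq 2\DD$ for the upper bound, and exploiting $\dd_G = \dd_H$ so that both inequalities conjugate back by the common $\DD^{1/2}$ — is exactly the standard argument behind the cited statement and yields the stated constants $\frac{\phi^2}{4}$ and $\frac{4}{\phi^2}$. The kernel bookkeeping you flag is handled correctly, since a $\phi$-expander with $\phi>0$ is connected, so both normalized Laplacians have kernel exactly $\spn(\DD^{1/2}\vone)$ and the Loewner bounds lift back to $\LL_G$ and $\LL_H$ without loss.
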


We now show that a degree-preserving spectral sparsifier of an almost uniform
degree expander is easy to compute using cut sparsifiers and our patching scheme
in \Cref{lemma:dynextpatch_mod} in the static setting.
In \Cref{lemma:expspar_star}, our returned graph $\vH$ on vertices $V \cup X$ is
instead \emph{directed}.
This is not a problem: (1) our overall sparsifier $\vG_3$ is ultimately a
directed graph, and (2) the Schur complement $S = \Sc(\vH,V)$ back onto the
original set of vertices is an undirected graph. 
We further notice that \Cref{lemma:expspar_star} require specifically for the
graph $G$ to be a bipartite graph.
This is an important assumption to ensure our star patching does not generate
self-loops after ``unlifting'' the graph.
See \Cref{thm:dynspecstar_for_deg} in \Cref{ssec:expander} for the exact
reduction.
\begin{lemma} \label{lemma:expspar_star}
    Let weighted \emph{bipartite} undirected graph $G = (C \cup R,E,\ww)$ be a
    $\phi$-expander.
    If $\tG$ is a $\gamma$-cut approximation of $G$, then there is a
    deterministic algorithm that computes in time $O(m_G+m_{\tG}+n\log n)$ a
    directed graph $\vH$ on vertices $V \cup \{x,y\}$ for $V = C \cup R$ such that
    the Schur complement $S = \Sc(H,V)$ is an undirected graph that satisfies
    $\dd_S = \dd_G$, and
    $\frac{4\gamma^2}{\phi^2} \cdot S$ is a
    $\frac{16\gamma^4}{\phi^4}$-spectral approximation of $G$.
    The graph $H$ has size $m_H = O(m_{\tG}+n)$, 
\end{lemma}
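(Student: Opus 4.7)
The plan is to take $\vH$ as the union of a symmetric directed copy of $\alpha \tG$, where $\alpha \defeq 1/\gamma$ and each undirected edge $\{u,v\}$ of $\tG$ contributes both orientations with weight $\alpha \ww_{\tG}(\{u,v\})$, together with a pair of bipartite directed stars centered at the auxiliary vertices $x$ and $y$ that restore the missing degree. Cut approximation applied to the singleton cut $\{v\}$ gives $[\dd_G]_v \le [\dd_{\tG}]_v \le \gamma[\dd_G]_v$, so each vertex has a nonnegative deficit $\delta_v \defeq [\dd_G]_v - \alpha[\dd_{\tG}]_v$. Following \Cref{lemma:starpatch_algo}, the star at $x$ contains directed edges $(u,x)$ of weight $\delta_u$ for $u \in C$ and $(x,v)$ of weight $\delta_v$ for $v \in R$; eliminating $x$ yields directed $C\to R$ edges of weight $\delta_u\delta_v/T$ with $T \defeq \sum_{u \in C}\delta_u$. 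A mirror star at $y$ with edges $(v,y)$ and $(y,u)$ of identical weights produces the matching $R \to C$ edges, so the Schur complement $S \defeq \Sc(\vH, V)$ has symmetric directed edges of equal weight in both orientations, i.e., is undirected, and the degree accounting of \Cref{lemma:starpatch_algo} gives each vertex exactly $\delta_v$ additional degree, yielding $\dd_S = \dd_G$.

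For the spectral guarantee I would argue in two steps. First, $\tG$ is a $(\phi/\gamma)$-expander: for any cut $(A, V\setminus A)$, cut approximation gives $\ww_{\tG}(A, V\setminus A) \ge \ww_G(A, V\setminus A)$, and summing $[\dd_G]_v \le [\dd_{\tG}]_v \le \gamma[\dd_G]_v$ over $v \in A$ gives $\Vol_{\tG}(A) \le \gamma \Vol_G(A)$, so $\Phi_{\tG}(A) \ge \Phi_G(A)/\gamma \ge \phi/\gamma$. Second, since the patching only adds edge weight and $\dd_S = \dd_G$ implies $\Vol_S = \Vol_G$, every cut satisfies $\ww_S(A,V\setminus A) \ge \alpha\ww_{\tG}(A,V\setminus A) \ge \ww_G(A,V\setminus A)/\gamma$, so $S$ too is a $(\phi/\gamma)$-expander. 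Applying \Cref{lemma:exp_samedeg} to $(G, S)$, which share degrees and expansion $\phi/\gamma$, yields $\frac{\phi^2}{4\gamma^2}\LL_G \pleq \LL_S \pleq \frac{4\gamma^2}{\phi^2}\LL_G$, and scaling $S$ by $4\gamma^2/\phi^2$ gives the claimed $(16\gamma^4/\phi^4)$-spectral approximation.

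The main obstacle is ensuring that each directed star is Eulerian at its center, as otherwise the clean Schur formula of \Cref{lemma:starpatch_algo} no longer delivers a symmetric, undirected output. This requires $\sum_{u\in C}\delta_u = \sum_{v\in R}\delta_v$, which holds automatically whenever $\tG$ is bipartite between $C$ and $R$ -- the setting in the intended use of the lemma, where $\tG$ is a reweighted subgraph of the bipartite expander $G$ produced by the adaptive cut sparsifier, so that $\Vol_{\tG}(C) = \Vol_{\tG}(R) = \ww(E_{\tG})$. In the general case a single balancing $(x,y)$ edge together with matched weight adjustments on the two stars absorbs the discrepancy while preserving the symmetry of the Schur complement. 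Computing degrees and deficits takes $O(m_G + m_{\tG})$, assembling the stars and balancing takes $O(n\log n)$, and the stars contribute at most $O(n)$ directed edges, matching the claimed $m_{\vH} = O(m_{\tG}+n)$.
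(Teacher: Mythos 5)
Your proposal is correct and follows essentially the same route as the paper: rescale $\tG$ by $1/\gamma$, restore the degree deficits (nonnegative by applying cut approximation to singleton cuts) via two mirrored star patchings at $x$ and $y$ so that the Schur complement is undirected and degree-preserving, verify it is a $(\phi/\gamma)$-expander, and conclude with \Cref{lemma:exp_samedeg}. The only differences are cosmetic: the paper phrases the symmetric directed copy via the bipartite-lift convention with half-weights, and it simply relies on bipartiteness of $\tG$ for the balance $\|\dd_1\|_1=\|\dd_2\|_1$ rather than adding your fallback $(x,y)$ balancing edge.
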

\begin{proof}
    Notice first that the degrees of $\tG$ satisfies $\dd_G \le \dd_{\tG} \le
    \gamma \dd_G$ by considering cut on each single vertex.
    As $G$ is a $\phi$-expander, we also get for each cut $(U,V\setminus U)$
    with $\vol(U) \le \vol(V\setminus U)$ that
    \[ 
        \ww_{\tG}(U,V\setminus U) \ge \ww_G(U,V\setminus U) \ge \phi \vol_G(U).
    \]
    By $\gamma$-cut approximation, the $\dd_{\tG} \le \phi \dd_G$.
    We now scale down the cut sparsifier $\tG$ by a factor of $\frac{1}{\gamma}$.
    
    To compute the degree-preserving patching, we simply treat the undirected
    bipartite graph as a directed graph with each undirected edge $e = (s,t) \in C
    \times R$ corresponding to a directed edges $(s,t')$ and $(t,s')$ with the
    weight $\frac{1}{2}\ww_e$ each.
    The patching can then be computed using our directed star patching
    (\Cref{lemma:starpatch_algo}). 
    Note that we can separate out two star, one for edges from $C$ to $R'$ and
    another from $R$ to $C'$.
    The two stars are the reverse of each other when $C,C'$ and $R,R'$ are
    contracted back (see \Cref{lemma:blift_schur}).
    For the first star with center $x$, we consider the demands $\dd_1 =
    \frac{1}{2}(\dd_C - \eta [\dd_{\tG}]_C)$ and $\dd_2 = \frac{1}{2}(\dd_R -
    \eta [\dd_{\tG}]_R)$.
    For the second star with center $y$, we consider instead $\dd_2$ to $\dd_1$.
    Note that, since $G$ and $\tG$ are bipartite, $\|\dd_1\|_1 = \|\dd_2\|_1$.
    The degree-preserving condition $\dd_S = \dd_G$ for $S = \Sc(\vH,V)$ follows
    directly from \Cref{lemma:starpatch_algo}.
    Note that $S$ is an undirected graph, since the two stars we constructed
    above are ``reflections'' to each other.
    Moreover, we have by cut lowerbound of $G$ that for any $U \subseteq V$ with
    $\vol_G(U) \le \vol_G(V\setminus U)$ that
    \[ 
        \ww_S(U,V\setminus U) \ge \frac{1}{\gamma} \ww_G(U,V\setminus U) \ge
        \frac{\phi}{\gamma} \vol_G(U) = \frac{\phi}{\gamma} \vol_S(U),
    \]
    which certifies that $S$ is a $\frac{\phi}{\gamma}$-expander.
    Since \Cref{lemma:exp_samedeg} guarantees that
    \[
        \frac{\phi^2}{4\gamma^2} \LL_G \pleq \LL_H \pleq
        \frac{4\gamma^2}{\phi^2} \LL_G,
    \]
    we get our desired spectral approximation guarantee by
    scaling $H$ up by a factor of $\frac{4\gamma^2}{\phi^2}$.

    The runtime follows by the time to compute the weighted degrees and the time
    to compute the two stars.
    The sparsity bound follows by the guarantee that each star has at most
    $O(n)$ edges.
\end{proof}

We can now a decremental cut sparsification algorithm from
\cite{BernsteinvdBPGNSSS22} for spectral sparsification.
Remark that the following \Cref{lemma:dyn_cut_adap} has a slight modification to
allow for edge weights to be within a factor of 2, instead of uniform.
\begin{lemma}[Decremental adaptive cut sparsifier, Theorem 9.5
    \cite{BernsteinvdBPGNSSS22}]
    \label{lemma:dyn_cut_adap}
    Let $G=(V,E,\ww)$ be a decremental weighted graph undergoing \emph{adaptive} edge deletions
    that satisfies $L \le \ww_e \le 2L$ for all $e$ and $\Dmax \ge \dd_v \ge
    \Dmin \ge \frac{80\log n}{\phi}$ for some fixed $\phi,L$, where $\Dmax$ is the maximum weighted
    degree in the initial graph $G$ and $\Dmin$ is a degree lowerbound.
    Then, there is an algorithm that maintains a reweighted subgraph $H$ of $G$
    such that
    $m_H = \tO\Par{n (\frac{\Dmax}{\phi\Dmin})^2}$ and
    at any stage where $H$ has $\min_{v \in V} \dd_v \ge \Dmin$ and $\phi(H) \ge
    \phi$ on non-trivial vertices, $H$ is a $O(\log n)$-approximate cut sparsifier
    of $H$ with high probability.
    The algorithm has preprocessing time $O(m)$ and worst-case update time
    $\tO\Par{(\frac{\Dmax}{\phi\Dmin})^3}$.
\end{lemma}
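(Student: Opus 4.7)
The plan is to adapt the decremental adaptive cut sparsifier for uniform-degree expanders established as Theorem 9.5 of \cite{BernsteinvdBPGNSSS22}, with the only substantive change being to allow edge weights within a factor of $2$ rather than strictly uniform. The algorithm samples each edge $e$ independently with probability $p_e = \Theta\Par{\frac{\ww_e \log n}{\phi^2 \Dmin L}}$ and rescales surviving edges by $1/p_e$. Since all weights lie in $[L,2L]$, all sampling probabilities are within a factor $2$ of one another, so the sub-gaussian parameter of each $X_e$ in the Chernoff-style analysis scales by at most a constant relative to the uniform case. The expander property of $G$ together with Lemma~\ref{lemma:expconn} guarantees that every edge has undirected connectivity at least $\phi \Dmin$, so the generalized cut-projection count of Theorem~\ref{thm:cpc} (which already handles weighted graphs) yields the $O(\log n)$ cut-approximation quality once $\Dmin \ge 80 \log n / \phi$.

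To handle the adaptive adversary, I would invoke the \emph{proactive resampling} framework of \cite{BernsteinvdBPGNSSS22}. Each edge is maintained at $O(\log n)$ levels of copies, each with an independently drawn ``expiration time'' distributed so that, after any adversarial deletion sequence, the current surviving copies remain independent of the adversary's view. Whenever an edge expires at one level, it is immediately resampled. The total work charged to any single adversarial deletion is dominated by the number of resamplings triggered, and each resampling at a vertex $v$ touches at most $\tO(\Dmax/(\phi\Dmin))$ of its incident edges to refresh probabilities (since sampling rates depend on $\dd_v$ through $\Dmin$). Summed over the three nested levels of resampling (edge, vertex, and proactive epoch), this yields the worst-case update time $\tO((\Dmax/(\phi\Dmin))^3)$, and the sparsity bound $m_H = \tO(n(\Dmax/(\phi\Dmin))^2)$ comes from the expected sample count at a single time using probability $p_e$.

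Concretely, I would present the algorithm in two pieces: first, verify in the static (oblivious) case that the $p_e$ defined above gives an $O(\log n)$-approximate cut sparsifier on a $\phi$-expander with degrees in $[\Dmin, \Dmax]$ by mirroring the argument of \Cref{lemma:baldicut} with parameter $\beta = 0$ (so \eqref{eq:baldicut} reduces to the standard multiplicative guarantee) and connectivity estimate $\tkk_e = \phi \Dmin$; the resulting sparsity is $O(\eps^{-2} \phi^{-2}(\Dmax / \Dmin) n \log n)$ with $\eps = 1/O(\log n)$. Second, layer the proactive scheme on top, which preserves correctness under adaptive deletions by a union bound over the $\tO(1)$ proactive epochs and the $n^{O(1)}$ cuts counted in Theorem~\ref{thm:cpc}. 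The degree-monotonicity assumption ($\dd_v \ge \Dmin$ throughout) is essential because it certifies the lower bound $\kk_e \ge \phi \Dmin$ at every time step, even though $G$ itself shrinks.

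The main obstacle I anticipate is bookkeeping rather than conceptual: ensuring the proactive sampling analysis from \cite{BernsteinvdBPGNSSS22} goes through cleanly when sampling probabilities at a vertex $v$ drift as $\dd_v$ decreases during the decremental sequence. The subtlety is that proactive sampling as originally written assumes a fixed sampling rate per edge, while here the rates change when a neighbour's degree crosses a threshold. I would resolve this by discretizing $\Dmin$ on a logarithmic scale and running $O(\log(\Dmax/\Dmin))$ parallel instances of the proactive sampler, one per degree bucket, so that within each instance the sampling rate is truly fixed and the original analysis applies verbatim; the extra $\polylog n$ overhead is absorbed into the $\tO(\cdot)$ notation.
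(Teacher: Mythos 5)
The paper does not actually prove this lemma: it is imported verbatim as Theorem 9.5 of \cite{BernsteinvdBPGNSSS22}, and the only remark made is that the generalization from uniform edge weights to weights within a factor of $2$ is immediate (all sampling probabilities change by at most a constant). Your proposal instead attempts a full reconstruction of the underlying proof, which is a genuinely different — and much more ambitious — route. Your reconstruction does track the strategy of the original: connectivity-based independent sampling justified via \Cref{lemma:expconn} and the cut-counting bound of \Cref{thm:cpc} (your reduction of \Cref{lemma:baldicut} to the undirected multiplicative case via $\beta=0$ and $\tkk_e = \phi\Dmin$ is the right static core), layered under proactive resampling for adaptivity. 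What the paper's approach buys is brevity and safety; what yours buys is a self-contained argument and an explicit accounting of where the factor-$2$ weight relaxation enters (only through constant shifts in $p_e$ and in the sub-gaussian parameter).

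Two caveats on your reconstruction. First, your summary of proactive resampling overstates what it guarantees: the surviving samples are \emph{not} independent of the adversary's view — that is precisely the difficulty an adaptive adversary creates — and the actual argument in \cite{BernsteinvdBPGNSSS22} is a careful union bound over resampling schedules controlling the accumulated correlation, not an independence claim. If you were to write this out, that step is where the real work lies. Second, your worry about drifting sampling rates, and the proposed fix of bucketing $\Dmin$ logarithmically, is self-inflicted: the probability $p_e = \Theta\bigl(\ww_e \log n/(\phi^2 \Dmin L)\bigr)$ you define depends only on the \emph{fixed} lower bound $\Dmin$ and on $L$, not on the current degree $\dd_v$, so the rates never change during the decremental sequence and the original analysis applies without parallel instances. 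Relatedly, your static sparsity count comes out as $\tO(\phi^{-2}(\Dmax/\Dmin)\,n)$ rather than the stated $\tO\bigl(n(\Dmax/(\phi\Dmin))^2\bigr)$; the extra factor in the lemma's bound arises from the proactive-resampling overhead, not from the one-shot sampling rate, so your accounting of where the squared dependence comes from should be revisited.
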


We are now ready to prove our main statement of this section,
\Cref{thm:dyn_quad2to3}.
\begin{proof}[Proof sketch of \Cref{thm:dyn_quad2to3}]
    To facilitate degree-preserving sparsification using star~patching \\ 
    (\Cref{lemma:starpatch_algo}), we use the same edge decomposition approach
    from the proof of \Cref{thm:dynspecstar_for_deg} in \Cref{ssec:degpre}.
    This reduction only results in a $O(\log n)$ factor overhead on the
    sparsity and extra vertices.
    We remark that after taking the edge decomposition, the resulting subgraphs
    are all undirected bipartite graphs.
    The correctness of this edge decomposition scheme is guaranteed by
    \Cref{lemma:dispec_union} and \Cref{lemma:blift_schur}.

    Combining \Cref{lemma:dyn_cut_adap}, \Cref{lemma:expspar_star} and
    maintaining the star patchings dynamically, we get an adaptive algorithm for
    maintaining explicitly a $O(\log^{12} n)$-degree-preserving spectral
    approximation (using Schur complement) of a decremental pruned
    $\frac{C}{\log^2 n}$-expander.
    This algorithm has worst-case update time $\tO\Par{\frac{\Dmax^3}{\Dmin^3}}$ for
    maximum and minimum non-trivial weighted degrees $\Dmax,\Dmin$.
    Since the dynamic uniform expander decomposition algorithm in
    \Cref{prop:dynexpdecomp_uni} guarantees degree ratio of $\phi^2 = O(\log^4
    n)$, the update time is simply $\tO(1)$.
    The blackbox reduction framework \Cref{lemma:red_dynexp_uni}
    then guarantees our the claimed results.
\end{proof}

\subsection{Adaptive degree-preserving sparsification for partially
symmetrized directed graphs}
\label{ssec:adp_quad1to2}

We consider the last piece of the puzzle -- the dynamic sparsification from
$\vG_1$ to $\vG_2$.
By \Cref{lemma:quad1to2}, it suffices to produce a $\eps$-degree preserving
spectral sparsification $\vH$ of $\vG$ with respect $\LL_{G^{(\beta)}}$.
The exact statement of our guarantees is presented in \Cref{lemma:dyn_quad1to2}.
\begin{lemma} \label{lemma:dyn_quad1to2}
    Let $\vG =(V,E,\ww)$ be a dynamic weighted directed graph undergoing adaptive
    edge updates.
    There is a deterministic dynamic algorithm that for $\eps \in (0,1)$, if
    $\beta \ge \Omega(\eps^{-1}\log^4 n)$, it maintains explicitly a weighted
    directed graph $\vH=(V \cup X, E_{\vH},\ww_{vH})$ satisfying that 
    for Schur complement $\vS = \Sc(\vH,V)$
    $\vG' = \beta \cdot G \cup \vS$ is a
    $\eps$-degree-preserving directed spectral sparsifier of $\vG^{(\beta)}$,
    where $G = \und(\vG)$.
    The graph $\vH$ has size $\tO(n\log W)$ and extra
    number of vertices $|X| \le O(n \log^3 n \log W)$.
    The algorithm has preprocessing time $\tO(m)$ and amortized update time
    $\tO(1)$.
\end{lemma}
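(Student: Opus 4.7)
The plan is to build $\vH$ as a disjoint union of deterministic degree-preserving star patchings, one per subgraph of a dynamically maintained expander decomposition of $G = \und(\vG)$, performing \emph{no sampling on the directed edges at all}. The rationale is that the spectral error of any degree-preserving substitute $\vS$ for $\vG$, when measured against $\LL_{G^{(\beta)}} = (2\beta+1)\LL_G$, incurs an extra factor of $\frac{1}{2\beta+1}$ relative to the error measured against $\LL_G$. Since on each $\phi$-expander piece a bare degree-preserving replacement accumulates only an $O(\phi^{-2}) = O(\log^4 n)$ error against $\LL_G$, taking $\beta = \Omega(\eps^{-1}\log^4 n)$ drives the error below $\eps$.

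Concretely, I would invoke \Cref{prop:dynexpdecomp} on $G$ with $\phi = \Theta(\log^{-2} n)$ to obtain edge-disjoint $\phi$-expanders $\{G_i\}$ that induce a directed decomposition $\vG = \bigcup_i \vG_i$. For each $\vG_i$, I maintain the degree-preserving star-patching construction of \Cref{thm:dynspecstar_for_deg} --- i.e., \Cref{lemma:starpatch_algo} with $\xi \ww' = \vzero$ applied on the bipartite lift, combined with the edge-subset decomposition into $O(\log n)$ buckets to prevent self-loops upon unlifting --- producing a directed graph $\vH_i$ on $V(\vG_i)$ plus $O(\log n)$ fresh extra vertices, whose unlifted Schur complement $\vS_i \defeq \Sc(\vH_i, V(\vG_i))$ satisfies $\ddi_{\vS_i} = \ddi_{\vG_i}$ and $\ddo_{\vS_i} = \ddo_{\vG_i}$. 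Setting $\vH \defeq \bigcup_i \vH_i$ with disjoint extra-vertex sets, the Schur complement factors blockwise so that $\vS \defeq \Sc(\vH, V) = \sum_i \vS_i$. Each edge update in $\vG$ triggers $O(1)$ star-edge updates per affected expander, so combining with the $\tO(1)$ amortized recourse of \Cref{prop:dynexpdecomp} yields $\tO(1)$ total amortized update time; since no randomness enters the patchings themselves, the algorithm works against an adaptive adversary.

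The analytic core is the per-expander bound
\[
\normop{\LL_{G_i}^{\frac \dag 2} (\vLL_{\vG_i} - \vLL_{\vS_i}) \LL_{G_i}^{\frac \dag 2}} \le O(\phi^{-2}).
\]
Because $\vG_i$ and $\vS_i$ share in/out degrees, the difference $\vLL_{\vG_i} - \vLL_{\vS_i}$ reduces to a difference of weighted adjacency terms. Bounding each such term separately on the bipartite lift via \Cref{lemma:optoinf} with $\rr = \ddi_{\vG_i}, \cc = \ddo_{\vG_i}$, the row and column sums of each adjacency matrix are bounded by $\cc$ and $\rr$ respectively, yielding $\normop{\DDi_{G_i}^{\frac \dag 2} (\vLL_{\vG_i} - \vLL_{\vS_i}) \DDo_{G_i}^{\frac \dag 2}} = O(1)$, which transfers to the symmetric $\DD_{G_i}$-normalized quantity via \Cref{fact:dispec_top} and $\DD \succeq \DDi,\DDo$. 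Cheeger's inequality (\Cref{lemma:cheeger}) on the $\phi$-expander $G_i$ then converts this to $O(\phi^{-2})$ against $\LL_{G_i}$. Summing across pieces via the union property \Cref{lemma:dispec_union} and dividing by $2\beta+1$ delivers the $\eps$-degree-preserving approximation of $\vG^{(\beta)}$ by $\vG' = \beta \cdot G \cup \vS$.

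The main obstacle is orchestrating the bipartite lift and edge-subset decomposition on top of the \emph{dynamic} expander decomposition: to guarantee exact (rather than merely degree-balance) preservation of in/out degrees, every piece must be split into $O(\log n)$ buckets each receiving its own star with a fresh center, and when an edge migrates between expanders due to pruning or recombination in \Cref{prop:dynexpdecomp}, the updates must be propagated coherently through all layers while preserving disjointness of extra-vertex sets across pieces (so that $\Sc(\vH,V)$ continues to factor as a sum of per-piece Schur complements). This extra $\log n$ factor combines with the $O(\log^2 n \log W)$ vertex coverage of \Cref{prop:dynexpdecomp} to give $|X| = O(n \log^3 n \log W)$ and $|E(\vH)| = \tO(n \log W)$ as required, adding only a constant overhead to the $\tO(1)$ amortized update time.
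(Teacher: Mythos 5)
Your proposal is correct and follows essentially the same route as the paper: a pure star patching of the full in/out degrees (no sampling of directed edges) on each piece of the dynamic expander decomposition, with the bipartite lift and $O(\log n)$ edge-bucket decomposition to avoid self-loops, the per-expander $O(\phi^{-2})$ bound via \Cref{lemma:optoinf} and Cheeger, and the $\frac{1}{2\beta+1}$ rescaling against $\LL_{G^{(\beta)}}$ combined with the union property. This is exactly the paper's \Cref{lemma:diexpspar_star} plugged into the blackbox reduction \Cref{lemma:red_dynexp}, with determinism of the patching yielding adaptivity.
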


Similar to the undirected sparsification before (see \Cref{lemma:expspar_star}),
we start by showing that such a directed graph $\vH$ can be computed using a
patching (\Cref{lemma:dynextpatch_mod}) for expanders.
We again considers a bipartite graph with ``disjoint'' vertices to avoid
self-loops when eliminating the extra vertices.
\begin{lemma} \label{lemma:diexpspar_star}
    Let $\vG = (C \cup R,E,\ww)$ be a weighted bipartite directed graph with $E
    \subseteq C \times R$ so that $G=\und(\vG)$ is a $\phi$-expander.
    There is a deterministic algorithm that given $\eps \in (0,1)$, and $\beta$
    satisfying $\beta \ge 2\eps^{-1}\phi^{-2}$, returns in time $O(m+n\log n)$ a
    graph $\vH$ on vertices $C\cup R \cup \{x\}$ of size $O(n)$ such that for
    $\vG' = \beta \cdot G \cup \Sc(\vH,C\cup R)$ is a $\eps$-degree-preserving
    directed spectral sparsifier of $\vG^{(\beta)}$.
\end{lemma}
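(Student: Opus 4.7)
The plan is to construct $\vH$ by invoking the star patching algorithm (\PASalgo, Algorithm~\ref{alg:patching_star}) with $\ww' = \vzero_E$ and $\xi = 0$. Concretely, after computing $\ddi_{\vG}, \ddo_{\vG}$ in $O(m)$ time, I introduce a single auxiliary vertex $x$ and add for each $v \in C$ the directed edge $(v, x)$ with weight $[\ddo_{\vG}]_v$, and for each $v \in R$ the edge $(x, v)$ with weight $[\ddi_{\vG}]_v$. The resulting $\vH$ has $O(n)$ edges, uses one extra vertex, and is built in $O(m + n\log n)$ time. By Lemma~\ref{lemma:starpatch_algo} applied with these inputs, $\vS \defeq \Sc(\vLL_{\vH}, C \cup R)$ has exactly the same in- and out-degrees as $\vG$, so $\vG' = \beta \cdot G \cup \vS$ is automatically degree-preserving with respect to $\vG^{(\beta)}$.

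For the spectral error, the bipartite structure $E \subseteq C \times R$ ensures that both $\vLL_{\vG}$ and $\vS$ share the same $CC$-diagonal block (equal to $\DDo$) and have no other nonzero block except an $RC$ off-diagonal one; thus $\vS - \vLL_{\vG}$ is supported entirely on $R \times C$ and equals $\TT^\top \WW \HH - M^{-1}\ddi_{\vG}\ddo_{\vG}^\top$, where $M = \|\ddo_{\vG}\|_1 = \|\ddi_{\vG}\|_1$ and the rank-one term comes from eliminating $x$ in the Schur complement. Both summands have row marginals $\ddi_{\vG}$ and column marginals $\ddo_{\vG}$, so their difference has zero row and column sums; after left- (resp.~right-) normalization by $[\DDi]^{\dag}$ (resp.~$[\DDo]^{\dag}$), each summand becomes a row-stochastic (resp.~column-stochastic) nonnegative matrix, so Lemma~\ref{lemma:optoinf} yields
\[
\normop{[\DDi]^{\frac \dag 2}(\vS - \vLL_{\vG})[\DDo]^{\frac \dag 2}} \le 2.
\]
Since $\DDi$ and $\DDo$ have disjoint supports in the bipartite setting and their sum is the undirected degree matrix $\DD$ of $G$, this is equivalent to $\normop{\DD^{\frac \dag 2}(\vS - \vLL_{\vG})\DD^{\frac \dag 2}} \le 2$.

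Cheeger's inequality (Lemma~\ref{lemma:cheeger}) for the $\phi$-expander $G$ then yields $\LL_G^{\frac \dag 2}\DD\LL_G^{\frac \dag 2} \preceq (2/\phi^2)\,\PP_{\im(\LL_G)}$, so submultiplicativity delivers $\normop{\LL_G^{\frac \dag 2}(\vS - \vLL_{\vG})\LL_G^{\frac \dag 2}} \le 4/\phi^2$. Since $\vLL_{\vG^{(\beta)}} - \vLL_{\vG'} = \vLL_{\vG} - \vS$ and $\LL_{G^{(\beta)}} = (2\beta+1)\LL_G$, the final spectral error, measured per Definition~\ref{def:dispecappr}/Lemma~\ref{lemma:dispec_equiv}, is
\[
\normop{\LL_{G^{(\beta)}}^{\frac \dag 2}(\vLL_{\vG^{(\beta)}} - \vLL_{\vG'})\LL_{G^{(\beta)}}^{\frac \dag 2}} = \frac{1}{2\beta + 1}\normop{\LL_G^{\frac \dag 2}(\vLL_{\vG}-\vS)\LL_G^{\frac \dag 2}} \le \frac{4}{(2\beta+1)\phi^2} \le \eps,
\]
using the hypothesis $\beta \ge 2\eps^{-1}\phi^{-2}$. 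The main step I expect to grind through carefully is the $\le 2$ degree-normalized bound on the $RC$-block difference (via the row/column sum cancellation and Lemma~\ref{lemma:optoinf}); everything else is a routine assembly of the star-patching degree-preservation guarantee, the Cheeger degree-to-Laplacian conversion, and the $(2\beta+1)$-rescaling that makes partial symmetrization so effective here.
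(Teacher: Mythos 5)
Your proposal is correct and follows essentially the same route as the paper: star-patch the entire in/out degrees with one auxiliary vertex, observe the Schur complement is the product-weighted biclique, bound $\normsop{[\DDi]^{\frac \dag 2}(\vLL_{\vS}-\vLL_{\vG})[\DDo]^{\frac \dag 2}}\le 2$ via \Cref{lemma:optoinf} on the two marginal-preserving summands, convert with Cheeger's inequality, and finish with the $(2\beta+1)$ rescaling from partial symmetrization. The only (harmless) cosmetic deviation is invoking \PASalgo{} with $\xi=0$, which sits outside its stated range $\xi\in(0,1]$, though with $\ww'=\vzero$ the construction and the bound from \Cref{lemma:starpatch_algo} are unaffected.
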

\begin{proof}
    Our algorithm simply computes a degree-preserving patching 
    using a star patching for the \emph{entire} weighted in and out degrees.
    The degree-preserving condition of $\vS \defeq \Sc(\vH,V)$ for $V = C \cup
    R$ then follows by \Cref{lemma:starpatch_algo}.
    We let $\ww_{\vS} \in \R_{\ge 0}^{C \times R}$ be the edge weights of the
    star generated after eliminating the centre $x$.

    Consider now the spectral approximation guarantees.
    By \Cref{lemma:blift_spectral}, the operator norm guarantee in
    \Cref{lemma:starpatch_algo} and \Cref{lemma:optoinf}, we have
    \begin{align*}
        \normop{[\DDi]^{\frac \dag 2} \BB^\top (\WW_{\vS}-\WW) \HH [\DDo]^{\frac \dag 2}}
        &=
        \normop{[\DDi]^{\frac \dag 2} (\HH^\top (\WW_{\vS}-\WW) \HH - \TT^\top
        (\WW_H-\WW)\HH) [\DDo]^{\frac \dag 2}}
        \\
        &=
        \normop{[\DDi]^{\frac \dag 2} \TT^\top (\WW_{\vS}-\WW)\HH [\DDo]^{\frac \dag 2}}
        \\
        &\le
        \normop{[\DDi]^{\frac \dag 2} \TT^\top \WW_{\vS}\HH [\DDo]^{\frac \dag 2}}
        +
        \normop{[\DDi]^{\frac \dag 2} \TT^\top \WW \HH [\DDo]^{\frac \dag 2}}
        \\
        &\le
        1+1 = 2,
    \end{align*}
    where the second equality follows by noticing that $\HH^\top \WW_{\vS} \HH =
    \HH^\top \WW \HH = \DDo$.
    Combining this inequality with the fact that $\DD \ge \DDi,\DDo$, $G$ is a
    $\phi$-expander, and Cheeger's inequality (\Cref{lemma:cheeger}) gives us
    \[
        \normop{\LL_G^{\frac \dag 2} (\vLL_{\vS} - \vLL_{\vG}) \LL_G^{\frac \dag 2}}
        \le
        \frac{2}{\phi^2}\normop{\DD_G^{\frac \dag 2} (\vLL_{\vS} - \vLL_{\vG}) \DD_G^{\frac \dag 2}}
        \le \frac{4}{\phi^2}.
    \]
    Finally, since $G^{(\beta)} \defeq \und(\vG^{(\beta)}) = (2\beta+1)G$, the
    overall spectral error with respect to $\LL_{G^{(\beta)}}$ is
    $\frac{4}{\phi^2(2\beta+1)}$, which is no more than $\eps$ as long as $\beta
    \ge 2\eps^{-1}\phi^{-2}$.

    The runtime of the algorithm is governed by the time to compute the degrees
    and the time to compute the patching using a static version of
    the algorithm in \Cref{lemma:dynextpatch_mod}.
\end{proof}

We can now prove \Cref{lemma:dyn_quad1to2}.
\begin{proof}[Proof sketch of \Cref{lemma:dyn_quad1to2}]
    Similar to the proof of \Cref{thm:dyn_quad2to3,thm:dynspecstar_for_deg}, we
    first take the bipartite lift of the graph (see \Cref{lemma:blift_spectral})
    and apply the edge decomposition as in \Cref{thm:dynspecstar_for_deg} from
    \Cref{ssec:degpre} to avoid self-loops when ``unlifting'' the graph.
    Again, the correctness is guaranteed by
    \Cref{lemma:dispec_union,lemma:blift_schur}.
    This reduction only results in a $O(\log n)$ factor overhead on the
    sparsity and extra vertices.

    Unlike \Cref{thm:dyn_quad2to3}, our algorithm for directed sparsification
    does not require the stronger blackbox reduction using
    \Cref{lemma:red_dynexp_uni}, since our sparsification algorithm from
    \Cref{lemma:diexpspar_star} can be trivially made deterministic dynamic --
    it only needs to maintain a dynamic star patching.
    Then, we can combine this sparsification algorithm on expanders to our
    previous blackbox reduction \Cref{lemma:red_dynexp} to obtain our claimed
    results.
\end{proof}

We have all the ingredients to prove \Cref{thm:dyn_quad_adp}
\begin{proof}[Proof of \Cref{thm:dyn_quad_adp}]
    Our algorithm maintains (a) a copy of the algorithm from
    \Cref{lemma:dyn_quad1to2} for sparsification from $\vG_1$ to $\vG_2$, and
    (b) a copy of the algorithm from \Cref{thm:dyn_quad2to3} for sparsification
    from $\vG_2$ to $\vG_3$ with the sparsifier scaled appropriately.
    Note that (a) maintains a graph $\vH_2$ on $V \cup X$ and (b) maintain a graph
    $\vH_3$ on $V \cup X'$ for some subset $X'$ disjoint from $X$.
    By the disjointness, we have the following equivalence in the Schur
    complement, 
    \[
        \Sc(\vG_3,V) = \Sc(\vH_3 \cup \vH_2,V) = \Sc(\vH_3,V) \cup \Sc(\vH_2,V).
    \]
    Then, the union property (\Cref{lemma:dispec_union}) then ensures the Schur
    complement $\Sc(\vH_3 \cup \vH_2,H)$ is a degree-preserving spectral
    approximation to $\vG_2$.

    For (a), we simply fix an approximation factor $\eps = 0.001$ and take
    the minimum possible $\beta \defeq C \log^4 n$ for some constant $C$ given by
    \Cref{lemma:dyn_quad1to2}.

    For (b), the approximation factor is capped at $\lambda = O(\log^{12} n)$.
    Then, by \Cref{lemma:quad1to2,lemma:quad2to3}, the $\gamma$ parameter must
    satisfies $\gamma \ge \frac{1}{2}\lambda(\lambda+1) = O(\log^{24} n)$.
    We also get the $\alpha$ parameter needs to satisfy $\alpha^{-1} =
    \frac{1}{2}\beta(\lambda+1) = O(\log^{16} n)$.
    To conclude, our parameter choices are $\gamma = O(\log^{24} n)$, $\beta =
    O(\log^4 n)$ and $\alpha = \Omega(1/\log^{16} n)$.

    The preprocessing time, update time, extra vertices, and sparsity guarantees
    all follows by combining the respective guarantees from
    \Cref{thm:dyn_quad2to3,lemma:dyn_quad1to2}.
\end{proof}

\printbibliography

\appendix
\section{Missing proofs from Section~\ref{ssec:dispec_prelim}}
\label{app:specproof}

The following is a proof on the kernel of the difference in directed Laplacians.
\begin{proof}[Proof of \Cref{lemma:dispec_ker}]
    For a connected undirected graph $G$, the left and right kernels of $\LL_G =
    \BB^\top \WW \BB$ is $\spn(\vone_V)$.
    We get our kernel guarantees by
    \begin{gather*}
        (\vLL_{\vG} - \vLL_{\vG'}) \vone_V = \BB^\top (\WW - \WW') \HH \vone_V
        = \BB^\top (\ww - \ww') = \vzero_V
        \\
        (\vLL_{\vG} - \vLL_{\vG'})^\top \vone_V = \HH^\top (\WW - \WW') \BB \vone_V
        = \HH^\top (\WW - \WW') \vzero_E = \vzero_V.
    \end{gather*}
\end{proof}

We provide below a proof of \Cref{lemma:dispec_to_spec}.
\begin{proof}[Proof of \Cref{lemma:dispec_to_spec}]
    By \Cref{lemma:dispec_ker,lemma:dispec_equiv}, we may assume w.l.o.g. that
    $G$ is connected.
    Then,
    \[
        \vLL_{\rev(\vG)} - \vLL_{\rev(\vG')}
        = -\BB^\top (\WW - \WW') \TT  
        = \HH^\top (\WW - \WW') \BB
        = (\vLL_{\vG} - \vLL_{\vG'})^\top
    \]
    where we used \Cref{fact:dispec_top} and the degree balance preserving
    assumption that $\BB^\top \ww = \BB^\top \ww'$ in the second equality.
    Then,
    \begin{align*}
        \normop{\LL_G^{\frac \dag 2} (\LL_G - \LL_{G'}) \LL_G^{\frac \dag 2}}
        &=
        \normop{\LL_G^{\frac \dag 2} ((\vLL_{\vG} - \vLL_{\vG'}) +
        (\vLL_{\rev(\vG)} - \vLL_{\rev(\vG')})) \LL_G^{\frac \dag 2}}
        \\
        &\le 
        2 \normop{\LL_G^{\frac \dag 2} (\vLL_{\vG} - \LL_{\vG'}) \LL_G^{\frac \dag 2}}
        \le
        2\eps,
    \end{align*}
    as required.
\end{proof}

The following is a proof of the union property of degree-balance preserving
directed spectral approximations.
\begin{proof}[Proof of \Cref{lemma:dispec_union}]
    Let $\xx,\yy \in \R^V$ be arbitrary vectors.
    Then, we get by \eqref{eq:dispec_def}
    \begin{align*}
        \Abs{\xx^\top (\vLL_{\vG} - \vLL_{\vG'}) \yy}
        &\le
        \sum_{i=1}^k s_i \cdot \Abs{\xx^\top (\vLL_{\vG_i} - \vLL_{\vG_i'}) \yy}
        \le
        \eps \sum_{i=1}^k s_i \cdot 
        \sqrt{\xx^\top \LL_{G_i} \xx \cdot \yy^\top \LL_{G_i} \yy}
        \\
        &\le
        \eps \sqrt{\xx^\top \Par{\sum_i s_i \cdot \LL_{G_i}} \xx \cdot 
        \yy^\top \Par{\sum_i s_i \cdot \LL_{G_i}} \yy}
        &&
        (\mbox{Cauchy-Schwarz})
        \\
        &=
        \eps \sqrt{\xx^\top \LL_G \xx \cdot \yy^\top \LL_G \yy},
    \end{align*}
    as required.
\end{proof}

We provide a proof for \Cref{lemma:blift_spectral} in \Cref{ssec:dispec_prelim}.
\begin{proof}[Proof of \Cref{lemma:blift_spectral}]
Consider the edge-vertex transfer matrix of $\vG$, $\BB_{\vG} = \HH_{\vG} - \TT_{\vG}$.
The edge-vertex transfer matrix of $\vG^\uparrow$ is then
$\BB_{\vG^\uparrow} = \begin{pmatrix} \HH_{\vG} & -\TT_{\vG} \end{pmatrix}$.
Hence, any vector $\xx$ satisfying $\BB_{\vG^\uparrow}^\top \xx = \vzero_V$ must have
$\HH_{\vG}^\top \xx = \vzero_V, \TT_{\vG}^\top \xx = \vzero_V$, giving us
preservation of both the difference between in and out degrees and the
sum of in and out degrees $\vG$, i.e.,
\[
    \BB_{\vG}^\top \xx = \vzero_V,\; |\BB_{\vG}|^\top \xx = \vzero_V,
\]
where we used that $\BB_{\vG} = \HH_{\vG} - \TT_{\vG}$ and $|\BB_{\vG}| = \HH_{\vG} + \TT_{\vG}$.
Taking $\xx \defeq \ww' - \ww$ then gives the the first two claims.
We remark that the directed graph $\vG$ need not be Eulerian.

We proceed to prove the third claim. For ease of notation, we omit $\vG$ in the subscripts of the matrices and
denote $\AA \defeq \AA_{\vG}$, $\AA_\uparrow \defeq \AA_{\vG^\uparrow}$ for all
matrices $\AA$.
We also use the following equivalent definition of operator norms with the convention
that the fraction is 0 if the numerator is 0:
\[
    \normop{\AA} = 
    \max_{\xx,\yy} \frac{|\xx^\top \AA \yy|}{\norm{\xx}_2
    \norm{\yy}_2},
\]
where the $\max$ is over $\xx, \yy$ of compatible dimensions.
Also, we let $\QQ \in \R^{(V \cup \bV) \times V}$ be defined by 
\[\QQ\ee_v = \begin{pmatrix} \ee_v \\ \ee_{\bv} \end{pmatrix} \]
for all $v \in V$, where $\bv$ is identified with $v$. 
Notice that $\HH = \HH_\uparrow\QQ$, $\TT = \TT_\uparrow\QQ$ and $\BB = \BB_\uparrow\QQ$.
Then,
\[
    \QQ^\top \LL_\uparrow \QQ
    =
    \QQ^\top \BB_\uparrow^\top \WW \BB_\uparrow \QQ
    =
    \BB^\top \WW \BB
    =
    \LL.
\]
Finally, for any non-trivial vectors $\xx,\yy \in \R^V$ satisfying $\xx,\yy \perp
\vone_V$, and defining 
\[\aa \defeq \LL^{\frac \dagger 2}\xx,\; 
\bb \defeq \LL^{\frac \dagger 2}\yy,\; \hxx \defeq \LL_\uparrow^{\frac \dagger 2}\QQ \aa,\; 
\hyy \defeq \LL_\uparrow^{\frac \dagger 2}\QQ \bb,\]
 we have
\begin{align*}
  \frac{|\xx^\top \LL^{\frac \dagger 2} \BB^\top (\WW'-\WW) \HH 
    \LL^{\frac \dagger 2} \yy|}{\norm{\xx}_2 \norm{\yy}_2} &= 
    \frac{|\aa^\top \BB^\top (\WW'-\WW) \HH \bb|}{\norm{\aa}_{\LL}
    \norm{\bb}_{\LL}}\\
    &=
    \frac{|\aa^\top \QQ^\top \BB_\uparrow^\top (\WW'-\WW) \HH_\uparrow \QQ
    \bb|}{\norm{\QQ \aa}_{\LL_\uparrow} \norm{\QQ \bb}_{\LL_\uparrow}} \\
    &=
    \frac{|\hxx^\top \LL_\uparrow^{\frac \dagger 2} \BB_\uparrow^\top 
        (\WW'-\WW) \HH_\uparrow \LL_\uparrow^{\frac \dagger 2} \hyy|}
        {\norm{\hxx}_2 \norm{\hyy}_2},
\end{align*}
giving us the desired operator norm bound.
We note that $\QQ \aa,\QQ \bb \perp \ker(\LL_\uparrow)$.

Finally, we record a consequence of this proof we will later use. Recall that we have shown
\[\HH = \HH_{\uparrow}\QQ,\quad
\TT = \TT_{\uparrow}\QQ,\quad
\BB = \BB_{\uparrow}\QQ,\quad
\QQ^\top \LL_{\uparrow} \QQ = \LL.\]
Therefore, suppose that for some $\eps > 0$ and fixed vectors $\xx,\yy \in \R^V$,  $\ww' \in \R^E_{>0}$ satisfies
\begin{equation}\label{eq:lift_imply_1}
\Abs{\xx^\top \QQ^\top \BB_{\uparrow}^\top (\WW' - \WW) \HH_{\uparrow} \QQ \yy} \le
\eps \cdot \norm{\QQ\xx}_{\LL_{\uparrow}} \norm{\QQ\yy}_{\LL_{\uparrow}}.
\end{equation}
Then, we also have the bound in the unlifted graph $G$,
\begin{equation}\label{eq:lift_imply_2}
 \Abs{\xx^\top \BB^\top (\WW' - \WW) \HH \yy} \le
\eps \cdot \norm{\xx}_{\LL} \norm{\yy}_{\LL}.
\end{equation}
\end{proof}

\section{Dynamic expander decomposition proofs}\label{app:dynexpdecomp}

We provide a proof to our dynamic expander decomposition guarantees.

\begin{proof}[Proof of \Cref{prop:dynexpdecomp}]
    We maintain a decomposition of the graph $G$ into subgraphs
    $G_1, G_2, \ldots, G_k$ where each $G_i$ contains at most $2^i$ edges and $k
    = \ceil{2(\log(n) - \log(2))} \le (2\log(n)+1)$.
    Initially, we simply set $G_i = G$ for $i = \ceil{\log m}$ and all other
    graphs as empty.
    For each subgraph $G_i$, we maintain an dynamic expander decomposition
    $\{G_{i,j}\}_j$ of $G_i$ where each $G_{i,j}$ is a $\phi$-expander.
    We also associated for each $G_{i,j}$ a instance of the expander pruning
    algorithm.

    For edge deletions, suppose some edge $e$ currently in $G_i$ is to be
    deleted, we perform the expander pruning algorithm in \Cref{lemma:expprun}
    on the subgraph $G_{i,j}$ that has edge $e$ in it.
    The cut edges in $G_{i,j}$ induced by the pruned set $P$ are then moved from
    $G_i$ into $G_1$ and treated as edge insertions.
    Note that after $\frac{\phi m_{ij}}{10}$ deletions, the entire remaining
    subgraph $G_{i,j}$ is removed from $G_i$ and moved into $G_1$. 

    Suppose now that a set of edges $F$ is to be inserted into $G_i$.
    If, $|E(G_i) \cup F| > 2^i$, our algorithm insert the
    enture $E(G_i) \cup F$ into $G_{i+1}$ and set $G_i$ to be empty.
    Otherwise, we remove the previous expander decomposition and recomputed one
    with the current set of edges of $G_i$ using the statice algorithm of
    \Cref{prop:ex_partition}.
    Again, we re-initialize for each expander an instance of the expander
    pruning algorithm.
    An edge insertion to $G$ is treated as a single edge insertion to $G_1$.

    \Cref{prop:ex_partition,lemma:expprun} guarantee that each $G_{i,j}$ is
    $\frac{\phi}{12}$ expander for $\phi$ in \Cref{prop:ex_partition}.
    Our algorithm ensures each present edge in $G$ is included in the some
    subgraph $G_{i,j}$ and that $\bigcup_{i,j} G_{i,j} = G$.
    Moreover, since the weight of each edge remain unchanged, we are also
    guarantee a bound weight ratio of 2 for all expanders.
    Further, for every $i$ we have that there is a $(\log |E_{G_i}| + 1)(\log W
    + 3) = (i+1)(\log W + 3)$ partition for the coverage of the vertices
    $V(G_i)$. 
    Thus, the overall coverage is $\sum_{i=1}^{2\log n+1} (i+1)(\log W + 3) \le
    \log(n) (2\log^2 n + 3)(\log W + 3)$.

    Consider now the runtime and recourse of our algorithm.
    Suppose we insert a set of edges $F$ into $G_i$ and have $|E(G_i) \cup F|
    \le 2^i$, then re-initialize of $G_i$ takes, by
    \Cref{prop:ex_partition,lemma:expprun}, $O(2^i \log^7 n)$ time.
    For each $G_i$, it takes at least $2^{i-1}$ edge insertions before another
    re-initialize.
    Thus, the amortized recourse and cost per edge insertion (including ones
    incurred by edge deletions) are $O(1)$ and $O(\log^7 n)$.
    As for edge deletions, \Cref{lemma:expprun} gives that each edge deletion
    incurs an amortized (not just average) $O(\phi^{-1}) = O(\log^2 n)$ edge
    insertions, a $O(i \phi^{-1}) = O(\log^3 n)$ amortized recourse and a $O(i
    \phi^{-2}) = O(\log^5 n)$ amortized cost.
    When combined with the amortized recourse and cost of edge insertion, we get
    amortized $O(\log^3 n)$ recourse and $O(\log^7 n)$ time for an edge
    deletion.
\end{proof}

\section{Rounding}\label{app:rounding}

In this section, we provide a proof to \Cref{lemma:rounding}, our guarantee on
\RO.
\restaterounding*

We need the following lemma on bounding the matrix quadratic form of directed
cycles.
\begin{lemma}[Lemma 8 of \cite{SachdevaTZ24}] \label{lemma:cycle}
For a directed graph $\vG = (V,E,\ww)$, let $\cc \in \{0,\pm v\}^E$ be a circulation
on $\vG$ (i.e., $\BB_{\vG}^\top \cc = 0$) such that $|\supp(\cc)| = L$ for $L,v
> 0$, for $e \in \supp(\cc)$, $v \le \ww_e$, and the set of
edges in $\supp(\cc)$ form a cycle.\footnote{The cycle edges are not necessarily
in the same direction. See \cite{ChuGPSSW18,SachdevaTZ24}.} 
If each edge $e = (u,w) \in \supp(\cc)$ satisfies for $G \defeq \und(\vG)$ that $v \cdot
\ER_G(u,w) \le \rho$, then 
\[
\BB_{\vG}^\top \CC \HH_{\vG} \LL_G^\dag \HH_{\vG}^\top \CC \BB_{\vG} 
\le L^2 \rho \cdot \BB_{\vG}^\top \CC \BB_{\vG}. 
\]
\end{lemma}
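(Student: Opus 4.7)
The approach is to recognize the left-hand side as an electrical energy in $G$ and control it via a cycle-based dipole decomposition. In quadratic form,
\[
\xx^\top \BB^\top \CC \HH \LL_G^\dag \HH^\top \CC \BB \xx = \|\dd\|_{\LL_G^\dag}^2, \quad \dd := \HH^\top \CC \BB \xx,
\]
so the lemma reduces to a uniform-in-$\xx$ bound on this energy by a quadratic form built from the cycle and $\xx$.

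First I would parametrize the cycle. Label its vertices $u_0, \dots, u_{L-1}$ and its edges $e_0, \dots, e_{L-1}$ with $e_i$ joining $u_i$ and $u_{i+1}$ (indices mod $L$); let $s_i = +1$ if $e_i = (u_i, u_{i+1})$ in $\vG$ and $s_i = -1$ otherwise. Propagating the circulation identity $\BB^\top \cc = \vzero_V$ vertex by vertex around the cycle forces $s_i c_{e_i}$ to be constant, say $\eta \in \{\pm v\}$, so $c_{e_i} = s_i \eta$. A direct computation then yields the orientation-independent identity $[\CC \BB \xx]_{e_i} = \eta(\xx_{u_i} - \xx_{u_{i+1}})$ since the two $s_i$ factors cancel. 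Setting $m_i := \eta(\xx_{u_i} - \xx_{u_{i+1}})$ and $v_i := h(e_i) \in \{u_i, u_{i+1}\}$, the demand becomes $\dd = \sum_{i=0}^{L-1} m_i \ee_{v_i}$, and $\sum_i m_i = 0$ by telescoping, so $\dd$ is legitimately in the image of $\BB^\top$.

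Next I would bound $\|\dd\|_{\LL_G^\dag}^2$ by the standard dipole trick. Fixing a cycle reference vertex $v^* := u_0$ and using $\sum_i m_i = 0$, write $\dd = \sum_i m_i (\ee_{v_i} - \ee_{v^*})$, then apply Cauchy–Schwarz in the $\|\cdot\|_{\LL_G^\dag}$ seminorm to get
\[
\|\dd\|_{\LL_G^\dag}^2 \le L \sum_{i=0}^{L-1} m_i^2 \cdot \ER_G(v_i, v^*).
\]
Each $v_i$ lies on the cycle, and the triangle (series-composition) inequality for effective resistance along the cycle path from $v^*$ to $v_i$ (length at most $L$), combined with the hypothesis $v \cdot \ER_G(h(e), t(e)) \le \rho$ for each cycle edge $e$, gives $\ER_G(v_i, v^*) \le L\rho/v$. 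Substituting,
\[
\|\dd\|_{\LL_G^\dag}^2 \;\le\; \frac{L^2 \rho}{v} \sum_{i=0}^{L-1} m_i^2 \;=\; L^2 \rho \cdot v \sum_{i=0}^{L-1} (\xx_{u_i} - \xx_{u_{i+1}})^2,
\]
and the final sum is exactly $\xx^\top \BB^\top |\CC| \BB \xx$, identifying $\BB^\top |\CC| \BB$ with $v$ times the Laplacian of the underlying undirected cycle.

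The main subtlety I expect is the precise form of the right-hand side. With signed $\cc \in \{0,\pm v\}^E$, the matrix $\BB^\top \CC \BB$ can fail to be positive semidefinite once the cycle edges are not all oriented consistently (one verifies this directly on a $3$-cycle with two of three entries of $\cc$ negative and $\xx$ supported on the odd vertex), so a literal Loewner inequality against $\BB^\top \CC \BB$ is too strong in general. The argument above delivers the stronger and intended form with $\BB^\top |\CC| \BB$ on the right, which agrees with $\BB^\top \CC \BB$ exactly in the canonically oriented case (all $s_i$ equal) that arises when applying the lemma in the proof of \Cref{lemma:rounding}. Beyond this bookkeeping, everything reduces to three textbook pieces — the flow-energy characterization of $\LL_G^\dag$, a single Cauchy–Schwarz, and series composition of effective resistances — all organized by the cycle parametrization in Step 1.
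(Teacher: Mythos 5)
The paper never proves this lemma internally; it is imported from \cite{SachdevaTZ24} and used only in \Cref{app:rounding}, so there is no in-paper proof to compare against and your argument has to stand on its own. It does: the circulation identity indeed forces $c_{e_i}=s_i\eta$ with $|\eta|=v$, the quantities $[\CC\BB\xx]_{e_i}=\eta(\xx_{u_i}-\xx_{u_{i+1}})$ are orientation-free, and the dipole decomposition, one Cauchy--Schwarz, and the metric property of effective resistance (each cycle edge has $\ER_G\le\rho/v$, hence $\ER_G(v_i,v^*)\le L\rho/v$) give $\|\HH_{\vG}^\top\CC\BB_{\vG}\xx\|_{\LL_G^\dag}^2\le L^2\rho\cdot v\sum_i(\xx_{u_i}-\xx_{u_{i+1}})^2=L^2\rho\,\xx^\top\BB_{\vG}^\top|\CC|\BB_{\vG}\xx$, which is a complete proof of the unsigned form. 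Your caveat about the right-hand side is also correct: for a mixed-orientation cycle, $\BB_{\vG}^\top\CC\BB_{\vG}=\eta\sum_i s_i(\ee_{u_i}-\ee_{u_{i+1}})(\ee_{u_i}-\ee_{u_{i+1}})^\top$ is indefinite (on a triangle with $\cc=(v,-v,-v)$ and $\xx$ the indicator of the vertex shared by the two negative edges, the right side is $-2v<0$ while the left side equals $v^2\,\ER_G(u_0,u_2)>0$, and all hypotheses of the lemma can be met), so the Loewner inequality as transcribed here fails literally and must be read with $|\CC|$ (equivalently, against the undirected Laplacian of the cycle), which is exactly what you proved.

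The one inaccuracy is your closing parenthetical: the application in \Cref{lemma:rounding} is \emph{not} confined to consistently oriented cycles. The vectors $\cc^{(T,e)}$ there are signed incidence vectors of fundamental cycles of a maximum spanning forest of $\und(\vG)$, and in a directed graph these generically have mixed signs. That application in fact requires the unsigned reading as well: the next step of the rounding proof bounds $\MM=\ww_e\BB_{\vG}^\top\CC^{(T,e)}\BB_{\vG}$ by $\LL_G$ as the Laplacian of a downweighted subgraph, which is only meaningful for $\ww_e\BB_{\vG}^\top|\CC^{(T,e)}|\BB_{\vG}$. So your corrected statement is precisely the version the paper actually uses, and your proof covers it; only the remark about when the two right-hand sides coincide should be fixed.
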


\begin{proof}[Proof of \Cref{lemma:rounding}]
Throughout the proof we drop the subscripts $\vG$, $G$ from $\BB$, $\HH$, $\LL$
for simplicity.
We assume w.l.o.g. that the corresponding undirected $G$ is connected.
The algorithm sets $\yy$ to be the unique flow on the edges of tree $T$ that
satisfies $\BB^{\top}\yy = \dd.$
Such a vector $\yy$ can be constructed in $O(n)$ time by proceeding sequentially
from an arbitrary leaf in, e.g., depth-first order.
Let $\zz \in \R^E$ be any vector that satisfies $\BB^\top \zz = \dd$ so that
$\yy - \zz$ is a circulation on $G$.

It remains to show that this weight update does not introduce significant error
in the directed Laplacians and no tree edge is significantly congested.
For every edge $e \notin T,$ we let $\cc^{(T,e)} \in \{0, \pm 1\}^E$ denote the
(signed) incidence vector of the unique cycle in $T \cup {e}$.
We observe that $\zz-\yy$ can be expressed uniquely as $\sum_{e \notin T} \zz_e
\cc^{(T,e)}$, so for each $e \in T$, $\yy_e = \zz_e - \sum_{f \in T | e \in
\supp(\cc^{T,f})} \zz_f$, and
\[
\normop{\LL^{\frac \dag 2} \BB^\top (\YY-\ZZ) \HH \LL^{\frac \dag 2}} 
\le \sum_{e \notin T}  \frac{|\zz_e|}{\ww_e} \normop{\ww_e \cdot \LL^{\frac \dag 2}
\BB^\top \CC^{(T,e)} \HH \LL^{\frac \dag 2}}.
\]
Since $T$ is a maximum weighted spanning tree, it satisfies that for any
$e \in \supp(\cc^{T,f})$, we have $\ww_f \le \ww_e$.
Then,
\[
    |\ww_e ^{-1} \yy_e| \le |\ww_e^{-1} \zz_e| + \sum_{f \in T | e \in
    \supp(\cc^{T,f})} |\ww_f^{-1} \zz_f|
    \le
    \norm{\WW^{-1} \zz}_1,
\]
giving the second guarantee.
For the spectral error, it suffices to show that each operator norm in the right-hand side is bounded by $n$.
Note that
\begin{align*}
\normop{\ww_e \LL^{\frac \dag 2} \BB^\top \CC^{(T,e)} \HH \LL^{\frac \dag 2}} 
&= 
\sqrt{\normop{(\LL^{\frac \dag 2} (\ww_e \cdot \BB^\top \CC^{(T,e)}) \HH
\LL^{\frac \dag 2})(\LL^{\frac \dag 2} (\ww_e \cdot \HH^\top \CC^{(T,e)} \BB) \LL^{\frac \dag 2})^{\top}}} 
\\
&= 
\sqrt{\normop{\LL^{\frac \dag 2} (\ww_e \cdot \BB^\top \CC^{(T,e)} \HH)
\LL^{\dag} (\ww_e \cdot \BB^\top \CC^{(T,e)} \HH)^{\top} \LL_G^{\frac \dag 2}}}.
\end{align*}
We will bound the norm of the last matrix in the above expression.
Observe that $\ww_e \cdot \BB^\top \CC^{(T,e)} \HH$ is just the directed
Laplacian of the cycle with uniform edge weights $\ww_e$. 
Denote it $\vMM$ for brevity. 
Again, using the fact that $T$ is a maximum weighted spanning tree and by noting
$L \le n$ and $\rho \le 1$, \Cref{lemma:cycle} gives
\[
    \vMM \LL^{\dag} \vMM^\top \preceq n^2 \MM,
    \quad \text{where } \MM \defeq \BB^\top \CC^{(T,e)} \BB.
\]
Notice that the corresponding undirected graph of Laplacian $\MM$ is a
downweighted subgraph of $G \defeq \und(\vG)$, we have the third result by
\[
    \LL^{\frac \dag 2} \vMM \LL^{\dag} \vMM^\top \LL^{\frac \dag 2}
    \preceq n^2 \cdot \LL^{\frac \dag 2} \MM \LL^{\frac \dag 2}
    \preceq n^2 \cdot \II_V.
\]
\end{proof}

\section{Missing proofs from Section~\ref{ssec:dicut_prelim}}
\label{app:dicutproof}

We show in the following that dicut approximation implies undirected cut approximation.
\begin{proof}[Proof of \Cref{lemma:dicut_to_cut}]
    For any non-empty proper subset of vertices $U \subsetneq V$, we have
    $\ww_{\vG}(U,V\setminus U) + \ww_{\vG}(V\setminus U,U) = \ww_G(U, V\setminus
    U)$.
    As square root is concave, we get
    \[
        \sqrt{\ww_{\vG}(U,V\setminus U)} + \sqrt{\ww_{\vG}(V\setminus U,U)} \le
        \sqrt{2 \ww_G(U, V\setminus U)}
    \]
    The same equality also holds for $\vH$.
    Then, the desired cut approximation guarantee holds by adding the two dicut
    inequalities for $(U,V\setminus U)$ and $(V\setminus U, U)$.
\end{proof}

Below is a proof of the union property of dicut approximations.
\begin{proof}[Proof of \Cref{lemma:dicut_union}]
    Let $\vC$ be an arbitrary dicut from $U$ to $V \setminus U$.
    By assumption, we have for every $i$
    \[
        \ww_{\vH_i}(U,V\setminus U) \in
        \Par{\ww_{\vG_i}(U,V\setminus U) \pm \frac{\eps}{\sqrt{\beta+1}}
        \sqrt{\ww_{\vG_i}(U,V\setminus U) \cdot \ww_{G_i}(U,V\setminus U)}},
    \]
    where $G_i \defeq \und(\vG_i)$.
    Note that 
    \[
        \ww_{\vG}(U,V\setminus U) = \sum_{i=1}^k s_i \cdot
        \ww_{\vG_i}(U,V\setminus U), \quad
        \ww_{G}(U,V\setminus U) = \sum_{i=1}^k s_i \cdot
        \ww_{G_i}(U,V\setminus U).
    \]
    Then, by Cauchy-Schwarz,
    \begin{align*}
        \sum_{i=1}^k s_i \cdot \sqrt{\ww_{\vG_i}(U,V\setminus U) \cdot \ww_{G_i}(U,V\setminus U)}
        &\le 
        \sqrt{(\sum_{i=1}^k s_i \cdot \ww_{\vG_i}(U,V\setminus U)) (\sum_{i=1}^k s_i \cdot
        \ww_{G_i}(U,V\setminus U))}
        \\
        &=
        \sqrt{\ww_{\vG}(U,V\setminus U) \ww_{G}(U,V\setminus U)}.
    \end{align*}
    Thus, summing over all $i$ and taking in the inequality above, we have
    \[
        \ww_{\vH}(U,V\setminus U) \in
        \Par{\ww_{\vG}(U,V\setminus U) \pm \frac{\eps}{\sqrt{\beta+1}}
        \sqrt{\ww_{\vG}(U,V\setminus U) \cdot \ww_{G}(U,V\setminus U)}},
    \]
    which, by definition, guarantees that $\vH$ is a $(\beta,\eps)$-dicut
    approximation to $\vG$.
\end{proof}

\end{document}